\numberwithin{equation}{section}
\newtheorem{theorem}{Theorem}
\newtheorem{algorithm}{Algorithm}
\newtheorem{corollary}{Corollary}
\newtheorem{example}{Example}
\newtheorem{remark}{Remark}
\newtheorem{assumption}{Assumption}
\newtheorem{lemma}{Lemma}
\theoremstyle{definition}
\DeclareMathOperator{\E}{\text{E}}
\newcommand{\Epsilon}{\mathcal{E}}
\newcommand{\norm}[1]{\left \|#1\right \|}
\newcommand{\abs}[1]{\left\vert#1\right\vert}
\newcommand{\argmin}{\mathop{\rm arg~min}\limits}
\newcommand{\sumg}{\sum_{g=1}^G}
\newcommand{\sumi}{\sum_{i\in C_g}}
\begin{document}

\title{Robust Uniform Inference for Quantile Treatment Effects in Regression Discontinuity Designs\footnote{First arXiv version: February 24, 2017. Code files are available upon request. We would like to thank Bill Gormley for kindly approving our use of his data for our empirical application, Brigham Frandsen for kindly providing us with a working data set, and Fangzhu Yang for excellent research assistance. We benefited from useful comments by Matias Cattaneo, Yingying Dong, Oliver Linton (the co-editor), Patrick Richard, Pedro Sant'Anna, the associate editor, anonymous referees, seminar participants at UC Davis, UC Irvine, and University of Tsukuba, and conference participants at CESG 2018, ESEM 2017, IAAE 2017, IEAS Econometric Workshop: Theories and Applications 2017, LAMES 2017, New York Camp Econometrics XII, SETA 2017, Shanghai Econometrics Workshop 2017, and University of Tokyo Conference in Advances in Econometrics. All remaining errors are ours. Yu-Chin Hsu gratefully acknowledges the research support from Ministry of Science and Technology of Taiwan (MOST107-2410-H-001-034-MY3) and Career Development Award of Academia Sinica, Taiwan. This paper was previously circulated as ``A Unified Robust Bootstrap Method for Sharp/Fuzzy Mean/Quantile Regression Discontinuity/Kink Designs.''\bigskip}}
\author{
Harold D. Chiang\thanks{Harold D. Chiang: harold.d.chiang@vanderbilt.edu. Department of Economics, Vanderbilt University.\bigskip} \qquad Yu-Chin Hsu\thanks{Yu-Chin Hsu: ychsu@econ.sinica.edu.tw. Institute of Economics, Academia Sinica; Department of Finance, National Central University; and Department of Economics, National Chengchi University.\bigskip} \qquad Yuya Sasaki\thanks{Yuya Sasaki (Corresponding Author): yuya.sasaki@vanderbilt.edu. Department of Economics, Vanderbilt University.\bigskip}
\bigskip\\
}
\date{
\today }
\maketitle
\begin{abstract}\setlength{\baselineskip}{4.5mm}
The practical importance of inference with robustness against large bandwidths for causal effects in regression discontinuity and kink designs is widely recognized.
Existing robust methods cover many cases, but do not handle uniform inference for CDF and quantile processes in fuzzy designs, despite its use in the recent literature in empirical microeconomics.
In this light, this paper extends the literature by developing a unified framework of inference with robustness against large bandwidths that applies to uniform inference for quantile treatment effects in fuzzy designs, as well as all the other cases of sharp/fuzzy mean/quantile regression discontinuity/kink designs.
We present Monte Carlo simulation studies and an empirical application for evaluations of the Oklahoma pre-K program.
\smallskip\\
{\bf Keywords:} bias correction, local Wald estimator, multiplier bootstrap, quantile, regression discontinuity design, regression kink design, robustness
\smallskip\\
{\bf JEL Codes:} C01, C14, C21
\end{abstract}
 
\newpage

\section{Introduction}

Empirical researchers have used various versions of local Wald estimators.
In widest use are the local Wald estimators for the regression discontinuity design (RDD).
More recently, researchers have also used local Wald ratios of derivative estimators for the regression kink design (RKD).
Furthermore, local Wald ratios of conditional cumulative distribution functions and their variants are used for estimation of quantile treatment effects.
In all of these variants of local Wald estimators, researchers often choose large bandwidths by alternative data-driven selectors in practice.
As such, ideal estimators and inference procedures need to be robust against large bandwidths.

Proposal of inference methods which are robust against large bandwidths has occurred relatively recently in the literature, given the history of local Wald estimators.
The framework of Calonico, Cattaneo, and Titiunik (2014) covers robust point-wise inference for mean effects in such designs as the sharp mean RDD, fuzzy mean RDD, sharp mean RKD, and fuzzy mean RKD.
The robust wild bootstrap method proposed by Bartalotti, Calhoun, and He (2017) covers the sharp mean RDD.
The pivotal method proposed by Qu and Yoon (2015b) covers robust uniform inference for the sharp quantile RDD, and is also extensible to the sharp quantile RKD (Chiang and Sasaki, 2017).

To our knowledge, these methods proposed in the existing literature do not cover another important case, namely robust uniform inference\footnote{Throughout this paper, we use the phrase ``uniform inference'' to refers to inference based on weak convergence in the uniform normed linear space $\ell^\infty$.} for quantile treatment effects in the fuzzy RDD (Frandsen, Fr\"olich, and Melly, 2012), despite its frequent use in the recent literature in applied economics, e.g., Clark and Martorell (2014) and Deshpande (2016), to list a few.\footnote{We also refer readers to Shigeoka (2014), Ito (2015), Deshpande (2016), and Bernal, Carpio, and Klein (2017) for empirical applications with quantile treatment effects in the sharp RDD.} In this light, this paper proposes a new general robust inference method and construction of bias-corrected uniform confidence bands that cover the fuzzy quantile RDD in particular.
Instead of proposing a robust inference method which specifically applies to the fuzzy quantile RDD, however, we propose one generic framework that uniformly applies to most, if not all, versions of the local Wald estimators including the sharp mean RDD, the fuzzy mean RDD, the sharp mean RKD, the fuzzy mean RKD, the sharp quantile RDD, the fuzzy quantile RDD, the sharp quantile RKD, and the fuzzy quantile RKD, to list a few most popular examples used in empirical research.
We focus on the case of the fuzzy quantile RDD for most parts of this paper, as the applicability to this particular case is new in the literature.

After assessing the performance of the proposed method through Monte Carlo simulation studies, we apply it to real data and study causal effects on test outcomes of the Oklahoma pre-K program, following the earlier work by Gormley, Gayer, Phillips and Dawson (2005) and Frandsen, Fr\"olich and Melly (2012).
While Frandsen, Fr\"olich and Melly (2012) provide point-wise confidence intervals for the quantile treatment effects in this application, we follow up and complement their earlier analysis by further providing uniform confidence bands with robustness against large data-driven bandwidths.

The rest of this paper is organized as follows.
Section \ref{sec:literature} discusses the related literature.
Section \ref{sec:overview} provides an overview of the method.
Section \ref{sec:general_framework} presents the main theoretical results.
Section \ref{sec:simulation_fqrdd} demonstrates Monte Carlo simulation studies.
Section \ref{sec:empirical_illustration_fqrdd} presents an empirical illustration.
Section \ref{sec:extensions} presents extended results, including cluster robust inference (Section \ref{sec:cluster_robust}), inference with robustness against no or weak jumps (Section \ref{sec:weak_jumps}), and inference for models with covariates (Section \ref{sec:covariates}).
Section \ref{sec:summary} concludes.
All mathematical proofs and additional details are delegated to the appendix.

\section{Relation to the Literature}\label{sec:literature}

In this section, we overview the most relevant parts of the existing literature.
Because of the breadth of the related literature, what we write below is far from being exhaustive.

{\bf Literature on Local Designs:}
The idea of the RDD is introduced by Thistlethwaite and Campbell (1960).
There is a vast literature on the RDD.
Instead of enumerating all papers, we refer the readers to a seminal paper by Hahn, Todd and van der Klaauw (2001) and a collection of surveys, including Cook (2008) contained in the special issue of \textit{Journal of Econometrics} edited by Imbens and Lemieux (2008), Imbens and Wooldridge (2009; Sec. 6.4), Lee and Lemieux (2010), and Volume 38 of \textit{Advances in Econometrics} edited by Cattaneo and Escanciano (2017), as well as the references cited therein.
For technical matters, we mainly refer to Porter (2003) in deriving the general Bahadur representation for higher-order local polynomial mean regression.
While it mostly evolved around the RDD, recent additions to this local design literature include the RKD (e.g., Nielsen, S\o rensen, and Taber, 2010; Chen and Fan, 2011; Landais, 2015; Simonsen, Skipper, and Skipper, 2015; Card, Lee, Pei, and Weber, 2016; Dong, 2016), quantile extensions (e.g., Frandsen, Fr\"olich, and Melly, 2012; Qu and Yoon, 2015b), and their combination (Chiang and Sasaki, 2017).
While we focus on the fuzzy quantile RDD for most parts of this paper, we note that all these different frameworks are uniformly encompassed by the general framework developed in this paper.

{\bf Literature on Robust Inference:}
Calonico, Cattaneo, and Titiunik (2014) introduce bias correction to achieve the robustness of asymptotic inference against large bandwidths.
This innovation paves the way for empirical practitioners to obtain valid standard errors for their estimates under popular data-driven methods of bandwidths.\footnote{Examples include Imbens and Kalyanaraman (2012), Calonico, Cattaneo and Titiunik (2014), Arai and Ichimura (2016), Calonico, Cattaneo, and Farrell (2016ab), and Arai and Ichimura (2018). Calonico, Cattaneo, and Farrell (2016ab) propose a coverage-probability optimal bandwidth selector and provide a rule of thumb adjustment method to convert MSE-optimal bandwidths into the coverage-probability optimal ones.}
Bartalotti, Calhoun, and He (2017) adapt this idea of bias correction to a wild bootstrap method of inference for the sharp mean RDD.
Qu and Yoon (2015b) adapt this idea of bias correction to a simulation method of uniform inference for the sharp quantile RDD.

Our approach is closely related to Calonico, Cattaneo, and Titiunik (2014), Qu and Yoon (2015ab), and Bartalotti, Calhoun, and He (2017).
Calonico, Cattaneo, and Titiunik (2014) analytically develop the asymptotic distribution accounting for effects of bias estimation.
We also analytically develop the limit processes for CDF and quantile processes accounting for effects of bias estimation, which can be seen as a uniform extension to the analytic asymptotic distribution of Calonico, Cattaneo, and Titiunik (2014).
Bartalotti, Calhoun, and He (2017) come up with the idea of approximating the asymptotic distribution of Calonico, Cattaneo, and Titiunik (2014) via wild bootstrap.
We propose to approximate the limit process by the multiplier bootstrap, analogously to the wild bootstrap approximation of Bartalotti, Calhoun, and He (2017).
Qu and Yoon (2015ab) analytically develop the limit processes for local quantile processes via higher-order local polynomials, which effectively account for effects of bias estimation (Calonico, Cattaneo, and Titiunik, 2014; Remark 7).
We also analytically develop the limit process for a more general classes of local Wald estimators, which can be seen as a generalization and an extension to Qu and Yoon (2015ab).
While Qu and Yoon (2015ab) -- Chiang and Sasaki (2017) likewise -- propose a simulation method of approximating the limits of conditional quantile processes for sharp designs by exploiting a pivotal property of quantile regression, we propose to approximate the limit processes via the multiplier bootstrap because it applies to fuzzy designs as well where the joint process of the numerator and the denominator in the local Wald ratio are concerned.
In summary, our contribution relies on the ideas developed in these three previous papers.

{\bf Literature on Uniform Bahadur Representation:}
The Bahadur representation is a key to asymptotic distributional results.
To cover uniform inference for local polynomial estimators over a general index set with bias correction of any arbitrary order, uniform validity of Bahadur representation over the set is essential.
For classes of nonparametric kernel regressions on which our method relies, Masry (1996), Kong, Linton, and Xia (2010) and Fan and Liu (2016) develop uniform Bahadur representations over regressors.
Furthermore, Guerre and Sabbah (2012), Qu and Yoon (2015a), Lee, Song, and Whang (2015), Fan and Guerre (2016) develop uniform validity over quantiles as well.
We take advantage of this existing idea.
In order to deal with a general class of complexity, we use a new maximal inequality (van der Vaart and Wellner, 2011; Chernozhukov, Chetverikov, and Kato, 2014a).

{\bf Literature on Multiplier Bootstrap:}
In the broad literature, the multiplier bootstrap for Donsker and other weakly convergent classes is first studied by Ledoux and Talagrand (1988) and Gin\'e and Zinn (1990). To our knowledge, use of the multiplier bootstrap in econometrics dates back to Hansen (1996).  The multiplier bootstrap for different parametric models has been extensively studied in the literature -- it is sometimes referred to as the score bootstrap. For nonparametric CDF estimators,
Barrett and Donald (2003) and Donald, Hsu, and Barrett (2012) use the multiplier bootstrap for uniform inference on unconditional and conditional CDFs, respectively, using the exact solution of their estimators.
Chernozhukov, Chetverikov and Kato (2014) demonstrate the validity of the multiplier bootstrap for inference on suprema of
certain non-Donsker processes
without using an extreme value limit distribution.
Due to the unique nature of our local Wald estimators, our results are based on a multiplier central limit theorem developed more lately by Kosorok (2003, 2008), along with our uniform Bahadur representation.

\section{An Overview}\label{sec:overview}

In this section, we present an overview of the main result, focusing on the case of the fuzzy quantile RDD, which has not been covered by the existing literature of robust inference yet despite its use in the recent literature on empirical microeconomics, such as Clark and Martorell (2014) and Deshpande (2016).
A formal and general treatment will follow in Section \ref{sec:general_framework}.

Suppose that we observe a random sample of $(Y^\ast,D^\ast,X)$, where $X$ is the running variable or the forcing variable, $D^\ast$ is the binary treatment indicator, and $Y^\ast$ is the outcome of interest.
A researcher faces a fuzzy regression discontinuity design where the cutoff location is normalized to $X=0$ without loss of generality.
Frandsen, Fr\"olich and Melly (2012) identify the conditional CDF of the potential outcome $Y^d_i$ under each treatment status $d \in \{0,1\}$ given the event $C$ of compliance locally at $X=0$ by
\begin{equation}\label{eq:ffm_cdf}
F_{Y^d | C}(y) \ = \
\frac{\lim_{x\downarrow 0}E[ \mathbbm{1}\{Y_i^\ast \leq y\} \cdot \mathbbm{1}\{D^\ast_i=d\}  |  X_i=x] - \lim_{x\uparrow 0}E[ \mathbbm{1}\{Y_i^\ast \leq y\} \cdot \mathbbm{1}\{D^\ast_i=d\}  |  X_i=x]}{\lim_{x\downarrow 0}E[ \mathbbm{1}\{D_i^\ast=d\} |  X_i=x] - \lim_{x\uparrow 0}E[ \mathbbm{1}\{D^\ast_i=d\}  |  X_i=x]},
\end{equation}
where we omit the conditioning argument $X=0$ from our notation and thus $F_{Y^d|C}$ succinctly denotes $F_{Y^d|C,X=0}$.
Consequently, the local $\theta$-th quantile treatment effect is identified by 
\begin{equation}\label{eq:ffm_qte}
\tau_{}(\theta)
=:Q_{Y^1|C}(\theta)-Q_{Y^0|C}(\theta);\qquad \text{ where $Q_{Y^d|C}(\theta):=\inf\left\{y : F_{Y^d | C}(y) \geq \theta\right\}$},
\end{equation}
where we again note that the conditioning argument $X=0$ is omitted from our notations.
Frandsen, Fr\"olich and Melly (2012) develop methods of inference for $\tau_{}$ based on the exact solutions of local linear estimation of the components of the local Wald ratio (\ref{eq:ffm_cdf}).

In order to make an inference with the local linear estimation, one would need to choose an under-smoothing bandwidth parameter $h_n$.
However, commonly available procedures choose rather large bandwidths, e.g., $h_n \propto n^{-1/5}$.
To accommodate these common procedures in the framework of Frandsen, Fr\"olich and Melly (2012), we need to estimate higher-order bias and to develop the limit process accounting for this bias estimation, as in Calonico, Cattaneo, and Titiunik (2014).
In this section, we present how to make uniform inference for $\tau_{}$ specifically based on local quadratic estimation of the components of the local Wald ratio (\ref{eq:ffm_cdf}), effectively accounting for the second-order bias estimation -- see Remark 7 of Calonico, Cattaneo, and Titiunik (2014).
Consequently, the uniform inference turns robust against large bandwidths as in the commonly available procedures, e.g., $h_n \propto n^{-1/5}$.

The right-hand limit, $\lim_{x\downarrow 0}E[ \mathbbm{1}\{Y_i^\ast \leq y\} \cdot \mathbbm{1}\{D^\ast_i=d\}  |  X_i=x]$, in the numerator of the local Wald ratio (\ref{eq:ffm_cdf}) can be estimated by $\hat\mu_{1}(0^+,y,d)$ in the local quadratic estimator
\begin{align*}
&\left(\hat\mu_{1}(0^+,y,d),\hat\mu_{1}'(0^+,y,d),\hat\mu_{1}''(0^+,y,d)\right) = \\
&\arg\min_{(\mu,\mu',\mu'')} \sum_{i:X_i>0} \left( \mathbbm{1}\left\{Y_i^\ast \leq y\right\} \cdot \mathbbm{1}\left\{D_i^\ast=d\right\} - \left\{\mu + \mu' X_i + \frac{\mu''}{2!} X_i^2 \right\} \right)^2 \cdot K\left(\frac{X_i}{h_{n}}\right),
\end{align*}
where $K$ is a kernel function and $h_{n}$ is a bandwidth parameter.
The left-hand limit, $\lim_{x\uparrow 0}E[ \mathbbm{1}\{Y_i^\ast \leq y\} \cdot \mathbbm{1}\{D^\ast_i=d\}  |  X_i=x]$, in the numerator of the local Wald ratio (\ref{eq:ffm_cdf}) can be similarly estimated by $\hat\mu_{1}(0^-,y,d)$ using the observations $\{i:X_i < 0 \}$.
Likewise, the right-hand limit, $\lim_{x\downarrow 0}E[ \mathbbm{1}\{D^\ast_i=d\}  |  X_i=x]$, in the denominator of (\ref{eq:ffm_cdf}) can be estimated by $\hat\mu_{2}(0^+,d)$ in the local quadratic estimator
\begin{align*}
&\left(\hat\mu_{2}(0^+,d),\hat\mu_{2}'(0^+,d),\hat\mu_{2}''(0^+,d)\right) = \\
&\arg\min_{(\mu,\mu',\mu'')} \sum_{i:X_i>0} \left( \mathbbm{1}\left\{D_i^\ast=d\right\} - \left\{\mu + \mu' X_i + \frac{\mu''}{2!} X_i^2 \right\} \right)^2 \cdot K\left(\frac{X_i}{h_{n}}\right),
\end{align*}
where we can use the same bandwidth $h_{n}$ as above for simplicity here.
The left-hand limit, $\lim_{x\uparrow 0}E[ \mathbbm{1}\{D^\ast_i=d\}  |  X_i=x]$, in the denominator of the local Wald ratio (\ref{eq:ffm_cdf}) can be similarly estimated by $\hat\mu_{2}(0^-,d)$ using the observations $\{i:X_i < 0 \}$.
With these component estimates, the estimand (\ref{eq:ffm_qte}) for the identified local quantile treatment effect may be estimated by
\begin{align*}
\hat\tau_{}(\theta) =
\hat Q_{Y^1|C}(\theta)
-
\hat Q_{Y^0|C}(\theta),
\end{align*}
where
$$
\hat Q_{Y^d|C}(\theta) = 
\inf\left\{y : \frac{\hat\mu_{1}(0^+,y,d) - \hat\mu_{1}(0^-,y,d)}{\hat\mu_{2}(0^+,d) - \hat\mu_{2}(0^-,d)} \geq \theta\right\}
\quad\text{for each } d \in \{0,1\}.
$$

Under suitable conditions, there exists a zero mean Gaussian process $\mathds{G}'_{}$ such that
\begin{equation}\label{eq:overview_weak_conv}
\sqrt{nh_n}[\hat{\tau}_{}(\cdot)-\tau_{}(\cdot)]\leadsto \mathds{G}'_{}(\cdot) \quad\text{as } n \rightarrow \infty.
\end{equation}
See Corollary \ref{corollary:FQRD} (i) ahead for formal and general arguments.
This result establishes the asymptotic distribution result for the fuzzy quantile RDD process with robustness against large bandwidth, e.g., $h_{n} \propto n^{-1/5}$.
In practice, it will be somewhat easier to approximate the limit process $\mathds{G}'_{}$ by the multiplier bootstrap procedure outlined below.

Let $\{\xi_i\}_{i=1}^n$ be a random sample drawn from the standard normal distribution independently from the data $\left\{(Y^*_i,D^*_i,X_i)\right\}_{i=1}^n$.
Letting $\Gamma^\pm_2=\int_{\mathds{R}_\pm}(1 \ u \ u^2)' \cdot K(u) \cdot (1 \ u \ u^2)du$ be a $3 \times 3$ matrix,
we define the estimated multiplier processes for $\hat\mu_{1}(0^+,y,d_1)$ and $\hat\mu_{2}(0^+,d_2)$ by
\begin{align*}
\hat{\nu}^+_{\xi,n}(y,d,1)=\sum_{i:X_i>0}\xi_i\frac{(1 \ 0 \ 0) \cdot (\Gamma^\pm_2)^{-1} \cdot \left(1 \ \frac{X_i}{h_n} \ \frac{X_i^2}{h_n^2}\right)' [\mathds{1}\{Y^*_i\le y,D^*_i=d\}-\tilde{\mu}_{1}(X_i,y,d)]K\left(\frac{X_i}{h_n}\right)}{\sqrt{nh_n}\hat{f}_X(0)}
\\\text{and}\qquad
\hat{\nu}^+_{\xi,n}(d,2)=\sum_{i:X_i>0}\xi_i\frac{(1 \ 0 \ 0) \cdot (\Gamma^\pm_2)^{-1} \cdot \left(1 \ \frac{X_i}{h_n} \ \frac{X_i^2}{h_n^2}\right)' [\mathds{1}\{D^*_i=d\}-\tilde{\mu}_{2}(X_i,d)]K\left(\frac{X_i}{h_n}\right)}{\sqrt{nh_n}\hat{f}_X(0)},
\end{align*}
respectively, where $\hat f_X(0)$ estimates $f_X(0)$, $\tilde\mu_{1}(X_i,y,d)$ estimates $E\left[\mathbbm{1}\left\{Y_i^\ast \leq y\right\} \cdot \mathbbm{1}\left\{D_i^\ast=d\right\} | X_i\right]$, and $\tilde\mu_{2}(X_i,d)$ estimates $E\left[ \mathbbm{1}\left\{D_i^\ast=d\right\} | X_i\right]$. 
Specifically, one can choose any consistent kernel density estimator for $\hat f_X(0)$, and concrete examples of the estimators, $\tilde\mu_{1}(X_i,y,d)$ and $\tilde\mu_{2}(X_i,d)$, are provided in Appendix \ref{sec:first_stage}.
The estimated multiplier processes for $\hat\mu_{1}(0^-,y,d_1)$ and $\hat\mu_{2}(0^-,d_2)$ are similarly defined by $\hat{\nu}^-_{\xi,n}(y,d,1)$ and $\hat{\nu}^-_{\xi,n}(d,2)$ using the observations $\{i:X_i < 0\}$.
By the Hadamard derivative, we construct the approximate estimated multiplier process
\begin{align}
\widehat{\mathbb{G}}_{n}'(\theta)
=&\frac{\hat{\mu}_{2} (0^+,1)-\hat{\mu}_{2} (0^-,1)}{\hat{f}_{Y^1|C}(\hat{Q}_{Y^1|C}(\theta))[\hat{\mu}_{2} (0^+,1)-\hat{\mu}_{2} (0^-,1)]^2} \cdot \left\{ {\hat{\nu}^+_{\xi,n}(\hat{Q}_{Y^1|C}(\theta),1,1)} - {\hat{\nu}^-_{\xi,n}(\hat{Q}_{Y^1|C}(\theta),1,1)} \right\} \nonumber\\
-&\frac{\hat{\mu}_{1}(0^+,\hat{Q}_{Y^1|C}(\theta),1)-\hat{\mu}_{1}(0^-,\hat{Q}_{Y^1|C}(\theta),1)}{\hat{f}_{Y^1|C}(\hat{Q}_{Y^1|C}(\theta))[\hat{\mu}_{2} (0^+,1)-\hat{\mu}_{2} (0^-,1)]^2} \cdot \left\{ {\hat{\nu}^+_{\xi,n}(1,2)} - {\hat{\nu}^-_{\xi,n}(1,2)} \right\}\nonumber \\
-&\frac{\hat{\mu}_{2} (0^+,0)-\hat{\mu}_{2} (0^-,0)}{\hat{f}_{Y^0|C}(\hat{Q}_{Y^0|C}(\theta))[\hat{\mu}_{2} (0^+,0)-\hat{\mu}_{2} (0^-,0)]^2} \cdot \left\{ {\hat{\nu}^+_{\xi,n}(\hat{Q}_{Y^0|C}(\theta),0,1)} - {\hat{\nu}^-_{\xi,n}(\hat{Q}_{Y^0|C}(\theta),0,1)} \right\} \nonumber\\
+&\frac{\hat{\mu}_{1}(0^+,\hat{Q}_{Y^0|C}(\theta),0)-\hat{\mu}_{1}(0^-,\hat{Q}_{Y^0|C}(\theta),0)}{\hat{f}_{Y^0|C}(\hat{Q}_{Y^0|C}(\theta))[\hat{\mu}_{2} (0^+,0)-\hat{\mu}_{2} (0^-,0)]^2} \cdot \left\{ {\hat{\nu}^+_{\xi,n}(0,2)} - {\hat{\nu}^-_{\xi,n}(0,2)} \right\}. \label{eq:approx_estimated_MP}
\end{align}
Under suitable conditions, with probability approaching one, this process $\widehat{\mathbb{G}}_{n}'$ weakly converges to the limit process, $\mathds{G}'_{}$, of interest conditionally on the data $\left\{(Y^\ast_i,D^\ast_i,X_i)\right\}_{i=1}^n$, i.e.,
\begin{equation}\label{eq:cond_overview_weak_conv}
\widehat{\mathbb{G}}_{n}' \underset{\xi}{\overset{p}{\leadsto}} \mathds{G}'_{} \quad\text{as } n \rightarrow \infty.
\end{equation}
See Corollary \ref{corollary:FQRD} (ii) ahead for formal and general arguments.
From (\ref{eq:overview_weak_conv}) and (\ref{eq:cond_overview_weak_conv}), therefore, we may use $\widehat{\mathbb{G}}_{n}'$ to approximate the limit process of $\sqrt{nh_n}[\hat{\tau}_{}-\tau_{}]$.

One of the most relevant practical applications of this result is to test the null hypothesis of uniform treatment nullity:
$$
H_0: \tau_{}(\theta) = 0 \quad\text{for all } \theta \in [a,1-a]
$$
for some $\alpha \in (0,1/2)$.
To test this hypothesis, we can use $\sup_{\theta \in [a,1-a]} \sqrt{nh_n} \abs{ \hat\tau_{}(\theta) }$ as the test statistic, and use
$$
\sup_{\theta \in [a,1-a]} \abs{\widehat{\mathbb{G}}_{n}'(\theta)}
$$
to simulate its asymptotic distribution.

Another of the most relevant practical applications of the above corollary is to test the null hypothesis of treatment homogeneity across quantiles:
$$
H_0: \tau_{}(\theta) = \tau_{}(\theta') \quad\text{for all } \theta, \theta' \in [a,1-a].
$$
To test this hypothesis, we can use $\sup_{\theta \in [a,1-a]} \sqrt{nh_n} \abs{ \hat\tau_{}(\theta) - (1-2a)^{-1} \int_{[a,1-a]} \hat\tau_{}(\vartheta)d\vartheta }$ as the test statistic, and use
\begin{align*}
\sup_{\theta \in [a,1-a]} \left\vert
\widehat{\mathbb{G}}_{n}'(\theta) \right.
\left. -
\frac{1}{1-2a} \int_{[a,1-a]} \widehat{\mathbb{G}}_{n}'(\vartheta) d\vartheta
\right\vert
\end{align*}
to simulate its asymptotic distribution.

Finally, we can use the approximate estimated multiplier process to construct uniform confidence bands for the quantile treatment effects.
To this end, we compute
\begin{align*}
\hat{\mathcal C}_{n}(a,1-a;\lambda) = \text{the $(1-\lambda)$-th quantile of } \sup_{\theta \in [a,1-a]} \abs{\widehat{\mathbb{G}}_{n}'(\theta)}.
\end{align*}
The band of the form
\begin{align*}
\left[\hat\tau_{}(\theta) \pm \frac{1}{\sqrt{nh_n}}\hat{\mathcal C}_{n}(a,1-a;\lambda) \ : \ \theta \in [a,1-a] \right]
\end{align*}
constitutes a $100(1-\lambda)$ percent uniform confidence band for the quantile treatment effects $\tau_{}$.
In summary, we provide a step-by-step procedure below.

\begin{algorithm}[Practical Guideline on Constructing Uniform Confidence Bands]
${}$
\begin{enumerate}[Step 1.]
    \item Pick a finite set $\mathcal Y^* \subset \mathcal Y$ of grid points of outcome values and a finite set $\mathcal T^*\subset [a,1-a]$ of grid points of quantiles. 
		Estimate $\hat \mu_1(0^\pm,y,d)$ and $\hat \mu_2(0^\pm,d)$ for all $y\in \mathcal Y^*$, $d\in \{0,1\}$. 
    \item Calculate $\hat Q_{Y^d|C}(\theta)$ for each $\theta\in \mathcal T^*$ by
    \begin{align*}
\hat Q_{Y^d|C}(\theta) =
\inf\left\{y \in \mathcal Y^*  : \frac{\hat\mu_{1}(0^+,y,d) - \hat\mu_{1}(0^-,y,d)}{\hat\mu_{2}(0^+,d) - \hat\mu_{2}(0^-,d)} \geq \theta\right\}
\end{align*}
for $d\in \{0,1\}$, and then compute $\hat \tau_{}=\hat Q_{Y^1|C}(\theta)-\hat Q_{Y^1|C}(\theta)$ for each $\theta \in \mathcal T^*$.
	\item Compute $\hat f_X(0)$ and $\hat f_{Y^d|C}(\hat Q_{Y^d|C}(\theta))$ -- see Appendix \ref{sec:first_stage} for formulas.
	\item For each bootstrap iteration $b=1,...,B$, generate independent standard normal $\xi^b=\{\xi^b_i\}_{i=1}^n$ independently from data, and compute $\hat \nu_{\xi^b,n}(\theta,d,1)$ and $\hat \nu_{\xi^b,n}(d,2)$ for each $\theta \in \mathcal T^*$, $d\in \{0,1\}$. 
	(Note in each iteration $b$, we compute these values for different $\theta$ and $d$ using the same $\xi^b$.)
	\item Construct $\hat{\mathbb G}_{n,b}^{\prime}$ for each $\theta\in \mathcal T^*$:
	\begin{align*}
	\widehat{\mathbb{G}}^{\prime}_{n,b}(\theta) =
&\frac{\hat{\mu}_{2} (0^+,1)-\hat{\mu}_{2} (0^-,1)}{\hat{f}_{Y^1|C}(\hat{Q}_{Y^1|C}(\theta))[\hat{\mu}_{2} (0^+,1)-\hat{\mu}_{2} (0^-,1)]^2} \cdot \left\{ {\hat{\nu}^+_{\xi^b,n}(\hat{Q}_{Y^1|C}(\theta),1,1)} - {\hat{\nu}^-_{\xi^b,n}(\hat{Q}_{Y^1|C}(\theta),1,1)} \right\} \\
-&\frac{\hat{\mu}_{1}(0^+,\hat{Q}_{Y^1|C}(\theta),1)-\hat{\mu}_{1}(0^-,\hat{Q}_{Y^1|C}(\theta),1)}{\hat{f}_{Y^1|C}(\hat{Q}_{Y^1|C}(\theta))[\hat{\mu}_{2} (0^+,1)-\hat{\mu}_{2} (0^-,1)]^2} \cdot \left\{ {\hat{\nu}^+_{\xi^b,n}(1,2)} - {\hat{\nu}^-_{\xi^b,n}(1,2)} \right\} \\
-&\frac{\hat{\mu}_{2} (0^+,0)-\hat{\mu}_{2} (0^-,0)}{\hat{f}_{Y^0|C}(\hat{Q}_{Y^0|C}(\theta))[\hat{\mu}_{2} (0^+,0)-\hat{\mu}_{2} (0^-,0)]^2} \cdot \left\{ {\hat{\nu}^+_{\xi^b,n}(\hat{Q}_{Y^0|C}(\theta),0,1)} - {\hat{\nu}^-_{\xi^b,n}(\hat{Q}_{Y^0|C}(\theta),0,1)} \right\} \\
+&\frac{\hat{\mu}_{1}(0^+,\hat{Q}_{Y^0|C}(\theta),0)-\hat{\mu}_{1}(0^-,\hat{Q}_{Y^0|C}(\theta),0)}{\hat{f}_{Y^0|C}(\hat{Q}_{Y^0|C}(\theta))[\hat{\mu}_{2} (0^+,0)-\hat{\mu}_{2} (0^-,0)]^2} \cdot \left\{ {\hat{\nu}^+_{\xi^b,n}(0,2)} - {\hat{\nu}^-_{\xi^b,n}(0,2)} \right\}.
\end{align*}
\item Set
$
\hat{\mathcal C}^B_{n}(a,1-a;\lambda) = \text{the $(1-\lambda)$-th quantile of } \Big\{\max_{\theta \in  \mathcal T^*} \abs{\widehat{\mathbb{G}}_{n,b}^{\prime}(\theta)}\Big\}^B_{b=1},
$
and construct an asymptotically valid $100(1-\lambda)$ percent uniform confidence band over $[a,1-a] $by
\begin{align*}
\left[\hat\tau_{}(\theta) \pm \frac{1}{\sqrt{nh_n}}\hat{\mathcal C}^B_{n}(a,1-a;\lambda) \ : \ \theta \in \mathcal T^* \right].
\end{align*}
\end{enumerate}
\end{algorithm}
\begin{remark}
Steps 1--5 also give the bootstrapped distribution $\{\hat{\mathbb G}_{n,b}'\}_{b=1}^B$, which can be used to construct critical values for tests of uniform treatment nullity and treatment homogeneity as well. For the null hypothesis of uniform treatment nullity, use $\max_{\theta \in [a,1-a]} \sqrt{nh_n} \abs{ \hat\tau_{}(\theta) }$ as the test statistic, and use $(1-\lambda)$-th quantile of
$$
\Big\{\max_{\theta \in [a,1-a]} \abs{\widehat{\mathbb{G}}_{n,b}'(\theta)}\Big\}^B_{b=1}
$$
as the critical value. For the null hypothesis of treatment homogeneity, let
$$\max_{\theta \in \mathcal [a,1-a]}  \sqrt{nh_n} \Big| \hat\tau_{}(\theta) - (1-2a)^{-1} \int_{[a,1-a]} \hat\tau_{}(\vartheta)d\vartheta \Big|$$ 
be the test statistic, and the $(1-\lambda)$-th quantile of
\begin{align*}
\Big\{
\max_{\theta \in \mathcal [a,1-a]} \Big|
\widehat{\mathbb{G}}_{n,b}'(\theta) 
 -
\frac{1}{1-2a} \int_{[a,1-a]} \widehat{\mathbb{G}}_{n,b}'(\vartheta) d\vartheta
 \Big| \Big\}_{b=1}^B
\end{align*}
be the critical value for the test.
\end{remark}
\begin{remark}
In practice we may set $|\mathcal T^*|=50$ and $|\mathcal Y^*|=5000$.  
We tried a local polynomial mean regression for each of the $5000$ grid points of $y\in \mathcal Y^*$, and found that it is not computationally burdensome in general, since local polynomial mean regressions are smooth convex problems and gradient decent algorithms can solve them efficiently.
\end{remark}

\section{A Unified Framework}\label{sec:general_framework}

In this section, we present a generalized framework for a broad class of local Wald estimands that encompasses not only the case of the fuzzy quantile RDD, but also the cases of the sharp mean RDD, the fuzzy mean RDD, the sharp mean RKD, the fuzzy mean RKD, the sharp CDF discontinuity design, the fuzzy CDF discontinuity design, the sharp quantile RDD, the sharp quantile RKD, and the fuzzy quantile RKD.
We revisit the case of the fuzzy quantile RDD in Section \ref{sec:a:FQRD} to provide a formal justification of the overview in Section \ref{sec:overview}.
All the other examples are relegated to Appendix \ref{sec:additional_examples} and Appendix \ref{sec:application_examples}.

\subsection{The General Framework}

Let $(Y,D,X)$ be a random vector defined on a probability space $(\Omega^x,\mathcal{F}^x,\mathds{P}^x)$, where $Y$ is a random vector containing an outcome and possibly other variables, $D$ is a random vector containing a treatment indicator and possibly others, and $X$ is a running variable or an assignment variable. We denote their supports as $\mathscr{Y}$, $\mathscr{D}$ and $\mathscr{X}$, respectively.
Suppose that a researcher observes $n$ i.i.d. copies $\{(Y_i,D_i,X_i)\}^n_{i=1}$ of $(Y,D,X)$.
Consider some subsets of some finite dimensional Euclidean spaces $\Theta_1$, $\Theta_2$, $\Theta_1'$, $\Theta_2'$, and $\Theta''$. We will use them to denote sets of indices. Let $\Theta=\Theta_1 \times \Theta_2$, and let $g_1:\mathscr{Y}\times \Theta_1 \rightarrow \mathds{R}$ and $g_2:\mathscr{D}\times \Theta_2 \rightarrow \mathds{R}$ be functions to be defined in various contexts of empirical research designs -- concrete examples are suggested in the subsections below and in the Supplementary Appendix. When we discuss the continuity of $g_k$ in $\theta_k$, $k \in \{1,2\}$, we consider $\Theta_1$ and $\Theta_2$
with the topologies they inherit from the finite dimensional Euclidean spaces they reside in.
We write $\mu_{1}(x,\theta_1)=E[g_1(Y_i,\theta_1)|X_i=x]$ and $\mu_{2}(x,\theta_2)=E[g_2(D_i,\theta_2)|X_i=x]$.
Their $v$-th order partial derivatives with respect to $x$ are denoted by $\mu^{(v)}_{1} =\frac{\partial^v}{\partial x^v}\mu_{1}$ and $\mu^{(v)}_{2} =\frac{\partial^v}{\partial x^v}\mu_{2}$. 
 For a set $T$, we denote $\mathcal{C}^1(T)$ as the collection of all real-valued functions on $T$ that are continuously differentiable, and $\ell^\infty(T)$ is the collection of all bounded real-valued functions on $T$.
With suitable operators
$\phi:\ell^\infty(\Theta_1) \rightarrow \ell^\infty(\Theta_1')$,
$\psi:\ell^\infty(\Theta_2) \rightarrow \ell^\infty(\Theta_2')$, and
$\Upsilon:\ell^\infty(\Theta_1' \times \Theta_2')\rightarrow \ell^\infty(\Theta'')$,
a general class of local Wald estimands can be expressed in the form of
\begin{align}\label{eq:wald_est}
&\tau(\theta'')=\Upsilon\Bigg(\frac{\phi\left( \lim_{x\downarrow 0}\mu^{(v)}_{1}(x,\cdot)\right)(\cdot)-\phi\left(\lim_{x\uparrow 0}\mu^{(v)}_{1}(x,\cdot)\right)(\cdot)}{\psi\left(\lim_{x\downarrow 0}\mu^{(v)}_{2}(x,\cdot)\right)(\cdot)-\psi\left(\lim_{x\uparrow 0}\mu^{(v)}_{2}(x,\cdot)\right)(\cdot)}\Bigg)(\theta'').
\end{align}
for all $\theta''\in \Theta''$.
This class of local Wald estimands encompasses a wide array of design-based estimands used by empirical practitioners.
In all examples, $v$ is either 0 (for RDD) or 1 (for RKD), and setting the order of local polynomial estimator $p=v+2$ would generally suffice in practice.
We list two examples below: 
Example \ref{ex:fuzzy_rdd} illustrates the case of the fuzzy mean RDD; 
Example \ref{ex:fuzzy_quantile_rdd} illustrates the case of the fuzzy quantile RDD for which the existing literature has proposed no robust uniform inference methods -- also overviewed in Section \ref{sec:overview}.
See Appendix \ref{sec:application_examples} for additional examples.
For convenience of writing, we introduce the notation for the intermediate local Wald estimand:
$$
W=\frac{\phi(\mu^{(v)}_{1}(0^+,\cdot))-\phi(\mu^{(v)}_{1}(0^-,\cdot))}
{\psi(\mu^{(v)}_{2}(0^+,\cdot))-\psi(\mu^{(v)}_{2}(0^-,\cdot))}.
$$

\begin{example}[Fuzzy Mean RDD]\label{ex:fuzzy_rdd}
We do not need index sets for fuzzy mean RDD, so let $\Theta_1 = \Theta_2 = \Theta_1' = \Theta_2' = \Theta'' = \{0\}$ for simplicity.
Set $g_1(Y_i,\theta_1) = Y_i$ and $g_2(D_i,\theta_2) = D_i$.
Note that $\mu_{1}(x,\theta_1) = \E[g_1(Y_i,\theta_1)  |  X_i=x] = \E[Y_i  |  X_i=x]$ and $\mu_{2}(x,\theta_2) = \E[g_2(D_i,\theta_2)  |  X_i=x] = \E[ D_i  |  X_i=x]$.
Let $\phi$ and $\psi$ be the identity operators, and for $W \in \ell^\infty (\Theta'_1 \times \Theta'_2)$ let the operator $\Upsilon$ be $\Upsilon\left( W\right)(\theta'') = W(\theta'',\theta'')$ $\forall \theta'' \in \Theta''$.
The local Wald estimand (\ref{eq:wald_est}) with $v=0$ in this setting becomes
\begin{eqnarray}
\tau(\theta'')
&=&\frac{\lim_{x\downarrow 0}\E[ Y_i  |  X_i=x]-\lim_{x\uparrow 0}\E[ Y_i  |  X_i=x]}{\lim_{x\downarrow 0}\E[ D_i  |  X_i=x]-\lim_{x\uparrow 0}\E[ D_i  |  X_i=x]}
\label{eq:fmrd}
\end{eqnarray}
for all $\theta'' \in \Theta'' = \{0\}$.
This estimand $\tau(0)$ will be denoted by $\tau_{FMRD}$ for Fuzzy Mean RD design.
\qed
\end{example}

\begin{example}[Fuzzy Quantile RDD]\label{ex:fuzzy_quantile_rdd}
Consider the estimand (\ref{eq:ffm_cdf}) proposed by Frandsen, Fr\"olich and Melly (2012) to identify the conditional CDF of potential outcome $Y^d_i$ under each treatment status $d \in \{0,1\}$ given the event $C$ of compliance, and the quantile treatment effect (\ref{eq:ffm_qte}) given this event $C$.
These estimands also fit in the general framework (\ref{eq:wald_est}).
We first fix an $a\in(0,1/2)$, $\varepsilon>0$ and let $ \mathscr{Y}_1=[Q_{Y^1|C}(a)-\varepsilon,Q_{Y^1|C}(1-a)+\varepsilon]\cup [Q_{Y^0|C}(a)-\varepsilon,Q_{Y^0|C}(1-a)+\varepsilon]$. Let $\Theta_1 = \Theta_1' = \mathscr{Y}_1 \times \mathscr{D}$ and $\Theta_2 = \Theta_2' = \mathscr{D}$ for $\mathscr{D} = \{0,1\}$, and let $\Theta'' = [a,1-a]$ for a constant $a \in (0,1/2)$.
Let $Y_i = (Y_i^\ast, D_i^\ast)$ and $D_i = D_i^\ast$.
Set $g_1((Y^\ast_i,D^\ast_i),(y,d)) = \mathbbm{1}\{Y_i^\ast \leq y\} \cdot \mathbbm{1}\{D^\ast_i=d\}$ and $g_2(D_i^\ast,d) = \mathbbm{1}\{D^\ast_i=d\}$.
Note that $\mu_{1}(x,y,d) = \E[g_1((Y_i^\ast,D_i^\ast),y,d)  |  X_i=x] = \E[ \mathbbm{1}\{Y_i^\ast \leq y\} \cdot \mathbbm{1}\{D_i^\ast=d\}  |  X_i=x]$ and $\mu_{2}(x,d) = \E[g_2(D_i^\ast,d)  |  X_i=x] = \E[ \mathbbm{1}\{D_i^\ast=d\}  |  X_i=x]$.
Let $\phi$ and $\psi$ be the identity operators, and define $\Upsilon$ for each $W\in \ell^\infty(\Theta'_1 \times \Theta'_2)=\ell^\infty(\mathscr{Y}\times \mathscr{D}^2)$ by
$
\Upsilon( W)(\theta'')
\ = \
\inf\{y \in \mathscr{Y} : W(y,1,1) \geq \theta'' \}
-
\inf\{y \in \mathscr{Y} : W(y,0,0) \geq \theta'' \}.
$
The local Wald estimand (\ref{eq:wald_est}) with $v=0$ in this setting becomes (\ref{eq:ffm_qte}), i.e.,
\begin{eqnarray}
\tau(\theta'')
&=&
Q_{Y^1 | C}(\theta'')
-
Q_{Y^0 | C}(\theta'')
\nonumber
\end{eqnarray}
for all $\theta'' \in \Theta'' = [a,1-a]$, where $Q_{Y^d | C}(\theta'') := \inf\{y \in \mathscr{Y} : F_{Y^d | C}(y) \geq \theta'' \}$ for a short-hand notation, and $F_{Y^d | C}(y)$ is given in (\ref{eq:ffm_cdf}) for all $(y,d) \in \mathscr{Y} \times \mathscr{D}$.
This estimand $\tau$ will be denoted by $\tau_{FQRD}$ for Fuzzy Quantile RD design.\qed
\end{example}

We introduce some short-hand notations for conservation of space.
Let
$$
\Epsilon_{k}(y,d,x,\theta)=g_k(y,\theta_k)-\mu_{k}(x,\theta_k)
$$
for $(\theta_1,\theta_2) \in \Theta$, $k \in \{1,2\}$, $y \in \mathscr{Y}$, $d \in \mathscr{D}$, and $x \in \mathscr{X}$.
Let
$$
\sigma_{k l}(\theta,\vartheta|x)=E[\Epsilon_k(Y_i,D_i,X_i,\theta) \ \Epsilon_{l}(Y_i,D_i,X_i,\vartheta)|X_i=x]
$$
denote the conditional covariance of residuals for each $\theta=(\theta_1,\theta_2)$, $\vartheta=(\vartheta_1,\vartheta_2) \in \Theta$, and $k,l \in \{1,2\}$. Also define the product space $\mathds{T}=\Theta \times \{1,2\}=(\Theta_1 \times \Theta_2) \times \{1,2\}$.
We will also use the following short-hand notations for functions at right- and left-hand limits: $\mu^{(v)}_{k}(0^+,\theta)=\lim_{x\downarrow 0}\mu^{(v)}_k(x,\theta)$ and $\mu^{(v)}_{k}(0^-,\theta)=\lim_{x\uparrow 0}\mu^{(v)}_k(x,\theta)$ for $k=\{1,2\}$.
The composite notation $\mu^{(v)}_{k}(0^\pm,\theta)$ is used to collectively refer to $\mu^{(v)}_{k}(0^+,\theta)$ and $\mu^{(v)}_{k}(0^-,\theta)$.
Let $r_p(x)=(1,x,...,x^p)'$.
Let $K$ denote a kernel function, and let $(h_{1,n}(\theta_1),h_{2,n}(\theta_2))$ denote bandwidth parameters that depend on $\theta = (\theta_1,\theta_2) \in \Theta$ and the sample size $n \in \mathbb{N}$.
For $p \in \mathbb{N}$, let $e_v$ denote the $v$-th standard basis element of $\mathbb{R}^p$.
We write $\Gamma^\pm_p=\int_{\mathds{R}_\pm}K(u)r_p(u)r'_p(u)du$ and $\Lambda^\pm_{p,q}=\int_{\mathds{R}_\pm}u^qK(u)r_p(u)du$. 
We use the notation $\leadsto$ to denote weak convergence, and the notation $\underset{\xi}{\overset{p}{\leadsto}}$ for conditional weak convergence
as defined in Section 2.2.3 of Korosok (2008) -- see Appendix \ref{sec:prel_lemmas} for more details.
Let $v$, $p$, $q\in \mathds{N}_+$ with $v\le p$.
We will use $v$ for the order of derivative of interest as in (\ref{eq:wald_est}), and $p$ stands for the order of local polynomial fitting to estimate (\ref{eq:wald_est}).

\subsection{The Local Wald Estimator}\label{sec:local_wald_estimator}

In this section, we develop an estimator for the nonparametric components $\mu_k^{(v)}(0^\pm, \cdot)$, $k\in\{1,2\}$, of the local Wald estimand (\ref{eq:wald_est}) based on local polynomial fitting with the bias correction approach proposed by Calonico, Cattaneo and Titiunik (2014) -- for a comprehensive treatment for local polynomial models, see Fan and Gijbels (1996).
Under proper smoothness assumptions to be formally stated below, the $p$-th order approximations
\begin{align*}
\mu_{k}(x,\theta_k)
\approx \mu_{k}(0^+,\theta_k)+\mu^{(1)}_{k}(0^+,\theta_k)x+...+\frac{\mu^{(p)}_{k}(0^+,\theta_k)}{p!}x^p
= r_p(x/h)' \alpha_{k+,p}(\theta_k)
\qquad x > 0
\\
\mu_{k}(x,\theta_k)
\approx \mu_{k}(0^-,\theta_k)+\mu^{(1)}_{k}(0^-,\theta_k)x+...+\frac{\mu^{(p)}_{k}(0^-,\theta_k)}{p!}x^p
= r_p(x/h)' \alpha_{k-,p}(\theta_k)
\qquad x < 0
\end{align*}
hold for each $k\in\{1,2\}$, where $\alpha_{k\pm,p}(\theta)=[\mu_{k}(0^\pm,\theta_k)/0!,\mu^{(1)}_{k}(0^\pm,\theta_k)h/1!,...,\mu^{(p)}_{k}(0^\pm,\theta_k)h^p/p!]$ and $h>0$.
To estimate $\alpha_{k\pm,p}(\theta_k)$, we solve the one-sided local weighted least squares problems
\begin{align}
\hat{\alpha}_{1\pm,p}(\theta_1)&=\argmin_{\alpha\in \mathds{R}^{p+1}} \sum_{i=1}^{n} \delta_i^\pm \Big(g_1(Y_i,\theta_1)-r_p\Big(\frac{X_i}{h_{1,n}(\theta_1)}\Big)'\alpha\Big)^2 K\left(\frac{X_i}{h_{1,n}(\theta_1)}\right)\\
\hat{\alpha}_{2\pm,p}(\theta_2)&=\argmin_{\alpha\in \mathds{R}^{p+1}} \sum_{i=1}^{n} \delta_i^\pm \Big(g_2(D_i,\theta_2)-r_p\Big(\frac{X_i}{h_{2,n}(\theta_2)}\Big)'\alpha\Big)^2 K\left(\frac{X_i}{h_{2,n}(\theta_2)}\right),
\label{eq:local_poly_reg}
\end{align}
where $\delta^+_i=\mathds{1}\{X_i\ge 0\}$ and $\delta^-_i=\mathds{1}\{X_i\le 0\}$.
We let the coordinates of these estimates be written by
\begin{align*}
&\hat{\alpha}_{k\pm,p}(\theta_k)=[\hat{\mu}_{k,p}(0^\pm,\theta_k)/0!,\hat{\mu}^{(1)}_{k,p}(0^\pm,\theta_k)h_{k,n}(\theta_k)/1!,...,\hat{\mu}^{(p)}_{k,p}(0^\pm,\theta)h^p_{k,n}(\theta_k)/p!]'.
\end{align*}

With these component estimates, the local Wald estimand (\ref{eq:wald_est}) is in turn estimated by the plug-in estimator.
\begin{align}\label{eq:wald_estimator}
&\hat\tau(\theta'')=\Upsilon\Bigg(\frac{\phi\left( \hat\mu^{(v)}_{1,p}(0^+,\cdot)\right)(\cdot)-\phi\left(\hat\mu^{(v)}_{1,p}(0^-,\cdot)\right)(\cdot)}{\psi\left(\hat\mu^{(v)}_{2,p}(0^+,\cdot)\right)(\cdot)-\psi\left(\hat\mu^{(v)}_{2,p}(0^-,\cdot)\right)(\cdot)}\Bigg)(\theta'')\qquad\text{for each $\theta''\in \Theta''$.}
\end{align}

\subsection{Weak Convergence}\label{sec:weak_convergence}

In this section, we establish the weak convergence result for the process $\sqrt{nh_n^{1+2v}} [\hat\tau-\tau]$ for some bandwidth $h_n$.
To this end we state the following set of assumptions.
\begin{assumption}[Uniform Bahadur Representation]\label{a:BR}
Let $\underline{x} < 0 < \overline{x}$,\\
(i) (a) $\{(Y_i,D_i,X_i)\}^n_{i=1}$ are $n$ i.i.d. copies of random vector $(Y,D,X)$ defined on a probability space $(\Omega^x,\mathcal{F}^x,\mathds{P}^x)$;
(b) $X$ has a density function $f_X$ which is continuously differentiable on $[\underline{x},\overline{x}]$, and $0<f_X(0)<\infty$.
\\
(ii) For each $k=1,2 $, (a) the collections of real-valued functions, $\{x \mapsto \mu_{k}(x,\theta_k):\theta_k \in \Theta_k\}$, $\{y\mapsto g_1(y,\theta_1):\theta_1 \in \Theta_1\}$, and $\{d\mapsto g_2(d,\theta_2):\theta_2 \in \Theta_2\}$, are of VC type with a common integrable envelope $F_\Epsilon$ such that $\int_{\mathscr{Y}\times \mathscr{D} \times[\underline{x},\overline{x}]} |F_\Epsilon (y,d,x)|^{2+\epsilon}d\mathds{P}^x(y,d,x)<\infty$ for some $\epsilon>0$;
(b) $\mu^{(j)}_{k}$ is Lipschitz on $[\underline{x},0)\times \Theta_k$ and $(0,\overline{x}]\times \Theta_k$ for $j=0,1,2,...,p+1$; (c) For any $(\theta,k)$, $(\vartheta,l)\in \mathds{T}$, we have $\sigma_{k l}(\theta,\vartheta| \ \cdot \ ) \in\mathcal{C}^1([\underline{x},\overline{x}]\setminus \{0\})$ with bounded derivatives in $x$ and $\sigma_{k l}(\theta, \vartheta|0^\pm)< \infty$; (d) For each $y\in \mathscr{Y}$, $g_1(y,\cdot)$ is left- or right-continuous in each dimension. Similarly, for each $d\in \mathscr{D}$, $g_2(d,\cdot)$ is left- or right-continuous in each dimension.\\
(iii) There exist bounded Lipschitz functions $c_1 :\Theta_1\to[\underline{c},\overline{c}]\subset(0,\infty)$ and $c_2 :\Theta_2\to[\underline{c},\overline{c}]\subset(0,\infty)$ such that $h_{1,n}(\theta_1)=c_1(\theta_1)h_n$ and $h_{2,n}(\theta_2)=c_2(\theta_2)h_n$ hold for baseline bandwidth $h_n$ satisfying $h_n\to 0$, $nh^2_n\to \infty$ and $nh^{2p+3}_n\to 0$ for some $h_0 < \infty$.\\
(iv) (a) $K:[-1,1]\to \mathds{R}^+$ is bounded and continuous;
(b) $\{K(\cdot/h):h>0\}$ is of VC type. (c) $\Gamma^\pm_p$ is positive definite.
\end{assumption}

Condition (i) requires a random sampling of $(Y,D,X)$ and sufficient data around $X=0$.
The i.i.d. condition (i) (a) is shared by most of the prior work on regression discontinuity and kink designs.
Exceptions are Bartlalotti and Brummet (2017) and Calonico, Cattaneo, Farrell and Titiunik (2018), which relax the assumption of identical distribution and study cluster-robust inference -- we will also present a method of cluster-robust inference in Section \ref{sec:cluster_robust} by extending our baseline results.
Versions of the smoothness and nonzero requirements in condition (i) (b) are shared by the assumptions made in the prior work, and relate to the absence of endogenous sorting.
This assumption is analogous to Assumption 1.1. in the closely related benchmark paper by Bartalotti, Calhoun and He (2017).
Regarding condition (ii), a sufficient condition for $\{x \mapsto \mu_{k}(x,\theta_k):\theta_k \in \Theta_k\}$ to be of VC type class is, for example, the existence of some non-negative function $M_k:\mathscr{X}\rightarrow \mathds{R}_+$ such that $|\mu_{k}(x,\bar{\theta}_k)-\mu_{k}(x,\theta_k)|\le M_k(x)|\bar{\theta}_k-\theta_k|$ for all $\bar{\theta}_k,\theta_k\in \Theta_k$ for each $k = 1,2$.
Analogous remarks apply to $\{y\mapsto g_1(y,\theta_1):\theta_1 \in \Theta_1\}$ and $\{d\mapsto g_2(d,\theta_2):\theta_2 \in \Theta_2\}$ as well.
Another sufficient condition is the case when a class of functions is of variations bounded by one, e.g. in the case of CDF estimation, $\{y \mapsto \mathds{1}\{y \le y'\}:y'\in \mathscr{Y}\}$ satisfies the VC type condition.
Also notice that the common integrable envelope $F_\Epsilon$ in condition (ii) is satisfied if all the classes of functions are uniformly bounded, but does not rule out some cases that some of these classes of functions are unbounded.
In Appendix \ref{sec:application_examples}, we will check these high-level assumptions with primitive sufficient assumptions for each of the ten specific examples presented in Examples \ref{ex:fuzzy_rdd}--\ref{ex:fuzzy_quantile_rdd} and Appendix \ref{sec:ex:sharp_rdd}--\ref{sec:ex:group}.
Parts (ii) (b) and (ii) (c) are analogous to Assumption 1.3. and Assumption 1.4., respectively, in  Bartalotti, Calhoun and He (2017).
Condition (iii) specifies admissible rates of bandwidths, which are consistent with common choice rules (e.g., Imbens and Kalyanaraman, 2012; Calonico, Cattaneo and Titiunik, 2014; Arai and Ichimura, 2016; Calonico, Cattaeneo, and Farrell, 2016ab; Arai and Ichimura, 2018; for regression discontinuity designs) -- see Appendix \ref{sec:practical_guideline}.
This condition is analogous to Assumption 2 in Bartalotti, Calhoun and He (2017) for the special case of $h \sim b$ in their notation.
Condition (iv) is satisfied by common kernel functions, such as uniform, triangular, biweight, triweight, and Epanechnikov kernels to list a few examples, while the normal kernel is obviously ruled out.

We will show below the weak convergence in $\ell^\infty(\mathbb{T})$ of the Bahadur Representation (BR) 
\begin{align*}
\nu^\pm_n(\theta,k)&=v!\sum_{i=1}^{n}\frac{e'_v(\Gamma^\pm_p)^{-1}\Epsilon_k(Y_i,D_i,X_i,\theta) r_p(\frac{X_i}{h_{k,n}(\theta_k)}) K(\frac{X_i}{h_{k,n}(\theta_k)})\delta^\pm_i}{\sqrt{nh_{k,n}(\theta_k)}f_X(0)}
\qquad
(\theta,k)\in \mathds{T}.
\end{align*}
We also write $\nu_n(\cdot)=\nu^+_n(\cdot)-\nu^-_n(\cdot)$.
By the functional delta method, the weak convergence translate into the asymptotic distribution of the process $\sqrt{nh^{1+2v}_n}[\hat{\tau}(\cdot)-\tau(\cdot)]$ for the local Wald estimator (\ref{eq:wald_estimator}).
We set the following additional assumption.
Primitive conditions for it will be discussed specifically for the fuzzy quantile RDD in Section \ref{sec:a:FQRD}, and for each of the nine examples in Appendix \ref{sec:application_examples}.

\begin{assumption}[Conditional Weak Convergence]\label{a:cond_weak_conv}
\qquad\\
(i) $\psi$, $\phi$ and $\Upsilon$ are Hadamard differentiable at $\mu^{(v)}_{1}(0^\pm,\cdot)$, $\mu^{(v)}_{2}(0^\pm,\cdot)$, and $W$, respectively, tangentially to some subspaces of their domains, with their Hadamard derivatives denoted by $\phi'_{\mu^{(v)}_{1}(0^\pm,\cdot)}$, $\psi'_{\mu^{(v)}_{2}(0^\pm,\cdot)}$, and $\Upsilon'_W$, respectively.\\
(ii) $\inf_{\theta'_2 \in \Theta_2'}|\psi(\mu^{(v)}_{2}(0^+,\cdot))(\theta'_2)- \psi(\mu^{(v)}_{2}(0^-,\cdot))(\theta'_2)| > 0$.\\
(iii) $nh^{1+2v}_n\to \infty$  as $n \to \infty$.
\end{assumption}

Condition (ii) requires the existence of a jump or a kink, which is assumed in most of the prior work as the key identification condition.
Exceptions are Otsu, Xu, and Matsushita (2015) and Feir, Lemieux, and Marmer (2016), which provide weak-identification-robust methods of inference.
We later use this idea to relax condition (ii) of this assumption in Section \ref{sec:weak_jumps}.
Condition (iii) restricts admissible rates of bandwidths, which are consistent with common choice rules (e.g., Imbens and Kalyanaraman, 2012; Calonico, Cattaneo and Titiunik, 2014; Arai and Ichimura, 2016; Calonico, Cattaeneo, and Farrell, 2016ab; Arai and Ichimura, 2018; for regression discontinuity designs) -- see Appendix \ref{sec:practical_guideline}.
The next theorem states the weak convergence result for the process $\sqrt{nh^{1+2v}_n}[\hat{\tau}-\tau]$.

\begin{theorem}[Weak Convergence]\label{theorem:weak_conv}
Under Assumptions \ref{a:BR} and \ref{a:cond_weak_conv}, we have $\nu^\pm_n\leadsto \mathds{G}_{H^\pm}$, where $\mathds{G}_{H^\pm}$ are zero mean Gaussian processes $\mathds{G}_{H^\pm}:\Omega^x\mapsto\ell^\infty(\mathds{T})$ with covariance function
\begin{align*}
H^\pm((\theta,k),(\vartheta,l))= \frac{\sigma_{k l}(\theta,\vartheta|0^\pm)e'_v(\Gamma^\pm_p)^{-1}  \Psi^\pm_p((\theta,k),(\vartheta,l)) (\Gamma^\pm_p)^{-1} e_v}{\sqrt{c_{k}(\theta_{k})c _{l}(\vartheta_{l})}f_X(0)}
\end{align*}
where
\begin{align*}
\Psi^\pm_p((\theta,k),(\vartheta,l))&=\int_{\mathds{R}_\pm}r_p(u/c_{k}(\theta_{k})) r'_p(u/c_{l}(\vartheta_{l})) K(\frac{u}{c_{k}(\theta_{1})}) K(\frac{u}{c_{l}(\vartheta_{l})})du
\end{align*}
for each $\theta=(\theta_1,\theta_2)$, $\vartheta=(\vartheta_1,\vartheta_2) \in \Theta$.
Therefore,
\begin{align*}
&\sqrt{nh^{1+2v}_n}[\hat{\tau}-\tau]
\\
\leadsto &\Upsilon'_W\Big( \frac{ [\psi(\mu^{(v)}_{2}(0+,\cdot))-\psi(\mu^{(v)}_{2}(0^-,\cdot))]\mathds{G}'(\cdot,1)-[\phi(\mu^{(v)}_{1}(0^+,\cdot))-\phi(\mu^{(v)}_{1}(0^-,\cdot))]
\mathds{G}'(\cdot,2)}{[\psi(\mu^{(v)}_{2}(0+,\cdot))-\psi(\mu^{(v)}_{2}(0^-,\cdot))]^2} \Big),
\end{align*}
where $\mathds{G}':\Omega^x \mapsto \ell^\infty(\mathds{T})$ is defined as
\begin{align*}
\begin{bmatrix}
  \mathds{G}'(\cdot,1) \\
  \mathds{G}'(\cdot,2)
\end{bmatrix}=\begin{bmatrix}
   \phi'_{\mu^{(v)}_{1}(0^+,\cdot)}\Big(\mathds{G}_{H+}(\cdot,1)/\sqrt{c^{1+2v}_1(\cdot}) \Big)( \cdot) - \phi'_{\mu^{(v)}_{1}(0^-,\cdot)}\Big(\mathds{G}_{H-}(\cdot,1)/\sqrt{c^{1+2v}_1(\cdot}) \Big)( \cdot)\\
  \psi'_{\mu^{(v)}_{2}(0^+,\cdot)}\Big(\mathds{G}_{H+}(\cdot,2)/\sqrt{c^{1+2v}_2(\cdot}) \Big)( \cdot) - \psi'_{\mu^{(v)}_{2}(0^-,\cdot)}\Big(\mathds{G}_{H-}(\cdot,2)/\sqrt{c^{1+2v}_2(\cdot}) \Big)( \cdot)
\end{bmatrix}.
\end{align*}
\end{theorem}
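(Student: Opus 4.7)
The plan is to prove the result in two stages. First, I would establish the empirical-process weak convergence $\nu^\pm_n \leadsto \mathds{G}_{H^\pm}$ in $\ell^\infty(\mathds{T})$, where $\nu^\pm_n(\theta,k) = \sqrt{nh^{1+2v}_n}[\hat{\mu}^{(v)}_{k,p}(0^\pm,\theta_k) - \mu^{(v)}_{k}(0^\pm,\theta_k)]$. Second, I would transfer this to $\sqrt{nh^{1+2v}_n}[\hat{\tau}-\tau]$ by the functional delta method applied sequentially to $\phi$, $\psi$, and $\Upsilon$, invoking Assumption~\ref{a:cond_weak_conv}(i) and the chain rule for Hadamard differentiable maps.

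For the first stage, I begin with the closed form of the weighted least squares solution,
$$\hat{\alpha}_{k\pm,p}(\theta_k) - \alpha_{k\pm,p}(\theta_k) = S_{k\pm,n}(\theta_k)^{-1}\bigl\{B_{k\pm,n}(\theta_k) + T_{k\pm,n}(\theta_k)\bigr\},$$
where $S_{k\pm,n}$ is the weighted design matrix, $B_{k\pm,n}$ the $(p{+}1)$-st order Taylor remainder, and $T_{k\pm,n}$ a mean-zero score. Under Assumption~\ref{a:BR}(i)(b), (iii) and (iv), $[h_{k,n}(\theta_k) f_X(0)]^{-1} S_{k\pm,n}(\theta_k) \to \Gamma^\pm_p$ uniformly in $\theta_k$ by continuity of $f_X$ and Lipschitzness of $c_k$. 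Extracting the $v$-th coordinate with $e_v'$ and the scaling factor $v!/h_{k,n}(\theta_k)^v$ produces the uniform Bahadur expansion for $\hat{\mu}^{(v)}_{k,p}(0^\pm,\theta_k)$. The bias contribution is $O(h^{p+1-v}_n)$ uniformly by the Lipschitz condition on $\mu_k^{(p+1)}$ in Assumption~\ref{a:BR}(ii)(b); after multiplication by $\sqrt{nh^{1+2v}_n}$, it is $O(\sqrt{nh^{2p+3}_n})=o(1)$ under Assumption~\ref{a:BR}(iii), hence asymptotically negligible.

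What remains is weak convergence of the leading stochastic term, a triangular-array empirical process indexed by $\mathds{T}$. Its summands lie in the product class $\{K(\cdot/h_{k,n}(\theta_k))\, r_p(\cdot/h_{k,n}(\theta_k))\, g_k(\cdot,\theta_k) : (\theta,k) \in \mathds{T}\}$, which is VC type with a square-integrable envelope by Assumption~\ref{a:BR}(ii)(a) and (iv)(b) combined with boundedness of $K$ and $r_p$ on $[-1,1]$. I would apply the maximal inequality of van der Vaart and Wellner (2011) / Chernozhukov, Chetverikov, and Kato (2014a) to obtain asymptotic equicontinuity on $\mathds{T}$, while the finite-dimensional CLT follows by Lyapunov's condition using the $(2+\epsilon)$ envelope moment. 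The covariance is computed by conditioning on $X_i$, changing variables $u = x/h_n$, and using continuity of $\sigma_{k l}(\theta,\vartheta|\cdot)$ at $0^\pm$ (Assumption~\ref{a:BR}(ii)(c)) together with the Lipschitz $c_k$; this produces precisely the stated $H^\pm$ featuring $\Gamma^\pm_p$ and $\Psi^\pm_p$.

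Given $\nu^\pm_n \leadsto \mathds{G}_{H^\pm}$, the delta method applied to $\phi$ and $\psi$ yields joint weak convergence of the numerator and denominator processes of $\hat{W} - W$; the ratio is Hadamard differentiable where the denominator is bounded away from zero, which Assumption~\ref{a:cond_weak_conv}(ii) guarantees. Composing with $\Upsilon$ via the chain rule delivers the stated limit. The main obstacle will be the uniform Bahadur expansion over the $\theta$-dependent bandwidth family $\{h_{k,n}(\theta_k):\theta_k \in \Theta_k\}$: controlling the entropy integral of the product class above, with correct scaling so that the stochastic term is $O_P(1)$ and the remainder from $S_{k\pm,n}^{-1}$ inversion is uniformly $o_P(1)$ across $\mathds{T}$ simultaneously. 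The left/right-continuity in Assumption~\ref{a:BR}(ii)(d) is what lets one pass to separable versions of the processes so that suprema over $\mathds{T}$ are well defined.
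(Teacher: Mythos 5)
Your proposal is correct and follows the same architecture as the paper: a uniform Bahadur representation obtained from the closed-form weighted least squares solution (inverting the design matrix, showing the $(p+1)$-st order bias is $O(\sqrt{nh_n^{2p+3}})=o(1)$ after scaling), a functional CLT for the remaining linear score term, and then the functional delta method through $\phi$, $\psi$, the ratio map, and $\Upsilon$ with the chain rule and Assumption \ref{a:cond_weak_conv}(ii) keeping the denominator away from zero. The one place you genuinely diverge is the FCLT step: the paper verifies weak convergence of the triangular-array process by checking the five conditions of Pollard's (1990)/Kosorok's (2003) functional central limit theorem --- manageability (via VC-type classes and Andrews (1994)), the covariance limit, the two envelope/Lindeberg conditions, and the pseudometric condition --- whereas you propose Lyapunov finite-dimensional convergence plus asymptotic equicontinuity from the van der Vaart--Wellner/CCK maximal inequality. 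Both routes work; Pollard's theorem has the advantage of delivering total boundedness of $\mathds{T}$ under the intrinsic semimetric as part of its conclusion, which your route must establish separately before fidi convergence plus equicontinuity can be upgraded to weak convergence in $\ell^\infty(\mathds{T})$ (it follows from the VC-type entropy bound, but it is a step you should not leave implicit). Your handling of measurability via one-sided continuity of $g_k$ in $\theta_k$ matches the paper's use of the separability sufficient condition for the almost-measurable-Suslin requirement.
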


See Appendix \ref{sec:theorem:weak_conv} for a proof.
For the sharp mean RDD, the fuzzy mean RDD, the sharp mean RKD, and the fuzzy mean RKD, computation of the limit process is straightforward in practice.
On the other hand, for CDF and the quantile process, it is somewhat easier to approximate the limit process through the multiplier bootstrap.
The following subsection presents this additional practical consideration.

\subsection{Multiplier Bootstrap}\label{sec:multiplier_bootstrap}

To simulate the limiting process of the BR, i.e., Theorem \ref{theorem:weak_conv}, we use the pseudo random sample $\{\xi_i\}^n_{i=1}$ drawn from the standard normal distribution, independently from the data $\{(Y_i,D_i,X_i)\}_{i=1}^n$.
Precisely, $\{\xi_i\}^n_{i=1}$ is defined on $(\Omega^{\xi},\mathcal{F}^{\xi},\mathds{P}^{\xi})$, a probability space that is independent of $(\Omega^x,\mathcal{F}^x,\mathds{P}^x)$ -- this condition will be formally stated in Assumption \ref{a:multiplier} below.
With this pseudo random sample, define the multiplier processes (MP)
\begin{align*}
\nu^\pm_{\xi,n}(\theta,k)&=v!\sum_{i=1}^{n}\xi_i\frac{e'_v(\Gamma^\pm_p)^{-1}\Epsilon_k(Y_i,D_i,X_i,\theta) r_p(\frac{X_i}{h_{k,n}(\theta_k)}) K(\frac{X_i}{h_{k,n}(\theta_k)})\delta^\pm_i}{\sqrt{nh_{k,n}(\theta_k)}f_X(0)}.
\end{align*}
We also write $\nu_{\xi,n}(\cdot)=\nu^+_{\xi,n}(\cdot)-\nu^-_{\xi,n}(\cdot)$.

In practice, we need to replace $\Epsilon_k$ and $f_X$ with their estimates.
Let $\hat{f}_X$ be an estimate of ${f}_X$.
For estimation of $\Epsilon_k$, since every component in the BR is multiplied by the kernel $K$ supported on $[-1,1]$, we only need to consider $\Epsilon_{k}(Y_i,D_i,X_i,\theta)\mathds{1}\{|X_i/h_{k,n}(\theta_k)|\le 1\}$.
Write its estimate by
$
\hat{\Epsilon}_{k}(Y_i,D_i,X_i,\theta)\mathds{1}\{|X_i/h_{k,n}(\theta_k)|\le 1\}
$, which has $\mu_{k,p}$ replaced by some estimate $\tilde{\mu}_{k,p}$ of ${\mu}_{k,p}$.
Section \ref{sec:first_stage} discusses the effects of these first-stage estimates.
Substituting these estimated components in the MP, we define the estimated multiplier processes (EMP)
\begin{align*}
\hat{\nu}^\pm_{\xi,n}(\theta,k)&=v!\sum_{i=1}^{n}\xi_i\frac{e'_v(\Gamma^\pm_p)^{-1}\hat{\Epsilon}_{k}(Y_i,D_i,X_i,\theta) r_p(\frac{X_i}{h_{k,n}(\theta_k)}) K(\frac{X_i}{h_{k,n}(\theta_k)})\delta^\pm_i}{\sqrt{nh_{k,n}(\theta_k)}\hat{f}_X(0)}.
\end{align*}
We also write $\hat{\nu}_{\xi,n}(\cdot)= \hat{\nu}^+_{\xi,n}(\cdot)- \hat{\nu}^-_{\xi,n}(\cdot)$.

In order to establish the uniform validity of the multiplier bootstrap, we invoke the following two sets of assumptions.

\begin{assumption}[Multiplier]\label{a:multiplier}
$\{\xi_i\}^n_{i=1}$ is an independent standard normal random sample defined on $(\Omega^{\xi},\mathcal{F}^{\xi},\mathds{P}^{\xi})$, a probability space that is independent of $(\Omega^x,\mathcal{F}^x,\mathds{P}^x)$.
\end{assumption}

\begin{assumption}[First Stage Estimation]\label{a:first_stage}
$\tilde{\mu}_{k,p}(x,\theta_k)\mathds{1}\{|x/h_{k,n}(\theta_k)|\le 1\}$ is uniformly consistent for $\mu_k(x,\theta_k)\mathds{1}\{|x/h_{k,n}(\theta_k)|\le 1\}$ on $([\underline{x},\overline{x}] \backslash \{0\}) \times \mathds{T}$. $\hat{f}_X(0)$ is consistent for $f_X(0)$.
\end{assumption}

We remark that Assumption \ref{a:multiplier} is the standard assumption for multiplier, score, and wild bootstrap methods, cf. Kosorok (2003, 2008).
At the current level of generality, we state these high-level conditions for the first stage estimation in Assumption \ref{a:first_stage}.
However, we will propose a concrete $\tilde{\mu}_k$ in Appendix \ref{sec:first_stage} that satisfies Assumption \ref{a:first_stage} under Assumptions \ref{a:BR} and \ref{a:cond_weak_conv}.
The following theorem, together with Theorem \ref{theorem:weak_conv}, establishes the uniform validity of the multiplier bootstrap.

\begin{theorem}[Conditional Weak Convergence]\label{theorem:cond_weak_conv}
Under Assumptions \ref{a:BR}, \ref{a:cond_weak_conv}, \ref{a:multiplier}, and \ref{a:first_stage}, we have $\hat{\nu}^\pm_{\xi,n}\underset{\xi}{\overset{p}{\leadsto}}\mathds{G}_{H^\pm}$, and therefore
\begin{align*}
&\Upsilon'_W\Big( \frac{[\psi(\mu^{(v)}_{2}(0^+,\cdot))-\psi(\mu^{(v)}_{2}(0^-,\cdot))]\widehat{\mathds{X}}'_n(\cdot,1)-[\phi(\mu^{(v)}_{1}(0^+,\cdot))-\phi(\mu^{(v)}_{1}(0^-,\cdot))]
\widehat{\mathds{X}}'_n(\cdot,2)}{(\psi(\mu^{(v)}_{2}(0^+,\cdot))-\psi(\mu^{(v)}_{2}(0^-,\cdot))^2} \Big)\\
\underset{\xi}{\overset{p}{\leadsto}}
 &\Upsilon'_W\Big( \frac{[\psi(\mu^{(v)}_{2}(0^+,\cdot))-\psi(\mu^{(v)}_{2}(0^-,\cdot))]\mathds{G}'(\cdot,1)-[\phi(\mu^{(v)}_{1}(0^+,\cdot))-\phi(\mu^{(v)}_{1}(0^+,\cdot))]
\mathds{G}'(\cdot,2)}{(\psi(\mu^{(v)}_{2}(0^+,\cdot))-\psi(\mu^{(v)}_{2}(0^-,\cdot))^2} \Big),
\end{align*}
where
\begin{align*}
\begin{bmatrix}
  \widehat{\mathds{X}}'_n(\cdot,1) \\
  \widehat{\mathds{X}}'_n(\cdot,2)
\end{bmatrix}
= \begin{bmatrix}
   \phi'_{\mu^{(v)}_{1}(0^+,\cdot)}\Big( \hat{\nu}^+_{\xi,n}( \cdot,1)/\sqrt{c^{1+2v}_1(\cdot}) \Big)( \cdot) - \phi'_{\mu^{(v)}_{1}(0^-,\cdot)}\Big( \hat{\nu}^-_{\xi,n}( \cdot,1)/\sqrt{c^{1+2v}_1(\cdot}) \Big)( \cdot)\\
  \psi'_{\mu^{(v)}_{2}(0^+,\cdot)}\Big( \hat{\nu}^+_{\xi,n}( \cdot,2)/\sqrt{c^{1+2v}_2(\cdot}) \Big)( \cdot) - \psi'_{\mu^{(v)}_{2}(0^-,\cdot)}\Big( \hat{\nu}^-_{\xi,n}( \cdot,2)/\sqrt{c^{1+2v}_2(\cdot}) \Big)( \cdot)
\end{bmatrix}.
\end{align*}
\end{theorem}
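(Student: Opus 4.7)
The plan is to prove $\hat{\nu}^\pm_{\xi,n}\underset{\xi}{\overset{p}{\leadsto}}\mathds{G}_{H^\pm}$ first, and then transfer this to the Wald functional via a functional delta method for the bootstrap. Because the convergence of $\sqrt{nh^{1+2v}_n}[\hat\tau-\tau]$ established in Theorem \ref{theorem:weak_conv} is obtained by composing (i) the uniform Bahadur representation from Lemma \ref{lemma:BR}, (ii) the weak convergence of $\nu_n^\pm$ to $\mathds{G}_{H^\pm}$, and (iii) the Hadamard differentiability of $\phi$, $\psi$, $\Upsilon$ from Assumption \ref{a:cond_weak_conv}(i), the bootstrap statement should reduce to the analogous three ingredients with weak convergence replaced by conditional weak convergence in $\underset{\xi}{\overset{p}{\leadsto}}$.

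First, I would verify that the (infeasible) multiplier process $\nu^\pm_{\xi,n}$ converges conditionally on the data to the same Gaussian limit $\mathds{G}_{H^\pm}$. The summands of $\nu^\pm_{\xi,n}$ are the same as those of $\nu_n^\pm$ but re-weighted by the i.i.d.\ $N(0,1)$ multipliers $\xi_i$, which by Assumption \ref{a:multiplier} are independent of the data. The function class indexed by $(\theta,k)\in\mathds{T}$,
\[
\mathcal{F}_n=\Bigl\{(y,d,x)\mapsto e'_v(\Gamma^\pm_p)^{-1}\Epsilon_k(y,d,x,\theta)\,r_p\bigl(x/h_{k,n}(\theta_k)\bigr)K\bigl(x/h_{k,n}(\theta_k)\bigr)\delta^\pm:(\theta,k)\in\mathds{T}\Bigr\},
\]
is of VC type with integrable envelope by Assumption \ref{a:BR}(ii)(a),(iv)(b) and standard stability properties of VC classes under products, and covariances of $\nu_n^\pm$ have already been computed to converge to $H^\pm$ in the proof of Theorem \ref{theorem:weak_conv}. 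I would then invoke the conditional multiplier CLT for Donsker-type classes of Kosorok (2003, 2008) (cited in Appendix \ref{sec:prel_lemmas}), whose hypotheses are marginal convergence to a tight Gaussian limit plus the Lindeberg-type envelope condition already in force, to obtain $\nu^\pm_{\xi,n}\underset{\xi}{\overset{p}{\leadsto}}\mathds{G}_{H^\pm}$.

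Second, I would show that the first-stage plug-in error is asymptotically negligible, i.e.,
\[
\sup_{(\theta,k)\in\mathds{T}}\bigl|\hat{\nu}^\pm_{\xi,n}(\theta,k)-\nu^\pm_{\xi,n}(\theta,k)\bigr|=o_{\mathds{P}^\xi}(1)\quad\text{in }\mathds{P}^x\text{-probability}.
\]
Expanding the difference, each summand has a factor $(\tilde{\mu}_{k,p}-\mu_k)(X_i,\theta_k)\mathds{1}\{|X_i/h_{k,n}(\theta_k)|\le 1\}$ plus a term from replacing $f_X(0)$ by $\hat f_X(0)$. The supremum of the first factor is $o_{\mathds{P}^x}(1)$ by Assumption \ref{a:first_stage}, the kernel is bounded with compact support by Assumption \ref{a:BR}(iv), and the $\xi_i$ are independent $N(0,1)$; combining Markov's inequality conditional on the data with an envelope bound and a VC-type covering number bound (as in the proof of Lemma \ref{lemma:BR}) yields the uniform negligibility. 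Then $\hat f_X(0)\to f_X(0)$ in probability handles the scale factor. Combining with the first step gives $\hat{\nu}^\pm_{\xi,n}\underset{\xi}{\overset{p}{\leadsto}}\mathds{G}_{H^\pm}$.

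Third, I would deduce the displayed statement for the Wald functional by a conditional functional delta method. By Assumption \ref{a:cond_weak_conv}(i), $\phi$, $\psi$, and $\Upsilon$ are Hadamard differentiable at the relevant points, and by Assumption \ref{a:cond_weak_conv}(ii) the denominator is bounded away from zero so the rational map is Hadamard differentiable at $W$. The functional delta method for the multiplier bootstrap (e.g., Kosorok 2008, Theorem 12.1 and its corollary for conditional weak convergence) transports the conditional weak convergence through the successive Hadamard differentiable maps, giving the second displayed conclusion. The main obstacle I anticipate is the first step: the classes $\mathcal{F}_n$ are indexed by the bandwidth $h_n$ and hence form a triangular array rather than a fixed Donsker class, so the conditional multiplier CLT must be applied in the triangular-array form, which requires verifying, uniformly in $n$, the VC-type covering-number bound and an $L^{2+\epsilon}$ envelope condition via Assumption \ref{a:BR}(ii)(a); this is the same subtlety encountered in proving Theorem \ref{theorem:weak_conv} and can be handled by the same maximal inequality from van der Vaart and Wellner (2011) / Chernozhukov, Chetverikov, and Kato (2014a) used in Lemma \ref{lemma:BR}.
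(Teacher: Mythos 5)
Your proposal is correct and follows essentially the same route as the paper's proof: apply Kosorok's conditional multiplier CLT for row-independent triangular arrays to the infeasible multiplier process $\nu^\pm_{\xi,n}$, show $\sup_{(\theta,k)\in\mathds{T}}|\hat{\nu}^\pm_{\xi,n}-\nu^\pm_{\xi,n}|\underset{x\times\xi}{\overset{p}{\to}}0$ using Assumption \ref{a:first_stage} (the paper formalizes the "combining" step via Lemma \ref{lemma:cond_weak_conv_and in prob}, and bounds the difference by writing it as uniformly $O_p(1)$ processes times the $o_p(1)$ first-stage errors rather than via a fresh maximal-inequality argument), and then push the result through the Hadamard derivatives with the continuous mapping theorem and functional delta method for the bootstrap. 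The triangular-array subtlety you flag is exactly why the paper invokes Theorem 2 of Kosorok (2003), so no gap remains.
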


See Appendix \ref{sec:theorem:cond_weak_conv} for a proof.
Theorems \ref{theorem:weak_conv} and \ref{theorem:cond_weak_conv} show that the estimated multiplier process
$$
\Upsilon'_W\Big( \frac{[\psi(\mu^{(v)}_{2}(0^+,\cdot))-\psi(\mu^{(v)}_{2}(0^-,\cdot))]\widehat{\mathds{X}}'_n(\cdot,1)-[\phi(\mu^{(v)}_{1}(0^+,\cdot))-\phi(\mu^{(v)}_{1}(0^-,\cdot))]
\widehat{\mathds{X}}'_n(\cdot,2)}{(\psi(\mu^{(v)}_{2}(0^+,\cdot))-\psi(\mu^{(v)}_{2}(0^-,\cdot))^2} \Big)
$$
can be used to approximate the limit process of $\sqrt{nh^{1+2v}_n}[\hat{\tau}-\tau]$ in practice.

\subsection{Fuzzy Quantile RDD (Example \ref{ex:fuzzy_quantile_rdd}) Revisited}\label{sec:a:FQRD}

In this section, we apply the main general results, namely Theorems \ref{theorem:weak_conv} and \ref{theorem:cond_weak_conv}, to the fuzzy quantile RDD introduced in Example \ref{ex:fuzzy_quantile_rdd}.
What we present in this section provides a formal justification of the overview in Section \ref{sec:overview}.
We present our assumptions for the case of $p=2$ as we did in the overview.
We remark that, however, using a different order $p$ of local polynomial fitting is also possible by similar arguments.

Consider $\Theta_1$, $\Theta_2$, $\Theta_1'$, $\Theta_2'$, $\Theta''$, $g_1$, $g_2$, $\phi$, $\psi$, and $\Upsilon$ defined in Example \ref{ex:fuzzy_quantile_rdd}.
Recall that we denote the local Wald estimand (\ref{eq:wald_est}) with $v=0$ in this setting by $\tau_{FQRD}$.
We also denote the analog estimator (\ref{eq:wald_estimator}) with $v=0$ in this setting by
\begin{align*}
\hat{\tau}_{FQRD}(\theta'')&=
\Upsilon(\hat{F}_{Y^\cdot|C})(\theta''),
\end{align*}
where
\begin{align*}
\hat{F}_{Y^1|C}(y)=\frac{\hat{\mu}_{1,2} (0^+,y,1)-
\hat{\mu}_{1,2} (0^-,y,1)}{\hat{\mu}_{2,2}(0^+,1)- \hat{\mu}_{2,2}(0^-,1)}
\qquad\text{and}\qquad
\hat{F}_{Y^0|C}(y)=\frac{\hat{\mu}_{1,2} (0^+,y,0)-
\hat{\mu}_{1,2} (0^-,y,0)}{\hat{\mu}_{2,2}(0^+,0)- \hat{\mu}_{2,2}(0^-,0)}.
\end{align*}
By van der Vaart and Wellner (1996; Lemma 3.9.23) and van der Vaart (1998; Theorem 20.9), $\Upsilon$ is Hadamard differentiable at $(F_{Y^\cdot|C})$ tangentially to $C(\mathscr{Y}_1\times \mathscr{D}^2)$, and the Hadamard derivative is a map that takes each $g\in C(\mathscr{Y}_1\times \mathscr{D}^2)$ to
$$
\Upsilon'_{W}(g)(\cdot)=:\Upsilon'_{F_{Y^\cdot|C}}(g)(\cdot)=-\frac{g( Q_{Y^1|C}(\cdot),1,1 )}{f_{Y^1|C}( Q_{Y^1|C}(\cdot) )}+\frac{g( Q_{Y^0|C}(\cdot) ,0,0)}{f_{Y^0|C}( Q_{Y^0|C}(\cdot) )}
$$
 under the assumptions to be stated below.
Sufficient conditions for the assumptions required for the general result, tailored to the current example, are stated as follows.

\newtheorem*{assumption_S}{Assumption S}
\begin{assumption_S}\label{a:S}
(a) $\{(Y_i,D_i,X_i)\}^n_{i=1}$ are $n$ i.i.d. copies of the random vector $(Y,D,X)$ defined on a probability space $(\Omega^x,\mathcal{F}^x,\mathds{P}^x)$. 
(b) $X$ has a density function $f_X$ which is continuously differentiable on $[\underline{x},\overline{x}]$ that contains $0$ in its interior, and $0<f_X(0)<\infty$.
\end{assumption_S}

\newtheorem*{assumption_K}{Assumption K}
\begin{assumption_K}\label{a:K}
(a) $K:[-1,1]\to \mathds{R}^+$ is bounded and continuous. (b) $\{K(\cdot/h):h>0\}$ is of VC type. (c) $\Gamma^\pm_p$ is positive definite.
\end{assumption_K}

\newtheorem*{assumption_M}{Assumption M}
\begin{assumption_M}\label{a:M}
$\{\xi_i\}^n_{i=1}$ are independent standard normal random variables defined on $(\Omega^{\xi},\mathcal{F}^{\xi},\mathds{P}^{\xi})$, a probability space that is independent of $(\Omega^x,\mathcal{F}^x,\mathds{P}^x)$.
\end{assumption_M}

\newtheorem*{assumption_FQRD}{Assumption FQRD}
\begin{assumption_FQRD}\label{a:FQRD}
\qquad\\
%
(i)
$(x,y,d) \mapsto \frac{\partial^j}{\partial x^j} E[\mathds{1}\{Y^*_i\le y,D^*_i=d\}|X_i=x]$ is Lipschitz in $x$ on $[\underline{x},0)\times \Theta_1$ and $(0,\overline{x}]\times \Theta_1$ for $j=0,1,2,3$, and $(x,d) \mapsto \frac{\partial^j}{\partial x^j} E[\mathds{1}\{D^*_i=d\}|X_i=x]$ is Lipschitz in $x$ on $[\underline{x},0)\times \Theta_2$ and $(0,\overline{x}]\times \Theta_2$ for $j=0,1,2,3$. \\
(ii) The baseline bandwidth $h_n$ satisfies $h_n\to 0$ and $nh^2_n\to \infty$, $nh^{7}_n\to 0$.
There exist bounded constants $0<c_1$, $c_2<\infty$ such that $h_{1,n}=c_1 h_n$ and $h_{2,n}=c_2h_n$.\\
%
%
(iii)  $|\mathds{P}^x(D_i=1|X_i=0^+)-\mathds{P}^x(D_i=1|X_i=0^-)| > 0$.\\
%
%
(iv) $F_{Y^1|C}, F_{Y^0|C}\in \mathcal{C}^1(\mathscr{Y}_1)$, and $f_{Y^1|C}$ and $f_{Y^0|C}$ are bounded away from $0$ on $\mathscr{Y}_1$.\\
(v) There exists $\hat f_{Y|XD^*}(y|0^\pm,d)$ such that $\sup_{(y,d) \in \mathscr{Y}_1 \times \mathscr{D}}|\hat{f}_{Y|XD^*}(y|0^\pm,d)-f_{Y|XD^*}(y|0^\pm,d)|=o^x_p(1)$.
\end{assumption_FQRD}

Assumption S (a) is assumed in most of the prior work, including the closely related benchmark by Frandsen, Fr\"olich and Melly (2012).
As emphasized after Assumption \ref{a:BR} (i) (a), a part of the literature has relaxed the assumption of identical distribution and studies cluster-robust inference -- we will also present a method of cluster-robust inference in Section \ref{sec:cluster_robust} by extending our baseline results.
Assumption S (b) is analogous to Assumption E3 of Frandsen, Fr\"olich and Melly (2012).
Assumption K is analogous to Assumption E4 of Frandsen, Fr\"olich and Melly (2012).
Assumption M is new to our paper due to our use of the multiplier bootstrap, which is not used by prior work such as Frandsen, Fr\"olich and Melly (2012).
Assumption FQRD (i), (ii), (iii), and (iv) are analogous to Assumptions E1, E5, E2, and Q, respectively, of Frandsen, Fr\"olich and Melly (2012).
We state Assumption FQRD (v) at this high level in order to accommodate a number of alternative estimators.
In Lemma \ref{lemma:FQRDD} in Appendix \ref{sec:lemma:unif_cons_est}, however, we propose one such concrete estimator which satisfies part (v).
All the other parts of this assumption are immediately interpretable.

Define the EMP by
\begin{align*}
&\hat{\nu}^\pm_{\xi,n}(y,d_1,d_2,1)=\sum_{i=1}^{n}\xi_i\frac{e'_0(\Gamma^\pm_2)^{-1}[\mathds{1}\{Y^*_i\le y,D^*_i=d_1\}-\tilde{\mu}_{1,2}(X_i,y,d_1)]r_2(\frac{X_i}{h_n})K(\frac{X_i}{h_n})\delta^\pm_i}{\sqrt{nh_n}\hat{f}_X(0)}\\
&\hat{\nu}^\pm_{\xi,n}(y,d_1,d_2,2)=\sum_{i=1}^{n}\xi_i\frac{e'_0(\Gamma^\pm_2)^{-1}[\mathds{1}\{D^*_i=d_2\}-\tilde{\mu}_{2,2}(X_i,d_2)]r_2(\frac{X_i}{h_n})K(\frac{X_i}{h_n})\delta^\pm_i}{\sqrt{nh_n}\hat{f}_X(0)}\\
&\widehat{\mathds{X}}'_n(y,d_1,d_2,k)= \hat{\nu}^+_{\xi,n}(y,d_1,d_2,k)/\sqrt{c_k}- \hat{\nu}^-_{\xi,n}(y,d_1,d_2,k)/\sqrt{c_k}
\end{align*}
for each $(y,d_1,d_2,k)\in \mathds{T}=\mathscr{Y}_1\times\mathscr{D}\times\mathscr{D}\times\{1,2\}$.
Define the following estimated process through the Hadamard derivative.\small
\begin{align*}
&\widehat{\Upsilon}'_{W}(\widehat{\mathds{Y}}_n)(\theta'')\\
=&\frac{[\hat{\mu}_{2,2} (0^+,1)-\hat{\mu}_{2,2} (0^-,1)]
\widehat{\mathds{X}}'_n(\hat{Q}_{Y^1|C}(\theta''),1,1,1)-[\hat{\mu}_{1,2}(0^+,\hat{Q}_{Y^1|C}(\theta''),1)-\hat{\mu}_{1,2}(0^-,\hat{Q}_{Y^1|C}(\theta''),1)]
\widehat{\mathds{X}}'_n(\hat{Q}_{Y^1|C}(\theta''),1,1,2)}{\hat{f}_{Y^1|C}(\hat{Q}_{Y^1|C}(\theta''))[\hat{\mu}_{2,2} (0^+,1)-\hat{\mu}_{2,2} (0^-,1)]^2}\\
-&\frac{[\hat{\mu}_{2,2} (0^+,0)-\hat{\mu}_{2,2} (0^-,0)]
\widehat{\mathds{X}}'_n(\hat{Q}_{Y^0|C}(\theta''),0,0,1)-[\hat{\mu}_{1,2}(0^+,\hat{Q}_{Y^0|C}(\theta''),0)-\hat{\mu}_{1,2}(0^-,\hat{Q}_{Y^0|C}(\theta''),0)]
\widehat{\mathds{X}}'_n(\hat{Q}_{Y^0|C}(\theta''),0,0,2)}{\hat{f}_{Y^0|C}(\hat{Q}_{Y^0|C}(\theta''))[\hat{\mu}_{2,2} (0^+,0)-\hat{\mu}_{2,2} (0^-,0)]^2}
\end{align*}
\normalsize
where 
$$
\widehat{\mathds{Y}}_n (y,d_1,d_2) =
\frac{[\hat{\mu}_{2,2} (0^+,d_2)-\hat{\mu}_{2,2} (0^-,d_2)]
\widehat{\mathds{X}}'_n(y,d_1,d_2,1)-[\hat{\mu}_{1,2}(0^+,y,d_1)-\hat{\mu}_{1,2}(0^-,y,d_1)]
\widehat{\mathds{X}}'_n(y,d_1,d_2,2)}{[\hat{\mu}_{2,2} (0^+,d_2)-\hat{\mu}_{2,2} (0^-,d_2)]^2}
$$
for $(y,d_1,d_2)\in \mathscr Y_1 \times \mathscr D^2$.
With these preparations, our general result applied to the current case yields the following corollary.

\begin{corollary}[Example: Fuzzy Quantile RDD]\label{corollary:FQRD}
Suppose that Assumptions S, K, M, and FQRD hold.
\\
(i) There exists a zero mean Gaussian process $\mathds{G}'_{FQRD}:\Omega^x \mapsto \ell^\infty([a,1-a])$ such that
$$
\sqrt{nh_n}[\hat{\tau}_{FQRD}-\tau_{FQRD}]\leadsto \mathds{G}'_{FQRD}.
$$
(ii) Furthermore, with probability approaching one,
$$
\widehat{\Upsilon}'_{W}(\widehat{\mathds{Y}}_n)\underset{\xi}{\overset{p}{\leadsto}} \mathds{G}'_{FQRD}.
$$
\end{corollary}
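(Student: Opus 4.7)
The plan is to specialize the general Theorems \ref{theorem:weak_conv} and \ref{theorem:cond_weak_conv} to the fuzzy quantile RDD of Example \ref{ex:fuzzy_quantile_rdd}, taking $v=0$ and $p=2$. The bulk of the work is assumption verification: I will show that Assumptions S, K, M, and FQRD imply the general Assumptions \ref{a:BR}, \ref{a:cond_weak_conv}, \ref{a:multiplier}, and \ref{a:first_stage}. Once this is done, Part (i) follows directly from Theorem \ref{theorem:weak_conv} applied with $\phi=\psi=\mathrm{id}$, and Part (ii) follows from Theorem \ref{theorem:cond_weak_conv}, combined with a Slutsky-type argument to handle the plug-in estimates inside $\widehat{\Upsilon}'_W$ and $\widehat{\mathds{Y}}_n$.

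For Assumption \ref{a:BR}, part (i) is immediate from Assumption S. For part (ii), the relevant function classes $\{(y^*,d^*)\mapsto \mathbbm{1}\{y^*\le y\}\mathbbm{1}\{d^*=d\}\}$ and $\{d^*\mapsto \mathbbm{1}\{d^*=d\}\}$ are bounded indicator classes, hence VC type with constant envelope $F_\Epsilon \equiv 1$; since $|\mathscr{D}|=2$, the collection indexed by $d$ is additionally finite. The Lipschitz property of $\mu_k^{(j)}$ for $j\le p+1=3$ on each half-line is exactly FQRD(i); smoothness of $\sigma_{kl}(\theta,\vartheta|\cdot)$ in $x$ with bounded derivatives follows from FQRD(i) together with boundedness of the residuals; one-sided continuity of $g_1(\cdot,y,d)$ and $g_2(\cdot,d)$ in each coordinate of $(y,d)$ is immediate since these are indicators. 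Part (iii) holds under FQRD(ii) since $2p+3=7$, and part (iv) is Assumption K. For Assumption \ref{a:cond_weak_conv}, part (i) uses that the identity maps $\phi,\psi$ are trivially Hadamard differentiable and that the quantile-difference map $\Upsilon$ is Hadamard differentiable at $W = F_{Y^\cdot|C}$ tangentially to $C(\mathscr{Y}_1 \times \mathscr{D}^2)$ by van der Vaart and Wellner (1996; Lemma 3.9.23) under FQRD(iv), with the explicit derivative stated in the text. Part (ii) is FQRD(iii); part (iii) with $v=0$ requires $nh_n\to\infty$, which is implied by $nh_n^2\to\infty$ in FQRD(ii). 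Assumption \ref{a:multiplier} is Assumption M. Finally, Assumption \ref{a:first_stage} is handled by the concrete construction of Appendix \ref{sec:first_stage} (Lemma \ref{lemma:FQRDD}) for $\tilde\mu_{k,2}$ and by standard kernel density results for $\hat f_X(0)$ under Assumption S.

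The remaining step, which closes Part (ii), is to show that $\widehat{\Upsilon}'_W(\widehat{\mathds{Y}}_n)$ has the same conditional weak limit as the population quantity $\Upsilon'_W$ applied to the analogous process on the left-hand side of Theorem \ref{theorem:cond_weak_conv}. This requires uniform consistency of the plug-in ingredients: $\hat f_{Y^d|C}$ from FQRD(v); $\hat Q_{Y^d|C}$ from uniform consistency of $\hat F_{Y^d|C}$ plus the density lower bound in FQRD(iv); and $\hat\mu_{k,2}(0^\pm,\cdot)$ from standard uniform consistency of one-sided local polynomial fits. Combined with the conditional weak convergence $\hat\nu^\pm_{\xi,n}\underset{\xi}{\overset{p}{\leadsto}}\mathds{G}_{H^\pm}$ delivered by Theorem \ref{theorem:cond_weak_conv}, a Slutsky-type argument for conditional weak convergence then yields $\widehat{\Upsilon}'_W(\widehat{\mathds{Y}}_n)\underset{\xi}{\overset{p}{\leadsto}}\mathds{G}'_{FQRD}$. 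I expect this plug-in step to be the main technical obstacle, because the data-dependent operator $\widehat{\Upsilon}'_W$ must be controlled uniformly along sample paths of the multiplier process over the quantile index set $[a,1-a]$; FQRD(iv) and FQRD(v) are exactly the conditions that make this manageable by ensuring the density $f_{Y^d|C}$ is uniformly bounded away from zero and uniformly consistently estimated.
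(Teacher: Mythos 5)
Your proposal is correct and follows essentially the same route as the paper's proof: verify Assumptions \ref{a:BR}--\ref{a:first_stage} from S, K, M, and FQRD (with the same VC-type and Hadamard-differentiability arguments), apply Theorems \ref{theorem:weak_conv} and \ref{theorem:cond_weak_conv}, and close Part (ii) by showing $\sup_{\theta''}|\widehat{\Upsilon}'_W(\widehat{\mathds{Y}}_n)(\theta'')-\Upsilon'_W(\widehat{\mathds{Y}}_n)(\theta'')|\to 0$ via uniform consistency of $\hat Q_{Y^d|C}$ and $\hat f_{Y^d|C}$ together with asymptotic $\rho$-equicontinuity of the multiplier process (the paper's Lemmas \ref{lemma:asymptotic_rho-conti_FQRD}, \ref{lemma:FQRDD}, and \ref{lemma:uniform_consist_Q_{Y|C}}) and the Slutsky-type Lemma \ref{lemma:cond_weak_conv_and in prob}. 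You correctly identify the plug-in step as the main technical burden, which is exactly where the paper concentrates its auxiliary lemmas.
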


A proof is provided in Appendix \ref{sec:corollary:FQRD}.
This result justifies the overview in Section \ref{sec:overview}.
Specifically, the estimated multiplier process $\widehat{\Upsilon}'_{W}(\widehat{\mathds{Y}}_n)$ -- denoted by $\widehat{\mathbb{G}}_{FQRD,n}'$ in Section \ref{sec:overview} for simplicity -- can be used to approximate the limit prcess of $\sqrt{nh_n}[\hat{\tau}_{FQRD}-\tau_{FQRD}]$.
We could consider a set of functions defined on $(0,1)$ instead of $[a,1-a]$ by making a stronger assumption that the potential outcomes are compactly supported with their conditional density functions bounded away from zero, although we may not want to make such a strong assumption in general for typical applications.
This tradeoff between the globalization of the domain and the strength of the assumption is even true for simple unconditional quantile processes, e.g., Lemma 21.4 (i) versus Lemma 21.4 (ii) in van der Vaart (1998).

\begin{remark}\label{remark:bias_correction_quantile}
Our bias correction is conducted on the mean regression estimates $\hat\mu_{1,2}$ and $\hat\mu_{2,2}$ in light of Remark 7 of Calonico, Cattaneo and Titiunik (2014). Some calculation shows the biases of local Wald-ratios $\hat F_{Y^d|C}$ also have the same bias order as the regression estimates above. Finally, since the quantile estimate based on a left inverse of a CDF has the same order of bias as the CDF estimate, we can achieve bias correction for fuzzy quantile RDD estimate.
\end{remark}

\begin{remark}\label{remark:bias_correction_trick}
We emphasize that our one-step bias correction as well as Remark 7 of Calonico, Cattaneo and Titiunik (2014) relies on choosing the same bandwidth for both main estimators and higher order bias estimators. According to the simulation studies in Calonico, Cattaneo and Titiunik (2014), such bandwidth choice works rather well in their various DGPs. 
	While studies of optimal coverage probability for CDF and quantile estimation would be both useful and important, it is out of the scope of the current paper.
	
	It is also worth noting that our framework can accommodate the case when two bandwidths are different as well, since we can acquire the uniform Bahadur representations for both the main estimator and the bias estimator via our uniform Bahadur representation in Appendix \ref{sec:lemma:BR}. 
	The multiplier bootstrap can be applied to the difference of these two processes, which would be similar to Lemma 3 and Proposition 2 in Qu and Yoon (2015b). 
\end{remark}

\section{Simulation Studies}\label{sec:simulation_fqrdd}

We conduct simulation studies for the fuzzy quantile RDD -- see Sections \ref{sec:overview} and \ref{sec:a:FQRD}.
We also conduct simulation studies for the other cases -- see Appendix \ref{sec:simulation}.
We follow the procedure outlined in Appendix \ref{sec:practical_guideline} for choices of bandwidths in finite samples.
The kernel function that we use is the Epanechnikov kernel.
All the other procedures exactly follow the guideline in Section \ref{sec:a:FQRD}.

To systematically evaluate the hypothesis testing for the hypotheses of uniform treatment nullity and treatment homogeneity presented in Section \ref{sec:overview}, we consider the following data generating process.
We generate an i.i.d. sample $\{(Y_i,D_i,X_i)\}_{i=1}^n$ through the following data generating process:
\begin{align*}
&Y_i = \mu(X_i) + \beta_1 D_i + (1 + \gamma_1 D_i) \cdot U_i,
\\
&D_i = \mathbbm{1}\{ 2 \cdot \mathbbm{1}\{ X_i \geq 0\} - 1 \geq V_i \},
\\
&(X_i,U_i,V_i)' \sim N(0,\Sigma),
\end{align*}
where $\beta_1$ and $\gamma_1$ are to be varied across simulation sets, $\Sigma_{11} = \sigma_X^2 = 0.1781742^2$, $\Sigma_{22} = \sigma_U^2 = 0.1295^2$, $\Sigma_{33} = \sigma_V^2 = 0.5^2$, $\Sigma_{12} = \rho_{XU} \cdot \sigma_X \cdot \sigma_U = 0.25 \cdot 1.0^2$, $\Sigma_{13} = \rho_{XV} \cdot \sigma_X \cdot \sigma_V = 0.0 \cdot 1.0 \cdot 0.5$, and $\Sigma_{23} = \rho_{UV} \cdot \sigma_U \cdot \sigma_V = 0.25 \cdot 1.0 \cdot 0.5$.
We choose the numbers, $\sigma_X^2 = 0.1781742^2$ and $\sigma_U^2 = 0.1295^2$, so that these variances match the corresponding variances in the data generating processes in the simulation studies by Calonico, Cattaneo, and Titiunik (2014).
Furthermore, the polynomial part $\mu(X_i)$ of the outcome equation is defined due to Lee (2008):
\begin{align*}
\mu(x) = \begin{cases}
1.27x+7.18x^2+20.21x^3+21.54x^4+7.33x^5 & \text{if } x < 0\\
0.84x-3.00x^2+7.99x^3-9.01x^4+3.56x^5   & \text{if } x \geq 0\\
\end{cases},
\end{align*}
also following the simulation studies by Calonico, Cattaneo, and Titiunik (2014).
Observe under the above data generating process that
$$
\text{$\theta$-th Conditional Quantile Treatment Effect at $x=0$} = \beta_1 + \gamma_1 F_{U|X}^{-1}(\theta | 0).
$$

We set $\Theta'' = [a,1-a] = [0.20,0.80]$ as the set of quantiles on which we conduct uniform inference.
We use a grid with the interval size of 0.02 to approximate the continuum $\Theta''$ for numerical evaluation of functions defined on $\Theta''$.
Similarly, we use a grid with interval size of 0.02 to approximate the continuum $\mathscr{Y}$ for numerical evaluation of functions defined on $\mathscr{Y}$.
First, we simulate the 95\% test for the null hypothesis $H_0: \tau_{FQRD}(\theta'') = 0 \ \forall\theta'' \in [a,1-a]$ of uniform treatment nullity using the procedure described in Section \ref{sec:a:FQRD}.
Second, we simulate the 95\% test for the null hypothesis $H_0: \tau_{FQRD}(\theta'') = \tau_{FQRD}(\theta''') \ \forall \theta'',\theta''' \in [a,1-a]$ of treatment homogeneity using the procedure described in Section \ref{sec:a:FQRD}.
Third, we simulate the uniform coverage probability of the true quantile treatment effects $\tau_{FQRD}( \ \cdot \ )$ using the uniform confidence band constructed in Section \ref{sec:a:FQRD}.

Table \ref{tab:FQRD} shows simulated acceptance and coverage probabilities based on 2,500 multiplier bootstrap replications for 2,500 Monte Carlo replications for each of the sample sizes $n=500$, $1,000$, $1,500$, and $2,000$.
Panel (A) reports results for the test of uniform treatment nullity, panel (B) shows results for the test of treatment homogeneity, and panel (C) shows results for the uniform coverage probability.
The left column groups (I) present results across alternative values of $\beta_1 \in \{0.00, 0.05, 0.10, 0.15, 0.20\}$ while fixing $\gamma_1 = 0$.
The case of $\beta_1=0.00$ evaluates the size in (A), whereas the cases of $\beta_1 \in \{0.05, 0.10, 0.15, 0.20\}$ evaluate the power in (A).
All of the cases of $\beta_1 \in \{0.00, 0.05, 0.10, 0.15, 0.20\}$ evaluate the coverage in (B) and (C).
The right column groups (II) present results across alternative values of $\gamma_1 \in \{0.00, 0.25, 0.50, 0.75, 1.00\}$ while fixing $\beta_1 = 0$. 
The case of $\gamma_1=0.00$ evaluates the coverage in (A) and (B), whereas the cases of $\gamma_1 \in \{0.25, 0.50, 0.75, 1.00\}$ evaluate the power in (A) and (B).
All of the cases of $\beta_1 \in \{0.00, 0.25, 0.50, 0.75, 1.00\}$ evaluate the coverage in (C).
The nominal acceptance probability is 95\%.

In view of the columns for $\beta_1 = 0.00$ and $\gamma_1 = 0.00$, we confirm the correct size.
Furthermore, the simulated and coverage probabilities under $\beta_1 = 0$ and $\gamma_1 = 0$ approach the nominal probability as the sample size increases.
The acceptance probability in panel (A) decreases as $\beta_1$ or $\gamma_1$ deviates away from zero, which is consistent with the fact that the joint treatment nullity is violated by $\beta_1 \neq 0$ or $\gamma_1 \neq 0$.
The acceptance probability in panel (B) stays roughly constant as $\beta_1$ deviates away from zero, but it decreases as $\gamma_1$ deviates away from zero.
This result is consistent with the fact that the treatment homogeneity is retained for any value of $\beta_1$, but is violated for $\gamma_1 \neq 0$.
The uniform coverage probability in panel (C) keeps the nominal size across all the values of $\beta_1$ and $\gamma_1$, which evidences the effectiveness of the uniform confidence bands across alternative data generating processes in the presence of nontrivial and/or heterogeneous treatment effects.

To assess the sensitivity of the proposed inference procedure to weak jumps, we conduct additional simulations with a sequence of values of $\Sigma_{33} \in \{2^0, 2^1, 2^2, 2^3, 2^4, 2^5, 2^6\}$ while fixing $\beta_1=\gamma=1=0$.
Table \ref{tab:Weak_FQRD} shows simulated acceptance and coverage probabilities based on 2,500 multiplier bootstrap replications for 2,500 Monte Carlo replications for each of the sample sizes $n=500$, $1,000$, $1,500$, and $2,000$.
Panel (A) reports results for the test of uniform treatment nullity, and panel (B) shows results for the test of treatment homogeneity.
Results for (C) the uniform coverage probability are omitted because they are the same as the results of (A) under $\beta_1=\gamma_1=0$.
Notice that the size becomes smaller and deviates away from the nominal size as the jump becomes weak (i.e., as $\Sigma_{33}$ increases).
This observation of size distortions under weak jumps is consistent with prior study, e.g., Feir, Lemieux, and Marmer (2016).
Motivated by these results, we present an extended theory of inference with robustness against the possibility of no or weak jumps in Section \ref{sec:weak_jumps}.

\section{Empirical Illustration}\label{sec:empirical_illustration_fqrdd}

In this section, we apply our method of robust uniform inference for the fuzzy quantile RDD (Sections \ref{sec:overview} and \ref{sec:a:FQRD}).
Using RDD with a birth-day cutoff eligibility rule for the Oklahoma pre-K program, Gormley, Gayer, Phillips and Dawson (2005) find significant positive effects of cognitive development on average test scores.
They also find that the average effects are positive among sub-samples of students in lower socio-economic status.
Following up the latter finding, Frandsen, Fr\"olich and Melly (2012) provide additional evidence that these effects are positive among the lower end of the distribution via estimated quantile treatment effects.
We apply our method of constructing robust uniform confidence bands to complement the findings by Frandsen, Fr\"olich and Melly (2012).

The data consist of a sample of 4,710 incoming Tulsa Public Schools kindergartners and pre-K participants for the 2003-2004 school year.
The main variables used in this data are the birth date ($X$), an indicator for participation in the pre-K program in the previous year ($D^\ast$), and scores on the Woodcock-Johnson sub-tests ($Y^\ast$): Letter-word, Spelling, and Applied Problems. 
The implementation procedure follows the guideline provided in Section \ref{sec:a:FQRD} -- or the overview in Section \ref{sec:overview} -- as well as the additional first-stage estimators suggested in Appendix \ref{sec:first_stage} and the bandwidth selection procedure suggested in Appendix \ref{sec:practical_guideline}.

Figure \ref{fig:application} plots the estimated local quantile treatment effects of the pre-K programs on scores on the three sub-tests of the Woodcock-Johnson tests.
The figure shows the point estimates for each quantile $\theta$ by black curves.
It also shows the 90\% uniform confidence bands based on our proposed procedure.

The qualitative patterns of our results resemble those of Frandsen, Fr\"olich and Melly (2012) for each of the three sub-tests.
Not surprisingly, our 90\% confidence bands are wider than the 90\% point-wise confidence intervals obtained by Frandsen, Fr\"olich and Melly (2012), and statistical significance vanishes at some quantile indices.
Nonetheless, despite the generally greater widths of confidence bands than confidence intervals, the statistical significance remains for relevant quantile indices.
In particular, as in Frandsen, Fr\"olich and Melly (2012), we continue to conclude that the program succeeded in significantly raising the lower end of the distribution of test scores, especially for the Applied Problems sub-test, which is consistent with Gormley, Gayer, Phillips and Dawson's (2005) finding that estimated average effects are larger for children with disadvantaged socio-economic status.

\section{Extensions}\label{sec:extensions}

In this section, we discuss three directions for extending the baseline model and method presented in Section \ref{sec:general_framework}.
These extensions include cluster-robust inference (Section \ref{sec:cluster_robust}), inference with robustness against no or weak jumps or kinks (Section \ref{sec:weak_jumps}), and augmented models with observed covariates (Section \ref{sec:covariates}). 

\subsection{Cluster Robust Inference}\label{sec:cluster_robust}

In applications, researchers may encounter situations where data are cluster sampled and thus the i.i.d. assumption may be implausible. 
Cluster--robust standard errors for sharp/fuzzy mean RDD are studied by Bartlalotti and Brummet (2017) and Calonico, Cattaneo, Farrell and Titiunik (2018). 
In this section, we show that Theorems \ref{theorem:weak_conv} and \ref{theorem:cond_weak_conv} can be generalized to the cases of cluster sampled data. 
Suppose that a researcher observes $\{(Y_{i},D_{i},X_{i}):i\in C_g,\: g\in\{1,...,G\}\}$, where each $C_g\subset \{1,...,n\}$ is an index set with cardinality $|C_g|\le \bar N$ for an $\bar N \in \mathbb{N}$ independent of $G$ and $g$, $\sumg |C_g|=n$, and $C_g \cap C_{g'}=\emptyset$ whenever $g\ne g'$. 
Observations within the same cluster $g$ can be arbitrarily dependent, while any two observations across different clusters are independent. 
Denote $\bar f_{X}=\frac{1}{G}\sumg\sumi f_{X_i}$, where $f_{X_{i}}$ is the density function of $X_i$. 
To keep our writing and analysis simple, we let $h_{1,G}(\theta_1)=h_{2,G}(\theta_2)=h_G$. 
The local Wald estimate $\hat\tau$ is calculated according to (\ref{eq:local_poly_reg}) and (\ref{eq:wald_estimator}) in Section \ref{sec:general_framework} with the bandwidth $h_G$. 
Note that $\frac{1}{n}\sum_{i=1}^n$ can also be written as $\frac{1}{n}\sum_{g=1}^G\sum_{i\in C_g}$.  
Since $\frac{c}{\sqrt{G}}\le \frac{1}{\sqrt{n}}\le \frac{1}{\sqrt{G}}$ for some $c>0$, we can rescale estimates by $\frac{n}{G}$ without loss of generality. 
We now define the cluster-robust EMP as
\begin{align*}
\hat{\nu}^\pm_{\xi,G}(\theta,k)&=v!\sumg\xi_g\sumi\frac{e'_v(\Gamma^\pm_p)^{-1}\hat{\Epsilon}_{k}(Y_i,D_i,X_i,\theta) r_p(\frac{X_i}{h_{G}}) K(\frac{X_i}{h_{G}})\delta^\pm_i}{\sqrt{Gh_{G}}\hat{\bar f}_X(0)},
\end{align*}
where $\hat \Epsilon_k=g_k - \tilde \mu_k$ with $\tilde \mu_k$ and $\hat{\bar f}_X$ denoting first-stage estimators of $\mu_k$ and $\bar f_X$, respectively, satisfying the assumption below. 
Also denote $\hat{\nu}_{\xi,G}(\cdot)= \hat{\nu}^+_{\xi,G}(\cdot)- \hat{\nu}^-_{\xi,G}(\cdot)$.



\begin{assumption}[Cluster Sampling]\label{a:cluster_robust}
Let $\underline{x} < 0 < \overline{x}$,\\
(i) (a) For each $G\in \mathbbm N$, $\{(Y_i,D_i,X_i):i\in C_1\}$, ..., $\{(Y_i,D_i,X_i):i\in C_G\}$ are independent random vectors defined on a probability space $(\Omega^x,\mathcal{F}^x,\mathds{P}^x)$;
(b) For each $G \in \mathbbm N$ and for each $i\in C_1\cup...\cup C_G$, $X_{i}$ has a density function $f_{X_{i}}$ which is continuously differentiable and $1$-Lipschitz, and satisfies $0<\underline f\le f_{X_{i}}\le \overline f<\infty$ on $[\underline x,\overline x]$.  
\\
(ii) For each $k=1,2 $: (a) the collections of real-valued functions, $\{x \mapsto \mu_{k}(x,\theta_k):\theta_k \in \Theta_k\}$, $\{y\mapsto g_1(y,\theta_1):\theta_1 \in \Theta_1\}$, and $\{d\mapsto g_2(d,\theta_2):\theta_2 \in \Theta_2\}$, are of VC type and are uniformly bounded by $\overline M<\infty$;
(b) $\mu^{(j)}_{k}$ is Lipschitz on $[\underline{x},0)\times \Theta_k$ and $(0,\overline{x}]\times \Theta_k$ for $j=0,1,2,...,p+1$; (c) for each $y\in \mathscr{Y}$, $g_1(y,\cdot)$ is left- or right-continuous in each dimension; and similarly, for each $d\in \mathscr{D}$, $g_2(d,\cdot)$ is left- or right-continuous in each dimension. (d) For each $G\in \mathbbm N$, $i \in \{1,...,n\}$, $\theta\in \Theta$, it holds that $E[\Epsilon_k(Y_i,D_i,X_i,\theta)|(X_j: j\in C_g, i\in C_g)]=E[\Epsilon_k(Y_i,D_i,X_i,\theta)|X_i]=0$.\\
(iii) $h_G\to 0$, $Gh^2_G\to \infty$ and $Gh^{2p+3}_G\to 0$ for some $h_0 < \infty$. \\
(iv) (a) $K:[-1,1]\to \mathds{R}^+$ is bounded and continuous;
(b) $\{K(\cdot/h):h>0\}$ is of VC type. (c) $\Gamma^\pm_p$ is positive definite.\\
(v) There exist uniformly bounded functions $\Sigma^\pm:(\Theta\times \{1,2\})^2 \mapsto \mathbbm R$ such that \small
\begin{align*}
&\frac{1}{G}\sumg E\Big[\Big(\sumi\frac{e'_v(\Gamma^\pm_p)^{-1} r_p(\frac{X_i}{h_{G}})\Epsilon_k(Y_i,D_i,X_i,\theta)K(\frac{X_i}{h_{G}})\delta^\pm_i}{\sqrt{h_{G}}\bar f_X(0)}\Big)\Big(\sumi\frac{e'_v(\Gamma^\pm_p)^{-1} r_p(\frac{X_i}{h_{G}})\Epsilon_k(Y_i,D_i,X_i,\vartheta)K(\frac{X_i}{h_{G}})\delta^\pm_i}{\sqrt{h_{G}}\bar f_X(0)}\Big)'\Big]\\
&= \Sigma^\pm((\theta,k) ,(\vartheta,l))+O_p(h_G).
\end{align*} \normalsize
(vi)(a) $\phi$ and $\psi$ are identity mappings while 
$\Upsilon$ is Hadamard differentiable at $W$ tangentially to a subspace of its domain, with its Hadamard derivative denoted by $\Upsilon_W'$.
(vii) $\{\xi_g\}^G_{g=1}$ is an independent standard normal random sample defined on $(\Omega^{\xi},\mathcal{F}^{\xi},\mathds{P}^{\xi})$, a probability space that is independent of $(\Omega^x,\mathcal{F}^x,\mathds{P}^x)$. (b) $\inf_{\theta_2\in \Theta_2}|\mu_2^{(v)}(0^+,\theta_2)-\mu_2^{(v)}(0^-,\theta_2)|>0$.
(viii) There exist first stage estimators $\tilde \mu_{k,p}$ and $\hat{\bar f}_X$ such that $\tilde{\mu}_{k,p}(x,\theta_k)\mathds{1}\{|x/h_{G}|\le 1\}$ is uniformly consistent for $\mu_k(x,\theta_k)\mathds{1}\{|x/h_{G}|\le 1\}$ on $([\underline{x},\overline{x}] \backslash \{0\}) \times \mathds{T}$. $\hat{\bar f}_X(0)$ is consistent for $\bar f_X(0)$.
\end{assumption}


This Assumption \ref{a:cluster_robust} for cluster sampling corresponds to Assumptions \ref{a:BR}, \ref{a:cond_weak_conv}, \ref{a:multiplier} and \ref{a:first_stage} for the i.i.d. setting.  
In part (i)(b) of Assumption \ref{a:cluster_robust}, we restrict the density functions to be 1-Lipschitz, but this particular scale is imposed only for the sake of concise writings in proofs. 
This number can be replaced by any finite constant.
Part (vi)(a) of Assumption \ref{a:cluster_robust} imposes more restrictions (namely, the identity $\phi$ and the identity $\psi$) than the baseline case, but this assumption is made only for the sake of simplicity and concise writing in proofs, and it can be relaxed.
We make this simplifying assumption because it is already satisfied by most of the important cases anyway, including the case of FQRD, which is the main focus of this paper, as well as the cases of sharp/fuzzy mean RDD studied in prior work (Bartlalotti and Brummet, 2017; Calonico, Cattaneo, Farrell and Titiunik, 2018). 
Under this assumption for cluster sampling, we obtain the weak convergence and the conditional weak convergence results as stated in the following corollary.

\begin{corollary}\label{corollary:cluster_robust}
Suppose Assumption \ref{a:cluster_robust} holds, then there exists a zero mean Gaussian process
$\mathds G:\Omega^x \mapsto \ell^\infty(\mathds T)$ such that
\begin{align*}
\sqrt{Gh^{1+2v}_G}[\hat{\tau}(\cdot)-\tau(\cdot)]
\leadsto &\Upsilon'_W\Big( \frac{[\mu^{(v)}_{2}(0^+,\cdot)-\mu^{(v)}_{2}(0^-,\cdot)]\mathds{G}(\cdot,1)-[\mu^{(v)}_{1}(0^+,\cdot)-\mu^{(v)}_{1}(0^-,\cdot)]
\mathds{G}(\cdot,2)}{(\mu^{(v)}_{2}(0^+,\cdot)-\mu^{(v)}_{2}(0^-,\cdot))^2} \Big)(\cdot).
\end{align*}
Furthermore, for the EMP, we have
\begin{align*}
&\Upsilon'_W\Big( \frac{[\mu^{(v)}_{2}(0^+,\cdot)-\mu^{(v)}_{2}(0^-,\cdot)]\hat\nu_{\xi,G}(\cdot,1)-[\mu^{(v)}_{1}(0^+,\cdot)-\mu^{(v)}_{1}(0^-,\cdot)]
\hat\nu_{\xi,G}(\cdot,2)}{(\mu^{(v)}_{2}(0^+,\cdot)-\mu^{(v)}_{2}(0^-,\cdot))^2} \Big)(\cdot)\\
\overset{p}{\underset{\xi}{\leadsto} }&\Upsilon'_W\Big( \frac{[\mu^{(v)}_{2}(0^+,\cdot)-\mu^{(v)}_{2}(0^-,\cdot)]\mathds{G}(\cdot,1)-[\mu^{(v)}_{1}(0^+,\cdot)-\mu^{(v)}_{1}(0^-,\cdot)]
\mathds{G}(\cdot,2)}{(\mu^{(v)}_{2}(0^+,\cdot)-\mu^{(v)}_{2}(0^-,\cdot))^2} \Big)(\cdot).
\end{align*}
\end{corollary} 

A proof is found in Appendix \ref{sec:corollary:cluster_robust}.

\subsection{Inference with Robustness against No or Weak Jump or Kink}\label{sec:weak_jumps}

Another practically relevant issue to consider as an extension to our baseline model is the issue of no and weak jumps in regression discontinuity designs.
Feir, Lemieux, and Marmer (2016) report significant size distortions in the event of weak jumps in the mean RDD, similarly to weak instrument problems.
In addition, we also find size distortions in the event of weak jumps in the FQRD designs -- see Section \ref{sec:simulation_fqrdd}.
Otsu, Xu, and Matsushita (2015) and Feir, Lemieux, and Marmer (2016) provide weak-identification-robust methods of inference in regression discontinuity designs.
In the spirit of Anderson and Rubin (1949) -- also see Kleibergen (2002), Moreira (2003), and Andrews, Moreira, and Stock (2006) -- a null restriction can provide valid standard errors robustly against the possibility of weak and no jumps (Feir, Lemieux, and Marmer, 2016).
Adopting this idea, we apply our baseline results of the weak convergence and the conditional weak convergence to non- and weak-identification-robust inference in this section.

Under the null hypothesis $H_0: \tau = \tau^\ast$, we can rewrite (\ref{eq:wald_est}) as
\begin{align*}
\phi\left( \lim_{x\downarrow 0}\mu^{(v)}_{1}(x,\cdot)\right)(\cdot)-\phi\left(\lim_{x\uparrow 0}\mu^{(v)}_{1}(x,\cdot)\right)(\cdot) =
\left(\Upsilon^{-1}\tau^\ast\right)(\cdot) \left[\psi\left(\lim_{x\downarrow 0}\mu^{(v)}_{2}(x,\cdot)\right)(\cdot)-\psi\left(\lim_{x\uparrow 0}\mu^{(v)}_{2}(x,\cdot)\right)(\cdot)\right],
\end{align*}
provided that the operator $\Upsilon$ is invertible.
Notice that this characterization allows for a statistical inference without assuming that a nonzero jump or kink exists.
To see this, note that $\psi\left(\lim_{x\downarrow 0}\mu^{(v)}_{2}(x,\cdot)\right)(\cdot)-\psi\left(\lim_{x\uparrow 0}\mu^{(v)}_{2}(x,\cdot)\right)(\cdot)$ no longer appears as a denominator, unlike the original expression (\ref{eq:wald_est}).
We use this equality restriction for a statistical inference.
A sample-analog statistic can be defined as
\begin{align*}
\hat \Xi_\tau (\cdot) = \phi\left( \hat\mu^{(v)}_{1,p}(0^+,\cdot)\right)(\cdot)-\phi\left(\hat\mu^{(v)}_{1,p}(0^-,\cdot)\right)(\cdot) - \left(\Upsilon^{-1} \tau\right)(\cdot) \left[ \psi\left(\hat\mu^{(v)}_{2,p}(0^+,\cdot)\right)(\cdot)-\psi\left(\hat\mu^{(v)}_{2,p}(0^-,\cdot)\right)(\cdot) \right].
\end{align*}
Our identification-robust inference procedure is based on the idea that $\hat\Xi_{\tau^\ast}$ should be uniformly close to zero under the null hypothesis $H_0: \tau = \tau^\ast$.

\newtheorem*{assumption_cond_weak_conv_weak}{Assumption \ref{a:cond_weak_conv}$'$}
\begin{assumption_cond_weak_conv_weak}[Conditional Weak Convergence]
\qquad\\
(i) $\psi$ and $\phi$ are Hadamard differentiable at $\mu^{(v)}_{1}(0^\pm,\cdot)$ and $\mu^{(v)}_{2}(0^\pm,\cdot)$, respectively, tangentially to some subspaces of their domains, with their Hadamard derivatives denoted by $\phi'_{\mu^{(v)}_{1}(0^\pm,\cdot)}$ and $\psi'_{\mu^{(v)}_{2}(0^\pm,\cdot)}$, respectively. $\Upsilon$ is invertible.\\
(ii) $nh^{1+2v}_n\to \infty$  as $n \to \infty$.
\end{assumption_cond_weak_conv_weak}

Compared with Assumption \ref{a:cond_weak_conv}, this assumption is weaker.
Specifically, part (ii) of Assumption \ref{a:cond_weak_conv} has been dropped, and hence we do allow for a possibility of no jump or no kink.
Similar lines of arguments to those of Theorems \ref{theorem:weak_conv} and \ref{theorem:cond_weak_conv} yield the following weak and conditional weak convergence results robustly against possible non-identification due to a lack of jumps or kinks.

\begin{corollary}[Weak and Conditional Weak Convergence]
Under Assumptions \ref{a:BR} and \ref{a:cond_weak_conv}$'$ and the null hypothesis $H_0: \tau=\tau^\ast$, we have 
\begin{align*}
\sqrt{nh^{1+2v}_n} \hat\Xi_{\tau^\ast}
\leadsto \mathbb{G}'(\cdot,1) - (\Upsilon^{-1}\tau^\ast) (\cdot) \mathbb{G}'(\cdot,2),
\end{align*}
where $\mathds{G}'$ is defined in the statement of Theorem \ref{theorem:weak_conv}.
If Assumptions \ref{a:multiplier} and \ref{a:first_stage} hold in addition, then
\begin{align*}
\widehat{\mathds{X}}'_n(\cdot,1) - (\Upsilon^{-1}\tau^\ast) (\cdot) \widehat{\mathds{X}}'_n(\cdot,2)
\underset{\xi}{\overset{p}{\leadsto}}
\mathbb{G}'(\cdot,1) - (\Upsilon^{-1}\tau^\ast) (\cdot) \mathbb{G}'(\cdot,2),
\end{align*}
where $\widehat{\mathds{X}}'_n$ is defined in the statement of Theorem \ref{theorem:cond_weak_conv}.
\end{corollary}

\subsection{Covariates}\label{sec:covariates}
Many applied researchers augment their RDD estimating equations with additional predetermined covariates. 
Formal justification and asymptotic properties of such a practice for mean sharp/fuzzy RDD/RKD are shown in Calonico, Cattaneo, Farrell and Titiunik (2018). 
In this section, we demonstrate that such a practice is also justified in the case of FQRD designs, and then provide a generalization of Corollary \ref{corollary:FQRD} for models with covariates. 

Suppose that we have a $d$-dimensional vector of covariates denoted by $Z_i$.
Define
\begin{align*}
&\left(\check\mu_{1}(0^+,y,d),\check\mu_{1}'(0^+,y,d),\check\mu_{1}''(0^+,y,d), \check\gamma_{1+}(y,d)\right) = \\
&\arg\min_{(\mu,\mu',\mu'',\gamma)\in \mathbbm R^{3+d}} \sum_{i:X_i>0} \left( \mathbbm{1}\left\{Y_i^\ast \leq y ,D_i^\ast=d\right\} - \left\{\mu + \mu' X_i + \frac{\mu''}{2!} X_i^2 + Z_i' \gamma \right\} \right)^2 \cdot K\left(\frac{X_i}{h_{n}}\right),\\
&\left(\check\mu_{2}(0^+,d),\check\mu_{2}'(0^+,d),\check\mu_{2}''(0^+,d), \check\gamma_{2+}(d)\right) = \\
&\arg\min_{(\mu,\mu',\mu'',\gamma)\in \mathbbm R^{3+d}} \sum_{i:X_i>0} \left( \mathbbm{1}\left\{D_i^\ast=d\right\} - \left\{\mu + \mu' X_i + \frac{\mu''}{2!} X_i^2 + Z_i' \gamma \right\} \right)^2 \cdot K\left(\frac{X_i}{h_{n}}\right), \text{ and }\\
&\check F_{Y^d|C} (y) = \frac{\check\mu_{1+}(y,d) - \check\mu_{1-}(y,d) }{ \check\mu_{2+}(d) -  \check\mu_{2-}(d)  }.
\end{align*}
The left-hand-side counterparts of them are defined analogously. 
The FQRD estimator based on these covariate-augmented local linear estimators is defined by
\begin{align*}
\check\tau_{FQRD}(\theta) =
\check Q_{Y^1|C}(\theta)
-
\check Q_{Y^0|C}(\theta),
\end{align*}
where
$$
\check Q_{Y^d|C}(\theta) = 
\inf\left\{y : \check F_{Y^d|C}(y) \geq \theta\right\}
\quad\text{for each } d \in \{0,1\}.
$$
We next introduce the short-hand notations: $\gamma_{1+}(y,d)=\sigma^{-1}_{Z+} E[(Z_i-\mu_{Z+}(X_i))\mathbbm 1\{ Y_i\le y,D_i=d \}|X_i=0^+ ]$, $ \gamma_{2+}(d)=\sigma^{-1}_{Z+} E[(Z_i-\mu_{Z+}(X_i))\mathbbm 1\{ D_i=d \}|X_i=0^+ ]$, $\mu_{Z_+}(x)=[\mu_{Z_1+}(x),...,\mu_{Z_d+}(x)]'$, $\mu_{Z_l+}(x)=E[Z_{il}|X_i=x]$,  and $\sigma^2_{Z+}=V(Z_i|X_i=0^+)$.
Their left-hand-side counterparts are also defined analogously.
Consider the following assumptions.

\newtheorem*{assumption_C}{Assumption C}
\begin{assumption_C}\label{a:covariates}
(i) $\gamma_{1}=\gamma_{1+}=\gamma_{1-}$ and $\gamma_{2}=\gamma_{2+}=\gamma_{2-}$ on $ \mathscr Y_1 \times \{0,1\}$.
(ii) $\mu_{Z+}=\mu_{Z-}  $ on $(\underline{x},0]$ and $[0,\overline{x})$. 
(iii)  For all $(y,d)\in \mathscr Y_1 \times \{0,1\}$, $E[Z_i[\mathbbm 1\{Y_i\le y ,D_i=d\},D_i]'|X=\cdot]$ are continuously differentiable on $(\underline{x},0]$ and $[0,\overline{x})$. (iv) $\mu_{Z+}(\cdot)$ and $\mu_{Z-}(\cdot)$ are three-times continuously differentiable on $(\underline{x},0]$ and $[0,\overline{x})$. (v) $\sigma^2_{Z\pm}(\cdot)$ are continuous and invertible on $(\underline{x},0]$ and $[0,\overline{x})$. 
\end{assumption_C}

\newtheorem*{assumption_first_stage_estimation_covariates}{Assumption \ref{a:first_stage}$'$}
\begin{assumption_first_stage_estimation_covariates}[First Stage Estimation]
$\ddot{\mu}_{k,p}(x,\theta_k)\mathds{1}\{|x/h_{k,n}(\theta_k)|\le 1\}$ is uniformly consistent for $\mu_k(x,\theta_k)\mathds{1}\{|x/h_{k,n}(\theta_k)|\le 1\}$ on $([\underline{x},\overline{x}] \backslash \{0\}) \times \mathds{T}$. $\hat{f}_X(0)$ is consistent for $f_X(0)$.
\end{assumption_first_stage_estimation_covariates}

With a first-stage estimator satisfying this assumption, we define $\check U_{1i}=\mathbbm 1\{Y_i^*\le y,D_i^*=d\} -\check\gamma'_{1+}(y,d)\hat \mu_{Z+}(X_i) - \ddot  \mu_1(X_i,y,d)$ and $\check U_{2i}=\mathbbm 1\{D_i^*=d\} -\check\gamma'_{2+}(d)\hat \mu_{Z+}(X_i) - \ddot  \mu_2(X_i,d)$. 
Define the estimated multiplier processes
\begin{align*}
\check{\nu}^+_{\xi,n}(y,d,1)=\sum_{i:X_i>0}\xi_i\frac{(1 \ 0 \ 0) \cdot (\Gamma^\pm_2)^{-1} \cdot \left(1 \ \frac{X_i}{h_n} \ \frac{X_i^2}{h_n^2}\right)' \check U_{1i}K\left(\frac{X_i}{h_n}\right)}{\sqrt{nh_n}\hat{f}_X(0)}
\\\text{and}\qquad
\check{\nu}^+_{\xi,n}(d,2)=\sum_{i:X_i>0}\xi_i\frac{(1 \ 0 \ 0) \cdot (\Gamma^\pm_2)^{-1} \cdot \left(1 \ \frac{X_i}{h_n} \ \frac{X_i^2}{h_n^2}\right)' \check U_{2i}K\left(\frac{X_i}{h_n}\right)}{\sqrt{nh_n}\hat{f}_X(0)}.
\end{align*}
We now define $\check{\mathds{X}}_n'$, $\check{\mathds{Y}}_n$ and $\check{\Upsilon}'_W$ as $\widehat{\mathds{X}}_n'$, $\widehat{\mathds{Y}}_n$ and $\widehat{\Upsilon}'_W$, respectively, but with $\check \nu^\pm_{\xi,n}$ in place of $\hat\nu^\pm_{\xi,n}$, $\check Q_{Y^d|C}$ in place of $\hat Q_{Y^d|C}$, and $\check \mu_{k,2}$ in place of $\hat \mu_{k,2}$.

\begin{corollary}\label{corollary:covariates}
Suppose that Assumptions S, K, M, FQRD, C and \ref{a:first_stage}$'$ hold.
Corollary \ref{corollary:FQRD} holds with $\check \tau_{FQRD}$ and $\check{\Upsilon}'_W(\check{\mathds Y}_n)$ in place of $\hat \tau_{FQRD}$ and $\widehat{\Upsilon}'_W(\widehat{\mathds Y}_n)$, respectively.
\end{corollary}

A proof is found in Appendix \ref{sec:corollary:covariates}.

\begin{remark}\label{remark:covariates}
While many cases covered in our general framework can also be extended to covariate-augmented estimating equations \textit{\`a la} Calonico, Cattaneo, Farrell and Titiunik (2018), not all the cases can be.
The covariate extension is not compatible with the designs in which the operators, $\phi(\cdot)$ and $\psi(\cdot)$, are not linear, and hence we did not work with the general unified framework in this subsection. 
Since the cases of mean sharp/fuzzy RDD/RKD are covered in prior work (Calonico, Cattaneo, Farrell and Titiunik, 2018), we focused on the case of the FQRD design, which is the main design of interest in this paper.
\end{remark}

\section{Summary}\label{sec:summary}

The existing literature on robust inference for causal effects in RDD and RKD covers major important cases, including the sharp mean RDD, the fuzzy mean RDD, the sharp mean RKD, the sharp fuzzy RKD, the sharp quantile RDD, and the sharp quantile RKD.
These existing methods, however, do not cover uniform inference for CDF and quantile cases in fuzzy designs.
Particularly, the existing methods are not able to handle robust uniform inference for quantile treatment effects in the fuzzy RDD, despite its practical relevance, e.g., Clark and Martorell (2014) and Deshpande (2016).
In this light, this paper proposes a new general robust inference method that covers the fuzzy quantile RDD in particular, but also all the other cases covered by the existing methods of robust inference.
We provide high-level statements for the general result, and also provide more primitive conditions and detailed discussions focusing on the fuzzy quantile RDD.

Monte Carlo simulation studies confirm the theoretical properties for data generating processes calibrated to match real applications.
Applying the proposed method to real data, we study causal effects on test outcomes of the Oklahoma pre-K program, following the earlier work by Gormley, Gayer, Phillips and Dawson (2005) and Frandsen, Fr\"olich and Melly (2012).
Despite the generally larger lengths of uniform confidence bands than point-wise confidence intervals, we obtain results qualitatively similar to those of Frandsen, Fr\"olich and Melly (2012).


Finally, we conclude this paper with a guide for practitioners.
Our method applies to most, if not all, of the commonly used local Wald estimators, but practitioners may want to know which method they should consider in empirical research.
We are not aware yet of theoretical benefits of the multiplier bootstrap method over the analytic method -- which is left for future research -- and hence analytic methods may be preferred for their computational advantage for applications where such an analytic method is applicable.
As such, the analytic method proposed by Calonico, Cattaneo, and Titiunik (2014) is probably a superior option if a practitioner is interested in mean-regression-based designs.
As their method is limited to mean-regression-based designs, one may need to seek alternative methods for quantile-regression-based designs.
The pivotal methods of Qu and Yoon (2015b) and Chiang and Sasaki (2017) are probably superior to our method if a practitioner is interested in sharp quantile RDD and sharp quantile RKD, respectively.
The bootstrap method proposed in this paper, to the best of our knowledge, is the only option if a practitioner is interested in the fuzzy quantile RDD and the fuzzy CDF discontinuity design.


\appendix
\section{Mathematical Appendix}\label{sec:math_appenidx}
The relation $a\lesssim b$ means that there exists $C$, $0<C<\infty$, such that $a\le C b$. 
For the definitions and notations of what follows, we refer the reader to Gin\'e and Nickl (2016), Kosorok (2003) and van der Vaart and Wellner (1996).
For an arbitrary semimetric space $(T,d)$, define the covering number $N(\epsilon,T,d)$ to be the minimal number of closed $d$-balls of radius $\epsilon$ required to cover $T$. For any function $f:  T\times \Omega\mapsto \mathbb R$ and any $w\in \Omega$, denote $\|f(\cdot,w)\|_T=\sup_{t\in T}|f (t,w)|$. We further define the Uniform Entropy Integral $J(\delta,\mathscr{F},F)=\sup_{Q}\int_{0}^{\delta}  \sqrt{1+\log N (\epsilon\norm{F}_{Q,2}, \mathscr{F}, \norm{\cdot}_{Q,2} )} d\epsilon$, where $\norm{\cdot}_{Q,2}$ is the $L2$ norm with respect to measure $Q$ and the supremum is taken over all probability measure over $(\Omega^x,\mathcal{F}^x)$.
A class of measurable functions $\mathscr{F}$ is called a VC (Vapnik-Chervonenkis) type with respect to a measurable envelope $F$ of $\mathscr{F}$ if there exist finite constants $A\ge 1$, $V\ge 1$ such that for all probability measures $Q$ on $(\Omega^x,\mathcal{F}^x)$, we have $N(\epsilon\norm{F},\mathscr{F},{\norm{\cdot}_{Q,2}})\le (\frac{A}{\epsilon})^V$, for $0<\epsilon\le 1$.
We use $C$, $C_1$, $C_2$,... to denote constants that are positive and independent of $n$.
The values of $C$ may change at each appearance but $C_1$, $C_2$,... are fixed.

\subsection{Uniform Bahadur Representation}\label{sec:lemma:BR}

We obtain the uniform BR of the local polynomial estimators $\hat\mu_{1,p}^{(v)}(0^\pm,\cdot)$ and $\hat\mu_{2,p}^{(v)}(0^\pm,\cdot)$ presented in the following lemma -- see Section \ref{sec:lemma:BR} for a proof.

\begin{lemma}[Uniform Bahadur Representation]\label{lemma:BR}
Under Assumption \ref{a:BR}, we have:
\begin{align*}
&\sqrt{nh^{1+2v}_{k,n}(\theta_k)}\big( \hat{\mu}^{(v)}_{k,p}(0^\pm,\theta_k)- \mu^{(v)}_{k}(0^\pm,\theta_k)-h^{p+1-v}_{k,n}(\theta_k)\frac{e'_v(\Gamma^{\pm}_p)^{-1}\Lambda^\pm_{p,p+1}}{(p+1)!}\mu^{(p+1)}_{k}(0^\pm,\theta_k) \big)\\
= \ &  v!\sum_{i=1}^{n}\frac{e'_v(\Gamma^\pm_p)^{-1}\Epsilon_k(Y_i,D_i,X_i,\theta) r_p(\frac{X_i}{h_{k,n}(\theta_k)}) K(\frac{X_i}{h_{k,n}(\theta_k)}) \delta^\pm_i}{\sqrt{nh_{k,n}(\theta_k)}f_X(0)}+o^x_p(1)
\end{align*}
uniformly for all $\theta_k\in \Theta_k$ for each $k \in \{1,2\}$.
\end{lemma}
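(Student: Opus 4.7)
\textbf{Proof Proposal for Lemma \ref{lemma:BR}.}

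The plan is to follow the usual local polynomial algebra, but upgrade every pointwise step to a statement uniform in $\theta_k \in \Theta_k$ by invoking VC-type maximal inequalities. I will treat only the $+$ case for $k=1$; the other three combinations are symmetric. Write
\begin{align*}
S_n(\theta_1) &= \frac{1}{n h_{1,n}(\theta_1)} \sum_{i=1}^n r_p\!\left(\tfrac{X_i}{h_{1,n}(\theta_1)}\right) r_p\!\left(\tfrac{X_i}{h_{1,n}(\theta_1)}\right)' K\!\left(\tfrac{X_i}{h_{1,n}(\theta_1)}\right) \delta_i^+, \\
T_n(\theta_1) &= \frac{1}{n h_{1,n}(\theta_1)} \sum_{i=1}^n r_p\!\left(\tfrac{X_i}{h_{1,n}(\theta_1)}\right) g_1(Y_i,\theta_1) K\!\left(\tfrac{X_i}{h_{1,n}(\theta_1)}\right) \delta_i^+,
\end{align*}
so that $\hat\alpha_{1+,p}(\theta_1) = S_n(\theta_1)^{-1} T_n(\theta_1)$. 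By a $(p+1)$-st order Taylor expansion of $\mu_1(\cdot,\theta_1)$ around $0^+$, which is valid by Assumption \ref{a:BR}(ii)(b),
\begin{equation*}
g_1(Y_i,\theta_1) = r_p(X_i)'\alpha_{1+,p}(\theta_1) + \tfrac{\mu_1^{(p+1)}(\tilde X_i^+,\theta_1)}{(p+1)!} X_i^{p+1} + \Epsilon_1(Y_i,D_i,X_i,\theta_1)
\end{equation*}
for $X_i>0$ small, with $\tilde X_i^+$ between $0$ and $X_i$. Substituting and multiplying by $e_v' S_n(\theta_1)^{-1}$ with the proper scaling yields, in standard form,
\begin{equation*}
\hat\mu_{1,p}^{(v)}(0^+,\theta_1) - \mu_1^{(v)}(0^+,\theta_1)
= \tfrac{v!}{h_{1,n}(\theta_1)^v}\, e_v' S_n(\theta_1)^{-1} \big[ B_n(\theta_1) + V_n(\theta_1) \big],
\end{equation*}
where $V_n$ is the empirical-process term (numerator of the claimed linearization) and $B_n$ is the remainder-from-Taylor term.

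The next step is to prove the uniform convergence of the design matrix,
\begin{equation*}
\sup_{\theta_1 \in \Theta_1} \Big\| S_n(\theta_1) - f_X(0)\, \Gamma_p^+\big(c_1(\theta_1)\big) \Big\| = o_p^x(1),
\end{equation*}
where $\Gamma_p^+(c)$ denotes the obvious rescaling of $\Gamma_p^+$ when the integration variable is $u/c$; by the Lipschitz dependence on $c_1(\theta_1)$ (Assumption \ref{a:BR}(iii)) combined with continuity of $K$ (Assumption \ref{a:BR}(iv)) one may absorb the $c_1(\theta_1)$ factor into $e_v'$ and $\Gamma_p^+$ to obtain the form used in the statement (with the scaling that gives the prefactor $\sqrt{h^{1+2v}}$). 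Pointwise convergence $S_n(\theta_1) \to f_X(0)\,\Gamma_p^+$ is a standard change of variables via continuity of $f_X$ at $0$. Uniformity is supplied by the maximal inequality of van der Vaart and Wellner (2011), applied to the VC-type class $\{(x,y,d)\mapsto r_p(x/h) r_p(x/h)' K(x/h)\mathbf 1\{x>0\}: h>0\}$ (which is of VC type by Assumption \ref{a:BR}(iv)(b) and closure under coordinate projection, with bounded envelope), together with the Lipschitzness of $\theta_1\mapsto h_{1,n}(\theta_1)$. Combined with positive definiteness of $\Gamma_p^+$, this gives $\sup_{\theta_1}\|S_n(\theta_1)^{-1}\| = O_p^x(1)$ and the uniform replacement $S_n(\theta_1)^{-1} \rightsquigarrow (f_X(0)\Gamma_p^+)^{-1}$ for the leading behavior.

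Next, one analyzes $B_n(\theta_1)$ using the support-restriction $|X_i|\le h_{1,n}(\theta_1)$ imposed by $K$. The contribution of the $(p+1)$-st order Taylor term produces the explicit bias $h_{1,n}(\theta_1)^{p+1-v}\frac{e_v'(\Gamma_p^+)^{-1}\Lambda_{p,p+1}^+}{(p+1)!}\mu_1^{(p+1)}(0^+,\theta_1)$ that is subtracted in the statement, plus a higher-order remainder that, by Lipschitzness of $\mu_1^{(p+1)}$ (Assumption \ref{a:BR}(ii)(b)) and $nh_n^{2p+3}\to 0$ (Assumption \ref{a:BR}(iii)), is $o_p^x((nh_n)^{-1/2})$ uniformly in $\theta_1$. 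Multiplying by the normalization $\sqrt{nh_{1,n}(\theta_1)^{1+2v}}$ absorbs this into $o_p^x(1)$ uniformly, again by the Lipschitz bandwidth schedule.

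It remains to reduce the stochastic part $\sqrt{nh_{1,n}(\theta_1)^{1+2v}} \cdot \frac{v!}{h_{1,n}(\theta_1)^v} e_v' S_n(\theta_1)^{-1} V_n(\theta_1)$ to the explicit linear representation on the right of the Lemma. Replacing $S_n(\theta_1)^{-1}$ by $(f_X(0)\Gamma_p^+)^{-1}$ costs a term of order $\|S_n(\theta_1)^{-1} - (f_X(0)\Gamma_p^+)^{-1}\| \cdot \sqrt{nh_n}\,\|V_n(\theta_1)\|$. The first factor is $o_p^x(1)$ uniformly by the previous step; the second factor is $O_p^x(1)$ uniformly once we show
\begin{equation*}
\sup_{\theta_1\in\Theta_1} \sqrt{nh_n}\, \|V_n(\theta_1) - \mathbb E V_n(\theta_1)\| = O_p^x(1),
\end{equation*}
which is the main uniform empirical-process claim. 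This is where I expect the bulk of the technical work. I would apply the van der Vaart--Wellner (2011) local maximal inequality, or the Chernozhukov--Chetverikov--Kato (2014a) refinement, to the class
\begin{equation*}
\mathscr F_n = \Big\{ (y,d,x) \mapsto r_p(x/h_{1,n}(\theta_1))\, \Epsilon_1(y,d,x,\theta_1)\, K(x/h_{1,n}(\theta_1))\,\mathbf 1\{x>0\} : \theta_1\in\Theta_1\Big\}.
\end{equation*}
This class is VC-type with envelope of order $\|F_\Epsilon\|\cdot\mathbf 1\{|x|\le h_0\}$ by Assumption \ref{a:BR}(ii)(a), \ref{a:BR}(iv)(b) and the stability of VC-type classes under products and Lipschitz reparametrization in $\theta_1$. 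The $L_2(P)$-envelope norm is of order $\sqrt{h_n}$ (change of variables plus the $(2+\epsilon)$-moment assumption on $F_\Epsilon$), so the maximal inequality yields a uniform supremum of order $\sqrt{h_n \log(1/h_n)/n} + h_n^{1/2} \cdot (\text{truncation remainder})$, and multiplication by $\sqrt{nh_n}$ gives $O_p^x(\sqrt{\log(1/h_n)})$, which suffices for tightness after a careful centering argument. The centering $\mathbb E V_n(\theta_1)$ is absorbed by the bias term already treated in step three. The hardest part, and the one I would spend the most effort on, is verifying the VC-type entropy bound with a single envelope that integrates and establishing that the $\theta_1$-Lipschitz bandwidth schedule does not inflate the entropy beyond polynomial, so that the van der Vaart--Wellner inequality truly applies uniformly over $\Theta_1$.
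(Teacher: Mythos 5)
Your proposal follows essentially the same route as the paper's proof: the explicit least-squares solution, a $(p+1)$-st order mean-value expansion splitting the estimator into a Gram-matrix factor, an explicit bias term, and a residual empirical-process term, with uniformity over $\theta_k$ supplied by VC-type stability arguments and the Chernozhukov--Chetverikov--Kato/van der Vaart--Wellner maximal inequalities, and with the Lipschitz continuity of $\mu_k^{(p+1)}$ and the condition $nh_n^{2p+3}\to 0$ killing the higher-order bias remainder. The only cosmetic differences are that the paper bounds the residual term's supremum via a direct second-moment computation rather than an entropy bound (so it avoids your $\sqrt{\log(1/h_n)}$ factor, which in any case is harmless here since it multiplies a polynomially vanishing Gram-matrix error), and that the centering of your $V_n$ term is exactly zero by iterated expectations rather than being absorbed into the bias.
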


Notice that the leading bias terms on the left-hand side of the equations in this lemma are of the $(p+1)$-st order.
Thus, the asymptotic distributions of the Bahadur representation take into account the $p$-th order bias reduction.
This property is the key to develop a method of inference which is robust against large bandwidth parameter choices.

\begin{remark}
It is worth remarking on the relation between this lemma and Theorems 2 and 3 in Frandsen, Fr\"olich and Melly (2012).
The proofs of Theorem 2 and 3 in Frandsen, Fr\"olich and Melly (2012) rely on the exact solution of the local linear smoother (see their equation (11)), while ours makes use of the uniform Bahadur representation derived in our Lemma 1. 
As illustrated in Section 4.2 of Dony, Einmahl and Mason (2006), if we let 
\begin{align*}
\tilde f_{n,h,j}&=\frac{1}{nh}\sum_{i=1}^n \Big(\frac{X_i-x}{h}\Big)^j K\Big(\frac{x-X_i}{h}\Big)\\
\tilde r_{n,h,j}&=\frac{1}{nh}\sum_{i=1}^n Y_i\Big(\frac{X_i-x}{h}\Big)^j K\Big(\frac{x-X_i}{h}\Big),
\end{align*}
then the exact solution for the local linear mean regression estimator ($p=1$) has the form
\begin{align*}
\hat \mu (x)=\frac{\tilde f_{n,h,2}\tilde r_{n,h,0}-\tilde f_{n,h,1}\tilde r_{n,h,1}}{\tilde f_{n,h,0}\tilde f_{n,h,2}-\tilde f^2_{n,h,1}}.
\end{align*}
For the local quadratic mean regression estimator ($p=2$), we have the exact solution
\begin{align*}
\hat \mu (x)=\frac{(\tilde f_{n,h,2}\tilde f_{n,h,4}-\tilde f_{n,h,3}^2)\tilde r_{n,h,0} + (\tilde f_{n,h,2} \tilde f_{n,h,3}-\tilde f_{n,h,1}\tilde f_{n,h,4})\tilde r_{n,h,1}+(\tilde f_{n,h,1}\tilde f_{n,h,3}-\tilde f^2_{n,h,2})\tilde r_{n,h,2}}{\tilde f_{n,h,0}\tilde f_{n,h,2}\tilde f_{n,h,4}-\tilde f_{n,h,0}\tilde f_{n,h,3}^2-\tilde f_{n,h,1}^2\tilde f_{n,h,4}+\tilde f_{n,h,1}\tilde f_{n,h,2}\tilde f_{n,h,3}-\tilde f_{n,h,2}^2}.
\end{align*}
Therefore, for kernel estimators with the squared loss, even though it is in principle possible to establish results by working with the exact solution for any order of polynomial, the expression of exact solution becomes rather complex as the order $p$ increases due to the presence of an inverse factor, whose determinant gives the denominator terms of the above expressions. 
Further, this illustration is only for mean regressions, and it would add even more complications if we have an index $\theta$ to keep track of in this inverse factor. 
By working with its limit $\Gamma^\pm_p$, our uniform Bahadur representation holds with the same unified general expression for any arbitrary order, with any index set that satisfies Assumption \ref{a:BR} (ii) (a).
This helps us to allow for robust bias correction of any order in a unified manner, in comparison with the exact solution approach taken by Frandsen, Fr\"olich and Melly (2012).
\end{remark}

\begin{proof}
In this proof, we will show the first result for the case of $k=1$ and $\pm = +$.
All the other results can be shown by similar lines of proof.
As in Section 1.6 of Tsybakov (2003), the solution to (\ref{eq:local_poly_reg}) can be computed explicitly as
\begin{align*}
\hat{\alpha}_{1+,p}(\theta_1)=&[\hat{\mu}_{1,p} (0^\pm,\theta_1),\hat{\mu}^{(1)}_{1,p} (0^\pm,\theta_1)h_{1,n}(\theta_1),...,\hat{\mu}^{(p)}_{1,p} (0^\pm,\theta_1)h^p_{1,n}(\theta_1)/p!]'\\
=&\Big[ \frac{1}{nh_{1,n}(\theta_1)}\sum_{i=1}^{n}\delta^+_i K(\frac{X_i}{h_{1,n}(\theta_1)}) r_p(\frac{X_i}{h_{1,n}(\theta_1)}) r'_p(\frac{X_i}{h_{1,n}(\theta_1)}) \Big]^{-1} \cdot
\\
&\quad\Big[ \frac{1}{nh_{1,n}(\theta_1)}\sum_{i=1}^{n}\delta^+_i K(\frac{X_i}{h_{1,n}(\theta_1)}) r_p(\frac{X_i}{h_{1,n}(\theta_1)})g_1(Y_i,\theta_1)\Big]
\end{align*}
For each data point $X_i>0$, a mean value expansion by Assumption \ref{a:BR} (ii)(b) gives
\begin{align*}
g_1(Y_i,\theta_1)&=\mu_{1}(X_i,\theta)+\Epsilon_{1}(Y_i,D_i,X_i,\theta)\\
&=\mu_{1}(0,\theta)+\mu^{(1)}_{1} (0^+,\theta_1)X_i+...+\mu^{(p)}_{1} (0^+,\theta_1)\frac{X_i^p}{p!}+\mu^{(p+1)}_{1}(x^*_{ni},\theta_1)\frac{X_i^{(p+1)}}{(p+1)!}+\Epsilon_{1}(Y_i,D_i,X_i,\theta)\\
&=r'_p(\frac{X_i}{h_{1,n}(\theta_1)})\alpha_{1+,p}(\theta_1)+\mu^{(p+1)}_{1}(x^*_{ni},\theta_1)h^{p+1}_{1,n}(\theta_1)\frac{(\frac{X_i}{h_{1,n}(\theta_1)})^{(p+1)}}{(p+1)!}+\Epsilon_{1}(Y_i,D_i,X_i,\theta)
\end{align*}
for an $x^*_{ni}\in (0, X_i]$. Substituting this expansion in the equation above and multiplying both sides by $\sqrt{nh_{1,n}(\theta_1)}e'_v$, we obtain
\begin{align*}
 &\sqrt{nh_{1,n}(\theta_1)}\hat\mu^{(v)}_1(0^+,\theta_1)h^v_{1,n}(\theta_1)/v!\\
 =&\sqrt{nh_{1,n}(\theta_1)}e'_v\Big[ \frac{1}{nh_{1,n}(\theta_1)}\sum_{i=1}^{n}\delta^+_i K(\frac{X_i}{h_{1,n}(\theta_1)}) r_p(\frac{X_i}{h_{1,n}(\theta_1)})r'_p(\frac{X_i}{h_{1,n}(\theta_1)}) \Big]^{-1} \cdot
\\
&\quad\Big[ \frac{1}{nh_{1,n}(\theta_1)}\sum_{i=1}^{n} \delta^+_i K(\frac{X_i}{h_{1,n}(\theta_1)}) r_p(\frac{X_i}{h_{1,n}(\theta_1)}) g_1(Y_i,t)\Big]\\
 =&\sqrt{nh_{1,n}(\theta_1)}e'_v\Big[ \frac{1}{nh_{1,n}(\theta_1)}\sum_{i=1}^{n}\delta^+_i K(\frac{X_i}{h_{1,n}(\theta_1)}) r_p(\frac{X_i}{h_{1,n}(\theta_1)})r'_p(\frac{X_i}{h_{1,n}(\theta_1)}) \Big]^{-1} \cdot
\\
&\quad\Big[ \frac{1}{nh_{1,n}(\theta_1)}\sum_{i=1}^{n} \delta^+_i K(\frac{X_i}{h_{1,n}(\theta_1)}) r_p(\frac{X_i}{h_{1,n}(\theta_1)})\Big( r'_p(\frac{X_i}{h_{1,n}(\theta_1)})\alpha_{1+,p}(\theta_1)\\
 \qquad&+\mu^{(p+1)}_{1}(x^*_{ni},\theta_1)h^{p+1}_{l,n}(\theta_1)\frac{(\frac{X_i}{h_{1,n}(\theta_1)})^{p+1}}{(p+1)!}+\Epsilon_{1}(Y_i,D_i,X_i,\theta) \Big)\Big]\\
 =&\sqrt{nh_{1,n}(\theta_1)}e'_v\alpha_{1+,p}(\theta_p)\\
 +&e'_v\Big[ \frac{1}{nh_{1,n}(\theta_1)}\sum_{i=1}^{n}\delta^+_i K(\frac{X_i}{h_{1,n}(\theta_1)}) r_p(\frac{X_i}{h_{1,n}(\theta_1)}) r'_p(\frac{X_i}{h_{1,n}(\theta_1)}) \Big]^{-1} \cdot
\\
&\quad \frac{1}{\sqrt{nh_{1,n}(\theta_1)}}\sum_{i=1}^{n} \delta^+_i K(\frac{X_i}{h_{1,n}(\theta_1)}) r_p(\frac{X_i}{h_{1,n}(\theta_1)})\mu^{(p+1)}_{1}(x^*_{ni},\theta_1)h^{p+1}_{l,n}(\theta_1)\frac{(\frac{X_i}{h_{1,n}(\theta_1)})^{p+1}}{(p+1)!}\\
  +& e'_v\Big[ \frac{1}{nh_{1,n}(\theta_1)}\sum_{i=1}^{n}\delta^+_i K(\frac{X_i}{h_{1,n}(\theta_1)}) r_p(\frac{X_i}{h_{1,n}(\theta_1)}) r'_p(\frac{X_i}{h_{1,n}(\theta_1)})  \Big]^{-1} \cdot
	\\
	&\quad \frac{1}{\sqrt{nh_{1,n}(\theta_1)}}\sum_{i=1}^{n} \delta^+_i K(\frac{X_i}{h_{1,n}(\theta_1)}) r_p(\frac{X_i}{h_{1,n}(\theta_1)})\Epsilon_{1}(Y_i,D_i,X_i,\theta)\\
 =&\sqrt{nh_{1,n}(\theta_1)}\mu^{(v)}_1(0^+,\theta_1)h^v_{1,n}(\theta_1)/v!+(a)+(b)
\end{align*}
where
\begin{align*}
(a)&=e'_v\Big[ \frac{1}{nh_{1,n}(\theta_1)}\sum_{i=1}^{n}\delta^+_i K(\frac{X_i}{h_{1,n}(\theta_1)}) r_p(\frac{X_i}{h_{1,n}(\theta_1)}) r'_p(\frac{X_i}{h_{1,n}(\theta_1)}) \Big]^{-1} \cdot
\\
&\quad \frac{1}{\sqrt{nh_{1,n}(\theta_1)}}\sum_{i=1}^{n} \delta^+_i K(\frac{X_i}{h_{1,n}(\theta_1)}) r_p(\frac{X_i}{h_{1,n}(\theta_1)})\mu^{(p+1)}_{1}(x^*_{ni},\theta_1)h^{p+1}_{1,n}(\theta_1)\frac{(\frac{X_i}{h_{1,n}(\theta_1)})^{p+1}}{(p+1)!}\\
\end{align*}
and
\begin{align*}
(b)&= e'_v\Big[ \frac{1}{nh_{1,n}(\theta_1)}\sum_{i=1}^{n}\delta^+_i K(\frac{X_i}{h_{1,n}(\theta_1)}) r_p(\frac{X_i}{h_{1,n}(\theta_1)}) r'_p(\frac{X_i}{h_{1,n}(\theta_1)})  \Big]^{-1} \cdot
	\\
	&\quad \frac{1}{\sqrt{nh_{1,n}(\theta_1)}}\sum_{i=1}^{n} \delta^+_i K(\frac{X_i}{h_{1,n}(\theta_1)}) r_p(\frac{X_i}{h_{1,n}(\theta_1)})\Epsilon_{1}(Y_i,D_i,X_i,\theta)
\end{align*}
We will show stochastic limits of the (a) and (b) terms above.

\textbf{Step 1}
First, we consider their common inverse factor.
Specifically, we show that
\begin{align}\label{eq:gamma_unif_conv}
[ \frac{1}{nh_{1,n}(\theta_1)}\sum_{i=1}^{n}\delta^+_i K(\frac{X_i}{h_{1,n}(\theta_1)}) r_p(\frac{X_i}{h_{1,n}(\theta_1)}) r'_p(\frac{X_i}{h_{1,n}(\theta_1)}) ]^{-1} \xrightarrow{p} (\Gamma^{+}_p)^{-1}/f_X(0)
\end{align}
uniformly in $\theta_1$.
Note that by Minkowski's inequality
\begin{align*}
&\Big|[ \frac{1}{nh_{1,n}(\theta_1)}\sum_{i=1}^{n}\delta^+_i K(\frac{X_i}{h_{1,n}(\theta_1)}) r_p(\frac{X_i}{h_{1,n}(\theta_1)}) r'_p(\frac{X_i}{h_{1,n}(\theta_1)}) ]^{-1}
-
(\Gamma^{+}_p)^{-1}/f_X(0)\Big|_{\Theta_1}\\
\le &\Big|[ \frac{1}{nh_{1,n}(\theta_1)}\sum_{i=1}^{n}\delta^+_i K(\frac{X_i}{h_{1,n}(\theta_1)}) r_p(\frac{X_i}{h_{1,n}(\theta_1)}) r'_p(\frac{X_i}{h_{1,n}(\theta_1)}) ]^{-1}\\
&-E[[ \frac{1}{nh_{1,n}(\theta_1)}\sum_{i=1}^{n}\delta^+_i K(\frac{X_i}{h_{1,n}(\theta_1)}) r_p(\frac{X_i}{h_{1,n}(\theta_1)}) r'_p(\frac{X_i}{h_{1,n}(\theta_1)}) ]^{-1}]\Big|_{\Theta_1}\\
+&
\Big|E[[ \frac{1}{nh_{1,n}(\theta_1)}\sum_{i=1}^{n}\delta^+_i K(\frac{X_i}{h_{1,n}(\theta_1)}) r_p(\frac{X_i}{h_{1,n}(\theta_1)}) r'_p(\frac{X_i}{h_{1,n}(\theta_1)}) ]^{-1}]-(\Gamma^{+}_p)^{-1}/f_X(0)\Big|_{\Theta_1}
\end{align*}
where the first term on the right hand side is stochastic, while the second term is deterministic.
First, regarding the deterministic part, we have
\begin{align*}
E[\frac{1}{nh_{1,n}(\theta_1)}\sum_{i=1}^{n}\delta^+_i K(\frac{X_i}{h_{1,n}(\theta_1)}) r_p(\frac{X_i}{h_{1,n}(\theta_1)}) r'_p(\frac{X_i}{h_{1,n}(\theta_1)})]=f_X(0)\Gamma^+_p + O(h_{1,n}(\theta_1))
\end{align*}
uniformly by Assumption \ref{a:BR} (i), (iii), and (iv).
For the stochastic part, we will show that each entry of such a matrix converges in probability with respect to $\mathds{P}^x$ uniformly. We may write
\begin{align*}
&\mathscr{F}_r=\{ x \mapsto \mathds{1}\{x\ge0\}K(ax)(ax)^{r}\mathds{1}\{ax\in [-1,1]\}:a>1/h_0 \}\\
&\mathscr{F}_{n,r}=\{ x \mapsto \mathds{1}\{x\ge0\}K(x/h_{1,n}(\theta_1))(x/h_{1,n}(\theta_1))^{r}:\theta_1\in \Theta_1 \}
\end{align*}
for each integer $r$ such that $0\le r \le 2p$.
By Lemma 
\ref{lemma:VC type_stability}, each  $\mathscr{F}_r$ is of VC type (Euclidean) with envelope $F=\norm{K}_{\infty}$ under Assumption \ref{a:BR} (iii) and (iv)(a) and (b), i.e., there exist constants $k$, $v < \infty$ such that $ \sup_Q \log N (\epsilon\norm{F}_{Q,2} , \mathscr{F}_r, \norm{\cdot}_{Q,2} )  \le (\frac{k}{\epsilon  })^v$ for $0< \epsilon \le 1 $ and for all probability measures $Q$ supported on $[\underline x, \overline x]$.
This implies $J(1,\mathscr{F}_r,F)=\sup_{Q}\int_{0}^{1}  \sqrt{1+\log N (\epsilon\norm{F}_{Q,2}, \mathscr{F}_r, \norm{\cdot}_{Q,2} )} d\epsilon  < \infty $.
Since $F \in L_2 (P)$, we can apply Theorem 5.2 of Chernozhukov, Chetverikov and Kato (2014) to obtain
\begin{align*}
E\bigg[\sup_{f \in \mathscr{F}_r} \bigg|\frac{1}{\sqrt{n}} \sum_{i=1}^{n}  \Big(f(X_i) - E f \Big) \bigg| \bigg] \le C\{ J(1,\mathscr{F}_r,F)\norm{F}_{P,2} + \frac{\norm{K}_{\infty} J^2(1,\mathscr{F}_r,F)}{\delta^2 \sqrt{n}}\} < \infty
\end{align*}
for a universal constant $C>0$. Note that $\mathscr{F}_{n,r}\subset\mathscr{F}_r$ for all $n\in \mathds{N}$. Multiplying both sides by $[\sqrt{n}h_{1,n}(\theta_1)]^{-1}$ yields
\begin{align*}
& E\bigg[\sup_{\theta_1 \in \Theta_1} \bigg|\frac{1}{nh_{1,n}(\theta_1)}\sum_{i=1}^{n}\delta^+_i K(\frac{X_i}{h_{1,n}(\theta_1)}) (\frac{X_i}{h_{1,n}(\theta_1)})^{s}
\\
&\qquad
-\frac{1}{nh_{1,n}(\theta_1)}\sum_{i=1}^{n}E[\delta^+_i K(\frac{X_i}{h_{1,n}(\theta_1)}) (\frac{X_i}{h_{1,n}(\theta_1)})^{s})] \bigg| \bigg]\\
\le&
\frac{1}{\sqrt{n}h_{1,n}(\theta_1)} C\{ J(1,\mathscr{F}_r,F)\norm{F}_{P,2} + \frac{B J^2(1,\mathscr{F}_r,F)}{\delta^2 \sqrt{n}}\} \\
=&O(\frac{1}{\sqrt{n}h_n})
\end{align*}
The last line goes to zero uniformly under Assumption \ref{a:BR}(iii).
Finally, Markov's inequality gives the uniform convergence of the stochastic part at the rate $O^x_p(\frac{1}{\sqrt{n}h_n})$.
Consequently, we have the uniform convergence in probability for each $r\in \{0,...,2p\}$. Assumption \ref{a:BR} (iv)(c) and the continuous mapping theorem concludes (\ref{eq:gamma_unif_conv}).

\textbf{Step 2}
For term (a), we may again use Minkowski's inequality under the supremum norm as in Step 1 to decompose
\begin{align*}
&\Big|\frac{1}{\sqrt{nh_{1,n}(\theta_1)}}\sum_{i=1}^{n} \delta^+_i K(\frac{X_i}{h_{1,n}(\theta_1)}) r_p(\frac{X_i}{h_{1,n}(\theta_1)})\mu^{(p+1)}_{1}(x^*_{ni},\theta_1)h^{p+1}_{1,n}(\theta_1)\frac{(\frac{X_i}{h_{1,n}(\theta_1)})^{p+1}}{(p+1)!}-0\Big|_{\Theta_1}\\
\le&\Big| \frac{1}{\sqrt{nh_{1,n}(\theta_1)}}\sum_{i=1}^{n} \delta^+_i K(\frac{X_i}{h_{1,n}(\theta_1)}) r_p(\frac{X_i}{h_{1,n}(\theta_1)})\mu^{(p+1)}_{1}(x^*_{ni},\theta_1)h^{p+1}_{1,n}(\theta_1)\frac{(\frac{X_i}{h_{1,n}(\theta_1)})^{p+1}}{(p+1)!}\\
&-E[\frac{1}{\sqrt{nh_{1,n}(\theta_1)}}\sum_{i=1}^{n} \delta^+_i K(\frac{X_i}{h_{1,n}(\theta_1)}) r_p(\frac{X_i}{h_{1,n}(\theta_1)})\mu^{(p+1)}_{1}(x^*_{ni},\theta_1)h^{p+1}_{1,n}(\theta_1)\frac{(\frac{X_i}{h_{1,n}(\theta_1)})^{p+1}}{(p+1)!}] \Big|_{\Theta_1}\\
+&
\Big|E[\frac{1}{\sqrt{nh_{1,n}(\theta_1)}}\sum_{i=1}^{n} \delta^+_i K(\frac{X_i}{h_{1,n}(\theta_1)}) r_p(\frac{X_i}{h_{1,n}(\theta_1)})\mu^{(p+1)}_{1}(x^*_{ni},\theta_1)h^{p+1}_{1,n}(\theta_1)\frac{(\frac{X_i}{h_{1,n}(\theta_1)})^{p+1}}{(p+1)!}]-0 \Big|_{\Theta_1}
\end{align*}
Under Assumption \ref{a:BR}(i),(ii)(b),(iii),(iv)(a), standard calculations show that the deterministic part
\begin{align*}
&E[\frac{1}{\sqrt{nh_{1,n}(\theta_1)}}\sum_{i=1}^{n} \delta^+_i K(\frac{X_i}{h_{1,n}(\theta_1)}) r_p(\frac{X_i}{h_{1,n}(\theta_1)})\mu^{(p+1)}_{1}(x^*_{ni},\theta_1)h^{p+1}_{1,n}(\theta_1)\frac{(\frac{X_i}{h_{1,n}(\theta_1)})^{p+1}}{(p+1)!}]
\\
=&\frac{h^{p+1}_{1,n}(\theta_1)\Lambda^+_{p,p+1}}{\sqrt{nh_{1,n}(\theta_1)}(p+1)!}f_X(0)\mu^{(p+1)}_{1}(0^+,\theta_1)+O(\frac{h^{p+2}_n}{\sqrt{nh_{1,n}}})\\
=&O(\sqrt{\frac{h^{2p+1}_n}{n}})
\end{align*}
uniformly in $\theta_1$.

As for the stochastic part, first note that under Assumption \ref{a:BR}(ii), we know that for a Lipschitz constant $L$ such that $0\le L<\infty$, it holds uniformly in $\theta_1$ that
\begin{align*}
&\Big|\frac{1}{\sqrt{nh_{1,n}(\theta_1)}}\sum_{i=1}^{n} \delta^+_i K(\frac{X_i}{h_{1,n}(\theta_1)}) r_p(\frac{X_i}{h_{1,n}(\theta_1)})h^{p+1}_{1,n}(\theta_1)\frac{(\frac{X_i}{h_{1,n}(\theta_1)})^{p+1}}{(p+1)!}[\mu^{(p+1)}_{1}(x^*_{ni},\theta_1)-\mu^{(p+1)}_{1}(0^+,\theta_1)]\Big|\\
\lesssim & n \max_{1\le i \le n}\Big|\frac{1}{\sqrt{nh_{1,n}(\theta_1)}} \delta^+_i K(\frac{X_i}{h_{1,n}(\theta_1)}) r_p(\frac{X_i}{h_{1,n}(\theta_1)})h^{p+1}_{1,n}(\theta_1)\frac{(\frac{X_i}{h_{1,n}(\theta_1)})^{p+1}}{(p+1)!}[\mu^{(p+1)}_{1}(x^*_{ni},\theta_1)-\mu^{(p+1)}_{1}(0^+,\theta_1)]\Big| \\
\lesssim &\frac{n \max_{1\le i \le n}|\mu^{(p+1)}_{1}(x^*_{ni},\theta_1)-\mu^{(p+1)}_{1}(0^+,\theta_1)|h^{p+1}_{1,n}(\theta_1)}{\sqrt{nh_n}}\\
\le& \frac{nLh^{p+2}_n}{\sqrt{nh_n}}=O^x_p(\sqrt{nh^{2p+3}_n})
\end{align*}
The second inequality holds since $\delta^+_i K(\frac{X_i}{h_{1,n}(\theta_1)}) r_p(\frac{X_i}{h_{1,n}(\theta_1)})\frac{(\frac{X_i}{h_{1,n}(\theta_1)})^{p+1}}{(p+1)!}$ is bounded under Assumption \ref{a:BR}(iii) and (iv)(a), while the third one is due to Assumption \ref{a:BR}(ii).
It is then sufficient to consider the asymptotic behavior of
\begin{align*}
\frac{1}{\sqrt{nh_{1,n}(\theta_1)}}\sum_{i=1}^{n} \delta^+_i K(\frac{X_i}{h_{1,n}(\theta_1)}) r_p(\frac{X_i}{h_{1,n}(\theta_1)})\mu^{(p+1)}_{1}(0^+,\theta_1)h^{p+1}_{1,n}(\theta_1)\frac{(\frac{X_i}{h_{1,n}(\theta_1)})^{p+1}}{(p+1)!}
\end{align*}
Note that $r_p(x/h)(x/h)^{p+1}=[(x/h)^{p+1},...,(x/h)^{2p+1}]'$. So we may let
\begin{align*}
&\mathscr{F}_{s}=\{ x \mapsto \mathds{1}\{x\ge0\}K(ax)(ax)^{s+p+1}\mu^{p+1}_1(0^+,\theta_1) \mathds{1}\{ax\in [-1,1]\}:a\ge 1/h_0 , \theta_1 \in \Theta_1 \}\\
&\mathscr{F}_{n,s}=\{ x \mapsto \mathds{1}\{x\ge0\}K(x/h_{1,n}(\theta_1))(x/h_{1,n}(\theta_1))^{s+p+1}\mu^{p+1}_1(0^+,\theta_1)  :\theta_1 \in  \Theta_1 \}
\end{align*}
for each integer $s$ such that $0\le s \le p$.
Since $(ax)^{s+p+1}\mathds{1}\{ax \in [-1,1]\}$ is Lipschitz continuous for each $a\ge 1/h_0$ and bounded by $1$, it is of VC type by Lemma \ref{lemma:VC type}. We then apply Lemma \ref{lemma:VC type_stability} to show that for each $0 \le s \le p$, $\mathscr{F}_{s}$ is a VC type class with envelope $F_s(x)=\norm{K }_\infty  \int_{\mathcal{Y}\times \mathcal{D}} F_\epsilon(y,d',x)d\mathds{P}^x(y,D=d'|x)$, which is integrable under Assumption \ref{a:BR}(ii)(b) and (iv)(a).
By Assumption \ref{a:BR} (iii) and (iv)(a), $\mathscr{F}_{n,s}\subset\mathscr{F}_s$ for all $n\in \mathds{N}$, thus an argument similar to the one above with Theorem 5.2 of Chernozhukov, Chetverikov and Kato (2014) shows that for each $0 \le s \le p$
\begin{align}
E[\sup_{f\in \mathscr{F}_s}|\frac{1}{\sqrt{n}}\sum_{i=1}^{n}(f(X_i)-Ef(X_i))|]=O(1) \label{eq:Lemma_1}
\end{align}
To prove the uniform convergence of the part of (a) outside the inverse sign, it suffices to show that for each $s$
\begin{align*}
\sup_{f\in \mathscr{F}_s}|\frac{1}{\sqrt{nh_n}}\sum_{i=1}^{n}\Big(f(X_i)h^{p+1}_n-E[f(X_i)h^{p+1}_n]\Big)|\underset{x}{\overset{p}{\to}}0.
\end{align*}
Multiplying both sides of equation (\ref{eq:Lemma_1}) by $h^{p+1}_n/\sqrt{h_n}$ and applying Markov's inequality as in Step 1, we have
\begin{align}
\sup_{f\in \mathscr{F}_s}|\frac{1}{\sqrt{nh_n}}\sum_{i=1}^{n}(f(X_i)h^{p+1}_n-Ef(X_i)h^{p+1}_n)|=O^x_p(\sqrt{\frac{h^{2p+1}_n}{n}})
\end{align}
which converges to zero in probability ($\mathds{P}^x$) under Assumption \ref{a:BR}(iii).
To conclude, we have shown
\begin{align*}
&\frac{1}{\sqrt{nh_{1,n}(\theta_1)}}\sum_{i=1}^{n} \delta^+_i K(\frac{X_i}{h_{1,n}(\theta_1)}) r_p(\frac{X_i}{h_{1,n}(\theta_1)})\mu^{(p+1)}_{1}(x^*_{ni},\theta_1)h^{p+1}_{1,n}(\theta_1)\frac{(X_i/h_{l,n}(\theta_1))^{p+1}}{(p+1)!}
\\
=& O(\sqrt{\frac{h^{2p+1}_n}{n}})+O^x_p(\sqrt{nh^{2p+3}_n})+O^x_p(\sqrt{\frac{h^{2p+1}_n}{n}})
\end{align*}
uniformly in $\theta_1$.
Finally, the continuous mapping theorem gives $(a)\xrightarrow{p}\Big( O(h_n)+O^x_p(\frac{1}{\sqrt{n}h_n})\Big)\Big(O(\sqrt{\frac{h^{2p+1}_n}{n}})+O^x_p(\sqrt{nh^{2p+3}_n})+O^x_p(\sqrt{\frac{h^{2p+1}_n}{n}}) \Big)=o^x_p(1)$ uniformly in $\theta_1$.

\textbf{Step 3}
For term $(b)$, Minkowski's inequality under the supremum norm implies
\begin{align*}
&\Big|\frac{1}{\sqrt{nh_{1,n}(\theta_1)}}\sum_{i=1}^{n} \delta^+_i K(\frac{X_i}{h_{1,n}(\theta_1)}) r_p(\frac{X_i}{h_{1,n}(\theta_1)})\Epsilon_{1}(Y_i,D_i,X_i,\theta)\Big|_{\Theta_1}\\
\le
&\Big| \frac{1}{\sqrt{nh_{1,n}(\theta_1)}}\sum_{i=1}^{n} \delta^+_i K(\frac{X_i}{h_{1,n}(\theta_1)}) r_p(\frac{X_i}{h_{1,n}(\theta_1)})\Epsilon_{1}(Y_i,D_i,X_i,\theta)\\
&- E[\frac{1}{\sqrt{nh_{1,n}(\theta_1)}}\sum_{i=1}^{n} \delta^+_i K(\frac{X_i}{h_{1,n}(\theta_1)}) r_p(\frac{X_i}{h_{1,n}(\theta_1)})\Epsilon_{1}(Y_i,D_i,X_i,\theta)] \Big|_{\Theta_1}\\
+&
\Big| E[\frac{1}{\sqrt{nh_{1,n}(\theta_1)}}\sum_{i=1}^{n} \delta^+_i K(\frac{X_i}{h_{1,n}(\theta_1)}) r_p(\frac{X_i}{h_{1,n}(\theta_1)})\Epsilon_{1}(Y_i,D_i,X_i,\theta)]-0 \Big|_{\Theta_1}
\end{align*}

By construction, local polynomial regression satisfies $E[\Epsilon_{1}(Y_i,D_i,X_i,\theta)|X]=E[g(Y_i,\theta_1)-\mu_1(X_i,\theta_1)|X]=0$, thus by the law of iterated expectations, we have
\begin{align*}
&E[\frac{1}{\sqrt{nh_{1,n}(\theta_1)}}\sum_{i=1}^{n} \delta^+_i K(\frac{X_i}{h_{1,n}(\theta_1)}) r_p(\frac{X_i}{h_{1,n}(\theta_1)})\Epsilon_{1}(Y_i,D_i,X_i,\theta)]\\
=&E[\frac{1}{\sqrt{nh_{1,n}(\theta_1)}}\sum_{i=1}^{n} \delta^+_i K(\frac{X_i}{h_{1,n}(\theta_1)}) r_p(\frac{X_i}{h_{1,n}(\theta_1)})E[\Epsilon_{1}(Y_i,D_i,X_i,\theta)|X]]=0.
\end{align*}
Therefore, in light of (\ref{eq:gamma_unif_conv}), in order to show $(b)\xrightarrow{p} \sum_{i=1}^{n}\frac{(\Gamma^+_p)^{-1}\delta^+_iK(\frac{X_i}{h_{1,n}(\theta_1)})r_p(\frac{X_i}{h_{1,n}(\theta_1)})\Epsilon_{1}(Y_i,D_i,X_i,\theta)  }{\sqrt{nh_{1,n}(\theta_1)}f_X(0)}$ uniformly in $\theta_1$, it remains to be shown that for each coordinate $0\le s \le p$
\begin{align*}
\sup_{\theta_1 \in \Theta_1}\Big|\frac{e'_s}{\sqrt{nh_{1,n}(\theta_1)}}\sum_{i=1}^{n} \delta^+_i K(\frac{X_i}{h_{1,n}(\theta_1)}) r_p(\frac{X_i}{h_{1,n}(\theta_1)})\Epsilon_{1}(Y_i,D_i,X_i,\theta)\Big|=O^x_p(1)
\end{align*}
First note that, under Assumptions \ref{a:BR} (i), (ii) (c), (iii), (iv),
\begin{align*}
&\sup_{\theta_1 \in \Theta_1}E[\frac{e'_s}{\sqrt{nh_{1,n}(\theta_1)}}\sum_{i=1}^{n} \delta^+_i K(\frac{X_i}{h_{1,n}(\theta_1)}) r_p(\frac{X_i}{h_{1,n}(\theta_1)})\Epsilon_{1}(Y_i,D_i,X_i,\theta)]^2\\
=&\sup_{\theta_1 \in \Theta_1}E[\frac{1}{h_{1,n}(\theta_1)} \delta^+_i E[\Epsilon^2_{1}(Y_i,D_i,X_i,\theta)|X_i]K^2(\frac{X_i}{h_{1,n}(\theta_1)})e'_s r_p(\frac{X_i}{h_{1,n}(\theta_1)})r'_p(\frac{X_i}{h_{1,n}(\theta_1)})e_s]\\
=&\sup_{\theta_1 \in \Theta_1}\int_{\mathds{R}_+}\sigma_{11}(\theta_1,\theta_1|uh_{1,n}(\theta_1))K^2(u)e'_s r_p(u)r'_p(u)e_s f_X(uh_{1,n}(\theta_1))du\\
=&\sup_{\theta_1 \in \Theta_1}\int_{\mathds{R}_+}\sigma_{11}(\theta_1,\theta_1|0^+)K^2(u)e'_s r_p(u)r'_p(u)e_s f_X(0^+)du+O^x_p(h_n)\\
\le&f_X(0) e'_s\Psi^+_p e_s\sup_{\theta_1 \in \Theta_1}\sigma_{11}(\theta_1,\theta_1|0^+)+O^x_p(h_n)\\
\lesssim& f_X(0) e'_s\Psi^+_p e_s +O^x_p(h_n),
\end{align*}
where the right hand side is bounded and does not depend on $\theta_1$. Using Markov's inequality, we know that for some constant that doesn't depend on $n$, for each $M>0$
\begin{align*}
\mathds{P}^x\Big(\Big|\frac{e'_s}{\sqrt{nh_{1,n}(\theta_1)}}\sum_{i=1}^{n} \delta^+_i K(\frac{X_i}{h_{1,n}(\theta_1)}) r_p(\frac{X_i}{h_{1,n}(\theta_1)})\Epsilon_{1}(Y_i,D_i,X_i,\theta)\Big|>M\Big)
\le
 \frac{C [f_X(0) e'_s\Psi^+_p e_s +h_n]}{M^2}
\end{align*}
Since the right hand side can be made arbitrarily small by choosing $M$, it is bounded in probability uniformly in $\theta_1$, which concludes the proof.
\end{proof}
\begin{remark}\label{remark:compare_EM2000}
As an anonymous referee pointed out, it is worth discussing the proof of Lemma 1 in comparison to Einmahl and Mason (2000). The main difference between their results and our convergence of stochastic parts is that the rates are different; the framework of Einmahl and Mason (2000) is uniform in both $x$ and a set of functions. 
	Therefore, their rate has an extra $\log h_n$ factor and thus the estimator based on their result does not converge weakly to a Gaussian process. 

	In addition, their results are about the uniform convergence of a kernel type estimator to its expectation, which can be considered counterparts of the stochastic parts in Steps 2 and 3 in our proof of Lemma 1. 
	This is insufficient, and we also need to handle the asymptotic bias terms since our uniform Bahadur representation is only an approximation.  
	Note that much of our proof for Lemma 1 is about establishing the uniform validity of such an approximation. 
\end{remark}

\subsection{Auxiliary Lemmas for the General Result}\label{sec:prel_lemmas}

Since we are working on two probability spaces, $(\Omega^x,\mathcal{F}^x,\mathds{P}^x)$ and $(\Omega^{\xi},\mathcal{F}^{\xi},\mathds{P}^{\xi})$, we use the following notations to clarify the sense of various modes of convergence and expectations.
We let $ \underset{\bullet}{\overset{p}{\to}}$ denote the convergence in probability with respect to the probability measure $\mathds{P}^\bullet$, let $E_{\xi|x}$ denote the conditional expectation with respect to the product probability measure $\mathds{P}^x \times \mathds{P}^\xi$ given the events in $\mathcal{F}^x$, and let $E_{x}$ denote the expectation with respect to the probability measure $\mathds{P}^x$.
Following Section 1.13 of van der Vaart and Wellner (1996), we define the conditional weak convergence in probability, or convergence of the conditional limit laws of bootstraps, denoted by $X_n\underset{\xi}{\overset{p}{\leadsto}}X$, by $\sup_{h\in BL_1}|E_{\xi|x}h(X_n)-Eh(X)|\underset{x}{\overset{p}{\to}}0$, where $BL_1$ is the set of functions with Lipschitz constant and supremum norm bounded by 1.
We state and prove the following lemma, which can be seen as a conditional weak convergence analogy of Theorem 18.10 (iv) of van der Vaart (1998).

The following lemma is used for the purpose of bounding estimation errors in the EMP to approximate the MP.

\begin{lemma}\label{lemma:cond_weak_conv_and in prob}
Let $(\Omega^x \times \Omega^\xi, \mathcal{F}^x\otimes \mathcal{F}^\xi,\mathds{P}^{x \times \xi})$ be the product probability space of $(\Omega^x , \mathcal{F}^x,P^x)$ and $( \Omega^\xi,  \mathcal{F}^\xi, P^\xi)$, where $\mathcal{F}^x\otimes \mathcal{F}^\xi$ stands for the product sigma field of $\mathcal{F}^x$ and $\mathcal{F}^\xi$.
For a metric space $(\mathds{T},d)$, consider $X_n$, $Y_n$, $X:\Omega^x \times \Omega^\xi\rightarrow \mathds{T}$, $n=1,2,...$.
If $X_n\underset{\xi}{\overset{p}{\leadsto}}X$ and $d(Y_n,X_n)\underset{x\times \xi}{\overset{p}{\to}}0$, then $Y_n\underset{\xi}{\overset{p}{\leadsto}}X$.
\end{lemma}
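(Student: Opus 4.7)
The plan is to unwind the definition of conditional weak convergence in probability and combine it with a standard Lipschitz-boundedness argument. By definition we must show
\[
\sup_{h\in BL_1}\bigl|E_{\xi\mid x}h(Y_n)-Eh(X)\bigr|\underset{x}{\overset{p}{\to}}0.
\]
First I would insert $E_{\xi\mid x}h(X_n)$ and apply the triangle inequality to obtain, for every $h\in BL_1$,
\[
\bigl|E_{\xi\mid x}h(Y_n)-Eh(X)\bigr|\le \bigl|E_{\xi\mid x}h(Y_n)-E_{\xi\mid x}h(X_n)\bigr|+\bigl|E_{\xi\mid x}h(X_n)-Eh(X)\bigr|.
\]
The second summand on the right, after taking $\sup_{h\in BL_1}$, vanishes in $\mathds{P}^x$-probability by the hypothesis $X_n\underset{\xi}{\overset{p}{\leadsto}}X$. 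Thus it suffices to show that the first summand vanishes in probability uniformly in $h\in BL_1$.

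Next I would use that every $h\in BL_1$ is $1$-Lipschitz with $\|h\|_\infty\le 1$, which gives the pointwise bound $|h(Y_n)-h(X_n)|\le \min(d(Y_n,X_n),2)$. Therefore
\[
\sup_{h\in BL_1}\bigl|E_{\xi\mid x}h(Y_n)-E_{\xi\mid x}h(X_n)\bigr|\le E_{\xi\mid x}\bigl[\min(d(Y_n,X_n),2)\bigr],
\]
and the right-hand side no longer depends on $h$. It remains to prove that $E_{\xi\mid x}[\min(d(Y_n,X_n),2)]\underset{x}{\overset{p}{\to}}0$. For any fixed $\varepsilon>0$, the trivial split $\min(d(Y_n,X_n),2)\le \varepsilon+2\cdot\mathds{1}\{d(Y_n,X_n)>\varepsilon\}$ yields, after taking total expectation under $\mathds{P}^{x\times\xi}$,
\[
E\bigl[E_{\xi\mid x}\min(d(Y_n,X_n),2)\bigr]\le \varepsilon+2\,\mathds{P}^{x\times\xi}\bigl(d(Y_n,X_n)>\varepsilon\bigr).
\]
By the assumption $d(Y_n,X_n)\underset{x\times\xi}{\overset{p}{\to}}0$, the second term on the right tends to $0$, so the left-hand side is eventually bounded by $2\varepsilon$. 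Since $\varepsilon$ is arbitrary and the integrand is nonnegative, $E_{\xi\mid x}\min(d(Y_n,X_n),2)\to 0$ in $L^1(\mathds{P}^x)$, hence in $\mathds{P}^x$-probability. Combining this with the convergence of the second summand gives the claim.

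The only delicate point I expect is the measurability of $d(Y_n,X_n)$ and of the conditional expectation $E_{\xi\mid x}h(Y_n)$, which under the conventions of van der Vaart and Wellner (1996, Section 1.13) must be interpreted via outer expectations; this would be handled by replacing $E_{\xi\mid x}$ with outer conditional expectation throughout and invoking Fubini for outer integrals, but the argument above is otherwise unchanged.
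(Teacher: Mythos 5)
Your proposal is correct and follows essentially the same route as the paper's proof: the same triangle-inequality decomposition through $E_{\xi\mid x}h(X_n)$, the same use of the hypothesis $X_n\underset{\xi}{\overset{p}{\leadsto}}X$ for the second summand, and the same $\varepsilon$-split exploiting that $h\in BL_1$ is $1$-Lipschitz and bounded by $1$. The only (minor, and arguably cleaner) difference is that you conclude via an $L^1(\mathds{P}^x)$/Markov argument that $E_{\xi\mid x}[\min(d(Y_n,X_n),2)]\underset{x}{\overset{p}{\to}}0$, whereas the paper interchanges limit and expectation by dominated convergence to get $\mathds{P}^x$-almost sure convergence of $E_{\xi\mid x}[\mathds{1}\{d(X_n,Y_n)>\varepsilon\}]$.
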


\begin{proof}
For each $h\in BL_1$, we can write
\begin{align*}
&|E_{\xi|x}h(Y_n)-Eh(X)|
\le |E_{\xi|x}h(Y_n)-E_{\xi|x}h(X_n)|+|E_{\xi|x}h(X_n)-Eh(X)|.
\end{align*}
For the second term on the right-hand side, $|E_{\xi|x}h(X_n)-Eh(X)| \underset{x}{\overset{p}{\to}} 0$ by the assumption $X_n\underset{\xi}{\overset{p}{\leadsto}}X$ and the definition of $BL_1$.
To analyze the first term on the right-hand side, note that for any $\varepsilon\in(0,1)$, we have
\begin{align*}
&|E_{\xi|x}h(Y_n)-E_{\xi|x}h(X_n)|
\le \varepsilon E_{\xi|x} \mathds{1}\{d(X_n,Y_n)\le \varepsilon\}+2 E_{\xi|x} \mathds{1}\{d(X_n,Y_n)> \varepsilon\}.
\end{align*}
The first part can be set arbitrarily small by letting $\varepsilon \to 0$.
To bound the second part, note first that the assumption of $d(Y_n,X_n)\underset{x\times \xi}{\overset{p}{\to}}0$ yields
$
\lim_{n\to \infty} \mathds{P}^{x \times \xi}(d(X_n,Y_n)> \varepsilon)=\lim_{n\to \infty} E[\mathds{1}\{d(X_n,Y_n)> \varepsilon\}]=0.
$
By the law of iterated expectations and the dominated convergence theorem, we obtain
\begin{align*}
&\lim_{n\to \infty} E[\mathds{1}\{d(X_n,Y_n)> \varepsilon\}]
=\lim_{n\to \infty} E_x[E_{\xi|x}[\mathds{1}\{d(X_n,Y_n)> \varepsilon\}]]
=E_x[\lim_{n\to \infty} E_{\xi|x}[\mathds{1}\{d(X_n,Y_n)> \varepsilon\}]]=0
\end{align*}
In other words, $\lim_{n\to \infty} E_{\xi|x}[\mathds{1}\{d(X_n,Y_n)> \varepsilon\}]=0$ $\mathds{P}^x$-almost surely.
\end{proof}

The following lemma will be used for deriving the weak convergence of Wald type statistics from joint weak convergence results for the numerator and denominator processes.
It can be easily checked by using definition of Hadamard differentiation, and so we omit a proof.

\begin{lemma}\label{lemma:hadamard_fraction}
Let $(A(\cdot),B(\cdot))\in \ell^\infty(\Theta)\times  \ell^\infty(\Theta)$.
If $B(\cdot)>C>0$ on $\Theta$, then $(F,G)\stackrel{\Phi}{\mapsto}{F}/{G}$ is Hadamard differentiable at $(A,B)$ tangentially to $\ell^\infty(\Theta)$ with the Hadamard derivative $\Phi'_{(A,B)}$ given by $\Phi'_{(A,B)}(g,h)={(Bg-Ah)}/{B^2}$.
\end{lemma}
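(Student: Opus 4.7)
The plan is to verify Hadamard differentiability directly from the definition, since the quotient map has a simple algebraic form that makes the difference quotient collapse cleanly. Concretely, I would fix sequences $t_n \downarrow 0$ in $\mathbb{R}$ and $(g_n, h_n) \to (g,h)$ in $\ell^\infty(\Theta) \times \ell^\infty(\Theta)$, with $(A + t_n g_n, B + t_n h_n)$ in the domain of $\Phi$, and show that
\begin{equation*}
\Bigl\| \frac{\Phi(A + t_n g_n, B + t_n h_n) - \Phi(A,B)}{t_n} - \frac{Bg - Ah}{B^2} \Bigr\|_\infty \longrightarrow 0.
\end{equation*}

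First I would establish that the difference quotient is well-defined for all $n$ large. Since $\|h_n\|_\infty$ is bounded (as $h_n$ converges in $\ell^\infty(\Theta)$) and $t_n \to 0$, for all sufficiently large $n$ we have $t_n \|h_n\|_\infty < C/2$, so $B + t_n h_n > C/2 > 0$ uniformly on $\Theta$. This keeps all denominators bounded away from $0$ uniformly, which is the only place the lower-bound hypothesis $B > C$ is used.

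Next I would perform the routine algebraic simplification
\begin{equation*}
\frac{A + t_n g_n}{B + t_n h_n} - \frac{A}{B}
= \frac{(A + t_n g_n)B - A(B + t_n h_n)}{B(B + t_n h_n)}
= \frac{t_n(B g_n - A h_n)}{B(B + t_n h_n)},
\end{equation*}
so that dividing by $t_n$ yields $(B g_n - A h_n)/[B(B + t_n h_n)]$. Subtracting the candidate derivative $(Bg - Ah)/B^2$ and combining over the common denominator $B^2(B + t_n h_n)$ gives
\begin{equation*}
\frac{B^2(Bg_n - Ah_n) - (B + t_n h_n)(B(Bg - Ah))}{B^2 \cdot B(B + t_n h_n)}
= \frac{B^2 \bigl[B(g_n-g) - A(h_n-h)\bigr] - t_n h_n B (Bg - Ah)}{B^3(B + t_n h_n)}.
\end{equation*}

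Finally I would bound the supremum norm. The denominator is bounded below by $C^3 \cdot (C/2)$, so it suffices to bound the numerator uniformly. Since $A, B \in \ell^\infty(\Theta)$ and $g, h \in \ell^\infty(\Theta)$ are fixed, while $\|g_n - g\|_\infty \to 0$, $\|h_n - h\|_\infty \to 0$, and $t_n \to 0$ with $\|h_n\|_\infty$ bounded, each term in the numerator tends to $0$ uniformly on $\Theta$. This yields the desired uniform convergence, verifying Hadamard differentiability tangentially to $\ell^\infty(\Theta) \times \ell^\infty(\Theta)$ with the stated derivative. There is no substantive obstacle here; the only point that requires any care is the uniform lower bound on the denominators, which is exactly why the hypothesis $B > C > 0$ on $\Theta$ (rather than merely $B > 0$ pointwise) is imposed.
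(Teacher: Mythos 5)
Your proof is correct and is precisely the routine verification the paper itself omits: the paper states only that the lemma ``can be easily checked by using definition of Hadamard differentiation'' and gives no argument, and your direct definition check is that argument. The algebraic collapse of the difference quotient, the uniform lower bound $B + t_n h_n > C/2$ for large $n$ (the only place the hypothesis $B > C > 0$ is used), and the uniform bounds on the numerator terms are all sound, so nothing further is needed.
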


We restate the Functional Central Limit Theorem of Pollard (1990) as the following lemma, which plays a pivotal role in the proof of our main Theorem. To cope with some measurability issues, we present the version with sufficient conditions for measurability by Kosorok (Lemma 1; 2003). See also Theorem 10.6 of Pollard (1990).
\begin{lemma}[Pollard (1990); Kosorok (2003)]\label{lemma:FCLT}
Denote outer expectation, as defined in Section 1.2 of van der Vaart and Wellner (1996), by $E^*$. Let a triangular array of almost measurable Suslin (AMS) stochastic processes $\{f_{ni}(t):i=1,...n,t\in T\}$ be row independent, and define $\nu_n(t)=\sum_{i=1}^{n}[f_{ni}(t)-Ef_{ni}(\cdot,t)]$. Define $\rho_n(s,t)=(\sum_{i=1}^{n} [f_{ni}(s)-f_{ni}(t)]^2)^{1/2}$. Suppose that the following conditions are satisfied.
\begin{enumerate}
  \item the $\{f_{ni}\}$ are manageable, with envelope $\{F_{ni}\}$ which are also independent within rows;
  \item $H(s,t)=\lim_{n\to \infty}E\nu_n(s) \nu_n(t)$ exists for every $s,t\in T$;
  \item $\limsup_{n\to \infty} \sum_{i=1}^{n}E^*F^2_{ni}<\infty$;
  \item $\lim_{n\to \infty} \sum_{i=1}^{n}E^*F^2_{ni}\mathds{1}\{F_{ni}>\epsilon\}=0$ for each $\epsilon>0$;
  \item $\rho(s,t)=\lim_{n\to\infty} \rho_n (s,t)$ exists for every $s,t\in T$, and for all deterministic sequences $\{s_n\}$ and $\{t_n\}$ in $T$, if $\rho(s_n,t_n)\to 0$ then $\rho_n(s_n,t_n)\to 0$.
	
\end{enumerate}
Then $T$ is totally bounded under the $\rho$ pseudometric and $X_n$ converges weakly to a tight mean zero Gaussian process $X$ concentrated on $\{z\in \ell^\infty(T):\text{$z$ is uniformly $\rho$-continuous}\}$, with covariance $H(s,t)$.
\end{lemma}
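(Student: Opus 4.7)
The plan is to derive Corollary \ref{corollary:FQRD} by specializing the two master results, Theorem \ref{theorem:weak_conv} and Theorem \ref{theorem:cond_weak_conv}, to the fuzzy quantile RDD setup of Example \ref{ex:fuzzy_quantile_rdd}. The work consists almost entirely of verifying that Assumptions S, K, M, and FQRD, together with Lemma \ref{lemma:FQRDD}, imply the high-level Assumptions \ref{a:BR}, \ref{a:cond_weak_conv}, \ref{a:multiplier}, and \ref{a:first_stage} for this particular choice of $g_1$, $g_2$, $\phi$, $\psi$, $\Upsilon$, and $v=0$, $p=2$.

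First I would check Assumption \ref{a:BR}. Part (i) is a restatement of Assumption S. For part (ii), the relevant classes are $\{(y^*,d^*)\mapsto \mathds{1}\{y^*\le y,d^*=d\}:(y,d)\in \mathscr{Y}_1\times\mathscr{D}\}$ and $\{d^*\mapsto \mathds{1}\{d^*=d\}:d\in \mathscr{D}\}$, both of which are VC subgraph classes of indicators with the constant $1$ as common bounded envelope; this also handles the integrability and left/right continuity in the index requirement of (ii)(d). The Lipschitz smoothness of $\mu_k^{(j)}$ for $j=0,\dots,p+1=3$ is Assumption FQRD (i), and the continuous differentiability and boundedness of $\sigma_{k l}(\theta,\vartheta\mid\cdot)$ follow from FQRD (i) applied to second-moment products (since $g_1,g_2$ are indicators, the conditional second-moment functions are smooth functionals of $\mu_1,\mu_2$). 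Part (iii) is FQRD (ii), and part (iv) is Assumption K.

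Next I would verify Assumption \ref{a:cond_weak_conv}. Since $\phi$ and $\psi$ are identity operators they are trivially Hadamard differentiable. For the ratio step, Lemma \ref{lemma:hadamard_fraction} together with the non-degeneracy condition FQRD (iii) (which is exactly condition (ii) of Assumption \ref{a:cond_weak_conv}) yields Hadamard differentiability of $(\phi(\cdot)-\phi(\cdot))/(\psi(\cdot)-\psi(\cdot))$ at the required point. For the outer map $\Upsilon$, which sends a CDF-pair to a difference of quantiles, I invoke van der Vaart and Wellner (1996, Lemma 3.9.23) / van der Vaart (1998, Theorem 20.9): under FQRD (iv), $F_{Y^d\mid C}$ is continuously differentiable with density bounded away from zero on $\mathscr{Y}_1$, so each quantile map is Hadamard differentiable tangentially to $C(\mathscr{Y}_1)$, with the explicit derivative displayed in the paper. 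Composing the Hadamard derivatives via the chain rule gives the operator $\Upsilon'_W$ of the stated form. Finally, condition (iii) reduces to $nh_n\to\infty$, which is part of FQRD (ii).

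For the multiplier and first-stage assumptions, Assumption \ref{a:multiplier} is just Assumption M, and Assumption \ref{a:first_stage} follows from combining FQRD (v) (for $\hat f_{Y\mid XD^*}$ and hence $\hat f_{Y^d\mid C}$ via the identification formula) with the explicit uniformly consistent local-polynomial estimators $\tilde\mu_{k,2}$ constructed in Appendix \ref{sec:first_stage}, whose uniform consistency on $([\underline x,\overline x]\setminus\{0\})\times\mathds{T}$ is certified by Lemma \ref{lemma:FQRDD}. With all the hypotheses of Theorems \ref{theorem:weak_conv} and \ref{theorem:cond_weak_conv} in force, part (i) of the corollary follows by reading off their conclusion with $v=0$, $p=2$; defining $\mathds{G}'_{FQRD}$ to be the resulting Gaussian limit process on $\ell^\infty([a,1-a])$ obtained by applying $\Upsilon'_W$ to the Hadamard-derivative image of $\mathds{G}'$. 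Part (ii) follows identically from Theorem \ref{theorem:cond_weak_conv}, after invoking Lemma \ref{lemma:cond_weak_conv_and in prob} to absorb the $o_p$ error coming from replacing the true $F_{Y^d\mid C},f_{Y^d\mid C},\mu_{k,2},f_X$ by their consistent estimators inside $\widehat\Upsilon'_W(\widehat{\mathds Y}_n)$; the continuous-mapping step here uses that the Hadamard derivative $\Upsilon'_W$ is linear and continuous in the tangent argument on the relevant subspace.

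The main obstacle is the verification of Hadamard differentiability of the composed map $\Upsilon$ (ratio followed by quantile-difference) at the true CDF pair tangentially to the appropriate continuous subspace, and the subsequent bookkeeping of the chain-rule derivative so that its form precisely matches $\widehat\Upsilon'_W(\widehat{\mathds Y}_n)$; this requires keeping straight the indices $(y,d_1,d_2,k)\in\mathds T$ and the substitutions $y=\hat Q_{Y^d\mid C}(\theta'')$. Everything else is routine verification of regularity conditions.
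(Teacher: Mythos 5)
Your proposal does not address the statement at hand. The statement is Lemma \ref{lemma:FCLT}, the functional central limit theorem for triangular arrays of row-independent, manageable, almost measurable Suslin processes; it is a restatement of Pollard (1990, Theorem 10.6) with Kosorok's (2003) measurability refinement, and the paper accordingly offers no proof of it beyond the citation. What you have written is instead a proof sketch of Corollary \ref{corollary:FQRD} (the fuzzy quantile RDD application): verifying that Assumptions S, K, M, and FQRD imply Assumptions \ref{a:BR}--\ref{a:first_stage}, checking Hadamard differentiability of $\Upsilon$, and invoking Theorems \ref{theorem:weak_conv} and \ref{theorem:cond_weak_conv}. None of that bears on the conclusion of the lemma, which asserts total boundedness of $(T,\rho)$ and weak convergence of $\nu_n$ to a tight mean-zero Gaussian process in $\ell^\infty(T)$. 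Worse, the logic runs backwards: Lemma \ref{lemma:FCLT} is the pivotal ingredient in the proof of Theorem \ref{theorem:weak_conv} (it is applied there after verifying conditions 1--5 for the local-polynomial summands $f_{ni}(\theta,k)$), and Theorem \ref{theorem:weak_conv} in turn underlies Corollary \ref{corollary:FQRD}. So, read as a proof of the lemma, your argument is circular --- it presupposes results whose proofs rest on the very lemma to be established.

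A correct treatment would either cite the source, as the paper does, or reconstruct the argument: (a) conditions 2--4 (covariance limit plus the Lindeberg-type envelope conditions) yield convergence of the finite-dimensional distributions of $\nu_n$ by the Lindeberg--Feller CLT for row-independent arrays; (b) manageability (condition 1) with the square-integrable envelopes delivers, via Pollard's maximal inequality and chaining, asymptotic equicontinuity with respect to $\rho_n$, which condition 5 transfers to the limiting pseudometric $\rho$ and which also gives total boundedness of $T$ under $\rho$; (c) the AMS property resolves the measurability of suprema so that outer expectations can be handled; and (d) asymptotic tightness plus finite-dimensional convergence then characterize the tight Gaussian limit with covariance $H$, by the standard criterion for weak convergence in $\ell^\infty(T)$ (van der Vaart and Wellner, 1996, Sections 1.5--1.6). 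Your verification-of-assumptions material is not wasted --- it closely parallels the paper's actual proof of Corollary \ref{corollary:FQRD} in Appendix \ref{sec:corollary:FQRD} --- but it proves a different statement than the one you were asked to prove.
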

\begin{remark}
The AMS condition is technical and thus we refer the readers to Kosorok (2003). In this paper, we will make use of the following separability as a sufficient condition for AMS (Lemma 2; Kosorok (2003)):

Denote $\mathds{P}^*$ as outer probability, as defined in Section 1.2 of van der Vaart and Wellner (1996). A triangular array of stochastic processes $\{f_{ni}(t):i=1,...n,t\in T\}$ is said to be separable if for every $n\ge 1$, there exists a countable subset $T_n\subset T$ such that
\begin{align*}
\mathds{P}^*\Big(\sup_{t \in T} \inf_{s \in T_n} \sum_{i=1}^{n}(f_{ni}(s)-f_{ni}(t))^2>0\Big)=0
\end{align*}
\end{remark}

Checking the manageability in condition 1 above is usually not straightforward.
In practice, we use VC type as a sufficient condition.
We state Proposition 3.6.12 of Gin\'e and Nickl (2016) as a lemma below, which is used for establishing the VC type of functions we encounter.

\begin{lemma}\label{lemma:VC type}
Let $f$ be a function of bounded $p$-variation, $p\ge 1$.
Then, the collection $\mathscr{F}$ of translations and dilation of $f$,
$
\mathscr{F}=\{x\mapsto f(tx-s):t>0, s\in \mathds{R}\}
$
is of VC type.
\end{lemma}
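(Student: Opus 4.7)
The plan is to establish the VC type property by controlling the combinatorial complexity of the subgraphs of functions in $\mathscr{F}$. The key observation is that bounded $p$-variation is preserved under translation and positive dilation: for any $t>0$ and $s\in\mathds{R}$, the function $x\mapsto f(tx-s)$ has the same $p$-variation as $f$, since $p$-variation is defined intrinsically through partitions and the map $x\mapsto tx-s$ is a strictly monotone bijection that merely reparametrizes the partitions. Thus the whole class $\mathscr{F}$ lies in the set of functions of $p$-variation at most $V_p(f)$, and it suffices to bound the covering numbers of that larger class uniformly in probability measures $Q$.

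First I would handle the case $p=1$, where the Jordan decomposition writes $f=f_+-f_-$ as a difference of two bounded monotone functions. For a monotone function $g$, the subgraph $\{(x,u):g(x)>u\}$ is a ``left-shadow'' or ``right-shadow'' of its graph, and the collection of such subgraphs across all translations and positive dilations forms a VC class of sets with VC dimension bounded by a small absolute constant (essentially two, because increasing functions preserve order). A standard stability lemma for VC classes (e.g.\ closure under pointwise sums and differences up to a fixed number of operations) then gives that the subgraphs of $f_+(t\cdot-s)-f_-(t\cdot-s)$ belong to a VC class whose VC dimension depends only on $V_1(f)$.

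For general $p\ge 1$, I would invoke the fact that any bounded $p$-variation function on a compact interval can be approximated, or decomposed, by a uniformly bounded number of monotone pieces whose count is controlled by the $p$-variation seminorm. This reduces the general case to the monotone case modulo a small error in the covering-number estimate. The required VC type bound $N(\epsilon \|F\|_{Q,2},\mathscr{F},\|\cdot\|_{Q,2})\le(A/\epsilon)^{V}$ then follows from the VC-dimension bound on subgraphs via standard conversion inequalities between VC dimension and $L^2$ covering numbers (Haussler-type entropy bounds), applied uniformly in $Q$ thanks to the universal envelope.

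The main obstacle will be the case $p>1$. For $p=1$ the Jordan decomposition furnishes a clean two-monotone-function reduction, but for $p>1$ there is no direct analogue, and one must instead work with the ``variation modulus'' or with a suitable parametrization having a number of monotonicity changes controlled by the $p$-variation seminorm; this is precisely the content of Proposition 3.6.12 of Gin\'e and Nickl (2016). Rather than reconstructing the delicate combinatorial argument in full, in the paper itself I would simply cite that proposition, which gives exactly the stated conclusion.
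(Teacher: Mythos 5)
The paper does not actually prove this lemma at all---it simply restates Proposition 3.6.12 of Gin\'e and Nickl (2016) and cites it, which is exactly where your proposal ends up, so the two coincide. (One caveat on your heuristic sketch: a function of bounded $p$-variation with $p>1$ need not decompose into a boundedly finite number of monotone pieces---consider oscillating functions of the type $x^{2}\sin(1/x)$---so the reduction you describe for general $p$ would not go through as stated; the argument behind the cited proposition instead uses the representation $f=g\circ h$ with $h$ bounded monotone and $g$ $(1/p)$-H\"older, but since you defer to the citation this does not affect the conclusion.)
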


We also cite some results of Chernozhukov, Chetverikov and Kato (2014) as the following lemma, which shows the stability of VC type classes under element-wise addition and multiplication.

\begin{lemma}\label{lemma:VC type_stability}
Let $\mathscr{F}$ and $\mathscr{G}$ be of VC type with envelopes $F$ and $G$ respectively. Then the collection of element-wise sums $\mathscr{F}+\mathscr{G}$ and the collection of element-wise products $\mathscr{F}\mathscr{G}$ are of VC type with envelope $F+G$ and $FG$, respectively.
\end{lemma}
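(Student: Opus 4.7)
The plan is to bound the uniform covering numbers $\sup_Q N(\varepsilon\|F+G\|_{Q,2},\mathscr{F}+\mathscr{G},\|\cdot\|_{Q,2})$ and $\sup_Q N(\varepsilon\|FG\|_{Q,2},\mathscr{F}\mathscr{G},\|\cdot\|_{Q,2})$ by polynomials in $1/\varepsilon$ uniformly over probability measures $Q$ on $(\Omega^x,\mathcal{F}^x)$, using the VC-type bounds assumed for $\mathscr{F}$ and $\mathscr{G}$. Throughout I may assume the envelopes are non-negative and that $\|F\|_{Q,2},\|G\|_{Q,2}\in(0,\infty)$ for the measures considered (the degenerate cases being trivial).

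For the sum class, the argument is direct. Given any probability measure $Q$, pick an $L^2(Q)$ net $\{f_1,\dots,f_{N_F}\}$ for $\mathscr{F}$ of radius $(\varepsilon/2)\|F\|_{Q,2}$ and a corresponding net $\{g_1,\dots,g_{N_G}\}$ for $\mathscr{G}$ of radius $(\varepsilon/2)\|G\|_{Q,2}$. Since $F,G\ge 0$, both $\|F\|_{Q,2}$ and $\|G\|_{Q,2}$ are dominated by $\|F+G\|_{Q,2}$, so for any $f\in\mathscr{F}$, $g\in\mathscr{G}$ the triangle inequality yields $\|(f+g)-(f_i+g_j)\|_{Q,2}\le \varepsilon\|F+G\|_{Q,2}$. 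The VC bounds for $\mathscr{F}$ and $\mathscr{G}$ then give $N(\varepsilon\|F+G\|_{Q,2},\mathscr{F}+\mathscr{G},\|\cdot\|_{Q,2})\le (2A_F/\varepsilon)^{V_F}(2A_G/\varepsilon)^{V_G}$ uniformly in $Q$, so $\mathscr{F}+\mathscr{G}$ is of VC type with envelope $F+G$.

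For the product class, the naive bound $|fg-f_ig_j|\le|f-f_i||g|+|f_i||g-g_j|\le|f-f_i|G+F|g-g_j|$ forces one to control $\|(f-f_i)G\|_{Q,2}$ and $\|F(g-g_j)\|_{Q,2}$, not the plain $L^2(Q)$ distances. The remedy is a reweighting trick: define the auxiliary probability measures $Q_G$ and $Q_F$ having densities $G^2/\|G\|_{Q,2}^2$ and $F^2/\|F\|_{Q,2}^2$ with respect to $Q$, so that $\|h\|_{Q_G,2}=\|hG\|_{Q,2}/\|G\|_{Q,2}$ and $\|h\|_{Q_F,2}=\|hF\|_{Q,2}/\|F\|_{Q,2}$. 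Because $\mathscr{F}$ is of VC type with envelope $F$ with bounds valid \emph{uniformly in $Q$}, the reweighted net has size at most $(A_F/\varepsilon)^{V_F}$ when taken of radius $\varepsilon\|F\|_{Q_G,2}=\varepsilon\|FG\|_{Q,2}/\|G\|_{Q,2}$; translating back, this produces $f_i$'s with $\|(f-f_i)G\|_{Q,2}\le\varepsilon\|FG\|_{Q,2}$. Analogously pick $g_j$'s from a $Q_F$-net of $\mathscr{G}$ with $\|F(g-g_j)\|_{Q,2}\le\varepsilon\|FG\|_{Q,2}$. Combining, $\|fg-f_ig_j\|_{Q,2}\le 2\varepsilon\|FG\|_{Q,2}$, and reparameterizing $\varepsilon\mapsto\varepsilon/2$ gives $N(\varepsilon\|FG\|_{Q,2},\mathscr{F}\mathscr{G},\|\cdot\|_{Q,2})\le(2A_F/\varepsilon)^{V_F}(2A_G/\varepsilon)^{V_G}$ uniformly in $Q$, which is the VC-type bound with envelope $FG$.

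The main obstacle is the product case: a direct covering argument on $\mathscr{F}\mathscr{G}$ under $\|\cdot\|_{Q,2}$ does not close, because replacing $f$ by $f_i$ incurs an error of order $\|(f-f_i)g\|_{Q,2}$ that is not controlled by $\|f-f_i\|_{Q,2}\cdot\text{const}$. The envelope-reweighted measures $Q_F,Q_G$ precisely absorb this weighting while remaining probability measures, so the VC-type assumption—crucially stated uniformly over \emph{all} $Q$—still applies. Verifying that the reparameterizations $\varepsilon\mapsto\varepsilon/2$ keep the resulting constants finite and that the degenerate case $\|F\|_{Q,2}\|G\|_{Q,2}=0$ is trivial (the class is essentially null under $Q$) completes the argument.
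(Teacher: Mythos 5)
Your proposal is correct, but it is worth noting that the paper does not actually prove this lemma at all: it simply cites Lemma A.6 and Corollary A.1 of Chernozhukov, Chetverikov and Kato (2014) for the sum and product statements, respectively. What you have written is a self-contained proof, and it is essentially the standard argument underlying those cited results: the sum case by combining $(\varepsilon/2)$-nets and using $\|F\|_{Q,2}\vee\|G\|_{Q,2}\le\|F+G\|_{Q,2}$ for non-negative envelopes, and the product case by the change-of-measure trick with $dQ_G\propto G^2\,dQ$ and $dQ_F\propto F^2\,dQ$, which converts the weighted distances $\|(f-f_i)G\|_{Q,2}$ and $\|F(g-g_j)\|_{Q,2}$ into plain $L^2$ distances under new probability measures — this is exactly where the uniformity over $Q$ in the VC-type definition is consumed, and you correctly identify it as the crux. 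Two minor points deserve a sentence each in a polished write-up: first, to use $|f_i|\le F$ in the decomposition $|fg-f_ig_j|\le|f-f_i|G+|f_i||g-g_j|$ you need the net centers to lie in $\mathscr{F}$ itself, which under the paper's definition of covering numbers (balls with arbitrary centers) costs a standard factor of $2$ in the radius and only changes the constants $A$; second, your dismissal of the degenerate cases is fine since $\|G\|_{Q,2}=0$ or $\|FG\|_{Q,2}=0$ forces all elements of the relevant class to vanish $Q$-a.s., so a single ball suffices, and measures with infinite envelope norms are vacuous for the bound. With those caveats recorded, your argument closes completely and in fact buys something the paper's citation does not: explicit constants $(2A_F/\varepsilon)^{V_F}(2A_G/\varepsilon)^{V_G}$ and independence from the specific formulation in Chernozhukov, Chetverikov and Kato (2014).
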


The first one is a special case of Lemma A.6 of Chernozhukov, Chetverikov and Kato (2014). The second one is proven in Corollary A.1 of Chernozhukov, Chetverikov and Kato (2014).

\subsection{Proof of Theorem \ref{theorem:weak_conv} (The Main Result)}\label{sec:theorem:weak_conv}
\begin{proof}
\qquad\\
\textbf{Part (i)}  For $(\theta,k)\in \mathds{T}$, we define
\begin{align*}
f_{ni}(\theta,k)=&\frac{e'_v(\Gamma^+_p)^{-1} r_p(\frac{X_i}{h_{k,n}(\theta_k)})}{\sqrt{n h_{k,n}(\theta_k)}f_X(0)}\Epsilon_k(Y_i,D_i,X_i,\theta)K(\frac{X_i}{h_{k,n}(\theta_k)})\delta^+_i,\\
=& \frac{a_0+a_1(\frac{X_i}{c_k(\theta_k)h_n})+...+a_{p}(\frac{X_i}{c_k(\theta_k)h_n})^p }{\sqrt{ n c_k(\theta_k)h_{n}}f_X(0)}\Epsilon_k(Y_i,D_i,X_i,\theta)K(\frac{X_i}{c_k(\theta_k)h_n})\delta^+_i \qquad\text{and}\\
\nu^+_n(\theta,k)=&\sum_{i=1}^{n}[f_{ni}(\theta,k)-Ef_{ni}(\theta,k)].
\end{align*}
By Assumption \ref{a:BR} (i)(a), the triangular array $\{f_{ni}(\theta,k)\}$ is row independent.
The separability follows from the same argument as in the proof of Theorem 4 in Kosorok (2003) and the left or right continuity (in both $\theta_1$ and $\theta_2$) of the process $f_{ni}(\theta,k)$, which followings from Assumption \ref{a:BR} (ii)(d),(iii) and (iv).
We claim that it satisfies the conditions required by Lemma \ref{lemma:FCLT}.


For condition 1, we note that $\Epsilon_{k}(Y_i,X_i,\cdot)$ is a VC type (Euclidean) class with envelope $2F_\Epsilon$ by Assumption \ref{a:BR}(ii)(a) and Lemma \ref{lemma:VC type_stability}.
Notice that for a fixed $n$, denote $\delta^+_x=\mathds{1}\{x>0\}$.
Both
\begin{align*}
&\Big\{ x \mapsto \frac{(a_0+a_1(\frac{x}{c_k(\theta_k)h_n})+...+a_{p}(\frac{x}{c_k(\theta_k)h_n})^p)\mathds{1}\{|x|\le \overline{c}h_n\}\delta^+_x }{\sqrt{ n c_k(\theta_k)h_{n}}f_X(0)}:(\theta,k)\in \mathds{T} \Big\} \qquad\text{and}\\
&\Big\{x \mapsto K(\frac{x}{c_k(\theta_k)h_n}):(\theta,k)\in \mathds{T}\Big\}
\end{align*}
are of VC type with envelopes $\frac{C_1}{\sqrt{nh_n}}\mathds{1}\{|x|\le \overline{c}h_n\} $ and $\mathds{1}\{|x|\le \overline{c}h_n\}\norm{K}_\infty$, respectively, under Assumptions \ref{a:BR}(i),(iii) and (iv) and Lemma \ref{lemma:VC type}.
By Lemma \ref{lemma:VC type_stability}, their product is a VC type class with envelope
\begin{equation*}
F_{ni}(y,d,x)=\frac{C_3}{\sqrt{nh_n}}F_\Epsilon(y,d,x)\mathds{1}\{C_2\frac{x}{h_n}\in [-1,1]\}.
\end{equation*}
Applying Lemma 9.14 (iii) and Theorem 9.15 of Kosorok (2008), we obtain that $\{f_{ni}\}$ is a bounded uniform entropy integral class with row independent envelopes $F_{ni}$. Theorem 1 of Andrews (1994) then implies that $\{f_{ni}\}$ is manageable with respect to the envelope $\{F_{ni}\}$, and therefore condition 1 is satisfied.


To check condition 2, notice that
$$E[\nu^+_n(\theta,k)\nu^+_n(\vartheta,l)]=\sum_{i=1}^{n}Ef_{ni}(\theta,k)f_{ni}(\vartheta,l)-(\sum_{i=1}^{n}Ef_{ni}(\theta,k))(\sum_{i=1}^{n}Ef_{ni}(\vartheta,l)).$$
Under Assumptions \ref{a:BR}(i)(b),(ii)(c),(iii),(iv)(a) we can write
\begin{align*}
  &\sum_{i=1}^{n} Ef_{ni}(\theta,k)f_{ni}(\vartheta,l) \\
  =& E[\frac{e'_v(\Gamma^+_p)^{-1} r_p(x/c_{k}(\theta_{k})h_n) r'_p(x/c_{l}(\vartheta_{l})h_n) (\Gamma^+_p)^{-1} e_v}{ \sqrt{c_{k}(\theta_k)c_{l}(\vartheta_{l})} h_n f^2_X (0)}\sigma_{k l}(\theta,\vartheta|X_i) K(\frac{X_i}{c_{k}(\theta_k)h_n})K(\frac{X_i}{c_{l}(\vartheta_{l})h_n})\delta^+_i]\\
  =&\int_{\mathds{R}^+}\frac{e'_v (\Gamma^+_p)^{-1} r_p(u/c_{k}(\theta_1)) r'_p(u/c_{l}(\vartheta_{l})) (\Gamma^+_p)^{-1} e_v}{\sqrt{c_{k}(\theta_k)c_{l}(\vartheta_{l})}f^2_X(0)}\sigma_{k l}(\theta,\vartheta|uh_n)
  K(\frac{u}{c_{k}(\theta_k)}) K(\frac{u}{c_{l}(\vartheta_{l})})f_X(u h_n)du \\
  =& \frac{\sigma_{k l}(\theta,\vartheta|0^+)e'_v(\Gamma^+_p)^{-1} \Psi^+_p((\theta,k),(\vartheta,l)) (\Gamma^+_p)^{-1} e_v}{\sqrt{c_{k}(\theta_k)c_{l}(\vartheta_{l})}f_X(0)} + O(h_n).
\end{align*}
$\Psi^+_p((\theta,k),(\vartheta,l)) $ exists under Assumptions \ref{a:BR}(iii) and (iv)(a).
All entries in the matrix part are bounded under Assumption \ref{a:BR}(iii),(iv)(a)(c).
In the last line, $n$ enters only through $O(h_n)$.
Therefore, by Assumption \ref{a:BR}(iii), the limit exists and is finite.
Thus, $\lim_{n\to \infty}\sum_{i=1}^{n} Ef_{ni}(\theta_1,l_1)f_{ni}(\theta_2,l_2)$ exists.
Since $Ef_{ni}(\theta_1,l_1)=0$ implies $\lim_{n\to \infty}(\sum_{i=1}^{n}Ef_{ni}(\theta,k))(\sum_{i=1}^{n}Ef_{ni}(\vartheta,l))=0$, and condition 2 is satisfied.


Under Assumption \ref{a:BR} (i)(a), (ii)(a), (iii), and (iv)(a), it is clear that
$$
\sum_{i=1}^{n}E^* F^2_{ni}=\sum_{i=1}^{n}E F^2_{ni}\lesssim\int_{\mathscr{Y}\times\mathscr{D}\times \mathscr{X}}F^2_\Epsilon(y,d,uh_n) \mathds{1}\{C_2 u \in [-1,1]\}d\mathds{P}^x(y,d,uh_n) +o(h_n)< \infty
$$
as $n \to \infty$. This shows condition 3.


To show condition 4, note that for any $\epsilon>0$
 \begin{align*}
 & \lim_{n\to \infty}\sum_{i=1}^{n}E^* F^2_{ni}\mathds{1}\{F_{ni}>\epsilon\}\\
=&\lim_{n\to \infty}\sum_{i=1}^{n}E F^2_{ni}\mathds{1}\{F_{ni}>\epsilon\}\\
 \lesssim&\lim_{n\to \infty}
 \int_{\mathscr{Y}\times \mathscr{D}\times \mathscr{X}}F^2_\Epsilon(y,d,uh_n)\mathds{1}\{\frac{C_3 }{\sqrt{nh_{n}}}F_\Epsilon(y,d,uh_n)\mathds{1}\{C_2 u\in[-1,1]\} > \epsilon \}d\mathds{P}^x(y,d,uh_n)\\
=&
 \int_{\mathscr{Y}\times \mathscr{D}\times \mathscr{X}}F^2_\Epsilon(y,d,uh_n)\lim_{n\to \infty} \mathds{1}\{\frac{C_3 }{\sqrt{nh_{n}}}F_\Epsilon(y,d,uh_n)\mathds{1}\{C_2 u\in[-1,1]\} > \epsilon \}d\mathds{P}^x(y,d,uh_n)\\
=& 0
 \end{align*}
by the dominated convergence theorem under Assumption \ref{a:BR}(ii)(a), (iii).


To show condition 5, note that we can write
\begin{align*}
\rho^2_n((\theta,k),(\vartheta,l))
&=\sum_{i=1}^{n}E[f_{ni}(\theta,k)-f_{ni}(\vartheta,l)]^2\\
&=nE[f^2_{ni}(\theta,k)+f^2_{ni}(\vartheta,l)-2f_{ni}(\theta,k)f_{ni}(\vartheta,l)].
\end{align*}
From our calculations on the way to show condition 2, we know that each term exists on the right-hand side.
Since $n$ enters the expression only through the $O(h_n)$ part, for all deterministic sequences $\{s_n\}$ and $\{t_n\}$ in $\mathds{T}$, $\rho^2(s_n,t_n)\to 0$ implies $\rho^2_n(s_n,t_n)\to 0.$


Now, applying Lemma \ref{lemma:FCLT}, we have $\nu^+_{n}(\cdot)$ converging weakly to a tight mean-zero Gaussian process $\mathds{G}_{H^+}(\cdot)$ with covariance function
\begin{align*}
H^+((\theta,k),(\vartheta,l))= \frac{\sigma(\theta,\vartheta|0)e'_v(\Gamma^+_p)^{-1}  \Psi^+((\theta,k),(\vartheta,l)) (\Gamma^+_p)^{-1} e_v}{\sqrt{c_{k}(\theta_k)c_{l}(\vartheta_{l})}f_X(0)}.
\end{align*}
Slutsky's Theorem and Assumption \ref{a:BR}(iv) then give
\begin{align*}
\sqrt{nh^{1+2v}_n}
\begin{bmatrix}
  \hat{\mu}^{(v)}_{1,p}(0^+,\cdot)-\mu^{(v)}_{1}(0^+,\cdot)\\
  \hat{\mu}^{(v)}_{2,p} (0^+,\cdot)-\mu^{(v)}_{2} (0^+,\cdot)
\end{bmatrix}
\leadsto
\begin{bmatrix}
  \mathds{G}_{H^+}(\cdot,1)/\sqrt{c^{1+2v}_1(\cdot)} \\
  \mathds{G}_{H^+}(\cdot,2)/\sqrt{c^{1+2v}_2(\cdot)}
\end{bmatrix}
\end{align*}
Applying the functional delta method under Assumption \ref{a:cond_weak_conv}(i), we then have
\begin{align*}
\sqrt{nh^{1+2v}_n}
\begin{bmatrix}
  \phi(\hat{\mu}^{(v)}_{1,p}(0^+,\cdot))(\cdot)-\phi(\mu^{(v)}_{1}(0^+,\cdot))(\cdot)\\
  \psi(\hat{\mu}^{(v)}_{2,p}(0^+,\cdot))(\cdot)-\psi(\mu^{(v)}_{2}(0^+,\cdot))(\cdot)
\end{bmatrix}
\leadsto
\begin{bmatrix}
  \phi'_{\mu^{(v)}_{1}(0^+,\cdot)}\Big(\mathds{G}_{H^+}( \cdot,1)/\sqrt{c^{1+2v}_1(\cdot)} \Big)(\cdot)\\
  \psi'_{\mu^{(v)}_{2}(0^+,\cdot)}\Big(\mathds{G}_{H^+}( \cdot,2)/\sqrt{c^{1+2v}_2(\cdot)} \Big)(\cdot)
\end{bmatrix}
\end{align*}


All arguments above can be replicated for the left limit objects, and thus by Assumption \ref{a:BR}(i)(a), we obtain
\begin{align*}
&\sqrt{nh^{1+2v}_n}
\begin{bmatrix}
  \Big(\phi(\hat{\mu}^{(v)}_{1,p}(0^+,\cdot))-\phi(\hat{\mu}^{(v)}_{1,p}(0^-,\cdot))\Big)(\cdot)
  -\Big((\phi(\mu^{(v)}_{1}(0^+,\cdot))-\phi(\mu^{(v)}_{1}(0^-,\cdot))\Big)(\cdot)\\
  \Big(\psi(\hat{\mu}^{(v)}_{2,p}(0^+,\cdot))-\psi(\hat{\mu}^{(v)}_{2,p}(0^-,\cdot))\Big)(\cdot)
  -\Big((\psi(\mu^{(v)}_{2}(0^+,\cdot))-\psi(\mu^{(v)}_{2}(0^-,\cdot))\Big)(\cdot)\\
\end{bmatrix}\\
\leadsto&
\begin{bmatrix}
   \phi'_{\mu^{(v)}_{1}(0^+,\cdot)}\Big(\mathds{G}_{H^+}( \cdot,1)/\sqrt{c^{1+2v}_1(\cdot}) \Big)(\cdot)- \phi'_{\mu^{(v)}_{1}(0^-,\cdot)}\Big(\mathds{G}_{H^-}( \cdot,1)/\sqrt{c^{1+2v}_1(\cdot}) \Big)(\cdot)\\
  \psi'_{\mu^{(v)}_{2}(0^+,\cdot)}\Big(\mathds{G}_{H^+}( \cdot,2)/\sqrt{c^{1+2v}_2(\cdot}) \Big)(\cdot) - \psi'_{\mu^{(v)}_{2}(0^-,\cdot)}\Big(\mathds{G}_{H^-}( \cdot,2)/\sqrt{c^{1+2v}_2(\cdot}) \Big)(\cdot)
\end{bmatrix}
=
\begin{bmatrix}
  \mathds{G}'(\cdot,1) \\
  \mathds{G}'(\cdot,2)
\end{bmatrix}.
\end{align*}
Finally, by another application of the functional delta method, the chain rule for the functional delta method (Lemma 3.9.3 of van der Vaart and Wellner(1996)), and Lemma \ref{lemma:hadamard_fraction} under Assumption \ref{a:cond_weak_conv}(i) and (ii), we obtain
\begin{align*}
&\sqrt{nh^{1+2v}_n}[\hat{\tau}(\cdot)-\tau(\cdot)]=
\\
&\sqrt{nh^{1+2v}_n}[\Upsilon\Big(\frac{\phi(\hat{\mu}^{(v)}_{1,p}(0^+,\cdot))-\phi(\hat{\mu}^{(v)}_{1,p}(0^-,\cdot))}{\psi(\hat{\mu}^{(v)}_{2,p}(0^+,\cdot))-\psi(\hat{\mu}^{(v)}_{2,p}(0^-,\cdot))}\Big)(\cdot)
-
\Upsilon\Big(\frac{\phi(\mu^{(v)}_{1}(0^+,\cdot))-\phi(\mu^{(v)}_{1}(0^-,\cdot))}
{\psi(\mu^{(v)}_{2}(0^+,\cdot))-\psi(\mu^{(v)}_{2}(0^-,\cdot))}\Big)(\cdot)]\\
\leadsto &\Upsilon'_W\Big( \frac{[\psi(\mu^{(v)}_{2}(0^+,\cdot))-\psi(\mu^{(v)}_{2}(0^-,\cdot))]\mathds{G}'(\cdot,1)-[\phi(\mu^{(v)}_{1}(0^+,\cdot))-\phi(\mu^{(v)}_{1}(0^-,\cdot))]
\mathds{G}'(\cdot,2)}{(\psi(\mu^{(v)}_{2}(0^+,\cdot))-\psi(\mu^{(v)}_{2}(0^-,\cdot)))^2} \Big)(\cdot).
\end{align*}
\end{proof}

\subsection{Proof of Theorem \ref{theorem:cond_weak_conv}}\label{sec:theorem:cond_weak_conv}
\begin{proof}
\qquad\\
\textbf{Part (ii)} We introduce the following notations.
\begin{align*}
\nu^+_{\xi,n} (\theta,k)=&\sum_{i=1}^{n}\xi_i\frac{e'_v(\Gamma^+_p)^{-1}r_p(\frac{X_i}{h_{k,n}(\theta_k)})}{\sqrt{n c_k(\theta_k)h_n}f_X(0)}\Epsilon_{k}(Y_i,D_i,X_i,\theta)K(\frac{X_i}{c_k(\theta_k)h_n})\delta^+_i\\
 \hat{\nu}^+_{\xi,n} (\theta,k)=&\sum_{i=1}^{n}\xi_i\frac{e'_v(\Gamma^+_p)^{-1}r_p(\frac{X_i}{h_{k,n}(\theta_k)})}{\sqrt{n c_k(\theta_k)h_n}\hat{f}_X(0)}\hat{\Epsilon}_{k}(Y_i,D_i,X_i,\theta)K(\frac{X_i}{c_k(\theta_k)h_n})\delta^+_i
\end{align*}
Applying Theorem 2 of Kosorok (2003) (which is also the same as Theorem 11.19 of Kosorok (2008)), we have $\nu^+_{\xi,n} \underset{\xi}{\overset{p}{\leadsto}} \mathds G_{H+}$.
In order to apply Lemma \ref{lemma:cond_weak_conv_and in prob}, we need to show
\begin{align*}
\sup_{(\theta,k)\in \mathds{T}}|\nu^+_{\xi,n}(\theta,k)- \hat{\nu}^+_{\xi,n} (\theta,k)| \underset{x \times \xi}{\overset{p}{\to}} 0.
\end{align*}

We will focus on the case of $k=1$, and same argument applies to the case of $k=2$. 
Note that under Assumption \ref{a:first_stage}, $|\hat f_X(0)-f_X(0)|=o^{x \times \xi}_p(1)$. Thus under Assumption \ref{a:BR}(i)(b),
\begin{align*}
&\nu^+_{\xi,n}(\theta,1)- \hat{\nu}^+_{\xi,n} (\theta,1)\\
=&\frac{1}{f_X(0)\hat f_X(0)}\sum_{i=1}^{n}\xi_i\frac{e'_v(\Gamma^+_p)^{-1}r_p(\frac{X_i}{h_{1,n}(\theta_1)})}{\sqrt{n c_1(\theta_1)h_n}}K(\frac{X_i}{c_1(\theta_1)h_n})\delta^+_i[\hat{\Epsilon}_{1}(Y_i,D_i,X_i,\theta)f_X(0)-\Epsilon_{1}(Y_i,D_i,X_i,\theta)\hat{f}_X(0)]\\
=&\frac{1}{f^2_X(0)+o^{x \times \xi}_p(1)}\sum_{i=1}^{n}\xi_i\frac{e'_v(\Gamma^+_p)^{-1}r_p(\frac{X_i}{h_{1,n}(\theta_1)})}{\sqrt{n c_1(\theta_1)h_n} }K(\frac{X_i}{c_1(\theta_1)h_n})\delta^+_i[\hat{\Epsilon}_{1}(Y_i,D_i,X_i,\theta)f_X(0)-\Epsilon_{1}(Y_i,D_i,X_i,\theta)\hat{f}_X(0)]\\
=&\frac{1}{f^2_X(0)+o^{x \times \xi}_p(1)}\sum_{i=1}^{n} Z_i(\theta_1)[\hat{\Epsilon}_{1}(Y_i,D_i,X_i,\theta)f_X(0)-\Epsilon_{1}(Y_i,D_i,X_i,\theta)\hat{f}_X(0)]\\
=&\frac{1}{f^2_X(0)+o^{x \times \xi}_p(1)}\sum_{i=1}^{n} Z_i(\theta_1)[\hat{\Epsilon}_{1}(Y_i,D_i,X_i,\theta)f_X(0)-\Epsilon_{1}(Y_i,D_i,X_i,\theta)f_X(0)\\
&+\Epsilon_{1}(Y_i,D_i,X_i,\theta)f_X(0)-\Epsilon_{1}(Y_i,D_i,X_i,\theta)\hat{f}_X(0)]\\
=&\frac{1}{f^2_X(0)+o^{x \times \xi}_p(1)}\sum_{i=1}^{n} Z_i(\theta_1)[\hat{\Epsilon}_{1}(Y_i,D_i,X_i,\theta)-\Epsilon_{1}(Y_i,D_i,X_i,\theta)]f_X(0)\\
&+\frac{o^{x \times \xi}_p(1)}{f^2_X(0)+o^{x \times \xi}_p(1)}\sum_{i=1}^{n} Z_i(\theta_1)\Epsilon_{1}(Y_i,D_i,X_i,\theta)\\
=&(1)+(2)
\end{align*}
where $Z_i(\theta_1)=\xi_i\frac{e'_v(\Gamma^+_p)^{-1}r_p(\frac{X_i}{h_{1,n}(\theta_1)})}{\sqrt{n c_1(\theta_1)h_n}}K(\frac{X_i}{c_1(\theta_1)h_n})\delta^+_i$. It can be shown following the same procedures in the proof of Theorem \ref{theorem:weak_conv} that under Assumption \ref{a:BR}, \ref{a:cond_weak_conv}, $\sum_{i=1}^{n}Z_i(\theta_1)\leadsto \mathds{G}_1$ and $\sum_{i=1}^{n}Z_i(\theta_1)\Epsilon_1(Y_i,D_i,X_i,\theta) \leadsto \mathds{G}_2$ for some zero mean Gaussian processes $\mathds{G}_1$, $\mathds{G}_2:\Omega^x \times \Omega^\xi \mapsto \ell^\infty(\Theta)$. By Prohorov's Theorem, the weak convergence implies asymptotic tightness and therefore implies that $\sum_{i=1}^{n}Z_i(\theta_1)=O^{x \times \xi}_p(1)$ uniformly on $\Theta$ and $\sum_{i=1}^{n}Z_i(\theta_1)\Epsilon_1(Y_i,D_i,X_i,\theta)=O^{x \times \xi}_p(1)$ uniformly on $\Theta$. Thus $(2)=\frac{o^{x \times \xi}_p(1)}{f^2_X(0)+o^{x \times \xi}_p(1)}$ uniformly on $\Theta$.
We then control $(1)$. Assumption \ref{a:first_stage} implies
\begin{align*}
&\sum_{i=1}^{n} Z_i(\theta_1)[\hat{\Epsilon}_{1}(Y_i,D_i,X_i,\theta)-\Epsilon_{1}(Y_i,D_i,X_i,\theta)]f_X(0)\\
=&\sum_{i=1}^{n} Z_i(\theta_1)[o^{x \times \xi}_p(1)]f_X(0)\\
=&f_X(0)[O^{x \times \xi}_p(1)]\sum_{i=1}^{n} Z_i(\theta_1)
\end{align*}
uniformly on $\Theta$.
Therefore, we have
\begin{align*}
\sup_{(\theta,k)\in \mathds{T}}|\nu^+_{\xi,n}(\theta,k)- \hat{\nu}^+_{\xi,n} (\theta,k)| \underset{x \times \xi}{\overset{p}{\to}} 0,
\end{align*}
And thus we can apply Lemma \ref{lemma:cond_weak_conv_and in prob} to conclude $ \hat{\nu}^+_{\xi,n}\underset{\xi}{\overset{p}{\leadsto}}  \mathds{G}_{H^+}$.
With similar arguments, we can derive that $ \hat{\nu}^-_{\xi,n}\underset{\xi}{\overset{p}{\leadsto}}  \mathds{G}_{H^-}$.

The continuous mapping theorem for bootstrap (Kosorok, 2008; Proposition 10.7) and the continuity of the Hadamard derivatives imply
\begin{align*}
\begin{bmatrix}
  \widehat{\mathds{X}}'_n(\cdot,1) \\
  \widehat{\mathds{X}}'_n(\cdot,2)
\end{bmatrix}=\begin{bmatrix}
   \phi'_{\mu^{(v)}_{1}(0^+,\cdot)}\Big( \hat{\nu}^+_{\xi,n}( \cdot,1)/\sqrt{c^{1+2v}_1(\cdot}) \Big)( \cdot) - \phi'_{\mu^{(v)}_{1}(0^-,\cdot)}\Big( \hat{\nu}^+_{\xi,n}( \cdot,1)/\sqrt{c^{1+2v}_1(\cdot}) \Big)( \cdot)\\
  \psi'_{\mu^{(v)}_{2}(0^+,\cdot)}\Big( \hat{\nu}^+_{\xi,n}( \cdot,2)/\sqrt{c^{1+2v}_2(\cdot}) \Big)( \cdot) - \psi'_{\mu^{(v)}_{2}(0^-,\cdot)}\Big( \hat{\nu}^+_{\xi,n}( \cdot,2)/\sqrt{c^{1+2v}_2(\cdot}) \Big)( \cdot)
\end{bmatrix}\underset{\xi}{\overset{p}{\leadsto}} \begin{bmatrix}
  \mathds{G}'(\cdot,1) \\
  \mathds{G}'(\cdot,2)
\end{bmatrix}
\end{align*}
Recursively applying Functional Delta for Bootstrap (Theorem 2.9 of Kosorok (2008)) then gives
\begin{align*}
&\Upsilon'_W\Big( \frac{[\psi(\mu^{(v)}_{2}(0^+,\cdot))-\psi(\mu^{(v)}_{2}(0^-,\cdot))]\widehat{\mathds{X}}'_n(\cdot,1)-[\phi(\mu^{(v)}_{1}(0^+,\cdot))-\phi(\mu^{(v)}_{1}(0^-,\cdot))]
\widehat{\mathds{X}}'_n(\cdot,2)}{[\psi(\mu^{(v)}_{2}(0^+,\cdot))-\psi(\mu^{(v)}_{2}(0^+,\cdot)]^2} \Big)\\
\underset{\xi}{\overset{p}{\leadsto}}
 &\Upsilon'_W\Big( \frac{[\psi(\mu^{(v)}_{2}(0^+,\cdot))-\psi(\mu^{(v)}_{2}(0^+,\cdot))]\mathds{G}'(\cdot,1)-[\phi(\mu^{(v)}_{1}(0^+,\cdot))-\phi(\mu^{(v)}_{1}(0^-,\cdot))]
\mathds{G}'(\cdot,2)}{[\psi(\mu^{(v)}_{2}(0^+,\cdot))-\psi(\mu^{(v)}_{2}(0^-,\cdot)]^2} \Big)
\end{align*}
This completes the proof.
\end{proof}

\subsection{Uniform Consistency}

Theorem \ref{theorem:weak_conv} implies the following useful asymptotic identities.
They state the uniform consistency of the numerator and the denominator of Wald ratios.
Because these implications themselves prove useful when we apply the main theorem to the specific example, we state them as corollaries below for convenience of reference.

\begin{corollary}\label{corollary:mu_unif_consist}
Under Assumptions \ref{a:BR} and \ref{a:cond_weak_conv}, $\hat{\mu}^{(v)}_{l,p}(0^\pm,\cdot)-\mu^{(v)}_{k}(0^\pm,\cdot)\underset{x}{\overset{p}{\to}}0$ uniformly.
\end{corollary}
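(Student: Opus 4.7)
The plan is to read off the conclusion directly from the weak convergence already established inside the proof of Theorem \ref{theorem:weak_conv} (or, equivalently, from the uniform Bahadur representation of Lemma \ref{lemma:BR}). Indeed, en route to the limit for $\sqrt{nh_n^{1+2v}}[\hat\tau-\tau]$, the proof of Theorem \ref{theorem:weak_conv} already shows, under Assumption \ref{a:BR}, the joint weak convergence
$$
\sqrt{nh_n^{1+2v}}\bigl(\hat\mu_{k,p}^{(v)}(0^\pm,\cdot)-\mu_k^{(v)}(0^\pm,\cdot)\bigr)\leadsto \mathds{G}_{H^\pm}(\cdot,k)\big/\sqrt{c_k^{1+2v}(\cdot)}
$$
in $\ell^\infty(\Theta_k)$ for each $k\in\{1,2\}$, where the limit is a tight Gaussian element. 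So the whole work is essentially done; what remains is just to translate this $1/\sqrt{nh_n^{1+2v}}$-rate weak convergence into uniform consistency.

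I would first invoke Prohorov's theorem: weak convergence to a tight limit in $\ell^\infty(\Theta_k)$ implies asymptotic tightness of the sequence, and in particular the uniform stochastic boundedness
$$
\sup_{\theta_k\in\Theta_k}\Bigl|\sqrt{nh_n^{1+2v}}\bigl(\hat\mu_{k,p}^{(v)}(0^\pm,\theta_k)-\mu_k^{(v)}(0^\pm,\theta_k)\bigr)\Bigr|=O_p^x(1).
$$
I would then divide through by $\sqrt{nh_n^{1+2v}}$ and use Assumption \ref{a:cond_weak_conv}(iii), which guarantees $nh_n^{1+2v}\to\infty$, to conclude
$$
\sup_{\theta_k\in\Theta_k}\bigl|\hat\mu_{k,p}^{(v)}(0^\pm,\theta_k)-\mu_k^{(v)}(0^\pm,\theta_k)\bigr|=O_p^x\bigl(1/\sqrt{nh_n^{1+2v}}\bigr)=o_p^x(1),
$$
which is exactly the claimed uniform convergence in probability.

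As an alternative that avoids going through Theorem \ref{theorem:weak_conv} and only needs Assumption \ref{a:BR}, one can argue directly from Lemma \ref{lemma:BR}. That representation decomposes the estimation error (after adding back the leading bias) into a deterministic bias of order $h_n^{p+1-v}\to 0$ (since $h_n\to 0$ and $\mu_k^{(p+1)}$ is bounded under Assumption \ref{a:BR}(ii)(b)) plus $\nu_n^\pm(\theta_k,k)/\sqrt{nh_{k,n}^{1+2v}(\theta_k)}$ and a uniform $o_p^x(1)$ term; the stochastic term $\nu_n^\pm$ is shown to be uniformly $O_p^x(1)$ in Step 3 of the proof of Lemma \ref{lemma:BR} by the VC-type/Pollard FCLT argument, and dividing by $\sqrt{nh_{k,n}^{1+2v}(\theta_k)}\to\infty$ again yields uniform consistency. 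The corollary is therefore an immediate by-product of earlier results rather than a statement requiring new technical work, and there is no substantive obstacle in either route; the only mild point to verify is that Assumption \ref{a:cond_weak_conv}(iii) strengthens Assumption \ref{a:BR}(iii) exactly enough (i.e., $nh_n^{1+2v}\to\infty$) to turn $O_p^x(1/\sqrt{nh_n^{1+2v}})$ into $o_p^x(1)$ when $v\ge 1$.
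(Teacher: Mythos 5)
Your proposal is correct and matches the paper's own (implicit) argument: the paper simply asserts that Theorem \ref{theorem:weak_conv} implies the corollary, which is precisely your route of combining the weak convergence of $\sqrt{nh_n^{1+2v}}\bigl(\hat\mu_{k,p}^{(v)}(0^\pm,\cdot)-\mu_k^{(v)}(0^\pm,\cdot)\bigr)$ with asymptotic tightness and the rate condition $nh_n^{1+2v}\to\infty$ from Assumption \ref{a:cond_weak_conv}(iii). Your alternative derivation directly from Lemma \ref{lemma:BR} is also sound and requires no new technical work beyond what is already established.
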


\begin{corollary}\label{corollary:operator_unif_consist}
Under Assumption \ref{a:BR} and \ref{a:cond_weak_conv}, $\phi(\hat{\mu}^{(v)}_{l,p}(0^\pm,\cdot))-\phi(\mu^{(v)}_{k}(0^\pm,\cdot))\underset{x}{\overset{p}{\to}}0$ uniformly.
\end{corollary}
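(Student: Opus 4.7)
The idea is to read this corollary off directly from the weak convergence chain already produced along the way to Theorem~\ref{theorem:weak_conv}, so essentially no new work is needed. In the proof of that theorem, the functional delta method, applied after the uniform Bahadur representation in Lemma~\ref{lemma:BR} together with Assumption~\ref{a:cond_weak_conv}(i), yields the intermediate weak convergence
\begin{align*}
\sqrt{nh^{1+2v}_n}\bigl[\phi(\hat{\mu}^{(v)}_{k,p}(0^\pm,\cdot))-\phi(\mu^{(v)}_{k}(0^\pm,\cdot))\bigr]
\leadsto \phi'_{\mu^{(v)}_{k}(0^\pm,\cdot)}\bigl(\mathds{G}_{H^\pm}(\cdot,k)/\sqrt{c^{1+2v}_k(\cdot)}\bigr)
\end{align*}
in the appropriate $\ell^\infty$ space. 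The limiting process is tight because it is the image of the tight Gaussian process $\mathds{G}_{H^\pm}(\cdot,k)$ under the bounded linear Hadamard derivative $\phi'_{\mu^{(v)}_k(0^\pm,\cdot)}$.

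\textbf{Carrying it out.} First, Prohorov's theorem turns this tight weak limit into the statement that $\sqrt{nh^{1+2v}_n}[\phi(\hat{\mu}^{(v)}_{k,p}(0^\pm,\cdot))-\phi(\mu^{(v)}_{k}(0^\pm,\cdot))]$ is uniformly $O^x_p(1)$ on the relevant index set. Second, Assumption~\ref{a:cond_weak_conv}(iii) gives $nh^{1+2v}_n\to\infty$, so dividing through by the diverging normalizer immediately produces
\begin{align*}
\phi(\hat{\mu}^{(v)}_{k,p}(0^\pm,\cdot))-\phi(\mu^{(v)}_{k}(0^\pm,\cdot))=o^x_p(1)
\end{align*}
uniformly, which is the claim. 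The plus and minus sides are handled symmetrically since both enter the intermediate weak convergence above through parallel arguments already executed in the proof of Theorem~\ref{theorem:weak_conv}.

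\textbf{Expected obstacle.} There is essentially no new obstacle: the substantive work (the uniform Bahadur representation and the verification of the functional delta method hypotheses) has already been performed inside the proofs of Lemma~\ref{lemma:BR} and Theorem~\ref{theorem:weak_conv}. The only point deserving a line of care is the explicit remark that tightness of a weak limit in $\ell^\infty$ transfers to uniform stochastic boundedness of the converging sequence, which is standard. A more pedestrian alternative would combine Corollary~\ref{corollary:mu_unif_consist} with the continuity of $\phi$ at $\mu^{(v)}_k(0^\pm,\cdot)$ implied by Hadamard differentiability along sequences in the tangent subspace containing $\hat{\mu}^{(v)}_{k,p}(0^\pm,\cdot)-\mu^{(v)}_k(0^\pm,\cdot)$, but the route above is cleaner given that the required weak convergence is already in hand.
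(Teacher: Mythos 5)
Your proposal is correct and matches the paper's intent: the paper gives no explicit proof, stating only that these corollaries are implications of Theorem \ref{theorem:weak_conv}, and the implicit argument is exactly the one you give — the weak convergence of $\sqrt{nh_n^{1+2v}}[\phi(\hat{\mu}^{(v)}_{k,p}(0^\pm,\cdot))-\phi(\mu^{(v)}_{k}(0^\pm,\cdot))]$ to a tight limit yields uniform stochastic boundedness, and dividing by the diverging normalizer under Assumption \ref{a:cond_weak_conv}(iii) gives the claim.
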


Assumption \ref{a:cond_weak_conv}(iv) further implies that the mode $\underset{x}{\overset{p}{\to}}$ of convergence in the above corollaries can be replaced by the mode $\underset{x \times \xi}{\overset{p}{\to}}$ of convergence.

\subsection{First Stage Estimators}\label{sec:first_stage}
To estimate MP, we replace $\mu_{k}(x,\theta)\mathds{1}\{|x/h_{k,n}(\theta_k)|\le1\}$ by its estimate $\tilde{\mu}_{k,p}(x,\theta)\mathds{1}\{|x/h_{k,n}(\theta_k)|\le 1\}$, which is uniformly consistent across $(x,\theta)$.
Lemma \ref{lemma:epsilon_unif_const} below proposes such a uniformly consistent estimator without requiring to solve an additional optimization problem; by a mean-value expansion and uniform boundedness of $\mu^{(p+1)}_k$, we can reuse the first stage local polynomial estimates of $\hat{\mu}^{(v)}_{l,p}(0^\pm,\theta)$ for all $v\le p$.
This auxiliary result will prove useful when we apply Theorems \ref{theorem:weak_conv} and \ref{theorem:cond_weak_conv} to specific examples. In fact, we will prove a more general result that allows us to use any first $r$ terms of our $p$-th order polynomial estimator for a $t$ such that $0 \le t \le p$.

\begin{lemma}\label{lemma:epsilon_unif_const}
Fix an integer $t$ such that $0 \le t \le p$. Suppose that Assumptions \ref{a:BR} and \ref{a:cond_weak_conv} are satisfied.
Let $\Epsilon_{1}(y,d,x,\theta)=g_1(y,\theta_1)-\mu_{1}(x,\theta_1)$, $\Epsilon_{2}(y,d,x,\theta)=g_2(d,\theta_2)-\mu_{2}(d,\theta_2)$, $\delta^+_x=\mathds{1}\{x\ge0\}$ and $\delta^-_x=\mathds{1}\{x\le0\}$.
Define
\begin{align*}
\tilde{\mu}_{1,t}(x,\theta_1)&=r_t(x/h_{1,n}(\theta_1))'\hat{\alpha}_{1+,t}(\theta_1)\delta^+_x
+r_t(x/h_{1,n}(\theta_1))'\hat{\alpha}_{1-,t}(\theta_1)\delta^-_x
\quad\text{and}\\
\tilde{\mu}_{2,t}(x,\theta_2)&=r_t(x/h_{2,n}(\theta_2))'\hat{\alpha}_{2+,t}(\theta_2)\delta^+_x
+r_t(x/h_{2,n}(\theta_2))'\hat{\alpha}_{2-,t}(\theta_2)\delta^-_x
\end{align*}
Then we have
\begin{align*}
\hat{\Epsilon}_{1}(y,d,x,\theta)&=[g_1(y,\theta_1)-\tilde{\mu}_{1,t}(x,\theta_1)]\mathds{1}\{|x /h_{1,n}(\theta_1)|\le 1\}
\quad\text{and}\\
\hat{\Epsilon}_{2}(y,d,x,\theta)&=[g_2(d,\theta_2)-\tilde{\mu}_{2,t}(x,\theta_2)]\mathds{1}\{|x/h_{2,n}(\theta_2)|\le 1\}
\end{align*}
are uniformly consistent for $\Epsilon_{1}(y,d,x,\theta)\mathds{1}\{|x/h_{1,n}(\theta_1)|\le 1\}$ and $\Epsilon_{2}(y,d,x,\theta)\mathds{1}\{|x/h_{2,n}(\theta_2)|\le 1\}$ on $[\underline{x},\overline{x}]\times \mathscr{Y} \times \mathscr{D} $, respectively.
\end{lemma}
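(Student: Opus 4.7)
The plan is to reduce the claim to uniform consistency of the order-$t$ local polynomial fitted value $\tilde\mu_{k,t}(x,\theta_k)$ for the regression function $\mu_k(x,\theta_k)$ on the kernel-support window $\{(x,\theta_k):|x/h_{k,n}(\theta_k)|\le 1\}$. Within that indicator, the common term $g_k$ cancels identically, so
\begin{align*}
\bigl|\hat\Epsilon_k(y,d,x,\theta)-\Epsilon_k(y,d,x,\theta)\mathds{1}\{|x/h_{k,n}(\theta_k)|\le 1\}\bigr|
= \bigl|\mu_k(x,\theta_k)-\tilde\mu_{k,t}(x,\theta_k)\bigr|\,\mathds{1}\{|x/h_{k,n}(\theta_k)|\le 1\}.
\end{align*}
The right-hand side does not depend on $(y,d)$, so the target supremum over $[\underline{x},\overline{x}]\times\mathscr{Y}\times\mathscr{D}\times\Theta_k$ collapses to a supremum over $(x,\theta_k)$ with $|x|\le \overline{c}h_n$, using $h_{k,n}(\theta_k)=c_k(\theta_k)h_n$ and $c_k\le\overline{c}$ from Assumption \ref{a:BR}(iii).

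First I would split into the right side $x\ge 0$ and the symmetric left side $x\le 0$. Assumption \ref{a:BR}(ii)(b) provides Lipschitz continuity of $\mu_k^{(j)}$ on $(0,\overline{x}]\times\Theta_k$ for all $j=0,\ldots,p+1$, so Taylor's theorem with Lagrange remainder yields
\begin{align*}
\mu_k(x,\theta_k)=\sum_{j=0}^{t}\mu_k^{(j)}(0^+,\theta_k)\frac{x^j}{j!}+R_t(x,\theta_k),
\qquad \sup_{\theta_k}|R_t(x,\theta_k)|\le C|x|^{t+1}.
\end{align*}
Since $\tilde\mu_{k,t}(x,\theta_k)=\sum_{j=0}^{t}\hat\mu_{k,t}^{(j)}(0^+,\theta_k)x^j/j!$ on $x\ge 0$ by the definition of $\hat\alpha_{k+,t}$, the triangle inequality gives
\begin{align*}
|\mu_k(x,\theta_k)-\tilde\mu_{k,t}(x,\theta_k)|
\le \sum_{j=0}^{t}\frac{|x|^j}{j!}\bigl|\hat\mu_{k,t}^{(j)}(0^+,\theta_k)-\mu_k^{(j)}(0^+,\theta_k)\bigr|+C|x|^{t+1},
\end{align*}
and on $|x|\le \overline{c}h_n$ the remainder is $O(h_n^{t+1})=o(1)$.

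The key step is to apply Lemma \ref{lemma:BR} with fitting order $t$, which, after rearranging the centered Bahadur expansion to keep both the explicit bias and the Bahadur remainder, yields the uniform rate
\begin{align*}
\sup_{\theta_k}\bigl|\hat\mu_{k,t}^{(j)}(0^+,\theta_k)-\mu_k^{(j)}(0^+,\theta_k)\bigr|
= O^x_p\bigl(h_n^{t+1-j}\bigr)+O^x_p\bigl(1/\sqrt{nh_n^{1+2j}}\bigr).
\end{align*}
The crucial algebraic observation that makes the bound uniform across derivative orders is the cancellation $|x|^j\cdot h_n^{t+1-j}=O(h_n^{t+1})$ and $|x|^j/\sqrt{nh_n^{1+2j}}=O(1/\sqrt{nh_n})$ on $|x|\le\overline{c}h_n$, both of which vanish under Assumption \ref{a:BR}(iii), since $h_n\to 0$ together with $nh_n^2\to\infty$ gives $nh_n\to\infty$; summing over the finite set $j=0,\ldots,t$ preserves the rate. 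The left side $x\le 0$ is handled analogously using the Lipschitz hypothesis on $[\underline{x},0)\times\Theta_k$ and Lemma \ref{lemma:BR} at $0^-$, and the case $k=2$ is identical with $g_2(D_i,\theta_2)$ in place of $g_1(Y_i,\theta_1)$. The main obstacle I anticipate is preserving $\theta_k$-uniformity when invoking Lemma \ref{lemma:BR}; this is not a real difficulty because that lemma is already stated uniformly in $\theta_k$ and the Lipschitz constants in Assumption \ref{a:BR}(ii)(b) are $\theta_k$-free, so all bounds above are automatically uniform.
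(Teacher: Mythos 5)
Your proof is correct and follows essentially the same route as the paper's: cancel the common $g_k$ term inside the kernel-window indicator, Taylor-expand $\mu_k(\cdot,\theta_k)$ around $0^\pm$ to order $t$ with a remainder that is $O(h_n^{t+1})$ on $|x|\le \overline{c}h_n$, and match the resulting polynomial coefficient by coefficient against the order-$t$ local polynomial estimates. Your explicit extraction of the rates $O^x_p(h_n^{t+1-j})+O^x_p(1/\sqrt{nh_n^{1+2j}})$ from Lemma \ref{lemma:BR} and the cancellation against $|x|^j\le(\overline{c}h_n)^j$ is, if anything, slightly more careful than the paper's direct appeal to Corollary \ref{corollary:mu_unif_consist} for every derivative order $j\le t$, since it shows the products vanish even where the raw higher-order derivative estimates need not be consistent under Assumption \ref{a:BR}(iii).
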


\begin{proof}
We will show the $1+$ part only, since the other parts can be shown similarly.
Recall that $\Epsilon_{1}(y,d,x,\theta)=g_1(y,\theta_1)-\mu_{1}(x,\theta_1)$.
If $x>0$,
\begin{align*}
& \mu_{1}(x,\theta_1)\mathds{1}\{|x/h_{1,n}(\theta_1)|\le 1\}=
\\
& \Big(\mu_{1} (0^+,\theta_1)+\mu^{(1)}_{1} (0^+,\theta_1)x+...+\mu^{(t)}_{1} (0^+,\theta_1)\frac{x^t}{t!}+\mu^{(t+1)}_{1}(x^*_{ni},\theta_1)\frac{x^{(t+1)}}{(t+1)!}\Big)\mathds{1}\{|x/c_1(\theta)h_n|\le 1\}.
\end{align*}
By Corollary \ref{corollary:mu_unif_consist},  $\hat{\mu}^{(v)}_{1,t}(0^\pm,\theta_1)$ is uniformly consistent for $\mu^{(v)}_{1} (0^\pm,\theta_1)$, $v=0,1,...,t$.
Thus,
\begin{align*}
&[\hat{\Epsilon}_{1}(y,d,x,\theta)-\Epsilon_{1}(y,d,x,\theta)]\mathds{1}\{|x/h_{1,n}(\theta)|\le 1\}\\
=&\Big(\hat{\mu}_{1,t} (0^+,\theta_1)+\hat{\mu}^{(1)}_{1,t} (0^+,\theta_1)x+...+\hat{\mu}^{(t)}_{1,t} (0^+,\theta_1)\frac{x^t}{t!}\Big)\mathds{1}\{|x/h_{1,n}(\theta)|\le 1\}-\\
&\Big(\mu_{1} (0^+,\theta_1)+\mu^{(1)}_{1} (0^+,\theta_1)x+...+\mu^{(t)}_{1} (0^+,\theta_1)\frac{x^t}{t!}+\mu^{(t+1)}_{1}(x^*,\theta_1)\frac{x^{(t+1)}}{(t+1)!}\Big)\mathds{1}\{|x/h_{1,n}(\theta_1)|\le 1\}\\
=&o^x_p(1)-\mu^{(t+1)}_{1}(x^*,\theta_1)\frac{x^{(t+1)}}{(t+1)!}\mathds{1}\{|x|\le h_{1,n}(\theta)\}=o^x_p(1)+O(h_n),
\end{align*}
where the last equality is by Assumption \ref{a:BR}(iii) and by the uniform boundedness of $\mu^{(t+1)}$ under Assumption \ref{a:BR} (ii)(a).
\end{proof}

Finally, we also provide consistent estimators, $\hat f_X(0)$ and $\hat f_{Y^d|C}(\cdot)$.
For $f_X(0)$, we propose to use the kernel density estimator
$$
\hat f_X(0) = \frac{1}{n b_n} \sum_{i=1}^n K\left(\frac{X_i}{b_n}\right)
$$
with $b_n\to 0$ and $nb_n \to \infty$.
For $f_{Y^d|C}(\cdot)$, we propose 
\begin{align*}
\hat f_{Y^1|C}( y)
&= \frac{\hat{f}_{Y^*|XD^*}(y|x^+_0,1)\hat{P}(D^*_i=1|X_i=x^+_0)-
\hat{f}_{Y^*|XD^*}(y|x^-_0,1)\hat{P}(D^*_i=1|X_i=x^-_0)}
{\hat \mu_{2,2}(0^+,1)-\hat \mu_{2,2}(0^-,1)}
\qquad\text{and}
\\
\hat f_{Y^0|C}(y)
&=\frac{\hat{f}_{Y^*|XD^*}(y|x^+_0,0)\hat{P}(D^*_i=0|X_i=x^+_0)-
\hat{f}_{Y^*|XD^*}(y|x^-_0,0)\hat{P}(D^*_i=0|X_i=x^-_0)}
{-(\hat \mu_{2,2}(0^+,0)-\hat \mu_{2,2}(0^-,0))},
\end{align*}
where 
\begin{align*}
\hat{f}_{Y^*|XD^*}(y|x^\pm_0,1)&=\frac{\frac{1}{na_n^2}\sum_{i=1}^{n} K(\frac{X_i}{a_n})K(\frac{Y^*_i-y}{a_n}) D^*_i \delta^\pm_i }{\frac{1}{na_n}\sum_{i=1}^{n}K(\frac{X_i}{a_n})  D^*_i \delta^\pm_i},
\\
\hat{f}_{Y^*|XD^*}(y|x^\pm_0,0)&=\frac{\frac{1}{na_n^2}\sum_{i=1}^{n} K(\frac{X_i}{a_n})K(\frac{Y^*_i-y}{a_n}) (1-D^*_i) \delta^\pm_i }{\frac{1}{na_n}\sum_{i=1}^{n}K(\frac{X_i}{a_n}) (1- D^*_i) \delta^\pm_i},
\\
\hat{P}(D^*_i=1|X_i=x^\pm_0)&=\frac{ \sum_{i=1}^{n}K(\frac{X_i}{c_n})D^*_i \delta^\pm_i}{\sum_{i=1}^{n}K(\frac{X_i}{c_n})\delta^\pm_i}, 
\qquad\text{and}
\\
\hat{P}(D^*_i=0|X_i=x^\pm_0)&=\frac{ \sum_{i=1}^{n}K(\frac{X_i}{c_n})(1-D^*_i) \delta^\pm_i}{\sum_{i=1}^{n}K(\frac{X_i}{c_n})\delta^\pm_i}.
\end{align*}
\begin{lemma}\label{lemma:density_estimators}
Suppose Assumptions S, K, M and FQRD (i)-(iv) hold. 
Assume that the conditional CDF $f_{Y^*|XD^*}(\cdot|0^+,d)$ is continuously differentiable and its derivative is uniformly bounded on $ \mathcal Y_1$ for $d\in\{0,1\}$. 
If $a_n$ satisfies $a_n\to 0$, $na_n^2/|\log a_n| \to\infty$, $a_n\le c a_{2n}$ for some $c>0$ and $|\log a_n|/\log\log n\to \infty$ and $c_n$ satisfies $c_n\to 0$ and $nc_n \to \infty$, then $\sup_{y\in \mathcal{Y}_1}|\hat{f}_{Y^1|C}(y)-f_{Y^1|C}(y)|=o^x_p(1)$ and $\sup_{y\in \mathcal{Y}_1}|\hat{f}_{Y^0|C}(y)-f_{Y^0|C}(y)|=o^x_p(1)$.
\begin{proof}
We will prove this lemma for $f_{Y^1|C}$, and the corresponding results for $f_{Y^0|C}(y)$ follow similarly. 
From (\ref{eq:ffm_cdf}) and Assumption FQRD (i), (vi)
\begin{align*}
f_{Y^1|C}(y)&=\frac{\partial }{\partial y} F_{Y^1|C}(y)\\
&=\frac{\frac{\partial}{\partial y}\lim_{x\downarrow 0}E[ \mathbbm{1}\{Y_i^\ast \leq y\} \cdot \mathbbm{1}\{D^\ast_i=d\}  |  X_i=x] - \frac{\partial}{\partial y}\lim_{x\uparrow 0}E[ \mathbbm{1}\{Y_i^\ast \leq y\} \cdot \mathbbm{1}\{D^\ast_i=d\}  |  X_i=x]}{\lim_{x\downarrow 0}E[ \mathbbm{1}\{D_i^\ast=d\} |  X_i=x] - \lim_{x\uparrow 0}E[ \mathbbm{1}\{D^\ast_i=d\}  |  X_i=x]}.
\end{align*}
Theorem \ref{theorem:weak_conv} then implies that the denominator terms can be consistently estimated by $\hat \mu_{2,2}(0^+,1)-\hat \mu_{2,2}(0^-,1)$ at the rate of $O^x_p(1/\sqrt{nh_n})$ and the limit is bounded away from zero by Assumption FQRD (iii).

As for the numerator terms, the boundedness of the integrand ensures that we can interchange the expectation and limit.
Thus, we can write
\begin{align*}
\frac{\partial}{\partial y}E[ \mathbbm{1}\{Y_i^\ast \leq y\} \cdot \mathbbm{1}\{D^\ast_i=d\}  |  X_i=0^+]
=&\frac{\partial}{\partial y}E[ \mathbbm{1}\{Y_i^\ast \leq y\}   |  X_i=0^+,D^*_i=1]\mathds P^x(D^*_i=1|X_i=0^+) + 0\\
=&\frac{\partial}{\partial y}F_{Y^*|XD^*}(y|0^+,1)\mathds P^x(D^*_i=1|X_i=0^+)\\
=&f_{Y^*|XD^*}(y|0^+,1)\mathds P^x(D^*_i=1|X_i=0^+).
\end{align*}
The standard point-wise convergence result under the stated assumptions for Nadaraya-Watson estimator and Assumption FQRD (i) imply the consistency of $\hat P(D^*_i=1|X_i=0^+)$ for $\mathds P^x(D^*_i=1|X_i=0^+)$. The uniform consistency of $\hat f_{Y^*|XD^*}(\cdot|0^+,1)$ follows from Assumption FQRD (i), (ii), continuous differentiability, uniform boundedness of derivatives of $\hat f_{Y^*|XD^*}(\cdot|0^+,1)$ on $\mathcal Y_1$ and Theorem 2.3 of Gin\'e and Guillou (2002), which is applicable under the above bandwidth assumptions for $a_n$ and Assumption K. 
\end{proof}
\end{lemma}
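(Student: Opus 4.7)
The plan is to reduce the uniform consistency of $\hat f_{Y^d\mid C}$ to (i) uniform consistency of the one-sided conditional density estimator $\hat f_{Y^*\mid XD^*}(\cdot\mid 0^\pm, d)$ over $\mathcal Y_1$, (ii) point-wise consistency of the boundary choice probability estimator $\hat P(D^*=d\mid X=0^\pm)$, and (iii) the convergence of $\hat\mu_{2,2}(0^\pm,d)$ already established by Theorem \ref{theorem:weak_conv}. I would first differentiate the identification formula (\ref{eq:ffm_cdf}) with respect to $y$ under Assumption FQRD (iv), interchange differentiation and the conditional expectation (which is justified by the uniform boundedness of the integrand and of $\partial_y f_{Y^*\mid XD^*}$), and rewrite
\[
\tfrac{\partial}{\partial y}E[\mathds 1\{Y^*\le y\}\mathds 1\{D^*=d\}\mid X=0^\pm]
= f_{Y^*\mid XD^*}(y\mid 0^\pm,d)\,P(D^*=d\mid X=0^\pm),
\]
so the target density has exactly the structure being estimated by the plug-in $\hat f_{Y^d\mid C}$.

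For the denominator, Theorem \ref{theorem:weak_conv} gives $\hat\mu_{2,2}(0^\pm,1)-\mu_2(0^\pm,1) = O_p^x(1/\sqrt{nh_n})$, and Assumption FQRD (iii) bounds the population denominator away from zero, so a continuous-mapping/Slutsky argument handles that piece uniformly. For $\hat P(D^*=d\mid X=0^\pm)$, it is the standard one-sided Nadaraya--Watson estimator at the boundary, and under the bandwidth condition $c_n\to 0$, $nc_n\to\infty$ together with Assumption FQRD (i) and the density conditions in Assumption S, the usual bias-variance decomposition gives point-wise consistency.

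The main technical obstacle is step (i): uniform-in-$y$ consistency of the one-sided Nadaraya--Watson conditional density estimator $\hat f_{Y^*\mid XD^*}(y\mid 0^\pm,d)$ over $y\in\mathcal Y_1$. My plan is to apply Theorem 2.3 of Gin\'e and Guillou (2002), which yields uniform almost-sure rates for kernel density estimators under VC-type conditions on the kernel class, a uniform entropy integral bound, and the bandwidth conditions $a_n\to 0$, $na_n^2/|\log a_n|\to\infty$, $a_n\le c a_{2n}$, $|\log a_n|/\log\log n\to\infty$ stated in the hypothesis. The class $\{(x,y')\mapsto K(x/a_n)K((y'-y)/a_n)\mathds 1\{D^*=d\}\delta^\pm : y\in\mathcal Y_1\}$ is of VC type by Lemma \ref{lemma:VC type} and Lemma \ref{lemma:VC type_stability}, because it is a product of dilations and translations of the bounded-variation kernel $K$ (Assumption K) with a fixed indicator factor. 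The uniform continuity-of-bias term is controlled by the continuous differentiability and uniform boundedness of $\partial_y f_{Y^*\mid XD^*}(\cdot\mid 0^\pm,d)$ assumed in the lemma, giving a uniform bias of order $O(a_n)=o(1)$.

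Once (i)--(iii) are in hand, I combine them. Writing $\hat N^\pm(y,d)=\hat f_{Y^*\mid XD^*}(y\mid 0^\pm,d)\hat P(D^*=d\mid X=0^\pm)$ and the analogous population object $N^\pm(y,d)$, the triangle inequality and the uniform boundedness of $f_{Y^*\mid XD^*}$ on $\mathcal Y_1$ give
\[
\sup_{y\in\mathcal Y_1}|\hat N^+(y,d)-\hat N^-(y,d)-(N^+(y,d)-N^-(y,d))|=o_p^x(1),
\]
and the denominator convergence plus boundedness away from zero lets me apply Slutsky's lemma in $\ell^\infty(\mathcal Y_1)$ to conclude $\sup_{y\in\mathcal Y_1}|\hat f_{Y^d\mid C}(y)-f_{Y^d\mid C}(y)|=o_p^x(1)$ for each $d\in\{0,1\}$. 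The argument for $d=0$ mirrors that for $d=1$ after flipping the sign convention in the denominator, as in the definition of $\hat f_{Y^0\mid C}$.
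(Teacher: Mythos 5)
Your proposal follows essentially the same route as the paper's proof: differentiate the identification formula in $y$, rewrite the numerator as $f_{Y^*\mid XD^*}(y\mid 0^\pm,d)\,P(D^*=d\mid X=0^\pm)$, handle the denominator via Theorem \ref{theorem:weak_conv} and Assumption FQRD (iii), and obtain uniform consistency of the conditional density estimator from Theorem 2.3 of Gin\'e and Guillou (2002) under the stated bandwidth conditions. The extra detail you supply on the VC-type class construction and the $O(a_n)$ bias bound is consistent with (and slightly more explicit than) what the paper writes, so the argument is correct and not materially different.
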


\subsection{Proof of Corollary \ref{corollary:cluster_robust}}\label{sec:corollary:cluster_robust}
\begin{proof}
We first show that Lemma \ref{lemma:BR} holds under cluster sampling. 
Note that Steps 1 and 2 of the Proof for Lemma \ref{lemma:BR} follow through. 
To see this, note that for the inverse factor in Step 1, the deterministic part is now  
\begin{align*}
&E[\frac{1}{Gh_{G}}\sumg\sumi\delta^+_i K(\frac{X_i}{h_{G}}) r_p(\frac{X_i}{h_{G}}) r'_p(\frac{X_i}{h_{G}})]\\
=&\frac{1}{G}\sumg\sumi E[\frac{1}{h_{G}}\delta^+_i K(\frac{X_i}{h_{G}}) r_p(\frac{X_i}{h_{G}}) r'_p(\frac{X_i}{h_{G}})].
\end{align*}
Assumption \ref{a:cluster_robust}(i), (iii), and (iv) imply that the above becomes
\begin{align*}
&\frac{1}{G}\sumg\sumi f_{X_i}(0)\Big(\int_{\mathbbm R^+} K(u)r_p(u)r'_p(u)+ O(h_{G})\Big)
=\bar f_X(0)\Gamma^+_p + O(h_{G}).
\end{align*}
For the stochastic part, under Assumption \ref{a:cluster_robust}(ii)(a),
\begin{align*}
\frac{1}{G}\sumg\Big(\frac{1}{h_{G}}\sumi\delta^+_i K(\frac{X_i}{h_{G}}) r_p(\frac{X_i}{h_{G}}) r'_p(\frac{X_i}{h_{G}})\Big)-E\Big[\frac{1}{G}\sumg\Big(\frac{1}{h_{G}}\sumi\delta^+_i K(\frac{X_i}{h_{G}}) r_p(\frac{X_i}{h_{G}}) r'_p(\frac{X_i}{h_{G}})\Big)\Big]=o^x_p(1)
\end{align*}
follows from the WLLN for the row-wise independent triangular array since nothing depends on $\theta$.

For Step 2, the deterministic part follows the same argument under Assumption  \ref{a:cluster_robust}(i), (ii)(b), (iii), and (iv)(a), and thus we have
\begin{align*}
&E\Big[\frac{1}{\sqrt{Gh_{G}}}\sumg\sumi \delta^+_i K(\frac{X_i}{h_{G}}) r_p(\frac{X_i}{h_{G}})\mu^{(p+1)}_{1}(x^*_{ni},\theta_1)h^{p+1}_{G}\frac{(\frac{X_i}{h_{G}})^{p+1}}{(p+1)!}\Big]
=O(\sqrt{\frac{h^{2p+1}_G}{G}})
\end{align*}
uniformly in $\theta_1$. For the stochastic part, by the same Lipschitz argument under Assumption \ref{a:cluster_robust} (ii), (iii) and (iv)(a), it is sufficient to consider 
\begin{align*}
\frac{1}{\sqrt{Gh_{G}}}\sumg\sumi \delta^+_i K(\frac{X_i}{h_{G}}) r_p(\frac{X_i}{h_{G}})\mu^{(p+1)}_{1}(0^+,\theta_1)h^{p+1}_{G}\frac{(\frac{X_i}{h_{G}})^{p+1}}{(p+1)!}.
\end{align*}
For each cluster $g$, we relabel the subscript $i$ as $tg$, where $t\in\{1,...,n_g\}$ indicates the index in cluster $g$.  
Write $W_g=(W_{1g},...,W_{\bar N g})$, where $W_{tg}=(Y_{tg},D_{tg},X_{tg})$ if $t\le n_g$ and $W_{tg}=0$ if $t > n_g$. Let
\begin{align*}
&\mathscr{F}^t_{s}=\{ W_g \mapsto \mathds{1}\{X_{tg}> 0\}K(aX_{tg})(aX_{tg})^{s+p+1}\mu^{p+1}_1(0^+,\theta_1) \mathds{1}\{X_{tg}\in [-1,1]\}:a\ge 1/h_0 , \theta_1 \in \Theta_1 \}\\
&\mathscr{F}_{G,s}=\{ W_g \mapsto \sum_{t=1}^{\bar N}\mathds{1}\{X_{tg}> 0\}K(X_{tg}/h_{G})(X_{tg}/h_{G})^{s+p+1}\mu^{p+1}_1(0^+,\theta_1)  :\theta_1 \in  \Theta_1 \}
\end{align*}
for each integer $s$ such that $0\le s \le p$. 
Note that $\mathds 1 \{X_{tg}>0\}=0$ for $t>n_g$. 
Under Assumption \ref{a:cluster_robust} (ii), (iii), and (iv), $\mathscr F_{G,s}\subset \mathscr F_s:=\sum_{t=1}^{\bar N} \mathscr F^t_{s}$ for all $G$. 
$\mathscr F_s$ is a VC type class with an integrable envelope $F_s=\bar N  \overline M \|K\|_\infty$ by Lemma \ref{lemma:VC type_stability}. Thus, applying Lemma 7 of Chiang (2018) in place of Theorem 5.2 of Chernozhukov, Chetverikov and Kato (2014) yields
\begin{align*}
\sup_{f\in \mathscr F_s}\Big|\frac{1}{\sqrt{Gh_G}}\sumg (f(W_g)-Ef(W_g))h_G^{p+1}\Big|=O^x_p(\frac{h_G^{2P+1}}{G}),
\end{align*}
and thus the same conclusion can be made.

For Step 3, under Assumption \ref{a:cluster_robust} (i)(a), (ii)(d) and (iv)(a), its deterministic part now becomes
\begin{align*}
&E[\frac{1}{\sqrt{Gh_{G}}}\sumg\sumi \delta^+_i K(\frac{X_i}{h_{G}}) r_p(\frac{X_i}{h_{G}})\Epsilon_{1}(Y_i,D_i,X_i,\theta)]\\
=&E[\frac{1}{\sqrt{Gh_{G}}}\sumg\sumi\delta^+_i K(\frac{X_i}{h_{G}}) r_p(\frac{X_i}{h_{G}})E[\Epsilon_{1}(Y_i,D_i,X_i,\theta)|(X_j:j\in C_g,i\in C_g)]\,]\\
=&E[\frac{1}{\sqrt{Gh_{G}}}\sumg\sumi \delta^+_i K(\frac{X_i}{h_{G}}) r_p(\frac{X_i}{h_{G}})E[\Epsilon_{1}(Y_i,D_i,X_i,\theta)|X_i]\,]=0.
\end{align*}
The stochastic part follows from Markov's inequality, Assumption \ref{a:cluster_robust} (i), (ii)(c), (iii), (iv) and (v), and the calculation:
\begin{align*}
&\sup_{\theta_1 \in \Theta_1}E[\frac{e'_s}{\sqrt{Gh_{G}}}\sumg\sumi \delta^+_i K(\frac{X_i}{h_{G}}) r_p(\frac{X_i}{h_{G}})\Epsilon_{1}(Y_i,D_i,X_i,\theta)]^2\\
=&\sup_{\theta_1 \in \Theta_1}\sumg \frac{1}{G} E\Big[\frac{e_v'}{h_G}\sumi \delta^+_i K(\frac{X_i}{h_{G}}) r_p(\frac{X_i}{h_{G}})\Epsilon_{1}(Y_i,D_i,X_i,\theta)\Big]^2\\
\le &\overline N^2 \overline M^2 \|K\|_\infty^2=O(1)
\end{align*}
for each $1\le v \le p$.
Combining all three steps, we have
\begin{align*}
&\sqrt{Gh^{1+2v}_{G}}\big( \hat{\mu}^{(v)}_{k,p}(0^\pm,\theta_k)- \mu^{(v)}_{k}(0^\pm,\theta_k)-h^{p+1-v}_{G}\frac{e'_v(\Gamma^{\pm}_p)^{-1}\Lambda^\pm_{p,p+1}}{(p+1)!}\mu^{(p+1)}_{k}(0^\pm,\theta_k) \big)\\
= \ &  v!\sumg\sumi\frac{e'_v(\Gamma^\pm_p)^{-1}\Epsilon_k(Y_i,D_i,X_i,\theta) r_p(\frac{X_i}{h_{G}}) K(\frac{X_i}{h_{G}}) \delta^\pm_i}{\sqrt{nh_{G}}\bar f_X(0)}+o^x_p(1)
\end{align*}
uniformly for all $\theta_k\in \Theta_k$ for each $k \in \{1,2\}$.

Second, we show that Theorem \ref{theorem:weak_conv} holds under cluster sampling. This is straightforward since Lemma \ref{lemma:FCLT} holds for non-identically distributed row-wise independent processes. Specifically, for $(\theta,k)\in \mathds{T}$, we define
\begin{align*}
f_{Gg}(\theta,k)=&\sumi\frac{e'_v(\Gamma^+_p)^{-1} r_p(\frac{X_i}{h_{G}})}{\sqrt{G h_{G}}\bar f_X(0)}\Epsilon_k(Y_i,D_i,X_i,\theta)K(\frac{X_i}{h_{G}})\delta^+_i \qquad\text{and}\\
\nu^+_G(\theta,k)=&\sumg[f_{Gg}(\theta,k)-Ef_{Gg}(\theta,k)].
\end{align*}
Then, using the closure-under-summation property for VC type classes from Lemma \ref{lemma:VC type_stability} and following the same argument as in proof of Theorem \ref{theorem:weak_conv}, Condition 1 required by Lemma \ref{lemma:FCLT} can be established. 
Condition 2 required by Lemma \ref{lemma:FCLT} holds under Assumption \ref{a:cluster_robust} (v). 
Conditions 3, 4 and 5 follow from similar arguments to those in the proof of Theorem \ref{theorem:weak_conv}. 
By an application of Lemma \ref{lemma:FCLT}, $\nu^+$ converges weakly to a tight mean-zero Gaussian process $\mathds G_{\Sigma^+}$ with covariance function $\Sigma^+$ defined in Assumption \ref{a:cluster_robust} (v). 
Finally, by an application of the functional delta method, the chain rule for the functional delta method (Lemma 3.9.3 of van der Vaart and Wellner(1996)), and Lemma \ref{lemma:hadamard_fraction} under Assumption \ref{a:cond_weak_conv} (i) and (ii), we obtain
\begin{align*}
&\sqrt{Gh^{1+2v}_G}[\hat{\tau}(\cdot)-\tau(\cdot)]=
\\
&\sqrt{Gh^{1+2v}_G}[\Upsilon\Big(\frac{\hat{\mu}^{(v)}_{1,p}(0^+,\cdot)-\hat{\mu}^{(v)}_{1,p}(0^-,\cdot)}{\hat{\mu}^{(v)}_{2,p}(0^+,\cdot)-\hat{\mu}^{(v)}_{2,p}(0^-,\cdot)}\Big)(\cdot)
-
\Upsilon\Big(\frac{\mu^{(v)}_{1}(0^+,\cdot)-\mu^{(v)}_{1}(0^-,\cdot)}
{\mu^{(v)}_{2}(0^+,\cdot)-\mu^{(v)}_{2}(0^-,\cdot)}\Big)(\cdot)]\\
\leadsto &\Upsilon'_W\Big( \frac{[\mu^{(v)}_{2}(0^+,\cdot)-\mu^{(v)}_{2}(0^-,\cdot)]\mathds{G}(\cdot,1)-[\mu^{(v)}_{1}(0^+,\cdot)-\mu^{(v)}_{1}(0^-,\cdot)]
\mathds{G}(\cdot,2)}{(\mu^{(v)}_{2}(0^+,\cdot)-\mu^{(v)}_{2}(0^-,\cdot))^2} \Big)(\cdot),
\end{align*}
where $\mathds G:=\mathds G_{\Sigma^+} - \mathds G_{\Sigma^-}$.

Finally, we argue that Theorem \ref{theorem:cond_weak_conv} holds under cluster sampling. 
This is also straightforward since Theorem 2 of Kosorok (2003) only requires row-wise independence as well.
Let us denote
\begin{align*}
\nu^+_{\xi,G} (\theta,k)=&\sumg \xi_g\sumi\frac{e'_v(\Gamma^+_p)^{-1}r_p(\frac{X_i}{h_{G}})}{\sqrt{G h_G}\bar f_X(0)}\Epsilon_{k}(Y_i,D_i,X_i,\theta)K(\frac{X_i}{h_G})\delta^+_i.
\end{align*}
Applying Theorem 2 of Kosorok (2003), we have $\nu^+_{\xi,G} \underset{\xi}{\overset{p}{\leadsto}} \mathds G_{\Sigma^+}$ and similar results hold that $\nu^-_{\xi,G} \underset{\xi}{\overset{p}{\leadsto}} \mathds G_{\Sigma^-}$.
To apply Lemma \ref{lemma:cond_weak_conv_and in prob}, note that it follows from Assumption \ref{a:cluster_robust} (viii) that
\begin{align*}
\sup_{(\theta,k)\in \mathds{T}}|\nu^+_{\xi,n}(\theta,k)- \hat{\nu}^+_{\xi,n} (\theta,k)| \underset{x \times \xi}{\overset{p}{\to}} 0.
\end{align*}
The result then follows from the proof of Theorem \ref{theorem:cond_weak_conv}.
\end{proof}

\subsection{Auxiliary Lemma for the Extended Result with Covariates}\label{sec:prel_lemmas_covariates}

\begin{lemma}\label{lemma:covariates}
Suppose Assumptions S, K, M, FQRD and C hold, then
\begin{align*}
&\sup_{(y,d)\in\mathscr Y_1 \times \{0,1\}}|\check\gamma_{1+}(y,d)-\gamma_{1+}(y,d)| + \sup_{(y,d)\in\mathscr Y_1 \times \{0,1\}}|\check\gamma_{1-}(y,d)-\gamma_{1-}(y,d)|=o_p(1) \text{ and }\\
&\sup_{(y,d)\in\mathscr Y_1 \times \{0,1\}}|\check F_{Y^d|C} (y) -F_{Y^d|C} (y) |=o_p(1).
\end{align*}
\begin{proof}
Section 2.1 and Lemma SA-14 of the Supplementary Appendix of Calonico, Cattaneo, Farrell and Titiunik (2018) imply that for each fixed $(y,d)$, under Assumption FQRD(i)(iii) and Assumption C, one has 
\begin{align*}
\frac{\check\mu_{1}(0^+,y,d) - \check\mu_{1}(0^-,y,d) }{ \check\mu_{2}(0^+,d) -  \check\mu_{2}(0^-,d)  } \overset{p }{\to}  \frac{\mu_{1}(0^+,y,d) -\mu_{1}(0^-,y,d) - [\gamma'_{1+}(y,d) \mu_Z(0^+) -\gamma'_{1-}(y,d) \mu_Z(0^-)   ] }{ \mu_{2+}(d) -  \mu_{2}-(d)  - [\gamma'_{2+}(d) \mu_Z(0^+) -\gamma'_{2-}(d) \mu_Z(0^-)   ]  }.
\end{align*}
Assumptions C (i) and C (ii) further imply that for all $(y,d)\in \mathscr Y_1 \times \{0,1\}$, it holds that $\gamma'_{1+}(y,d) \mu_Z(0^+) -\gamma'_{1-}(y,d) \mu_Z(0^-) =0 $ and $\gamma'_{2+}(d) \mu_Z(0^+) -\gamma'_{2-}(d) \mu_Z(0^-) =0 $. Thus
\begin{align}
\frac{\check\mu_{1}(0^+,y,d) - \check\mu_{1}(0^-,y,d) }{ \check\mu_{2}(0^+,d) -  \check\mu_{2}(0^-,d)  } \overset{p }{\to}  F_{Y^d|C} (y). \label{eq:covariates_model_limit}
\end{align}
This convergence can be made uniformly over $\mathscr Y_1 \times \{0,1\}$. 
To see this, note that Section 8 of Supplementary Appendix of Calonico, Cattaneo, Farrell and Titiunik (2018) implies that
\begin{align*}
\check\gamma_{1+}(y,d)=&\Big[ \textbf{Z}'\textbf{K}_{+} \textbf{Z}/n - \Upsilon_{Z+}'\Gamma^{-1}_{+} \Upsilon_{Z+}    \Big]^{-1} \Big[\textbf{Z}'\textbf{K}_{+} \textbf{Y}(y,d)/n - \Upsilon_{Z+}'\Gamma^{-1}_{+} \Upsilon_{Y+}(y,d) \Big],
\end{align*}
where $\textbf{Z}=[Z_1,...,Z_n]'$, $\textbf{K}_{+}=\text{diag}\{K(X_1/h_n)\delta^+_1,...,K(X_n/h_n)\delta^+_n\}$, $\textbf{Y}(y,d)=[\mathbbm 1\{Y_1\le y, D_1=d\},...,\mathbbm 1\{Y_n\le y, D_n=d\}]'$, $\Upsilon_{Z+}=\frac{1}{n}\sum_{i=1}^n r(X_i/h_n)K(X_i/h_n) \delta^+_i Z_i$, and $\Upsilon_{Y+}(y,d) = \frac{1}{n}\sum_{i=1}^n$ $r(X_i/h_n)K(X_i/h_n) \delta^+_i \mathbbm 1\{Y_i\le y,D_i=d\}$. 
Note also that $\textbf{Z}'\textbf{K}_{+} \textbf{Z}/n $ and $ \Upsilon_{Z+}   $ converge to their respective probability limits following Lemmas SA-2 and SA-3 of Supplementary Appendix of Calonico, Cattaneo, Farrell and Titiunik (2018). 
Convergence of $\Gamma^{-1}_{+}$ follows from proof of Lemma \ref{lemma:BR}, and uniform convergence of $\textbf{Y}(y,d)$ and $\Upsilon_{Y+}(y,d)$ follows the same arguments as those for the convergence of $\Gamma^{-1}_{+}$ in Lemma \ref{a:BR}. 
Therefore, $\check\gamma_{1+}(y,d)$ converges uniformly in probability to its probability limit $\gamma_{1+}(y,d)$. 
This shows that (\ref{eq:covariates_model_limit}) holds uniformly over $\mathscr Y_1 \times \{0,1\}$.
\end{proof}
\end{lemma}

\subsection{Proof of Corollary \ref{corollary:covariates}}\label{sec:corollary:covariates}
\begin{proof}
Let $\check g_1(Y_i,\theta)=\mathbbm 1\{Y_i^*\le y , D_i^*=d\}-Z'_i\check\gamma_{1+}(y,d)$ and $\check g_2(D_i,\theta)=\mathbbm 1\{ D_i^*=d\}-Z'_i\check\gamma_{1+}(d)$. 
Then, Lemma \ref{lemma:covariates} implies that 
 $g_1(Y_i,\theta)=\mathbbm 1\{Y_i^*\le y , D_i^*=d\}-Z'_i\gamma_{1}(y,d)+o_p(1)$ and $g_2(D_i,\theta)=\mathbbm 1\{ D_i^*=d\}-Z'_i\gamma_{1}(d)+o_p(1)$ uniformly over $\mathscr Y_1 \times \{0,1\}$. 
Lemma \ref{lemma:BR} holds for such $\check g_1(Y_i,\theta)$ and $\check g_2(D_i,\theta)$ and using $\check g_1$ and $\check g_2$ in place leads to asymptotically equivalent expressions. 
Theorem \ref{theorem:weak_conv} thus follows. 
Furthermore, the proof of Theorem \ref{theorem:cond_weak_conv} follows by replacing $\gamma_{1}$ by $\check \gamma_{1\pm}$ and replacing $\gamma_{2}$ by $\check \gamma_{2\pm}$. 
From these results follows Corollary \ref{corollary:FQRD}.
\end{proof}

\begin{center}
{\large\bf SUPPLEMENTARY MATERIAL}
\end{center}

\begin{description}
\item[Supplementary material:] The supplementary material includes additional examples, additional proofs, and additional simulation results. 
\end{description}

\section*{Tables and Figures}
\clearpage
\thispagestyle{empty}
\begin{table}[t]
	\centering
	  \caption{(A) Simulated acceptance probabilities for uniform treatment nullity, (B) simulated acceptance probabilities for treatment homogeneity under the fuzzy quantile RDD, and (C) uniform coverage probability of the true quantile treatment effects by the uniform confidence bands. The left column groups (I) present results across alternative values of $\beta_1 \in \{0.00, 0.05, 0.10, 0.15, 0.20\}$ while fixing $\gamma_1 = 0$. The right column groups (II) present results across alternative values of $\gamma_1 \in \{0.00, 0.25, 0.50, 0.75, 1.00\}$ while fixing $\beta_1 = 0$. The nominal acceptance probability under the null hypothesis is 95\%.}
		\begin{tabular}{cc|c|ccccccc|c|cccc}
		\hline\hline
		\multicolumn{7}{c}{(I) (A) Joint Treatment Nullity} && \multicolumn{7}{c}{(II) (A) Joint Treatment Nullity} \\
			\cline{1-7}\cline{9-15}
		  \multicolumn{2}{c}{$n$} & \multicolumn{5}{c}{$\beta_1$} && \multicolumn{2}{c}{$n$} & \multicolumn{5}{c}{$\gamma_1$} \\
			\cline{3-7}\cline{11-15}
			       && 0.00  & 0.05  & 0.10  & 0.15  & 0.20  &&        && 0.00  & 0.25  & 0.50  & 0.75  & 1.00  \\
			\cline{1-7}\cline{9-15}
			 500   && 0.969 & 0.874 & 0.709 & 0.532 & 0.362 &&  500   && 0.969 & 0.956 & 0.942 & 0.934 & 0.922 \\
			1000   && 0.968 & 0.823 & 0.559 & 0.302 & 0.131 && 1000   && 0.968 & 0.941 & 0.916 & 0.867 & 0.800 \\
			1500   && 0.957 & 0.784 & 0.422 & 0.149 & 0.045 && 1500   && 0.957 & 0.944 & 0.877 & 0.802 & 0.708 \\
			2000   && 0.957 & 0.725 & 0.318 & 0.082 & 0.014 && 2000   && 0.957 & 0.927 & 0.852 & 0.747 & 0.617 \\
		\hline\hline
		\\
		\hline\hline
		\multicolumn{7}{c}{(I) (B) Treatment Homogeneity} && \multicolumn{7}{c}{(II) (B) Treatment Homogeneity} \\
			\cline{1-7}\cline{9-15}
		  \multicolumn{2}{c}{$n$} & \multicolumn{5}{c}{$\beta_1$} && \multicolumn{2}{c}{$n$} & \multicolumn{5}{c}{$\gamma_1$} \\
			\cline{3-7}\cline{11-15}
			       && 0.00  & 0.05  & 0.10  & 0.15  & 0.20  &&        && 0.00  & 0.25  & 0.50  & 0.75  & 1.00  \\
			\cline{1-7}\cline{9-15}
			 500   && 0.989 & 0.984 & 0.988 & 0.986 & 0.987 &&  500   && 0.989 & 0.966 & 0.928 & 0.873 & 0.828 \\
			1000   && 0.980 & 0.979 & 0.979 & 0.979 & 0.985 && 1000   && 0.980 & 0.939 & 0.842 & 0.729 & 0.591 \\
			1500   && 0.968 & 0.973 & 0.972 & 0.977 & 0.977 && 1500   && 0.968 & 0.913 & 0.770 & 0.606 & 0.438 \\
			2000   && 0.963 & 0.962 & 0.966 & 0.970 & 0.967 && 2000   && 0.963 & 0.884 & 0.705 & 0.503 & 0.311 \\
		\hline\hline
		\\
		\hline\hline
		\multicolumn{7}{c}{(I) (C) Uniform Coverage Probability} && \multicolumn{7}{c}{(II) (C) Uniform Coverage Probability} \\
			\cline{1-7}\cline{9-15}
		  \multicolumn{2}{c}{$n$} & \multicolumn{5}{c}{$\beta_1$} && \multicolumn{2}{c}{$n$} & \multicolumn{5}{c}{$\gamma_1$} \\
			\cline{3-7}\cline{11-15}
			       && 0.00  & 0.05  & 0.10  & 0.15  & 0.20  &&        && 0.00  & 0.25  & 0.50  & 0.75  & 1.00  \\
			\cline{1-7}\cline{9-15}
			 500   && 0.969 & 0.966 & 0.969 & 0.966 & 0.970 &&  500   && 0.969 & 0.962 & 0.962 & 0.969 & 0.970 \\
			1000   && 0.968 & 0.965 & 0.965 & 0.969 & 0.976 && 1000   && 0.968 & 0.953 & 0.953 & 0.952 & 0.960 \\
			1500   && 0.957 & 0.967 & 0.966 & 0.962 & 0.967 && 1500   && 0.957 & 0.958 & 0.952 & 0.944 & 0.951 \\
			2000   && 0.957 & 0.953 & 0.965 & 0.968 & 0.971 && 2000   && 0.957 & 0.950 & 0.950 & 0.941 & 0.944 \\
		\hline\hline
		\\
		\end{tabular}
	\label{tab:FQRD}
\end{table}
\clearpage
\thispagestyle{empty}
\begin{table}[t]
	\centering
	  \caption{(A) Simulated acceptance probabilities for uniform treatment nullity, (B) simulated acceptance probabilities for treatment homogeneity under the fuzzy quantile RDD for a sequence of weakening jumps. The results are displayed for $\Sigma_{33} \in \{2^0, 2^1, 2^2, 2^3, 2^4, 2^5, 2^6\}$. The parameter values are fixed at $\beta_1=\gamma_1=0$. The nominal acceptance probability under the null hypothesis is 95\%. The results for (C) uniform coverage probability of the true quantile treatment effects by the uniform confidence bands are omitted because they produce the same numbers as (A) under $\beta_1=\gamma_1=0$.}
		\begin{tabular}{cc|ccccccc}
		\hline\hline
		\multicolumn{9}{c}{(A) Joint Treatment Nullity}\\
		\hline
		\multicolumn{2}{c}{$n$} & \multicolumn{7}{c}{$\Sigma_{33}$} \\
		\cline{3-9}
		       && $2^0$ & $2^1$ & $2^2$ & $2^3$ & $2^4$ & $2^5$ & $2^6$\\ 
		\hline
		 500   && 0.968	& 0.984	& 0.984	& 0.985	& 0.986	& 0.987	& 0.989\\
		1000   && 0.962	& 0.969	& 0.979	& 0.984	& 0.984	& 0.983	& 0.988\\
		1500   && 0.956	& 0.962	& 0.967	& 0.976	& 0.980	& 0.982	& 0.980\\
		2000   && 0.957	& 0.955	& 0.957	& 0.972	& 0.977	& 0.979	& 0.981\\
		\hline\hline
		\\
		\hline\hline
		\multicolumn{9}{c}{(B) Treatment Homogeneity}\\
		\hline
		\multicolumn{2}{c}{$n$} & \multicolumn{7}{c}{$\Sigma_{33}$} \\
		\cline{3-9}
		       && $2^0$ & $2^1$ & $2^2$ & $2^3$ & $2^4$ & $2^5$ & $2^6$\\ 
		\hline
		 500   && 0.990 &	0.996	&	0.992	&	0.990	&	0.988	&	0.990	&	0.992\\
		1000   && 0.982	&	0.991	&	0.995	&	0.993	&	0.992	&	0.991	&	0.995\\
		1500   && 0.974	&	0.992	&	0.994	&	0.992	&	0.991	&	0.993	&	0.990\\
		2000   && 0.977	&	0.986	&	0.991	&	0.993	&	0.993	&	0.991	&	0.993\\
		\hline\hline
		\\
		\end{tabular}
	\label{tab:Weak_FQRD}
\end{table}
\clearpage
\thispagestyle{empty}
\begin{figure}[t]
	\centering
		\includegraphics[width=0.67\textwidth]{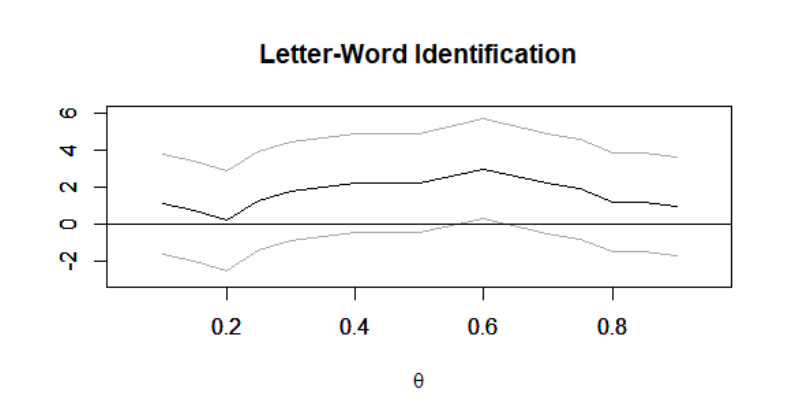}
		\\
		\includegraphics[width=0.67\textwidth]{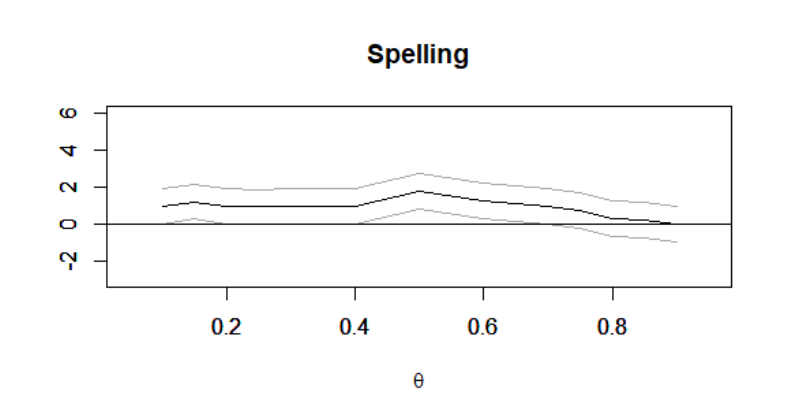}
		\\
		\includegraphics[width=0.67\textwidth]{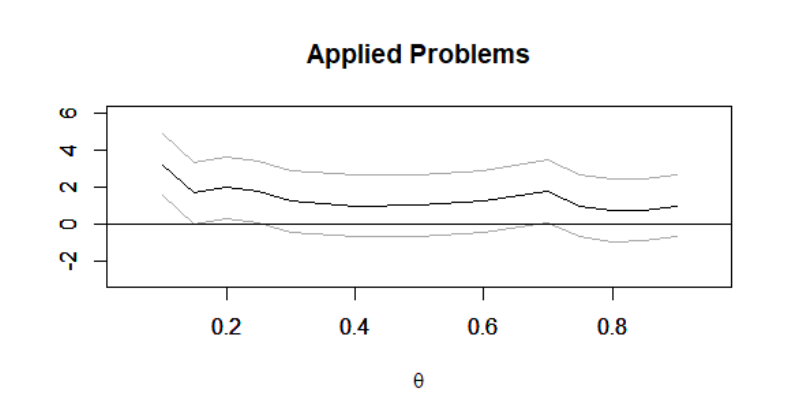}
	\caption{The estimated local quantile treatment effects of the pre-K programs on scores on the three sub-tests of the Woodcock-Johnson tests, and their 90\% uniform confidence bands.}
	\label{fig:application}
\end{figure}
\clearpage

\newpage
\setcounter{page}{1}
\begin{center}
\Large
Supplementary Material for ``Robust Uniform Inference for Quantile Treatment Effects in Regression Discontinuity Designs''
\\${}$\\
\normalsize
Harold D. Chiang \qquad Yu-Chin Hsu \qquad Yuya Sasaki
\\${}$\\
\date{ \today }
\end{center}
${}$\\
\begin{abstract}
This supplementary material contains additional examples of the general framework (Section \ref{sec:additional_examples}), applications of the general results to the additional examples (Section \ref{sec:simulation}), simulation results for the additional examples (Section \ref{sec:simulation}), additional mathematical proofs (Section \ref{sec:additional_mathematical_appendix}), and a guide to bandwidth choice procedures in practice (Section \ref{sec:practical_guideline}).
\end{abstract}

\newpage

\appendix
\setcounter{table}{1}
\setcounter{corollary}{1}
\setcounter{lemma}{7}
\setcounter{section}{1}
\section{Additional Examples of the General Framework}\label{sec:additional_examples}

\subsection{Example: Sharp Mean RDD}\label{sec:ex:sharp_rdd}
Sharp mean RDD is a special case of fuzzy mean RDD, where $ D_i = \mathbbm{1}\{X_i \geq 0\}$.
Thus, we can write $\mu_2(x,\theta_2) = \E[ D_i  |  X_i=x] = \mathbbm{1}\{x \geq 0\}$, and the local Wald estimand (\ref{eq:wald_est}) of the form (\ref{eq:fmrd}) further reduces to
\begin{eqnarray*}
\tau(\theta'')
&=&\lim_{x\downarrow 0}\E[ Y_i  |  X_i=x]-\lim_{x\uparrow 0}\E[ Y_i  |  X_i=x]
\end{eqnarray*}
for all $\theta'' \in \Theta'' = \{0\}$.
This estimand $\tau(0)$ will be denoted by $\tau_{SMRD}$ for Sharp Mean RD design.
For this estimand, Bartalotti, Calhoun, and He (2016) propose a robust bootstrap method of inference, and hence we are not the first to propose a robust bootstrap method for $\tau_{SMRD}$.
The benefit of our method is its applicability not only to $\tau_{SMRD}$, but also to many other estimands of the form (\ref{eq:wald_est}).

\subsection{Example: Fuzzy Mean RKD}\label{sec:ex:fuzzy_rkd}
Define $\Theta_1, \Theta_2, \Theta_1', \Theta_2', \Theta''$, $g_1$, $g_2$, $\phi$, $\psi$, and $\Upsilon$ as in Example \ref{ex:fuzzy_rdd}.
The local Wald estimand (\ref{eq:wald_est}) with $v=1$ in this setting becomes
\begin{eqnarray}
\tau(\theta'')
&=&\frac{\lim_{x\downarrow 0}\frac{\partial}{\partial x}\E[ Y_i  |  X_i=x]-\lim_{x\uparrow 0}\frac{\partial}{\partial x}\E[ Y_i  |  X_i=x]}{\lim_{x\downarrow 0}\frac{\partial}{\partial x}\E[ D_i  |  X_i=x]-\lim_{x\uparrow 0}\frac{\partial}{\partial x}\E[ D_i  |  X_i=x]}
\label{eq:fmrk}
\end{eqnarray}
for all $\theta'' \in \Theta'' = \{0\}$.
This estimand $\tau(0)$ will be denoted by $\tau_{FMRK}$ for Fuzzy Mean RK design.
See Card, Lee, Pei, and Weber (2016) for a causal interpretation of this estimand.

\subsection{Example: Sharp Mean RKD}\label{sec:ex:sharp_rkd}
Sharp mean RKD is a special case of fuzzy mean RKD where the treatment is defined by $ E[g_2(D_i,\theta_2)|X_i] = b(X_i)$ through a known function $b$.
Thus, the local Wald estimand (\ref{eq:wald_est}) of the form (\ref{eq:fmrk}) further reduces to
\begin{eqnarray*}
\tau(\theta'')
&=&\frac{\lim_{x\downarrow 0}\frac{\partial}{\partial x}\E[ Y_i  |  X_i=x]-\lim_{x\uparrow 0}\frac{\partial}{\partial x}\E[ Y_i  |  X_i=x]}{\lim_{x\downarrow 0}b^{(1)}(x) - \lim_{x\uparrow 0}b^{(1)}(x)}
\end{eqnarray*}
for all $\theta'' \in \Theta'' = \{0\}$.
This estimand $\tau(0)$ will be denoted by $\tau_{SMRK}$ for Sharp Mean RK design.
See Card, Lee, Pei, and Weber (2016) for a causal interpretation of this estimand.

\subsection{Example: CDF Discontinuity and Test of Stochastic Dominance}\label{sec:ex:cdf}
Let $\Theta_1 = \Theta_1' = \Theta'' = \mathscr{Y}$ for $\mathscr{Y} \subset \mathbb{R}$, and let $\Theta_2 = \Theta_2' = \{0\}$.
Set $g_1(Y_i,\theta_1) = \mathbbm{1}\{ Y_i \leq \theta_1\}$ and $g_2(D_i,\theta_2) =  D_i$, where $D_i = \mathbbm{1}\{X_i \geq 0\}$ holds under the sharp RD design.
Note that $\mu_{1}(x,\theta_1) = \E[g_1(Y_i,\theta_1)  |  X_i=x] = F_{Y  |  X}(\theta_1  |  x)$ and $\mu_{2}(x,\theta_2) = \E[g_2(D_i,\theta_2)  |  X_i=x] = \E[ D_i  |  X_i=x] = \mathbbm{1}\{x \geq 0\}$.
Let $\phi$ and $\psi$ be the identity operators, and for $W \in \ell^\infty (\Theta'_1 \times \Theta'_2)$, define $\Upsilon$ as $\Upsilon\left( W\right)(\theta'') = W(\theta'',0)$ $\forall \theta'' \in \Theta''$.
The local Wald estimand (\ref{eq:wald_est}) with $v=0$ in this setting becomes
\begin{eqnarray}
\tau(\theta'')
&=& \lim_{x\downarrow 0} F_{Y  |  X}(\theta''  |  x) - \lim_{x\uparrow 0} F_{Y  |  X}(\theta''  |  x)
\nonumber
\end{eqnarray}
for all $\theta'' \in \Theta'' = \mathscr{Y} \subset \mathbb{R}$.
This estimand $\tau$ will be denoted by $\tau_{SCRD}$ for Sharp CDF RD design.
This estimand may be useful to test the hypothesis of stochastic dominance ($\tau(\theta'') \leq 0$ for all $\theta'' \in \Theta''$).
See Shen and Zhang (2016) for this estimand and hypothesis testing.


\subsection{Example: Sharp Quantile RDD}\label{sec:ex:sharp_quantile_rdd}
Denote $Q_{Y|X}(\theta''):=\inf\{y\in \mathscr{Y}: F_{Y|X}(y)\ge \theta''\}$, fix an $a\in(0,1/2)$, $\varepsilon>0$ and let $ \mathscr{Y}_1=[Q_{Y|X}(a|0^-)-\varepsilon,Q_{Y|X}(1-a|0^-)+\varepsilon]\cup [Q_{Y|X}(a|0^+)-\varepsilon,Q_{Y|X}(1-a|0^+)+\varepsilon]$.

Let $\Theta_1 = \mathscr{Y}_1$, $\Theta_1' = \Theta'' = [a,1-a]$ and $\Theta_2 = \Theta_2' = \{0\}$.
Set $g_1(Y_i,\theta_1) = \mathbbm{1}\{ Y_i \leq \theta_1\}$ and $g_2(D_i,\theta_2) =  D_i$, where $D_i = \mathbbm{1}\{X_i \geq 0\}$ holds under the sharp RDD.
Note that $\mu_{1}(x,\theta_1) = \E[g_1(Y_i,\theta_1)  |  X_i=x] = F_{Y  |  X}(\theta_1  |  x)$ and $\mu_{2}(x,\theta_2) = \E[g_2(D_i,\theta_2)  |  X_i=x] = \E[ D_i  |  X_i=x] = \mathbbm{1}\{x \geq 0\}$.
Let $\phi(F_{Y  |  X}(\cdot|x))(\theta') = \inf\{\theta_1\in \Theta_1:F_{Y|X}(\theta_1|x)\ge \theta'\}$ $\forall \theta' \in \Theta_1'$,
$\psi(\mathbbm{1}\{x \geq 0\})(\theta') = \mathbbm{1}\{x \geq 0\}$ $\forall \theta' \in \Theta_2' = \{0\}$, and for $W \in \ell^\infty (\Theta'_1 \times \Theta'_2)$ let the operator $\Upsilon$ be the mapping $\Upsilon\left( W\right)(\theta'') = W(\theta'',0)$ $\forall \theta'' \in \Theta''$.
The local Wald estimand (\ref{eq:wald_est}) with $v=0$ in this setting becomes
\begin{eqnarray}
\tau(\theta'')
&=& \lim_{x\downarrow 0} Q_{Y  |  X}(\theta''  |  x) - \lim_{x\uparrow 0}Q_{Y  |  X}(\theta''  |  x)
\nonumber
\end{eqnarray}
for all $\theta'' \in \Theta'' = [a,1-a]$, where $Q_{Y  |  X}(\theta''  |  x) := \inf\{\theta_1\in \Theta_1 :F_{Y|X}(\theta_1|x)\ge \theta''\}$ for a short-hand notation.
This estimand $\tau$ will be denoted by $\tau_{SQRD}$ for Sharp Quantile RD design.
For this estimand, Qu and Yoon (2015b) propose a method of uniform inference based on uniform random sampling, and hence we are not the first to propose a bootstrap method for $\tau_{SQRD}$.
Our method adds the property of robustness to this existing method, besides the main result that it applies to other estimands too.

\subsection{Example: Fuzzy Quantile RKD}\label{sec:ex:fuzzy_quantile_rkd}
Let $ \mathscr{Y}_1=[Q_{Y|X}(a|0)-\varepsilon,Q_{Y|X}(1-a|0)+\varepsilon]$, $\Theta_1 = \mathscr{Y}_1$
, $\Theta_1' = \Theta'' = [a,1-a]$ for a constant $a \in (0,1/2)$, and $\Theta_2 = \Theta_2' = \{0\}$.
We set $g_1(Y_i,\theta_1) = \mathbbm{1}\{ Y_i \leq \theta_1\}$ and $g_2(D_i,\theta_2) =  D_i$.
Note that $\mu_{1}(x,\theta_1) = \E[g_1(Y_i,\theta_1)  |  X_i=x] = F_{Y  |  X}(\theta_1  |  x)$ and $\mu_{2}(x,\theta_2) = \E[g_2(D_i,\theta_2)  |  X_i=x] = \E[ D_i  |  X_i=x]$.
With the short-hand notations $f_{Y  |  X} = \frac{\partial}{\partial y}F_{Y  |  X}$ and $F_{Y  |  X}^{(1)} := \frac{\partial}{\partial x}F_{Y  |  X}$, let
$$
\phi(F_{Y  |  X}^{(1)}(\cdot|x))(\theta'):= -\frac{ F^{(1)}_{Y  |  X}(\inf\{\theta\in \Theta_1:F_{Y|X}(\theta|0)\ge \theta'\}  |  x) }{ f_{Y  |  X}( \inf\{\theta\in \Theta_1:F_{Y|X}(\theta|0)\ge \theta'\}  |  x)}
\qquad\forall \theta' \in \Theta_1',
$$
let $\psi$ be the identity operator, and for $W \in \ell^\infty (\Theta'_1 \times \Theta'_2)$, let the operator $\Upsilon$ be $\Upsilon\left( W\right)(\theta'') = W(\theta'',0)$ $\forall \theta'' \in \Theta''$.
We emphasize that $F^{(1)}_{Y  |  X}(\cdot|x)$ does \textit{not} map to $f_{Y  |  X}( \cdot  |  0)$ or $F_{Y  |  X}( \cdot  |  0)$ in the definition of $\phi$; instead $f_{Y  |  X}( \cdot  |  0)$ and $F_{Y  |  X}( \cdot  |  0)$ are embedded in the definition of $\phi$. It will be shown that  $\phi(F_{Y  |  X}^{(1)}(\cdot|x))(\theta'')=\frac{\partial }{\partial x} Q_{Y|X}(\theta''|x)$.
The local Wald estimand (\ref{eq:wald_est}) with $v=1$ in this setting becomes
\begin{eqnarray}
\label{eq:fuzzy_qrkd_estimand}
\tau(\theta'')
&=& \frac{ \lim_{x\downarrow 0} \frac{\partial }{\partial x} Q_{Y|X}(\theta''|x) - \lim_{x\uparrow 0}\frac{\partial }{\partial x} Q_{Y|X}(\theta''|x) }{ \lim_{x\downarrow 0} \frac{d}{dx}\E[D_i | X_i=x] - \lim_{x\uparrow 0} \frac{d}{dx}\E[D_i | X_i=x] }
\end{eqnarray}
for all $\theta'' \in \Theta'' = [a,1-a]$.
This estimand $\tau$ will be denoted by $\tau_{FQRK}$ for Fuzzy Quantile RK design.
See Chiang and Sasaki (2017) for its causal interpretation.

\subsection{Example: Sharp Quantile RKD}\label{sec:ex:sharp_quantile_rkd}
Sharp quantile RKD is a special case of fuzzy quantile RKD where the treatment is defined by $ D_i = b(X_i)$ as a known function $b$ of $X_i$.
Thus, the local Wald estimand (\ref{eq:wald_est}) of the form (\ref{eq:fuzzy_qrkd_estimand}) further reduces to
\begin{eqnarray}
\tau(\theta'')
&=& \frac{ \lim_{x\downarrow 0} \frac{\partial }{\partial x} Q_{Y|X}(\theta''|x) - \lim_{x\uparrow 0}\frac{\partial }{\partial x} Q_{Y|X}(\theta''|x) }{ \lim_{x\downarrow 0} b^{(1)}(x) - \lim_{x\uparrow 0} b^{(1)}(x) }
\nonumber
\end{eqnarray}
for all $\theta'' \in \Theta'' = [a,1-a]$. 
This estimand $\tau$ will be denoted by $\tau_{SQRK}$ for Sharp Quantile RK design.
See Chiang and Sasaki (2017) for its causal interpretation.

\subsection{Example: Group Covariate and Test of Heterogeneous Treatment Effects}\label{sec:ex:group}
Suppose that a researcher wants to make a joint inference for the average causal effects across observed heterogeneous groups $G_i$ taking categorical values in $\Theta_1 = \Theta_2 = \Theta_1' = \Theta_2' = \Theta'' = \{1,\cdots,K\}$ by RDD.
Let $Y_i=( Y_i^\ast, G_i)$ and $D_i = ( D_i^\ast, G_i)$.
Set $g_1((Y_i^\ast,G_i),\theta_1) = Y_i^\ast \cdot \mathbbm{1}\{G_i=\theta_1\}$ and $g_2((D_i^\ast,G_i),\theta_2) =  D_i^\ast \cdot \mathbbm{1}\{G_i=\theta_2\}$.
Note that $\mu_{1}(x,\theta_1) = \E[g_1((Y_i^\ast,G_i),\theta_1)  |  X_i=x] = \E[ Y_i^\ast \cdot \mathbbm{1}\{G_i=\theta_1\}  |  X_i=x]$ and $\mu_{2}(x,\theta_2) = \E[g_2((D_i^\ast,G_i),\theta_2)  |  X_i=x] = \E[ D_i^\ast \cdot \mathbbm{1}\{G_i=\theta_2\}  |  X_i=x]$.
Let $\phi$ and $\psi$ be the identity operators, and for $W \in \ell^\infty (\Theta'_1 \times \Theta'_2)$, define the operator $\Upsilon$ be $\Upsilon\left( W\right)(\theta'') = W(\theta'',\theta'')$ for all $ \theta'' \in \Theta''$.
The local Wald estimand (\ref{eq:wald_est}) with $v=0$ in this setting becomes
\begin{eqnarray}
\tau(\theta'')
&=&\frac{\lim_{x\downarrow 0}E[ Y_i^\ast \cdot \mathbbm{1}\{G_i=\theta''\}  |  X_i=x] - \lim_{x\uparrow 0}E[ Y_i^\ast \cdot \mathbbm{1}\{G_i=\theta''\}  |  X_i=x]}{\lim_{x\downarrow 0}E[ D_i^\ast \cdot \mathbbm{1}\{G_i=\theta''\}  |  X_i=x] - \lim_{x\uparrow 0}E[ D_i^\ast \cdot \mathbbm{1}\{G_i=\theta''\}  |  X_i=x]}
\nonumber
\end{eqnarray}
for all $\theta'' \in \Theta'' = \{1,\cdots,K\}$.
This estimand $\tau$ will be denoted by $\tau_{GFMRD}$ for Group Fuzzy Mean RD design.
This estimand may be useful to test the hypotheses of heterogeneous treatment effects ($\tau(\theta_1'') \neq \tau(\theta_2'')$ for some $\theta_1'',\theta_2'' \in \Theta''$) or unambiguous treatment significance ($\tau(\theta'') > 0$ for all $\theta'' \in \Theta''$).
While we introduced this group estimand for the fuzzy mean regression discontinuity design, we remark that a similar estimand can be developed for any combinations of sharp/fuzzy mean/quantile regression discontinuity/kink designs.

\section{Applications of the General Results to the Ten Examples}\label{sec:application_examples}

In this section, we apply the general results to the additional examples introduced in Example \ref{ex:fuzzy_rdd} as well as Sections \ref{sec:ex:sharp_rdd}--\ref{sec:ex:group}.
Throughout this section, we present our assumptions for the case of $p=2$.
We remark that, however, using a different order $p$ of local polynomial fitting is also possible by similar arguments.

\subsection{Example: Fuzzy Mean RDD}\label{sec:a:FMRD}

Consider $\Theta_1$, $\Theta_2$, $\Theta_1'$, $\Theta_2'$, $\Theta''$, $g_1$, $g_2$, $\phi$, $\psi$, and $\Upsilon$ defined in Example \ref{ex:fuzzy_rdd}.
Recall that we denote the local Wald estimand (\ref{eq:wald_est}) with $v=1$ in this setting by $\tau_{FMRD}$.
We also denote the analog estimator (\ref{eq:wald_estimator}) with $v=1$ in this setting by
\begin{align*}
\hat\tau_{FMRD}=\frac{\hat{\mu}_{1,2}(0^+,0)-\hat{\mu}_{1,2}(0^-,0)}{\hat{\mu}_{2,2}(0^+,0)-\hat{\mu}_{2,2}(0^-,0)}.
\end{align*}
For this application, we consider the following set of assumptions.

\newtheorem*{assumption_FMRD}{Assumption FMRD}
\begin{assumption_FMRD}\label{a:FMRD}
\qquad\\
%
%
(i) (a) $E[|Y|^{2+\epsilon}|X=\cdot]<\infty$ on $[\underline{x},\overline{x}] \backslash \{0\}$ for some $\epsilon>0$.
(b) $\frac{\partial^j}{\partial x^j}E[Y|X=\cdot]$ and $\frac{\partial^j}{\partial x^j}E[D|X=\cdot]$ are Lipschitz on $[\underline{x},0)$ and $(0,\overline{x}]$ for $j=0,1,2,3$.
(d) $E[D|X=0^+]\ne E[D|X=0^-]$. \\
(ii) The baseline bandwidth $h_n$ satisfies $h_n\to 0$, $nh^2_n\to \infty$, $nh^7_n\to 0$.
There exist constants $c_1$, $c_2$ such that $h_{1,n}=c_1h_n$ and $h_{2,n}=c_2 h_n$.\\
%
%
(iii) $V(Y|X=\cdot)$, $V(D|X=\cdot)\in \mathcal{C}^{1}([\underline{x},\overline{x}]\setminus\{0\})$ with bounded derivatives in $x$ and $0<V(Y|X=0^\pm)<\infty$ \\
%
\end{assumption_FMRD}

For $k \in \{1,2\}$, define
\begin{align*}
&\widehat{\mathds{X}}'_n(0,k)=\frac{1}{\sqrt{c_k}}[ \hat{\nu}^+_{\xi,n}(0,k) -  \hat{\nu}^-_{\xi,n}(0,k)],
\end{align*}
where the EMP is given by
\begin{align*}
&\hat{\nu}^\pm_{\xi,n}(0,1)=\sum_{i=1}^{n}\xi_i\frac{e'_0(\Gamma^\pm_2)^{-1}[Y_i-\tilde \mu_{1,2}(X_i,0)] r_2(\frac{X_i}{h_{1,n}}) K(\frac{X_i}{h_{1,n}})\delta_i^\pm}{\sqrt{nh_{1,n}}\hat{f}_X(0)}\\
&\hat{\nu}^\pm_{\xi,n}(0,2)=\sum_{i=1}^{n}\xi_i\frac{e'_0(\Gamma^\pm_2)^{-1}[D_i-\tilde \mu_{2,2}(X_i,0)]  r_2(\frac{X_i}{h_{2,n}}) K(\frac{X_i}{h_{2,n}})\delta_i^\pm}{\sqrt{nh_{2,n}}\hat{f}_X(0)}
\end{align*}
and $\tilde \mu_{k,2}(0^\pm,0)$ is defined in Lemma \ref{lemma:epsilon_unif_const}.
Our general result applied to the current case yields the following corollary.

\begin{corollary}[Example: Fuzzy Mean RDD]\label{corollary:FMRD}
Suppose that Assumptions S, K, M, and FMRD hold.\\
(i) There exists $\sigma_{FMRD}>0$ such that
\begin{align*}
&\sqrt{nh_n}[\hat{\tau}_{FMRD}-\tau_{FMRD}]\leadsto N(0,\sigma^2_{FMRD}).\textsc{}
\end{align*}
(ii)
Furthermore, with probability approaching one,
\begin{align*}
&\frac{(\hat{\mu}_{2,2}(0^+,0)-\hat{\mu}_{2,2}(0^-,0))\widehat{\mathds{X}}'_n(0,1)-(\hat{\mu}_{1,2}(0^+,0)-\hat{\mu}_{1,2}(0^-,0))\widehat{\mathds{X}}'_n(0,2)}{(\hat{\mu}_{2,2}(0^+,0)-\hat{\mu}_{2,2}(0^-,0))^2}\underset{\xi}{\overset{p}{\leadsto}} N(0,\sigma^2_{FMRD}).
\end{align*}
\end{corollary}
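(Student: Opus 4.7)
The strategy is to verify that the primitive conditions (Assumptions S, K, M, FMRD) imply the four high-level conditions (Assumptions \ref{a:BR}, \ref{a:cond_weak_conv}, \ref{a:multiplier}, \ref{a:first_stage}) of the general framework, and then simply invoke Theorems \ref{theorem:weak_conv} and \ref{theorem:cond_weak_conv}. The key structural simplification is that $\Theta_1 = \Theta_2 = \Theta_1' = \Theta_2' = \Theta'' = \{0\}$ are all singletons, so the function classes indexed by $\theta$ collapse to single functions, VC type conditions become trivial, and the ``uniform'' Bahadur representation reduces to a pointwise one. With $v=0$ and $p=2$, the rate $\sqrt{nh_n^{1+2v}}$ is $\sqrt{nh_n}$, and the bias-killing requirement $nh_n^{2p+3} = nh_n^7 \to 0$ matches Assumption FMRD (ii).

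The verifications: Assumption \ref{a:BR}(i) is Assumption S. For (ii)(a), since $\Theta_k = \{0\}$, the function classes are singletons and hence automatically of VC type; a common integrable $(2+\varepsilon)$-envelope is supplied by Assumption FMRD (i)(a) together with the boundedness of $D$ and of $\mu_k(\cdot,0)$ that follows from the Lipschitz condition FMRD (i)(b). Parts (ii)(b)--(c) are exactly FMRD (i)(b) and FMRD (iii); (ii)(d) is vacuous. Part (iii) is FMRD (ii), and (iv) is Assumption K. For Assumption \ref{a:cond_weak_conv}, the operators $\phi$, $\psi$, $\Upsilon$ are identity/projection maps, trivially Hadamard differentiable (with themselves as derivatives); the denominator is bounded away from zero by FMRD (i)(d); and $nh_n \to \infty$ is in FMRD (ii). Assumption \ref{a:multiplier} is Assumption M, and Assumption \ref{a:first_stage} is delivered by Lemma \ref{lemma:epsilon_unif_const} (which gives a uniformly consistent $\tilde\mu_{k,2}$ built from the first-stage local polynomial estimates) together with any standard kernel density estimator for $\hat f_X(0)$.

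With all assumptions in hand, Theorem \ref{theorem:weak_conv} delivers weak convergence of $\sqrt{nh_n}[\hat\tau_{FMRD}-\tau_{FMRD}]$ to a mean-zero Gaussian variable—not a process, since $\Theta''$ is a point—whose variance is read off the covariance kernel $H^\pm$ combined with the Hadamard derivative of the ratio map (Lemma \ref{lemma:hadamard_fraction}). Explicitly,
\[
\sigma^2_{FMRD} \;=\; \frac{1}{[\mu_2(0^+,0)-\mu_2(0^-,0)]^2}\sum_{s\in\{+,-\}}\frac{e_0'(\Gamma_2^s)^{-1}\Psi_2^s e_0}{f_X(0)}\bigl[\sigma_{11}^{ss}/c_1 + \tau_{FMRD}^2\,\sigma_{22}^{ss}/c_2 - 2\tau_{FMRD}\,\sigma_{12}^{ss}/\sqrt{c_1c_2}\bigr],
\]
with $\sigma_{kl}^{ss}=\sigma_{kl}(0,0\mid 0^s)$ and appropriate matrix $\Psi_2^s$. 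Part (ii) is obtained by the analogous application of Theorem \ref{theorem:cond_weak_conv}. There is no substantive obstacle here—the work is bookkeeping. The one place requiring mild care is checking the common envelope clause in Assumption \ref{a:BR}(ii)(a) across $g_1$, $g_2$, and the conditional means, but this is immediate given the bounded $D$ and the Lipschitz-implied boundedness of $\mu_k(\cdot,0)$ on the compact support $[\underline{x},\overline{x}]$.
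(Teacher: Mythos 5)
Your overall route coincides with the paper's: verify that Assumptions S, K, M, and FMRD imply the high-level Assumptions \ref{a:BR}, \ref{a:cond_weak_conv}, \ref{a:multiplier}, and \ref{a:first_stage} (the singleton index set making the VC-type and envelope clauses trivial, and Lemma \ref{lemma:epsilon_unif_const} supplying the first-stage estimator), and then invoke Theorems \ref{theorem:weak_conv} and \ref{theorem:cond_weak_conv}. Your treatment of part (i) is complete and matches the paper's.

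For part (ii), however, you skip a step that the paper carries out explicitly and that is genuinely needed. Theorem \ref{theorem:cond_weak_conv} delivers conditional weak convergence of the statistic whose coefficients are the \emph{population} quantities $\mu_{k}(0^{\pm},0)$, whereas the display you must prove has the \emph{estimated} coefficients $\hat\mu_{k,2}(0^{\pm},0)$ both in the numerator weights and in the squared denominator. Writing ``part (ii) is obtained by the analogous application of Theorem \ref{theorem:cond_weak_conv}'' therefore does not yet yield the claim: you still have to show that the difference between the population-coefficient and estimated-coefficient expressions is $o^{x\times\xi}_p(1)$ and then conclude via Lemma \ref{lemma:cond_weak_conv_and in prob}. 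This is exactly what the paper does, using the consistency of $\hat\mu_{k,2}(0^{\pm},0)$ from Corollary \ref{corollary:mu_unif_consist}, the tightness of $\widehat{\mathds{X}}'_n(0,k)$, the independence of $\{\xi_i\}$ from the data under Assumption M, and the fact that $|\mu_2(0^+,0)-\mu_2(0^-,0)|>0$ under Assumption FMRD (i)(d), which keeps the estimated denominator bounded away from zero with probability approaching one. You have all the ingredients on the table, so the repair is routine, but as written the proof of part (ii) is incomplete. A minor side remark: your explicit formula for $\sigma^2_{FMRD}$ is only schematic --- the matrices $\Psi^{\pm}_p((\theta,k),(\vartheta,l))$ depend on $c_k$ and differ between $k=1$ and $k=2$, and the powers of $c_k$ in the variance and covariance contributions do not come out as written --- but since the corollary only asserts existence of some $\sigma_{FMRD}>0$, this does not affect the validity of the argument.
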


A proof is provided in Section \ref{sec:corollary:FMRD}.
Perhaps the most practically relevant application of this corollary is the test of the null hypothesis of treatment nullity:
$$
H_0: \tau_{FMRD} = 0.
$$
To test this hypothesis, we can use $\sqrt{nh_n} \abs{ \hat\tau_{FMRD} }$ as the test statistic, and use
$$
\abs{\frac{(\hat{\mu}_{2,2}(0^+,0)-\hat{\mu}_{2,2}(0^-,0))\widehat{\mathds{X}}'_n(0,1)-(\hat{\mu}_{1,2}(0^+,0)-\hat{\mu}_{1,2}(0^-,0))\widehat{\mathds{X}}'_n(0,2)}{(\hat{\mu}_{2,2}(0^+,0)-\hat{\mu}_{2,2}(0^-,0))^2}}
$$
to simulate its asymptotic distribution.

\subsection{Example: Sharp Mean RDD}\label{sec:a:SMRD}

Consider $\Theta_1$, $\Theta_2$, $\Theta_1'$, $\Theta_2'$, $\Theta''$, $g_1$, $g_2$, $\phi$, $\psi$, and $\Upsilon$ defined in Section \ref{sec:ex:sharp_rdd}.
Recall that we denote the local Wald estimand (\ref{eq:wald_est}) with $v=0$ in this setting by $\tau_{SMRD}$.
We also denote the analog estimator (\ref{eq:wald_estimator}) with $v=0$ in this setting by
\begin{align*}
\hat\tau_{SMRD}=\hat{\mu}_{1,2}(0^+,0)-\hat{\mu}_{1,2}(0^-,0).
\end{align*}
For this application, we consider the following set of assumptions.

\newtheorem*{assumption_SMRD}{Assumption SMRD}
\begin{assumption_SMRD}\label{a:SMRD}
\qquad\\
%
%
(i) (a) $E[|Y|^{2+\epsilon}|X=\cdot]<\infty$ on $[\underline{x},\overline{x}] \backslash \{0\}$ for some $\epsilon>0$.
(b) $\frac{\partial^j}{\partial x^j}E[Y|X=\cdot]$ is Lipschitz on $[\underline{x},0)$ and $(0,\overline{x}]$ for $j=0,1,2,3$. \\
(ii) $h_n$ satisfies $h_n\to 0$, $nh^7_n\to 0$ and $nh^2_n\to \infty$.\\
%
%
(iii) $V(Y|X=\cdot)\in \mathcal{C}^{1}([\underline{x},\overline{x}]\setminus \{0\})$ with bounded derivative in $x$ and $0<V(Y|X=0^\pm)<\infty$ \\
%
\end{assumption_SMRD}

Define the EMP
\begin{align*}
&\hat{\nu}^\pm_{\xi,n}=\sum_{i=1}^{n}\xi_i\frac{e'_0(\Gamma^\pm_2)^{-1}[Y_i-\tilde{\mu}_{1,2}(X_i,0)] r_2(\frac{X_i}{h_n}) K(\frac{X_i}{h_n})\delta_i^\pm}{\sqrt{nh_{n}}\hat{f}_X(0)},
\end{align*}
where $\tilde{\mu}_{1,2}$ is defined in the statement of Lemma \ref{lemma:epsilon_unif_const}.
Our general result applied to the current case yields the following corollary.

\begin{corollary}[Example: Sharp Mean RDD]\label{corollary:SMRD}
Suppose that Assumptions S, K, M, and SMRD hold.
\\
(i) There exists $\sigma_{SMRD}>0$ such that
\begin{align*}
&\sqrt{nh_n}[\hat{\tau}_{SMRD}-\tau_{SMRD}]\leadsto N(0,\sigma^2_{SMRD})
\end{align*}
(ii)
Furthermore, with probability approaching one,
\begin{align*}
& \hat{\nu}^+_{\xi,n}- \hat{\nu}^-_{\xi,n}\underset{\xi}{\overset{p}{\leadsto}}N(0,\sigma^2_{SMRD}).
\end{align*}
\end{corollary}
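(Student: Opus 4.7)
The plan is to verify that Assumptions S, K, M, and SMRD imply the high-level Assumptions \ref{a:BR}, \ref{a:cond_weak_conv}, \ref{a:multiplier}, and \ref{a:first_stage} of the general framework specialized to $v=0$, $p=2$, $k=1$, and singleton index sets, and then to specialize Theorems \ref{theorem:weak_conv} and \ref{theorem:cond_weak_conv}. Since sharp mean RDD is the special case of fuzzy mean RDD (Example \ref{ex:fuzzy_rdd}) in which $D_i = \mathbbm{1}\{X_i \geq 0\}$, the denominator of the Wald ratio equals $\mu_2(0^+,0) - \mu_2(0^-,0) = 1$ deterministically, so the analysis reduces to the numerator process alone.

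First I would check Assumption \ref{a:BR}. Part (i) follows from Assumption S. Part (ii) is almost trivial because $\Theta_1 = \Theta_2 = \{0\}$ is a singleton, so the function classes $\{y \mapsto g_1(y,\theta_1)\}$ and $\{d \mapsto g_2(d,\theta_2)\}$ are singletons and hence VC with envelope $|y|$ and $1$ respectively; the $(2+\epsilon)$ moment envelope condition follows from SMRD (i)(a); Lipschitz smoothness of $\mu_1^{(j)}$ for $j=0,\ldots,3$ is SMRD (i)(b), and for $\mu_2^{(j)}$ it holds trivially because $\mu_2$ is constant on each side of zero; the conditional variance condition reduces to SMRD (iii). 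Part (iii) follows from SMRD (ii), which imposes $h_n \to 0$, $nh_n^2 \to \infty$, and $nh_n^{2p+3} = nh_n^7 \to 0$ with $p=2$. Part (iv) is Assumption K. Next I verify Assumption \ref{a:cond_weak_conv}: $\phi, \psi, \Upsilon$ are identities and point evaluations on a singleton domain, so Hadamard differentiability is immediate with derivatives equal to the maps themselves; the nonvanishing denominator condition (ii) holds because $\psi(\mu_2^{(0)}(0^+,0)) - \psi(\mu_2^{(0)}(0^-,0)) = 1$; part (iii) with $v=0$ becomes $nh_n \to \infty$, implied by SMRD (ii). Assumption \ref{a:multiplier} is Assumption M, and Assumption \ref{a:first_stage} is discharged by invoking Lemma \ref{lemma:epsilon_unif_const} for $\tilde{\mu}_{1,2}$ (using $t = p = 2$) together with any standard consistent kernel density estimator for $\hat f_X(0)$ under Assumption K.

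Given these verifications, Theorem \ref{theorem:weak_conv} specialized to $v=0$, $p=2$, and the singleton index set yields $\sqrt{nh_n}[\hat\mu_{1,2}(0^\pm,0) - \mu_1(0^\pm,0)] \leadsto \mathds{G}_{H^\pm}$, where the $\mathds{G}_{H^\pm}$ are centered Gaussian random variables (not processes, since the index collapses) with variances
\[
H^\pm = \frac{\sigma_{11}(0,0|0^\pm)\, e_0'(\Gamma_2^\pm)^{-1} \Psi_2^\pm (\Gamma_2^\pm)^{-1} e_0}{f_X(0)}.
\]
Since the right and left local polynomial fits use disjoint observations and become independent in the limit under Assumption \ref{a:BR} (i), the functional delta method applied to the difference (with $\phi$ the identity and $\Upsilon$ trivial in the sharp case) gives $\sqrt{nh_n}[\hat\tau_{SMRD} - \tau_{SMRD}] \leadsto N(0, \sigma_{SMRD}^2)$ with $\sigma_{SMRD}^2 = H^+ + H^-$, which is strictly positive under SMRD (i)(a) and (iii) combined with Assumption K (iv)(c). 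This establishes part (i).

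For part (ii), I would apply Theorem \ref{theorem:cond_weak_conv}: the assumptions already verified above yield $\hat\nu_{\xi,n}^\pm \underset{\xi}{\overset{p}{\leadsto}} \mathds{G}_{H^\pm}$ conditionally on the data. Because $\phi$ and the Hadamard derivative of $\Upsilon$ are identity maps after collapsing the singleton index, the bootstrap process for the Wald ratio becomes simply $\hat\nu_{\xi,n}^+ - \hat\nu_{\xi,n}^-$, and by conditional independence of the right/left samples (with probability tending to one, controlled via the independence of $\delta_i^+$ and $\delta_i^-$ on disjoint supports) this difference weakly converges conditionally to $\mathds{G}_{H^+} - \mathds{G}_{H^-} \sim N(0,\sigma_{SMRD}^2)$. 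The main technical point, already handled inside the proof of Theorem \ref{theorem:cond_weak_conv}, is showing that replacing $\Epsilon_1$ by $\hat\Epsilon_1$ and $f_X(0)$ by $\hat f_X(0)$ introduces only $o^{x\times\xi}_p(1)$ error; this reduces here to an application of Lemma \ref{lemma:epsilon_unif_const} and consistency of the kernel density estimator. The step I expect to require the most care is checking that the VC/manageability and envelope conditions of Assumption \ref{a:BR} (ii) are all consistent with the unbounded response $Y$ under only a $(2+\epsilon)$ moment, but since all the indexing collapses to a single point in the sharp mean case, this is immediate once SMRD (i)(a) is in hand.
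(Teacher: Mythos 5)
Your proposal is correct and follows essentially the same route as the paper: verify that Assumptions S, K, M, and SMRD imply the high-level Assumptions \ref{a:BR}, \ref{a:cond_weak_conv}, and \ref{a:first_stage} (the only non-trivial point being that the singleton class $\{x\mapsto\mu_1(x,0)\}$ is VC type with its sole element as an integrable envelope by SMRD (i)(a), and that Assumption \ref{a:first_stage} is discharged by Lemma \ref{lemma:epsilon_unif_const}), then specialize Theorems \ref{theorem:weak_conv} and \ref{theorem:cond_weak_conv} with the degenerate denominator $\mu_2(0^+,0)-\mu_2(0^-,0)=1$. Your additional explicit computation of $\sigma_{SMRD}^2=H^++H^-$ is consistent with the covariance formula of Theorem \ref{theorem:weak_conv} and merely makes explicit what the paper leaves implicit.
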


A proof is provided in Section \ref{sec:corollary:SMRD}.
Perhaps the most practically relevant application of this corollary is the test of the null hypothesis of treatment nullity:
$$
H_0: \tau_{SMRD} = 0.
$$
To test this hypothesis, we can use $\sqrt{nh_n} \abs{ \hat\tau_{SMRD} }$ as the test statistic, and use $\abs{ \hat{\nu}^+_{\xi,n}- \hat{\nu}^-_{\xi,n} }$ to simulate its asymptotic distribution.

\subsection{Example: Fuzzy Mean RKD}\label{sec:a:FMRK}

Consider $\Theta_1$, $\Theta_2$, $\Theta_1'$, $\Theta_2'$, $\Theta''$, $g_1$, $g_2$, $\phi$, $\psi$, and $\Upsilon$ defined in Section \ref{sec:ex:fuzzy_rkd}.
Recall that we denote the local Wald estimand (\ref{eq:wald_est}) with $v=1$ in this setting by $\tau_{FMRK}$.
We also denote the analog estimator (\ref{eq:wald_estimator}) with $v=1$ in this setting by
\begin{align*}
\hat\tau_{FMRK}=\frac{\hat{\mu}^{(1)}_{1,2}(0^+,0)-\hat{\mu}^{(1)}_{1,2}(0^-,0)}{\hat{\mu}^{(1)}_{2,2}(0^+,0)-\hat{\mu}^{(1)}_{2,2}(0^-,0)}
\end{align*}
For this application, we consider the following set of assumptions.

\newtheorem*{assumption_FMRK}{Assumption FMRK}
\begin{assumption_FMRK}\label{a:FMRK}
\qquad\\
%
(i) (a) $E[|Y|^{2+\epsilon}|X=\cdot]<\infty$ and $E[|D|^{2+\epsilon}|X=\cdot]<\infty$ on $[\underline{x},\overline{x}] \backslash \{0\}$ for some $\epsilon>0$.
(b) $\frac{\partial^j}{\partial x^j}E[Y|X=\cdot]$ and $\frac{\partial^j}{\partial x^j}E[D|X=\cdot]$ are Lipschitz on $[\underline{x},0)$ and $(0,\overline{x}]$ for $j=0,1,2,3$.
(c) $\frac{\partial}{\partial x}E[D|X=0^+]\ne \frac{\partial}{\partial x}E[D|X=0^-]$. \\
(ii) The baseline bandwidth $h_n$ satisfies $h_n\to 0$, $nh^3_n\to \infty$, $nh^7_n\to 0$.
There exist constant $c_1$, $c_2$ such that $h_{1,n}=c_1h_n$ and $h_{2,n}=c_2 h_n$.\\
%
%
(iii) $V(Y|X=\cdot)$, $V(D|X=\cdot)\in \mathcal{C}^{1}([\underline{x},\overline{x}]\setminus\{0\})$ with bounded derivatives in $x$ and $0<V(Y|X=0^\pm)<\infty$ \\
%
\end{assumption_FMRK}

For $k \in \{1,2\}$, define
\begin{align*}
&\widehat{\mathds{X}}'_n(0,k)=\frac{1}{\sqrt{c^{3}_k}}[ \hat{\nu}^+_{\xi,n}(0,k) -  \hat{\nu}^-_{\xi,n}(0,k)],
\end{align*}
where the EMP is given by
\begin{align*}
&\hat{\nu}^\pm_{\xi,n}(0,1)=\sum_{i=1}^{n}\xi_i\frac{e'_1(\Gamma^\pm_2)^{-1}[Y_i-\tilde \mu_{1,2}(X_i,0)] r_2(\frac{X_i}{h_{1,n}}) K(\frac{X_i}{h_{1,n}})\delta_i^\pm}{\sqrt{nh_{1,n}}\hat{f}_X(0)}\\
&\hat{\nu}^\pm_{\xi,n}(0,2)=\sum_{i=1}^{n}\xi_i\frac{e'_1(\Gamma^\pm_2)^{-1}[D_i-\tilde \mu_{2,2}(X_i,0)]  r_2(\frac{X_i}{h_{2,n}}) K(\frac{X_i}{h_{2,n}})\delta_i^\pm}{\sqrt{nh_{2,n}}\hat{f}_X(0)}
\end{align*}
and $\tilde \mu_{k,2}(0^\pm,0)$ is defined in Lemma \ref{lemma:epsilon_unif_const}.
Our general result applied to the current case yields the following corollary.

\begin{corollary}[Example: Fuzzy Mean RKD]\label{corollary:FMRK}
Suppose that Assumptions S, K, M, and FMRK hold.
\\
(i) There exists $\sigma_{FMRK}>0$ such that
\begin{align*}
&\sqrt{nh^3_n}[\hat{\tau}_{FMRK}-\tau_{FMRK}]\leadsto N(0,\sigma^2_{FMRK}).
\end{align*}
(ii)
Furthermore, with probability approaching one,
\begin{align*}
&\frac{(\hat{\mu}^{(1)}_{2,2}(0^+,0)-\hat{\mu}^{(1)}_{2,2}(0^-,0))\widehat{\mathds{X}}'_n(0,1)-(\hat{\mu}^{(1)}_{1,2}(0^+,0)-\hat{\mu}^{(1)}_{1,2}(0^-,0))\widehat{\mathds{X}}'_n(0,2)}{(\hat{\mu}^{(1)}_{2,2}(0^+,0)-\hat{\mu}^{(1)}_{2,2}(0^-,0))^2}\underset{\xi}{\overset{p}{\leadsto}}N(0,\sigma^2_{FMRK}).
\end{align*}
\end{corollary}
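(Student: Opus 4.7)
The plan is to deduce this corollary directly from Theorems \ref{theorem:weak_conv} and \ref{theorem:cond_weak_conv} by specializing to $v=1$, $p=2$, and the trivial index sets $\Theta_1=\Theta_2=\Theta_1'=\Theta_2'=\Theta''=\{0\}$ described in Section \ref{sec:ex:fuzzy_rkd}. Since $\Theta$ is a singleton, every ``uniform'' statement reduces to a pointwise one, and the limit processes collapse to real Gaussian random variables.

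First I would verify Assumption \ref{a:BR}. Part (i) is immediate from Assumption S. Part (ii)(a) is trivial because $\{g_1\}=\{y\mapsto y\}$ and $\{g_2\}=\{d\mapsto d\}$ are singleton classes, hence of VC type; the common integrable envelope $F_\Epsilon(y,d,x)=|y|+|d|$ is in $L^{2+\epsilon}(\mathds{P}^x)$ by Assumption FMRK(i)(a). Parts (ii)(b)--(d) follow from FMRK(i)(b) and (iii) together with the fact that continuity in the singleton index is vacuous. Part (iii) with $c_k(\theta_k)=c_k$ is FMRK(ii), where $nh_n^{2p+3}=nh_n^{7}\to 0$ is exactly the bias-correction condition required for $p=2$. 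Part (iv) is Assumption K.

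Next I would verify Assumption \ref{a:cond_weak_conv}. Since $\phi$, $\psi$, and $\Upsilon$ are linear evaluation-type maps on the singleton-indexed space, they are trivially Hadamard differentiable with derivatives equal to themselves; this gives (i). Condition (ii) is FMRK(i)(c), i.e.\ the nonvanishing kink in the first-stage regression. Condition (iii) is $nh_n^{1+2v}=nh_n^3\to\infty$, which is part of FMRK(ii). With these in hand, Theorem \ref{theorem:weak_conv} yields
\begin{equation*}
\sqrt{nh_n^{3}}[\hat\tau_{FMRK}-\tau_{FMRK}]\leadsto \Upsilon'_W\!\left(\frac{\Delta\mu_2^{(1)}\,\mathds{G}'(0,1)-\Delta\mu_1^{(1)}\,\mathds{G}'(0,2)}{(\Delta\mu_2^{(1)})^2}\right),
\end{equation*}
where $\Delta\mu_k^{(1)}:=\mu_k^{(1)}(0^+,0)-\mu_k^{(1)}(0^-,0)$ and $\mathds{G}'(0,k)$ are centered normal random variables. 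This limit is a linear combination of jointly Gaussian random variables, hence normal; I would read off $\sigma_{FMRK}^2$ from the covariance formula $H^\pm$ of Theorem \ref{theorem:weak_conv}, noting strict positivity follows from Assumption FMRK(iii) combined with positive definiteness of $\Gamma_2^{\pm}$ (Assumption K(c)).

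For part (ii), I would first check Assumptions \ref{a:multiplier} and \ref{a:first_stage}. Assumption \ref{a:multiplier} is exactly Assumption M. For Assumption \ref{a:first_stage}, the first-stage estimators $\tilde\mu_{k,2}$ are those constructed in Lemma \ref{lemma:epsilon_unif_const}, whose hypotheses are already implied by Assumptions \ref{a:BR} and \ref{a:cond_weak_conv} verified above; and $\hat f_X(0)$ can be taken to be a standard kernel density estimator, which is consistent under Assumptions S and K. Theorem \ref{theorem:cond_weak_conv} then gives $\hat\nu^{\pm}_{\xi,n}\overset{p}{\underset{\xi}{\leadsto}}\mathds{G}_{H^\pm}$, and applying the continuous (linear) Hadamard derivatives $\phi'=\phi$, $\psi'=\psi$, $\Upsilon'_W$ to the bootstrap process yields exactly the displayed statement in (ii), with the same limiting $N(0,\sigma_{FMRK}^{2})$ as in (i). The only mildly delicate step is confirming that $\widehat{\mathds X}'_n(0,k)$ as written (without the extra $\sqrt{c_k^{1+2v}}=\sqrt{c_k^{3}}$ rescaling absorbed into $\hat\nu^{\pm}_{\xi,n}$) matches the scaling in Theorem \ref{theorem:cond_weak_conv}; since $h_{k,n}=c_k h_n$, one checks by a change of variables that $\hat\nu^{\pm}_{\xi,n}(0,k)/\sqrt{c_k^{3}}$ has the correct variance and the denominator factor $(\Delta\mu_2^{(1)})^2$ matches by Corollary \ref{corollary:mu_unif_consist} applied to the first-stage slope estimators. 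Apart from this bookkeeping, no real obstacle arises, as all the heavy lifting is done by Theorems \ref{theorem:weak_conv} and \ref{theorem:cond_weak_conv}.
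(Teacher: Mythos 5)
Your proposal is correct and follows essentially the same route as the paper, which proves this corollary ``similarly to Corollary \ref{corollary:FMRD}'': verify Assumptions \ref{a:BR}--\ref{a:first_stage} (noting the singleton index set makes the VC-type and continuity conditions trivial), apply Theorems \ref{theorem:weak_conv} and \ref{theorem:cond_weak_conv} with $v=1$, $p=2$, and then use the uniform consistency of the plug-in slope estimates (Corollary \ref{corollary:mu_unif_consist}, via Lemma \ref{lemma:cond_weak_conv_and in prob}) together with FMRK(i)(c) to replace $\mu_k^{(1)}(0^\pm,0)$ by $\hat\mu_{k,2}^{(1)}(0^\pm,0)$ in the bootstrap statistic. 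Your bookkeeping of the $\sqrt{c_k^{3}}$ rescaling and the identification of $\sigma_{FMRK}^2$ from $H^\pm$ are exactly what the paper's argument requires.
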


This corollary can be proved similarly to Corollary \ref{corollary:FMRD}.
Perhaps the most practically relevant application of this corollary is the test of the null hypothesis of treatment nullity:
$$
H_0: \tau_{FMRK} = 0.
$$
To test this hypothesis, we can use $\sqrt{nh_n^3} \abs{ \hat\tau_{FMRK} }$ as the test statistic, and use
$$
\abs{\frac{(\hat{\mu}^{(1)}_{2,2}(0^+,0)-\hat{\mu}^{(1)}_{2,2}(0^-,0))\widehat{\mathds{X}}'_n(0,1)-(\hat{\mu}^{(1)}_{1,2}(0^+,0)-\hat{\mu}^{(1)}_{1,2}(0^-,0))\widehat{\mathds{X}}'_n(0,2)}{(\hat{\mu}^{(1)}_{2,2}(0^+,0)-\hat{\mu}^{(1)}_{2,2}(0^-,0))^2}}
$$
to simulate its asymptotic distribution.

\subsection{Example: Sharp Mean RKD}\label{sec:a:SMRK}

Consider $\Theta_1$, $\Theta_2$, $\Theta_1'$, $\Theta_2'$, $\Theta''$, $g_1$, $g_2$, $\phi$, $\psi$, and $\Upsilon$ defined in Section \ref{sec:ex:sharp_rkd}.
Recall that we denote the local Wald estimand (\ref{eq:wald_est}) with $v=1$ in this setting by $\tau_{SMRK}$.
We also denote the analog estimator (\ref{eq:wald_estimator}) with $v=1$ in this setting by
\begin{align*}
\hat\tau_{SMRK}=\frac{\hat{\mu}^{(1)}_{1,2}(0^+,0)-\hat{\mu}^{(1)}_{1,2}(0^-,0)}{b^{(1)}(0^+)-b^{(1)}(0^-)}.
\end{align*}
For this application, we consider the following set of assumptions.

\newtheorem*{assumption_SMRK}{Assumption SMRK}
\begin{assumption_SMRK}\label{a:SMRK}
\qquad\\
%
%
(i) (a) $E[|Y|^{2+\epsilon}|X=\cdot]<\infty$ on $[\underline{x},\overline{x}] \backslash \{0\}$ for some $\epsilon>0$.
(b) $\frac{\partial^j}{\partial x^j}E[Y|X=\cdot]$ is Lipschitz on $[\underline{x},0)$ and $(0,\overline{x}]$ for $j=0,1,2,3$. \\
(ii) $h_n$ satisfies $h_n\to 0$, $nh^3_n\to \infty$ and $nh^7_n\to 0$.\\
%
%
(iii) $V(Y|X=x)\in \mathcal{C}^{1}([\underline{x},\overline{x}]\setminus \{0\})$ with bounded derivative in $x$ and $0<V(Y|X=0^\pm)<\infty$ \\
%
%
(iv) $b$ is continuously differentiable on $I\setminus\{0\}$ and $b^{(1)}(0^+)-b^{(1)}(0^-)\ne 0$.
\end{assumption_SMRK}

Define the EMP
\begin{align*}
&\hat{\nu}^\pm_{\xi,n}=\sum_{i=1}^{n}\xi_i\frac{e'_1(\Gamma^\pm_2)^{-1}[Y_i-\tilde{\mu}_{1,2}(X_i,0)] r_2(\frac{X_i}{h_n}) K(\frac{X_i}{h_n})\delta_i^\pm}{\sqrt{nh_{n}}\hat{f}_X(0)},
\end{align*}
where $\tilde{\mu}_{1,2}$ is defined in the statement of Lemma \ref{lemma:epsilon_unif_const}.
Our general result applied to the current case yields the following corollary.

\begin{corollary}[Example: Sharp Mean RKD]\label{corollary:SMRK}
Suppose that Assumptions S, K, M, and SMRK hold.
\\
(i) There exists $\sigma_{SMRK}>0$ such that
\begin{align*}
&\sqrt{nh^3_n}[\hat{\tau}_{SMRK}-\tau_{SMRK}]\leadsto N(0,\sigma^2_{SMRK}).
\end{align*}
(ii)
Furthermore, with probability approaching one,
\begin{align*}
&\frac{ \hat{\nu}^+_{\xi,n}- \hat{\nu}^-_{\xi,n}}{b^{(1)}(0^+)-b^{(1)}(0^-)}\underset{\xi}{\overset{p}{\leadsto}}N(0,\sigma^2_{SMRK}).
\end{align*}
\end{corollary}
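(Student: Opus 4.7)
The plan is to deduce Corollary \ref{corollary:SMRK} as a direct specialization of the two main results, Theorems \ref{theorem:weak_conv} and \ref{theorem:cond_weak_conv}, applied to the sharp mean RKD framework described in Section \ref{sec:ex:sharp_rkd}. The key simplification here, relative to the fuzzy mean RKD treated in Corollary \ref{corollary:FMRK}, is that $\mu_2(x,\theta_2) = b(x)$ is a known deterministic function: the denominator $b^{(1)}(0^+) - b^{(1)}(0^-)$ requires no estimation and merely acts as a nonzero rescaling constant (by Assumption SMRK(iv)). Moreover, every index set $\Theta_1, \Theta_2, \Theta_1', \Theta_2', \Theta''$ is a singleton, so VC-type conditions on function classes indexed by $\theta$ reduce to trivial singleton classes, and the limit process indexed by $\mathds{T} = \{0\} \times \{1\}$ collapses to a single scalar Gaussian random variable.

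The first step is to verify Assumption \ref{a:BR}. Part (i) follows from Assumption S. Part (ii)(a) is immediate since $g_1(Y_i,0) = Y_i$ and $\{x \mapsto \mu_1(x,0)\}$ are singleton classes with envelope $F_\Epsilon(y,d,x) = |y| + |\mu_1(x,0)|$; the $(2+\epsilon)$ integrability is Assumption SMRK(i)(a). Part (ii)(b) and (c) are Assumptions SMRK(i)(b) and SMRK(iii), respectively, and (ii)(d) is vacuous for singleton indices. Part (iii) holds with $c_1 \equiv 1$, $c_2 \equiv 1$, and the bandwidth rates in SMRK(ii) imply $h_n \to 0$, $nh_n^2 \to \infty$, $nh_n^{2p+3} = nh_n^7 \to 0$ for $p=2$. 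Part (iv) follows from Assumption K. Next, I verify Assumption \ref{a:cond_weak_conv}: (i) the identity operators $\phi$ and $\psi$ are trivially Hadamard differentiable with derivatives equal to the identity, and the coordinate evaluator $\Upsilon$ is linear and continuous; (ii) is SMRK(iv); (iii) becomes $nh_n^3 \to \infty$, which is SMRK(ii).

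With these assumptions verified, Theorem \ref{theorem:weak_conv} applied with $v=1$, $p=2$ yields that $\sqrt{nh_n^3}\bigl[\hat{\mu}_{1,2}^{(1)}(0^\pm,0) - \mu_1^{(1)}(0^\pm,0)\bigr]$ converges jointly in distribution to mean-zero Gaussian random variables (since $\mathds{T}$ is a single point, the process degenerates to a scalar). The limit on the numerator side is then a mean-zero normal with variance obtainable from the covariance kernel $H^{\pm}$ evaluated at the point; together with independence across the sides (which follows from Assumption \ref{a:BR}(i)(a) since the two samples of observations use disjoint supports selected by $\delta_i^\pm$), the numerator difference is $N(0,\sigma_{\mathrm{num}}^2)$ for some $\sigma_{\mathrm{num}}^2 > 0$. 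Dividing by the nonzero constant $b^{(1)}(0^+) - b^{(1)}(0^-)$ yields part (i) with $\sigma_{SMRK}^2 = \sigma_{\mathrm{num}}^2 / [b^{(1)}(0^+) - b^{(1)}(0^-)]^2$.

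For part (ii), I verify Assumption \ref{a:multiplier} (which is Assumption M) and Assumption \ref{a:first_stage}. The latter requires uniform consistency of $\tilde{\mu}_{1,2}$ for $\mu_1$ on the kernel support and consistency of $\hat{f}_X(0)$; both are delivered by Lemma \ref{lemma:epsilon_unif_const} and the kernel density estimator described after that lemma, under Assumptions \ref{a:BR} and \ref{a:cond_weak_conv} that were already checked. Theorem \ref{theorem:cond_weak_conv} then yields that $\hat{\nu}_{\xi,n}^+ - \hat{\nu}_{\xi,n}^-$ conditionally weakly converges in probability to the same scalar Gaussian that appears on the numerator side in part (i); dividing by the known constant $b^{(1)}(0^+) - b^{(1)}(0^-)$ preserves the conditional weak convergence by the continuous mapping theorem for the bootstrap (Proposition 10.7 of Kosorok, 2008). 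The main obstacle, as always when invoking these general theorems on a specific example, is careful bookkeeping of the bandwidth rate exponent (here $1+2v=3$, matching SMRK(ii)) and confirming that SMRK(i)(b), which provides Lipschitz smoothness up to order $p+1=3$, is exactly what Assumption \ref{a:BR}(ii)(b) requires for the bias-corrected representation at $p=2$; beyond this, the argument is a mechanical specialization.
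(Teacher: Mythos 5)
Your proposal is correct and follows essentially the same route as the paper: the paper's own proof of this corollary is simply the remark that it "can be proved similarly to Corollary \ref{corollary:SMRD}," i.e., verify that Assumptions S, K, M, and SMRK imply Assumptions \ref{a:BR}, \ref{a:cond_weak_conv}, \ref{a:multiplier}, and \ref{a:first_stage} (with the singleton index set trivializing the VC-type conditions and Lemma \ref{lemma:epsilon_unif_const} supplying the first-stage estimator), and then apply Theorems \ref{theorem:weak_conv} and \ref{theorem:cond_weak_conv} with $v=1$, $p=2$, dividing by the known nonzero constant $b^{(1)}(0^+)-b^{(1)}(0^-)$. Your write-up just makes the assumption bookkeeping explicit.
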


This corollary can be proved similarly to Corollary \ref{corollary:SMRD}.
Perhaps the most practically relevant application of this corollary is the test of the null hypothesis of treatment nullity:
$$
H_0: \tau_{SMRK} = 0.
$$
To test this hypothesis, we can use $\sqrt{nh_n^3} \abs{ \hat\tau_{SMRK} }$ as the test statistic, and use $\abs{ \frac{ \hat{\nu}^+_{\xi,n}- \hat{\nu}^-_{\xi,n}  }{ b^{(1)}(0^+) - b^{(1)}(0^-)}}$ to simulate its asymptotic distribution.

\subsection{Example: CDF Discontinuity and Test of Stochastic Dominance}\label{sec:a:SCRD}

Consider $\Theta_1$, $\Theta_2$, $\Theta_1'$, $\Theta_2'$, $\Theta''$, $g_1$, $g_2$, $\phi$, $\psi$, and $\Upsilon$ defined in Section \ref{sec:ex:cdf}.
Recall that we denote the local Wald estimand (\ref{eq:wald_est}) with $v=1$ in this setting by $\tau_{SCRD}$.
We also denote the analog estimator (\ref{eq:wald_estimator}) with $v=1$ in this setting by
\begin{align*}
\hat\tau_{SCRD}(\theta'')=
\hat F_{Y|X}(\theta''|0^+)-\hat F_{Y|X}(\theta''|0^-)=
\hat\mu_{1,2}(0^+,\theta'')-\hat\mu_{1,2}(0^-,\theta'').
\end{align*}
For this application, we consider the following set of assumptions.

\newtheorem*{assumption_SCRD}{Assumption SCRD}
\begin{assumption_SCRD}\label{a:SCRD}
\quad\\
%
%
(i) $\frac{\partial^j}{\partial x^j}F_{Y|X}$ is Lipschitz in $x$ on $\mathscr{Y}\times[\underline{x},0)$ and $\mathscr{Y}\times(0,\overline{x}]$ for $j=0,1,2,3$.\\
(ii) $h_n$ satisfies $h_n\to 0$, $nh^7_n\to 0$, and $nh^2_n\to \infty$.\\
%
%
\end{assumption_SCRD}

Let the EMP be given by
\begin{align*}
\hat{\nu}^\pm_{\xi,n}(\theta'')&=\sum_{i=1}^{n}\xi_i\frac{e'_0(\Gamma^\pm_2)^{-1}[\mathds{1}\{Y_i\le \theta''\}-\tilde{F}_{Y|X}(\theta''|X_i)]r_2(\frac{X_i}{h_n})K(\frac{X_i}{h_n})\delta^\pm_i}{\sqrt{nh_{n}}\hat{f}_X(0)},
\end{align*}
where $\tilde{F}_{Y|X}(\theta''|x)=\tilde{\mu}_{1,2}(x,\theta'')\mathds{1}\{|x/h_n| \in [-1,1]\}$, and $\tilde{\mu}_{1,2}$ is defined in the statement of Lemma \ref{lemma:epsilon_unif_const}.
Our general result applied to the current case yields the following corollary.
A proof is provided in Section \ref{sec:corollary:SCRD}.

\begin{corollary}[Example: CDF Discontinuity]\label{corollary:SCRD}
Suppose that Assumptions S, K, M, and SCRD hold.
\\
(i) There exists a zero mean Gaussian process $\mathds{G}'_{SCRD}:\Omega^x \mapsto \ell^\infty(\mathscr{Y})$ such that
$$
\sqrt{nh_{n}}[\hat{\tau}_{SCRD}-\tau_{SCRD}]\leadsto \mathds{G}'_{SCRD}.
$$
(ii) Furthermore, with probability approaching one,
$$
 \hat{\nu}^+_{\xi,n}- \hat{\nu}^-_{\xi,n} \underset{ \xi}{\overset{p}{\leadsto}} \mathds{G}'_{SCRD}.
$$
\end{corollary}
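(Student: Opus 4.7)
\textbf{Proof plan for Corollary \ref{corollary:SCRD}.} My strategy is to cast this as a direct application of Theorems \ref{theorem:weak_conv} and \ref{theorem:cond_weak_conv} under the particular choice of $\Theta_k,\Theta_k',\Theta'',g_1,g_2,\phi,\psi,\Upsilon$ from Section \ref{sec:ex:cdf}, and then simplify the limit because the sharp-design denominator collapses to a constant. So the first task is to verify Assumption \ref{a:BR} from (S), (K), (SCRD). Sampling and the density condition follow directly from (S)(a)(b). For (ii)(a), the class $\{y\mapsto \mathbbm{1}\{y\le\theta_1\}:\theta_1\in\mathscr{Y}\}$ is a classical VC class with envelope $F\equiv 1$, the class $\{d\mapsto d\}=\{d\mapsto \mathbbm{1}\{x\ge 0\}\}$ is a singleton, and $\{x\mapsto F_{Y|X}(\theta_1|x):\theta_1\in\mathscr{Y}\}$ is of VC type by the Lipschitz condition in (SCRD)(i) combined with Lemma \ref{lemma:VC type_stability}; all have uniformly bounded envelopes, so the moment condition is trivial. (ii)(b) is (SCRD)(i); (ii)(c) holds because $\sigma_{kl}(\theta,\vartheta|x)$ is a polynomial in CDFs, which inherit smoothness from (SCRD)(i); (ii)(d) is automatic for indicator-valued $g_1,g_2$. (iii) is (SCRD)(ii) with $c_1\equiv c_2\equiv 1$ and $p=2,v=0$ (so $2p+3=7$). (iv) is (K).

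Next I verify Assumption \ref{a:cond_weak_conv}. Since $\phi,\psi$ are identity maps and $\Upsilon(W)(\theta'')=W(\theta'',0)$ is linear and continuous from $\ell^\infty(\Theta_1'\times\Theta_2')$ into $\ell^\infty(\Theta'')$, (i) holds with $\phi'=\psi'=\mathrm{id}$ and $\Upsilon'_W=\Upsilon$. Part (ii) is immediate because $\mu_2(0^+,0)-\mu_2(0^-,0)=1-0=1>0$. Part (iii) reduces to $nh_n\to\infty$, which is (SCRD)(ii). Assumption \ref{a:multiplier} is just (M). For Assumption \ref{a:first_stage}, uniform consistency of $\tilde{F}_{Y|X}(\theta''|X_i)\mathbbm{1}\{|X_i/h_n|\le 1\}$ for $F_{Y|X}(\theta''|X_i)\mathbbm{1}\{|X_i/h_n|\le 1\}$ follows from Lemma \ref{lemma:epsilon_unif_const} applied with $t=2$ (whose hypotheses are exactly those I have just verified), and consistency of $\hat f_X(0)$ follows from the standard kernel density estimator under (S)(b) and (K) with bandwidth $b_n\to 0$, $nb_n\to\infty$.

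With all four assumptions in hand, Theorem \ref{theorem:weak_conv} yields
\[
\sqrt{nh_n}\bigl[\hat\tau_{SCRD}(\cdot)-\tau_{SCRD}(\cdot)\bigr]=\sqrt{nh_n}\bigl[\hat\mu_{1,2}(0^+,\cdot)-\hat\mu_{1,2}(0^-,\cdot)-(\mu_1(0^+,\cdot)-\mu_1(0^-,\cdot))\bigr]\leadsto \mathds{G}'_{SCRD},
\]
where, after specializing the limit in Theorem \ref{theorem:weak_conv} using $\phi=\mathrm{id}$, $\psi=\mathrm{id}$, $\psi(\mu_2(0^\pm,\cdot))=\mathbbm{1}\{\pm>0\}$, $v=0$, $c_1=c_2=1$, the denominator in the Hadamard derivative is $1$ and only the $\mathds{G}'(\cdot,1)$ piece survives; thus $\mathds{G}'_{SCRD}$ is a zero-mean Gaussian process on $\mathscr{Y}$ with covariance $H^+ +H^-$ evaluated at $k=l=1$. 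For part (ii), Theorem \ref{theorem:cond_weak_conv} gives $\hat\nu^\pm_{\xi,n}\underset{\xi}{\overset{p}{\leadsto}}\mathds{G}_{H^\pm}(\cdot,1)$; since these two processes are driven by data on the two sides of the cutoff they are independent in the limit by (S)(a), so $\hat\nu^+_{\xi,n}-\hat\nu^-_{\xi,n}\underset{\xi}{\overset{p}{\leadsto}}\mathds{G}'_{SCRD}$ with probability tending to one, which is (ii).

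\textbf{Anticipated obstacle.} The only non-mechanical step is confirming the VC/manageability of $\{x\mapsto F_{Y|X}(\theta|x):\theta\in\mathscr{Y}\}$ and of the product classes that appear inside the Bahadur representation; the Lipschitz hypothesis in (SCRD)(i) combined with Lemma \ref{lemma:VC type_stability} handles this, but one must be careful that the induced envelopes are square-integrable uniformly in $n$. The bookkeeping for how the (sharp) denominator's being identically $1$ collapses the joint limit of $(\mathds{G}'(\cdot,1),\mathds{G}'(\cdot,2))$ onto a single numerator process also needs to be spelled out, but is straightforward because $\psi(\hat\mu_{2,2}(0^+,0))-\psi(\hat\mu_{2,2}(0^-,0))=1$ exactly at the sample level so no probabilistic approximation is needed there.
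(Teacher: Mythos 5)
Your proposal is correct and follows essentially the same route as the paper: verify Assumptions \ref{a:BR}, \ref{a:cond_weak_conv}, \ref{a:multiplier}, and \ref{a:first_stage} from (S), (K), (M), (SCRD), invoke Theorems \ref{theorem:weak_conv} and \ref{theorem:cond_weak_conv}, and use the fact that the sharp denominator is identically $1$ so the limit collapses to the numerator process. One small correction: the VC-type property of $\{x\mapsto F_{Y|X}(\theta_1|x):\theta_1\in\mathscr{Y}\}$ and $\{y\mapsto\mathbbm{1}\{y\le\theta_1\}:\theta_1\in\mathscr{Y}\}$ does not follow from the Lipschitz-in-$x$ condition of (SCRD)(i) plus Lemma \ref{lemma:VC type_stability}; the paper instead uses that these are monotone increasing processes bounded by one, hence VC-subgraph by Lemma 9.10 of Kosorok (2008).
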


One of the most common applications of weak convergence results for CDFs as stated in this corollary is the test of the stochastic dominance:
$$
H_0: \tau_{SCRD}(\theta'') \leq 0 \qquad\forall \theta'' \in \Theta''.
$$
See McFadden (1989).
To test this hypothesis, we can use $\sup_{\theta'' \in \Theta''} \sqrt{nh_n} \max\{\hat\tau_{SCRD}(\theta''),0\} $ as the test statistic, and use
$
\sup_{\theta'' \in \Theta''} \sqrt{nh_n}  \max\{\hat{\nu}^+_{\xi,n}(\theta'') - \hat{\nu}^-_{\xi,n}(\theta''),0\}
$
to simulate its asymptotic distribution.

\subsection{Example: Sharp Quantile RDD}\label{sec:a:SQRD}

Consider $\Theta_1$, $\Theta_2$, $\Theta_1'$, $\Theta_2'$, $\Theta''$, $g_1$, $g_2$, $\phi$, $\psi$, and $\Upsilon$ defined in Section \ref{sec:ex:sharp_quantile_rdd}.
Recall that we denote the local Wald estimand (\ref{eq:wald_est}) with $v=1$ in this setting by $\tau_{SQRD}$.
We also denote the analog estimator (\ref{eq:wald_estimator}) with $v=1$ in this setting by
\begin{align*}
\hat{\tau}_{SQRD}(\theta'')
=\hat Q_{Y|X}(\theta'' | 0^+) - \hat Q_{Y|X}(\theta'' | 0^-)
=\phi(\hat{\mu}_{1,p}(0^+,\cdot))(\theta'')-\phi(\hat{\mu}_{1,p}(0^-,\cdot))(\theta'')
\end{align*}
for $\theta''\in [a,1-a]\subset (0,1)$.
For this application, we consider the following set of assumptions.

\newtheorem*{assumption_SQRD}{Assumption SQRD}
\begin{assumption_SQRD}\label{a:SQRD}
\quad\\
%
%
(i) (a)$\frac{\partial^j}{\partial x^j}F_{Y|X}$ is Lipschitz in $x$ on $\mathscr{Y}_1\times[\underline{x},0)$ and $\mathscr{Y}_1\times(0,\overline{x}]$ for $j=0,1,2,3$.
(b) $f_{Y|X}(y|x)$ is Lipschitz in $x$ and $0<C<f_{Y|X}<C'<\infty$ on $\mathscr{Y}_1\times[\underline{x},0)$ and $\mathscr{Y}_1\times (0,\overline{x}]$. \\
(ii) $h_n$ satisfies $h_n\to 0$, $nh^7_n\to 0$, and $nh^2_n\to \infty$.\\
%
%
%
(iii) There exists $\hat f_{Y|X} (y|0^\pm)$ such that $\sup_{y\in \mathscr{Y}_1}|\hat f_{Y|X} (y|0^\pm)-f_{Y|X} (y|0^\pm)|=o^{x}_p(1)$.
\end{assumption_SQRD}
 We state (iii) as a high level assumption to accommodate a number of alternative estimators. An example and sufficient conditions for (iii) is given above Lemma \ref{lemma:unif_cons_fYX} in section \ref{sec:lemma:unif_cons_est} in the Mathematical Appendix.

Define
\begin{align*}
&\widehat{\phi}'_{F_{Y|X}(\cdot|0^\pm)}(\hat{\nu}^\pm_{\xi,n})(\theta'')
=-\frac{\hat{\nu}^\pm_{\xi,n}(\hat{Q}_{Y|X}(\theta''|0^\pm))}{\hat{f}_{Y|X}(\hat{Q}_{Y|X}(\theta''|0^\pm)|0^\pm)}
\\
& \qquad =-\sum_{i=1}^{n}\xi_i\frac{e'_0(\Gamma^+_2)^{-1}[\mathds{1}\{Y_i\le \hat{Q}_{Y|X}(\theta''|0^\pm)\}-\tilde{F}_{Y|X}(\hat{Q}_{Y|X}(\theta''|0^\pm)|X_i)]r_2(\frac{X_i}{h_n})K(\frac{X_i}{h_n})\delta^\pm_i}{\sqrt{nh_{n}}\hat{f}_X(0)\hat{f}_{Y|X}(\hat Q_{Y|X}(\theta''|0^\pm)|0^\pm)}
\end{align*}
where $\tilde{F}_{Y|X}(y|x)=\tilde{\mu}_{1,2}(x,y)\mathds{1}\{|x/h_n|\le 1\}$, and
$\tilde{\mu}_{1,2}$ is defined in the statement of Lemma \ref{lemma:epsilon_unif_const}.
Our general result applied to the current case yields the following corollary.

\begin{corollary}[Example: Sharp Quantile RDD]\label{corollary:SQRD}
Suppose that Assumptions S, K, M, and SQRD hold.\\
(i) There exists a zero mean Gaussian process $\mathds{G}'_{SQRD}:\Omega^x \mapsto \ell^\infty([a,1-a])$ such that
$$
\sqrt{nh_{n}}[\hat{\tau}_{SQRD}-\tau_{SQRD}]\leadsto \mathds{G}'_{SQRD}.
$$
(ii) Furthermore, with probability approaching one,
$$
\widehat{\phi}'_{F_{Y|X}(\cdot|0^+)}( \hat{\nu}^+_{\xi,n})-\widehat{\phi}'_{F_{Y|X}(\cdot|0^-)}( \hat{\nu}^-_{\xi,n})
\underset{ \xi}{\overset{p}{\leadsto}} \mathds{G}'_{SQRD}.
$$
\end{corollary}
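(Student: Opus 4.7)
The plan is to verify the high-level assumptions of the general theorems (Assumptions \ref{a:BR}, \ref{a:cond_weak_conv}, \ref{a:multiplier}, and \ref{a:first_stage}) in the sharp quantile RDD setting, and then directly invoke Theorems \ref{theorem:weak_conv} and \ref{theorem:cond_weak_conv}. The payoff for having done the heavy lifting in the general framework is that most of the work here is bookkeeping: translating the specialized primitive conditions S, K, M, and SQRD into the abstract hypotheses.

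First, I would verify Assumption \ref{a:BR} with $p=2$, $v=0$. Condition (i) is immediate from Assumption S. For (ii)(a), note that $\{y \mapsto \mathds{1}\{y\le \theta_1\}:\theta_1 \in \mathscr{Y}_1\}$ is of VC type with uniform envelope $1$, and $g_2(D_i,\theta_2)=\mathds{1}\{X_i\ge 0\}$ is a singleton class. The Lipschitz/smoothness requirement (ii)(b) is exactly SQRD(i)(a). For (ii)(c) the residuals $\mathds{1}\{Y_i\le y\}-F_{Y|X}(y|X_i)$ are uniformly bounded by $1$ and their conditional covariance $F_{Y|X}(y\wedge y'|x)-F_{Y|X}(y|x)F_{Y|X}(y'|x)$ inherits the needed smoothness from SQRD(i)(a). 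Left/right continuity in (ii)(d) is clear. The bandwidth rate (iii) is SQRD(ii) with $c_1\equiv 1$. Condition (iv) follows from K.

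Second, I would verify Assumption \ref{a:cond_weak_conv}. Since $\mu_2(x,0)=\mathds{1}\{x\ge 0\}$ is deterministic, the denominator $\psi(\mu_2(0^+,\cdot))-\psi(\mu_2(0^-,\cdot))=1$, so (ii) holds trivially, and we may take $\psi$ to be the identity with trivial Hadamard derivative. The operator $\phi$ is the quantile transformation; by van der Vaart (1998, Lemma 21.3), under SQRD(i)(b) (density bounded away from zero on $\mathscr{Y}_1$), $\phi$ is Hadamard differentiable at $F_{Y|X}(\cdot|0^\pm)$ tangentially to $C(\mathscr{Y}_1)$ with derivative $g\mapsto -g(Q_{Y|X}(\cdot|0^\pm))/f_{Y|X}(Q_{Y|X}(\cdot|0^\pm)|0^\pm)$. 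The outer map $\Upsilon(W)(\theta'')=W(\theta'',0)$ is linear continuous, hence Hadamard differentiable. Finally $nh_n^{1+2v}=nh_n\to\infty$ is SQRD(ii). Applying Theorem \ref{theorem:weak_conv} and the chain rule for the functional delta method yields part (i), with $\mathds{G}'_{SQRD}(\theta'')=-\mathds{G}_{H^+}(Q_{Y|X}(\theta''|0^+),1)/f_{Y|X}(Q_{Y|X}(\theta''|0^+)|0^+)+\mathds{G}_{H^-}(Q_{Y|X}(\theta''|0^-),1)/f_{Y|X}(Q_{Y|X}(\theta''|0^-)|0^-)$.

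Third, for part (ii), Assumption \ref{a:multiplier} is exactly M. Assumption \ref{a:first_stage} is verified by using $\tilde{\mu}_{1,2}$ from Lemma \ref{lemma:epsilon_unif_const} (uniform consistency in $(x,y)$) together with any standard kernel density estimator $\hat{f}_X(0)$ satisfying $b_n\to 0$, $nb_n\to\infty$. Theorem \ref{theorem:cond_weak_conv} then delivers $\hat{\nu}^\pm_{\xi,n}\underset{\xi}{\overset{p}{\leadsto}}\mathds{G}_{H^\pm}$. The remaining work is to show that the plug-in Hadamard derivative $\widehat{\phi}'_{F_{Y|X}(\cdot|0^\pm)}$ is asymptotically equivalent to $\phi'_{F_{Y|X}(\cdot|0^\pm)}$ evaluated at $\hat{\nu}^\pm_{\xi,n}$. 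This requires three uniform consistency facts: (a) $\hat{Q}_{Y|X}(\cdot|0^\pm)\to Q_{Y|X}(\cdot|0^\pm)$ uniformly on $[a,1-a]$, which follows from Corollaries \ref{corollary:mu_unif_consist}--\ref{corollary:operator_unif_consist} since the quantile map is continuous on the relevant domain under SQRD(i)(b); (b) $\hat{f}_{Y|X}(\cdot|0^\pm)\to f_{Y|X}(\cdot|0^\pm)$ uniformly on $\mathscr{Y}_1$, which is exactly SQRD(iii); and (c) asymptotic stochastic equicontinuity of $\hat{\nu}^\pm_{\xi,n}(\cdot)$, which is a by-product of the weak convergence established above. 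These three together allow us to replace $\hat{Q}_{Y|X}$ and $\hat{f}_{Y|X}$ by their population counterparts inside $\widehat{\phi}'$ up to an $o^{x\times\xi}_p(1)$ error uniformly in $\theta''$, then invoke Lemma \ref{lemma:cond_weak_conv_and in prob} to conclude conditional weak convergence to $\mathds{G}'_{SQRD}$.

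The main obstacle I anticipate is step (a)--(c) of the last paragraph: the Hadamard derivative here depends on $F_{Y|X}(\cdot|0^\pm)$ through two distinct quantities (the quantile $Q_{Y|X}$ as an evaluation point, and the density $f_{Y|X}$ in the denominator), so one must show uniformly that the substitutions are harmless. This is slightly more delicate than in the fuzzy quantile case treated in Section \ref{sec:a:FQRD} because the evaluation point $\hat{Q}_{Y|X}(\theta''|0^\pm)$ itself depends on the data and on $\theta''$, so stochastic equicontinuity of $\hat{\nu}^\pm_{\xi,n}$ over $\mathscr{Y}_1$ (which falls out of the VC/manageability arguments in the proof of Theorem \ref{theorem:weak_conv}) is essential to carry out the substitution uniformly in $\theta''\in [a,1-a]$.
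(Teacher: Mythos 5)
Your proposal is correct and follows essentially the same route as the paper's proof: verify the high-level assumptions (VC type of the indicator/CDF classes, Hadamard differentiability of the quantile map under SQRD(i)(b), trivial denominator), invoke Theorems \ref{theorem:weak_conv} and \ref{theorem:cond_weak_conv}, and then justify the plug-in Hadamard derivative via uniform consistency of $\hat{Q}_{Y|X}$ and $\hat{f}_{Y|X}$ together with asymptotic equicontinuity of $\hat{\nu}^\pm_{\xi,n}$, concluding with Lemma \ref{lemma:cond_weak_conv_and in prob}. The one ingredient you gesture at but do not name is the paper's Lemma \ref{lemma:asymptotic_rho-conti}, which converts sup-norm consistency of $\hat{Q}_{Y|X}$ into convergence under the intrinsic pseudometric $\rho$ so that equicontinuity can actually be applied at the data-dependent evaluation point; you correctly identified this as the delicate step, so this is a presentational rather than substantive difference.
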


A proof is provided in Section \ref{sec:corollary:SQRD}.
One of the practically most relevant applications of this corollary is the test of the null hypothesis of uniform treatment nullity:
$$
H_0: \tau_{SQRD}(\theta'') = 0 \quad\text{for all } \theta'' \in \Theta'' = [a,1-a].
$$
See Koenker and Xiao (2002), Chernozhukov and Fern\'andez-Val (2005), and Qu and Yoon (2015b).
To test this hypothesis, we can use $\sup_{\theta'' \in [a,1-a]} \sqrt{nh_n} \abs{ \hat\tau_{SQRD}(\theta'') }$ as the test statistic, and use
$$
\sup_{\theta'' \in [a,1-a]} \abs{\widehat{\phi}'_{F_{Y|X}(\cdot|0^+)}( \hat{\nu}^+_{\xi,n})(\theta'')-\widehat{\phi}'_{F_{Y|X}(\cdot|0^-)}( \hat{\nu}^-_{\xi,n})(\theta'')}
$$
to simulate its asymptotic distribution.

Another one of the practically most relevant applications of the above corollary is the test of the null hypothesis of treatment homogeneity across quantiles:
$$
H_0: \tau_{SQRD}(\theta'') = \tau_{SQRD}(\theta''') \quad\text{for all } \theta'', \theta''' \in \Theta'' = [a,1-a].
$$
We again refer to the list of references in the previous paragraph.
To test this hypothesis, we can use $\sup_{\theta'' \in [a,1-a]} \sqrt{nh_n} \abs{ \hat\tau_{SQRD}(\theta'') - (1-2a)^{-1} \int_{[a,1-a]} \hat\tau_{SQRD}(\theta''')d\theta''' }$ as the test statistic, and use
\begin{align*}
\sup_{\theta'' \in [a,1-a]}
&\left\vert \widehat{\phi}'_{F_{Y|X}(\cdot|0^+)}( \hat{\nu}^+_{\xi,n})(\theta'')-\widehat{\phi}'_{F_{Y|X}(\cdot|0^-)}( \hat{\nu}^-_{\xi,n})(\theta'') \right.
\\
&\left. -
\frac{1}{1-2a} \int_{[a,1-a]} \left( \widehat{\phi}'_{F_{Y|X}(\cdot|0^+)}( \hat{\nu}^+_{\xi,n})(\theta''')-\widehat{\phi}'_{F_{Y|X}(\cdot|0^-)}( \hat{\nu}^-_{\xi,n})(\theta''') \right) d\theta'''
\right\vert
\end{align*}
to simulate its asymptotic distribution.

\begin{remark}
It may happen that $\hat F_{Y|X}(\cdot|0^\pm)$ is not monotone increasing in finite sample. We may monotonize the estimated CDFs by re-arrangements following Chernozhukov, Fernandez-Val, Galichon (2010).
This does not affect the asymptotic properties of the estimators, while allowing for inversion of the CDF estimators.
\end{remark}

\subsection{Example: Fuzzy Quantile RKD}\label{sec:a:FQRK}
Consider $\Theta_1$, $\Theta_2$, $\Theta_1'$, $\Theta_2'$, $\Theta''$, $g_1$, $g_2$, $\phi$, $\psi$, and $\Upsilon$ defined in Section \ref{sec:ex:fuzzy_quantile_rkd}.
Recall that we denote the local Wald estimand (\ref{eq:wald_est}) with $v=1$ in this setting by $\tau_{FQRK}$.
We also denote the analog estimator (\ref{eq:wald_estimator}) with $v=1$ in this setting by
\begin{align*}
\hat\tau_{FQRK}(\theta'')=&\frac{\widehat \phi(\hat{F}^{(1)}_{Y|X}(\cdot|0^+))(\theta'')-\widehat \phi(\hat{F}^{(1)}_{Y|X}(\cdot|0^-))(\theta'')}{\hat{\mu}^{(1)}_{2,2}(0^+,0)-\hat{\mu}^{(1)}_{2,2}(0^-,0)},
\end{align*}
where $\widehat \phi(\hat{F}^{(1)}_{Y|X}(\cdot|0^\pm))(\theta''):=F^{(1)}_{Y|X}(\hat{Q}_{Y|X}(\cdot|0)|0^\pm)/\hat{f}_{Y|X}(\hat{Q}_{Y|X}(\cdot|0)|0)$.
We further define
\begin{align*}
&\widehat{\phi}'_{F^{(1)}_{Y|X}(\cdot|0^\pm)}(\hat{\nu}^\pm_{\xi,n}(\cdot,1)/\sqrt{c^3_1})(\cdot)
\\
& \qquad\qquad =-\sum_{i=1}^{n}\xi_i\frac{e'_1(\Gamma^\pm_2)^{-1}[\mathds{1}\{Y_i\le \hat{Q}_{Y|X}(\cdot|0)\}-\tilde{F}_{Y|X}(\hat{Q}_{Y|X}(\cdot|0)|X_i)]r_2(\frac{X_i}{h_{1,n}})K(\frac{X_i}{h_{1,n}})\delta^\pm_i}{\sqrt{c^3_1 nh_{1,n}}\hat{f}_X(0)\hat{f}_{Y|X}(\hat{Q}_{Y|X}(\cdot|0)|0)},\\
&\widehat{\psi}'_{\mu^{(1)}_{2}(0^\pm,0)}((\cdot,2)/\sqrt{c^3_2})(\cdot)
=\sum_{i=1}^{n}\xi_i\frac{e'_1(\Gamma^\pm_2)^{-1}[D_i-\tilde{\mu}_{2}(X_i,0)]r_2(\frac{X_i}{h_{2,n}})K(\frac{X_i}{h_{2,n}})\delta^\pm_i}{\sqrt{c^3_2 nh_{2,n}}\hat{f}_X(0)},\\
&\widehat{\mathds{X}}'_n(\cdot,1)=\widehat{\phi}'_{F_{Y|X}(\cdot|0^+)}( \hat{\nu}^+_{\xi,n}(\cdot,1)/\sqrt{c^3_1})(\cdot)-\widehat{\phi}'_{F_{Y|X}(\cdot|0^-)}( \hat{\nu}^-_{\xi,n}(\cdot,1)/\sqrt{c^3_1})(\cdot), \qquad\text{and}\\
&\widehat{\mathds{X}}'_n(\cdot,2)=\widehat{\psi}'_{\mu^{(1)}_{2}(0^+,0)}( \hat{\nu}^+_{\xi,n}(\cdot,2)/\sqrt{c^3_2})(\cdot)-\widehat{\psi}'_{\mu^{(1)}_{2}(0^-,0)}( \hat{\nu}^-_{\xi,n}(\cdot,2)/\sqrt{c^3_2})(\cdot),
\end{align*}
where $\tilde{F}_{Y|X}(y|x)=\tilde{\mu}_{1,2}(x,y)\mathds{1}\{|x/h_{1,n}|\le 1\}$.
For this application, we consider the following set of assumptions.

\newtheorem*{assumption_FQRK}{Assumption FQRK}
\begin{assumption_FQRK}\label{a:FQRK}
\qquad\\
%
(i)(a) $\frac{\partial^j}{\partial x^j}F_{Y|X}$ is Lipschitz on $\mathscr{Y}_1\times [\underline{x},0)$ and $\mathscr{Y}_1\times(0,\overline{x}]$ in $x$ for $j=0,1,2,3$. $\frac{\partial^j}{\partial x^j}E[D|X=\cdot]$ is Lipschitz continuous on $ [\underline{x},0)$ and $(0,\overline{x}]$ in $x$ for $j=0,1,2,3$.  (b) $f_{Y|X}$ is Lipschitz in $x$ and $0<C<f_{Y|X}(y|x)<C'<\infty$ on $\mathscr{Y}_1\times ([\underline{x},\overline{x}]\setminus \{0\})$. \\
(ii) The baseline bandwidth $h_n$ satisfies $h_n\to 0$, $nh^7_n\to 0$, and $nh^3_n\to \infty$, and there exist constants $c_1$, $c_2\in(0,\infty)$ such that $h_{k,n}=c_kh_n$. \\
%
%
%
(iii) $E[D|X=0^+]\ne E[D|X=0^-]$.\\
(iv) $\sup_{y\in \mathscr{Y}_1}|\sqrt{nh^3_n}[\hat{f}_{Y|X}(y|0)-f_{Y|X}(y|0)]|\underset{x}{\overset{p}{\to}}0$
\end{assumption_FQRK}

Our general result applied to the current case yields the following corollary.

\begin{corollary}[Example: Fuzzy Quantile RKD]\label{corollary:FQRK}
Suppose that Assumptions S, K, M, and FQRK hold.
\\
(i) There exists a zero mean Gaussian process $\mathds{G}'_{FQRK}:\Omega^x \mapsto \ell^\infty([a,1-a])$ such that
$$
\sqrt{nh^3_n}[\hat{\tau}_{FQRK}-\tau_{FQRK}]\leadsto \mathds{G}'_{FQRK}
$$
(ii) Furthermore, with probability approaching one,
$$
\frac{[\hat{\mu}^{(1)}_{2,2}(0^+,0)-\hat{\mu}^{(1)}_{2,2}(0^-,0)]\widehat{\mathds{X}}'_n(\cdot,1)-[\widehat \phi(\hat{F}^{(1)}_{Y|X}(\cdot|0^+))(\cdot)-\widehat \phi(\hat{F}^{(1)}_{Y|X}(\cdot|0^-))(\cdot)]
\widehat{\mathds{X}}'_n(\cdot,2)}{[\hat{\mu}^{(1)}_{2,2}(0^+,0)-\hat{\mu}^{(1)}_{2,2}(0^-,0)]^2}
\underset{\xi}{\overset{p}{\leadsto}} \mathds{G}'_{FQRK}(\cdot)
$$
\end{corollary}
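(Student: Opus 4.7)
The plan is to cast the fuzzy quantile RKD as a special case of the general framework of Section \ref{sec:general_framework} with $v=1$ and $p=2$, verify the high-level Assumptions \ref{a:BR}--\ref{a:first_stage}, and then invoke Theorems \ref{theorem:weak_conv} and \ref{theorem:cond_weak_conv} directly. The rate $\sqrt{nh_n^3}$ and the bias correction degree match because $nh^{1+2v}_n=nh^3_n\to\infty$ and $nh^{2p+3}_n=nh^7_n\to 0$ are both guaranteed by Assumption FQRK(ii).

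First I would verify Assumption \ref{a:BR}. Part (i) is immediate from Assumption S. For (ii)(a), the family $\{y\mapsto \mathbbm{1}\{y\le \theta_1\}:\theta_1\in\mathscr Y_1\}$ is of bounded variation (envelope $1$) and hence VC type by Lemma \ref{lemma:VC type}; the family $\{d\mapsto d\}$ is a singleton, trivially VC type with envelope $1$; smoothness of $\{x\mapsto\mu_k(x,\theta_k)\}$ in $\theta_k$ in the VC sense follows from the Lipschitzness in Assumption FQRK(i)(a). Condition (ii)(b) is FQRK(i)(a) with $j=0,1,2,3$. For (ii)(c), the conditional covariances reduce to $\sigma_{11}((y,\tilde y)|x)=F_{Y|X}(y\wedge\tilde y|x)-F_{Y|X}(y|x)F_{Y|X}(\tilde y|x)$, $\sigma_{22}(0,0|x)=E[D|X=x](1-E[D|X=x])$, and an analogous cross term, all of which are $\mathcal C^1$ in $x$ away from $0$ with finite one-sided limits by FQRK(i)(a). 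Part (ii)(d) holds because $\theta_1\mapsto\mathbbm{1}\{y\le \theta_1\}$ is right-continuous. Part (iii) is FQRK(ii), and (iv) is Assumption K.

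Next I would verify Assumption \ref{a:cond_weak_conv}. Since $\psi$ is the identity and $\Upsilon$ is a coordinate projection, both are trivially Hadamard differentiable with derivative equal to themselves. For $\phi$, note that once $F_{Y|X}(\cdot|0)$ and $f_{Y|X}(\cdot|0)$ are embedded into its definition (as emphasized right after Section \ref{sec:ex:fuzzy_quantile_rkd}), $\phi$ becomes a linear functional on its input $g\in\ell^\infty(\mathscr Y_1)$: $\phi(g)(\theta'')=-g(Q_{Y|X}(\theta''|0))/f_{Y|X}(Q_{Y|X}(\theta''|0)|0)$. Linearity and boundedness of $f_{Y|X}$ away from $0$ (FQRK(i)(b)) yield Hadamard differentiability tangentially to $C(\mathscr Y_1)$, with $\phi'_{F^{(1)}_{Y|X}(\cdot|0^\pm)}=\phi$. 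Condition (ii) of Assumption \ref{a:cond_weak_conv} is the nondegeneracy condition FQRK(iii), and (iii) matches FQRK(ii). Finally, Assumption \ref{a:multiplier} is precisely Assumption M, and Assumption \ref{a:first_stage} holds for $\tilde\mu_{k,2}$ taken from Lemma \ref{lemma:epsilon_unif_const} together with any kernel density estimator $\hat f_X(0)$ under Assumption K. An application of Theorem \ref{theorem:weak_conv} then delivers part (i) with $\mathds G'_{FQRK}$ obtained by pushing the Gaussian processes $\mathds G_{H^\pm}$ through the Hadamard derivative of the Wald ratio.

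For part (ii), Theorem \ref{theorem:cond_weak_conv} gives conditional weak convergence of the EMP-driven process
$$\frac{[\psi(\mu_2^{(1)}(0^+,0))-\psi(\mu_2^{(1)}(0^-,0))]\mathds{X}'_n(\cdot,1)-[\phi(F^{(1)}_{Y|X}(\cdot|0^+))-\phi(F^{(1)}_{Y|X}(\cdot|0^-))]\mathds X'_n(\cdot,2)}{[\mu_2^{(1)}(0^+,0)-\mu_2^{(1)}(0^-,0)]^2}$$
to $\mathds G'_{FQRK}$, where $\mathds{X}'_n$ uses the unknown Hadamard derivatives. The main obstacle is thus to show that replacing $\mathds{X}'_n$ by its feasible version $\widehat{\mathds X}'_n$ (which plugs $\hat Q_{Y|X}(\theta''|0)$ for $Q_{Y|X}(\theta''|0)$ and $\hat f_{Y|X}$ for $f_{Y|X}$ into $\phi'$), as well as replacing the coefficients in the Hadamard derivative of the Wald ratio by their estimates, introduces only uniformly $o^{x\times\xi}_p(1)$ perturbation. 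I would control this by: (a) using Corollary \ref{corollary:SQRD} (combined with Assumption FQRK(i)(b) and a delta method) to get $\sup_{\theta''}|\hat Q_{Y|X}(\theta''|0)-Q_{Y|X}(\theta''|0)|=o_p^x(1)$; (b) using Assumption FQRK(iv) for uniform consistency of $\hat f_{Y|X}$; (c) establishing stochastic equicontinuity of $\hat\nu^\pm_{\xi,n}(y,1)$ in $y\in\mathscr Y_1$ via the same VC-type/manageability arguments that underlie Theorem \ref{theorem:cond_weak_conv}, so that evaluation at $\hat Q_{Y|X}(\theta''|0)$ agrees with evaluation at $Q_{Y|X}(\theta''|0)$ up to $o^{x\times\xi}_p(1)$; and (d) applying Corollary \ref{corollary:mu_unif_consist} to the coefficients of the Hadamard derivative of the Wald ratio. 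Lemma \ref{lemma:cond_weak_conv_and in prob} then transfers the conditional weak convergence from the infeasible process to $\widehat{\mathds X}'_n$-based one, completing the argument.
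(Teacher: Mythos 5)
Your overall strategy---verify Assumptions \ref{a:BR}--\ref{a:first_stage}, invoke Theorems \ref{theorem:weak_conv} and \ref{theorem:cond_weak_conv}, and then control the plug-in errors---is the same route the paper takes (it proves this corollary by adapting the argument for Corollary \ref{corollary:SQRK}). However, there is a genuine gap in your treatment of part (i). Theorem \ref{theorem:weak_conv} delivers the weak convergence of the \emph{intermediate} estimator, which applies the true operator $\phi$ (containing the true $Q_{Y|X}(\cdot|0)$ and $f_{Y|X}(\cdot|0)$) to $\hat F^{(1)}_{Y|X}(\cdot|0^\pm)$. But the object in part (i) is the \emph{feasible} estimator $\hat\tau_{FQRK}$, built from $\widehat\phi$, which evaluates $\hat F^{(1)}_{Y|X}$ at $\hat Q_{Y|X}(\theta''|0)$ and divides by $\hat f_{Y|X}(\hat Q_{Y|X}(\theta''|0)|0)$. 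You must additionally show $\sqrt{nh_n^3}\,\Vert\hat\tau_{FQRK}-\tilde\tau_{FQRK}\Vert_{\Theta''}=o_p^x(1)$, and this is the main technical content of the paper's SQRK proof (the $(1)+(2)+(3)$ decomposition of $\sqrt{nh_n^3}[\hat F^{(1)}_{Y|X}(\hat Q_{Y|X}(\theta''|0)|0^\pm)-\hat F^{(1)}_{Y|X}(Q_{Y|X}(\theta''|0)|0^\pm)]$). Your sketch never performs this step for part (i); you only discuss plug-in errors for the bootstrap process in part (ii), where the weaker requirement of ordinary uniform consistency suffices because the EMP is not rescaled.

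Two rate issues make this step nontrivial and are glossed over in your items (a) and (b). First, mere consistency $\sup_{\theta''}|\hat Q_{Y|X}(\theta''|0)-Q_{Y|X}(\theta''|0)|=o_p^x(1)$ is not enough: the Lipschitz term $\sqrt{nh_n^3}\,[F^{(1)}_{Y|X}(\hat Q_{Y|X}(\theta''|0)|0^\pm)-F^{(1)}_{Y|X}(Q_{Y|X}(\theta''|0)|0^\pm)]$ is only negligible because $\hat Q_{Y|X}(\cdot|0)$ converges at the faster rate $O_p^x(1/\sqrt{nh_n})$ (from the level estimation of the CDF), so that the product is $O_p^x(h_n)=o_p^x(1)$. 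Second, Assumption FQRK(iv) is a \emph{super-consistency} condition, $\sup_y\sqrt{nh_n^3}\,|\hat f_{Y|X}(y|0)-f_{Y|X}(y|0)|=o_p^x(1)$, not "uniform consistency" as you describe it; the stronger rate is exactly what is needed so that replacing $f_{Y|X}$ by $\hat f_{Y|X}$ in the denominator of $\widehat\phi$ does not perturb the $\sqrt{nh_n^3}$-scaled estimator. The remaining pieces---asymptotic $\rho$-equicontinuity of $\nu_n^\pm$ and $\hat\nu^\pm_{\xi,n}$ via Lemma \ref{lemma:asymptotic_rho-conti_QRK}, and Corollary \ref{corollary:mu_unif_consist} for the Wald-ratio coefficients---are correctly identified in your part (ii), and once the part (i) feasibility step is supplied along the lines above, the argument goes through.
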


This corollary can be proved similarly to Corollary \ref{corollary:SQRK}.
One of the practically most relevant applications of this corollary is the test of the null hypothesis of uniform treatment nullity:
$$
H_0: \tau_{FQRK}(\theta'') = 0 \quad\text{for all } \theta'' \in \Theta'' = [a,1-a].
$$
To test this hypothesis, we can use $\sup_{\theta'' \in [a,1-a]} \sqrt{nh_n^3} \abs{ \hat\tau_{FQRK}(\theta'') }$ as the test statistic, and use
$$
\sup_{\theta'' \in [a,1-a]} \abs{\frac{[\hat{\mu}^{(1)}_{2,2}(0^+,0)-\hat{\mu}^{(1)}_{2,2}(0^-,0)]\widehat{\mathds{X}}'_n(\theta'',1)-[\widehat \phi(\hat{F}^{(1)}_{Y|X}(\cdot|0^+))(\cdot)-\widehat \phi(\hat{F}^{(1)}_{Y|X}(\cdot|0^-))(\cdot)]
\widehat{\mathds{X}}'_n(\theta'',2)}{[\hat{\mu}^{(1)}_{2,2}(0^+,0)-\hat{\mu}^{(1)}_{2,2}(0^-,0)]^2}}
$$
to simulate its asymptotic distribution.

Another one of the practically most relevant applications of the above corollary is the test of the null hypothesis of treatment homogeneity across quantiles:
$$
H_0: \tau_{SQRK}(\theta'') = \tau_{SQRK}(\theta''') \quad\text{for all } \theta'', \theta''' \in \Theta'' = [a,1-a].
$$
To test this hypothesis, we can use $\sup_{\theta'' \in [a,1-a]} \sqrt{nh_n^3} \abs{ \hat\tau_{SQRK}(\theta'') - (1-2a)^{-1} \int_{[a,1-a]} \hat\tau_{SQRK}(\theta''')d\theta''' }$ as the test statistic, and use
\begin{align*}
&\sup_{\theta'' \in [a,1-a]}
\left\vert \frac{[\hat{\mu}^{(1)}_{2,2}(0^+,0)-\hat{\mu}^{(1)}_{2,2}(0^-,0)]\widehat{\mathds{X}}'_n(\theta'',1)-[\widehat \phi(\hat{F}^{(1)}_{Y|X}(\cdot|0^+))(\cdot)-\widehat \phi(\hat{F}^{(1)}_{Y|X}(\cdot|0^-))(\cdot)]
\widehat{\mathds{X}}'_n(\theta'',2)}{[\hat{\mu}^{(1)}_{2,2}(0^+,0)-\hat{\mu}^{(1)}_{2,2}(0^-,0)]^2} \right.
\\
&\left. -
\frac{1}{1-2a} \int_{[a,1-a]} \frac{[\hat{\mu}^{(1)}_{2,2}(0^+,0)-\hat{\mu}^{(1)}_{2,2}(0^-,0)]\widehat{\mathds{X}}'_n(\theta''',1)-[\widehat \phi(\hat{F}^{(1)}_{Y|X}(\cdot|0^+))(\cdot)-\widehat \phi(\hat{F}^{(1)}_{Y|X}(\cdot|0^-))(\cdot)]
\widehat{\mathds{X}}'_n(\theta''',2)}{[\hat{\mu}^{(1)}_{2,2}(0^+,0)-\hat{\mu}^{(1)}_{2,2}(0^-,0)]^2} d\theta'''
\right\vert
\end{align*}
to simulate its asymptotic distribution.

\subsection{Example: Sharp Quantile RKD}\label{sec:a:SQRK}

Consider $\Theta_1$, $\Theta_2$, $\Theta_1'$, $\Theta_2'$, $\Theta''$, $g_1$, $g_2$, $\phi$, $\psi$, and $\Upsilon$ defined in Section \ref{sec:ex:sharp_quantile_rkd}.
Recall that the operator $\phi$ is defined by
\begin{equation}\label{eq:operator_phi_qrkd}
\phi( F^{(1)}_{Y|X}(\cdot|x))(\theta'')=-\frac{F^{(1)}_{Y|X}(Q_{Y|X}(\theta''|x)|x)}{f_{Y|X}(Q_{Y|X}(\theta''|x)|x)}
\end{equation}
where  $F^{(1)}_{Y|X}(y|x)=\frac{\partial}{\partial x}F(y|x)$ and $Q_{Y|X}(\theta'|x)=\inf\{\theta\in \Theta_1:F_{Y|X}(\theta|x)\ge \theta'\}$.
Also recall that the local Wald estimand (\ref{eq:wald_est}) with $v=1$ in this setting is denoted by $\tau_{SQRK}$.
We denote the analog `intermediate' estimator (\ref{eq:wald_estimator}) with $v=1$ in this setting by
\begin{align*}
\tilde{\tau}_{SQRK}(\theta'')=&\frac{\phi(\hat{F}^{(1)}_{Y|X}(\cdot|0^+))(\theta'')-\phi(\hat{F}^{(1)}_{Y|X}(\cdot|0^-))(\theta'')}{b^{(1)}(0^+)-b^{(1)}(0^-)}
\end{align*}
for $\theta''\in [a,1-a]\subset (0,1)$. Note that $\hat{F}^{(1)}_{Y|X}$ denotes for the slope estimate from a local polynomial CDF estimation.
This is intermediate because the operator $\phi$ contains unknowns, $f_{Y|X}(\cdot|x)$ and $Q_{Y|X}(\cdot|x)$.
In practice, we need to also estimate this operator $\phi$ by replacing these unknowns by uniformly consistent estimators.
Thus, a feasible analog estimator is denoted by
\begin{align*}
\hat{\tau}_{SQRK}(\theta''):=&\frac{\widehat\phi(\hat{F}^{(1)}_{Y|X}(\cdot|0^+))(\theta'')-\widehat\phi(\hat{F}^{(1)}_{Y|X}(\cdot|0^-))(\theta'')}{b^{(1)}(0^+)-b^{(1)}(0^-)}
\end{align*}
where
\begin{align*}
\widehat \phi(\hat F^{(1)}_{Y|X}(\cdot|0^\pm))(\theta'')=-\frac{\hat F^{(1)}_{Y|X}(\hat Q_{Y|X}(\theta''|0)|0^\pm)}{\hat{f}_{Y|X}(\hat Q_{Y|X}(\theta''|0)|0)}
\end{align*}
for $\hat Q_{Y|X}(\theta''|x)=\inf\{\theta \in \Theta_1: \hat F_{Y|x}(\theta|x)\ge \theta''\}$ and
 $
\widehat \phi(F^{(1)}_{Y|X}(\cdot|x))(\theta''):=-\frac{F^{(1)}_{Y|X}(\hat Q_{Y|X}(\theta''|x)|x)}{\hat{f}_{Y|X}(\hat Q_{Y|X}(\theta''|x)|x)}
$.
For this application, we consider the following set of assumptions.

\newtheorem*{assumption_SQRK}{Assumption SQRK}
\begin{assumption_SQRK}\label{a:SQRK}
\qquad\\
%
(i) (a) $\frac{\partial^j}{\partial x^j}F_{Y|X}$ is Lipschitz in $x$ on $\mathscr{Y}_1\times [\underline{x},0)$ and $\mathscr{Y}_1\times (0,\overline{x}]$ for $j=0,1,2,3$.
(b) $f_{Y|X}$ is Lipschitz in $x$ and $0<C<f_{Y|X}(y|x)<C'<\infty$ on $\mathscr{Y}_1\times [\underline{x},\overline{x}]$. \\
(ii) $h_n$ satisfies $h_n\to 0$, $nh^7_n\to 0$, and $nh^3_n\to \infty$.\\
%
%
%
(iii) $b^{(1)}$ is continuous on $[\underline{x},\overline{x}] \backslash \{0\}$ and $\lim_{x \downarrow 0}b^{(1)}(x)\ne\lim_{x \uparrow 0}b^{(1)}(x)$.\\
(iv) $\sup_{y\in \mathscr{Y}_1}|\sqrt{nh^3_n}[\hat{f}_{Y|X}(y|0)-f_{Y|X}(y|0)]|\underset{x}{\overset{p}{\to}}0$
\end{assumption_SQRK}

We state part (iv) of this assumption at this high-level in order to accommodate a number of alternative estimators.
Above Lemma \ref{lemma:super_consist_SQRK} in Section \ref{sec:lemma:unif_cons_est}, we propose one particular such estimator which satisfies (iv).

We defined the operator $\phi$ for $\tau_{SQRK}$ with $\phi$ in (\ref{eq:operator_phi_qrkd}) without a formal justification.
Now that Assumption SQRK is stated, we can now provide the following lemma for a justification.

\begin{lemma}\label{lemma:quantile_deriv}
Suppose that Assumptions S and SQRK (i) hold.
Then, we have
\begin{align*}
\frac{\partial }{\partial x}Q_{Y|X}(\theta''|0^\pm)=-\frac{ F^{(1)}_{Y|X}(Q_{Y|X}(\theta''|0)|0^\pm)}{f_{Y|X}(Q_{Y|X}(\theta''|0)|0)}=\phi(F^{(1)}_{Y|X}(\cdot|0^\pm)).
\end{align*}
\end{lemma}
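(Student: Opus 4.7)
The plan is to derive the formula on each side $x > 0$ and $x < 0$ separately by differentiating the defining identity of the conditional quantile function, and then pass to the one-sided limit $x \to 0^\pm$.

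First, I would observe that under Assumption SQRK (i)(b), $f_{Y|X}(y|x)$ is strictly positive and continuous in $y$ on $\mathscr{Y}_1$ for each fixed $x$ in a one-sided neighborhood of $0$, so $y \mapsto F_{Y|X}(y|x)$ is strictly increasing and continuous there. Hence the conditional quantile $Q_{Y|X}(\theta''|x)$ is the unique solution of
\begin{equation*}
F_{Y|X}\bigl(Q_{Y|X}(\theta''|x)\,\big|\,x\bigr)=\theta''.
\end{equation*}
For $x \in (0,\overline{x}]$, Assumption SQRK (i)(a) with $j=0,1$ gives joint continuity of $F_{Y|X}$ and $F^{(1)}_{Y|X}$ in $(y,x)$ on $\mathscr{Y}_1 \times (0,\overline{x}]$, while the lower bound on $f_{Y|X}$ makes the partial derivative in $y$ nonzero. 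The implicit function theorem therefore applies on $(0,\overline{x}]$, yielding differentiability of $x \mapsto Q_{Y|X}(\theta''|x)$ with
\begin{equation*}
\frac{\partial}{\partial x}Q_{Y|X}(\theta''|x)= -\,\frac{F^{(1)}_{Y|X}\bigl(Q_{Y|X}(\theta''|x)\,\big|\,x\bigr)}{f_{Y|X}\bigl(Q_{Y|X}(\theta''|x)\,\big|\,x\bigr)},\qquad x\in(0,\overline{x}].
\end{equation*}
The same argument on $[\underline{x},0)$ gives the analogous formula there.

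Next, I would take the limit $x \downarrow 0$ (respectively $x \uparrow 0$) on both sides. Continuity of $Q_{Y|X}(\theta''|\cdot)$ from the right at $0$ follows from the uniform continuity of $F_{Y|X}$ and the uniform positivity of $f_{Y|X}$ on $\mathscr{Y}_1 \times (0,\overline{x}]$: if $x_n \downarrow 0$ then $F_{Y|X}(y|x_n) \to F_{Y|X}(y|0^+)$ uniformly in $y$, so $Q_{Y|X}(\theta''|x_n) \to Q_{Y|X}(\theta''|0^+) = Q_{Y|X}(\theta''|0)$ (using that the limiting CDF is also strictly increasing). Combining this with the assumed one-sided continuity of $F^{(1)}_{Y|X}$ and $f_{Y|X}$ in $x$ at $0$ from Assumption SQRK (i) produces
\begin{equation*}
\frac{\partial}{\partial x}Q_{Y|X}(\theta''|0^+) = -\,\frac{F^{(1)}_{Y|X}\bigl(Q_{Y|X}(\theta''|0)\,\big|\,0^+\bigr)}{f_{Y|X}\bigl(Q_{Y|X}(\theta''|0)\,\big|\,0\bigr)},
\end{equation*}
which is the right-hand side of the claimed identity, and coincides with $\phi(F^{(1)}_{Y|X}(\cdot|0^+))(\theta'')$ by the definition of $\phi$ in (\ref{eq:operator_phi_qrkd}). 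The $0^-$ case is identical by symmetry.

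The only delicate point, and the step I would be most careful about, is justifying the interchange of the partial-derivative operator with the one-sided limit at $x=0$: we must make sure we are not implicitly claiming that $Q_{Y|X}(\theta''|\cdot)$ is differentiable at $0$ (it generally is not, as the whole point of RKD is that there is a kink), but only that the one-sided derivative equals the one-sided limit of the derivative. This is handled by the continuous differentiability of $F_{Y|X}$ separately on each side up to the boundary $x=0$, which is exactly what Assumption SQRK (i) supplies; no differentiability across $0$ is invoked.
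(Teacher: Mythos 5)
Your argument is correct, but it is not the route the paper takes. The paper's proof works entirely at the level of the function-space operator: it writes $Q_{Y|X}(\theta''|\delta)=\Phi\bigl(F_{Y|X}(\cdot|\delta)\bigr)(\theta'')$ for the left-inverse map $\Phi$, expands $F_{Y|X}(\cdot|\delta)=F_{Y|X}(\cdot|0^+)+\delta\,\frac{\partial}{\partial x}F_{Y|X}(\cdot|x^*)$ by a mean value expansion in $x$, and then identifies the limit of the difference quotient as the Hadamard derivative $\Phi'_{F_{Y|X}(\cdot|0)}\bigl(\frac{\partial}{\partial x}F_{Y|X}(\cdot|0^+)\bigr)(\theta'')$, which Lemma 3.9.23 (i) of van der Vaart and Wellner (1996) evaluates as $-F^{(1)}_{Y|X}(Q_{Y|X}(\theta''|0)|0^+)/f_{Y|X}(Q_{Y|X}(\theta''|0)|0)$. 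This computes the one-sided derivative at $x=0$ in a single step, and it recycles exactly the Hadamard-differentiability machinery the paper already relies on for its weak-convergence results. Your implicit-function-theorem argument is more elementary and self-contained, but it buys that at the cost of two extra steps the paper's route avoids: you must establish continuity of $x\mapsto Q_{Y|X}(\theta''|x)$ up to the boundary, and you must convert the one-sided limit of the interior derivative into the one-sided derivative at $0$ itself. You correctly flag the second issue as the delicate point; to close it cleanly you should invoke the mean value theorem (if $g$ is continuous on $[0,a]$, differentiable on $(0,a)$, and $g'(x)\to L$ as $x\downarrow 0$, then the right derivative of $g$ at $0$ exists and equals $L$) rather than appealing only to one-sided smoothness of $F_{Y|X}$. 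With that lemma made explicit, your proof is complete and valid under Assumptions S and SQRK (i); both arguments also share the implicit premise that $F_{Y|X}(\cdot|0^+)=F_{Y|X}(\cdot|0^-)=F_{Y|X}(\cdot|0)$, i.e.\ that only the $x$-derivative, not the level, of the conditional CDF is kinked at the cutoff.
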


To state the result of this subsection, define the following objects.
\begin{align*}
&\phi'_{F_{Y|X}(\cdot|0^\pm)}(\hat{\nu}^\pm_{\xi,n})(\theta'')=-\frac{\hat{\nu}^\pm_{\xi,n}(Q_{Y|X}(\theta''|0))}{f_{Y|X}(Q_{Y|X}(\theta''|0)|0)}\\
& \qquad  =-\sum_{i=1}^{n}\xi_i\frac{e'_1(\Gamma^\pm_2)^{-1}[\mathds{1}\{Y_i\le Q_{Y|X}(\theta''|0)\}-\tilde{F}_{Y|X}(Q_{Y|X}(\theta''|0)|X_i)]r_2(\frac{X_i}{h_n})K(\frac{X_i}{h_n})\delta^\pm_i}{\sqrt{nh_{n}}\hat{f}_X(0)f_{Y|X}( Q_{Y|X}(\theta''|0)|0)}\\
&\widehat{\phi}'_{F_{Y|X}(\cdot|0^\pm)}(\hat{\nu}^\pm_{\xi,n})(\theta'')=-\frac{\hat{\nu}^\pm_{\xi,n}(\hat{Q}_{Y|X}(\theta''|0))}{\hat{f}_{Y|X}(\hat{Q}_{Y|X}(\theta''|0)|0)}\\
& \qquad =-\sum_{i=1}^{n}\xi_i\frac{e'_1(\Gamma^\pm_2)^{-1}[\mathds{1}\{Y_i\le \hat{Q}_{Y|X}(\theta''|0)\}-\tilde{F}_{Y|X}(\hat{Q}_{Y|X}(\theta''|0)|X_i)]r_2(\frac{X_i}{h_n})K(\frac{X_i}{h_n})\delta^\pm_i}{\sqrt{nh_{n}}\hat{f}_X(0)\hat{f}_{Y|X}(\hat Q_{Y|X}(\theta''|0)|0)}
\end{align*}
where $\tilde{F}(y|x)=\tilde{\mu}_{1,2}(x,y)\mathds{1}\{|x/h_n|\le 1\}$ with $\tilde{\mu}_{1,2}$ defined in the statement of Lemma \ref{lemma:epsilon_unif_const}.
Our general result applied to the current case yields the following corollary.

\begin{corollary}[Example: Sharp Quantile RKD]\label{corollary:SQRK}
Suppose that Assumptions S, K, M, and SQRK hold.
\\
(i) There exists a zero mean Gaussian process $\mathds{G}'_{SQRK}:\Omega^x \mapsto\ell^\infty([a,1-a])$ such that
$$
\sqrt{nh^3_n}[\tilde{\tau}_{SQRK}-\tau_{SQRK}]\leadsto \mathds{G}'_{SQRK},
$$
and thus
$$
\sqrt{nh^3_n}[\hat{\tau}_{SQRK}-\tau_{SQRK}]\leadsto \mathds{G}'_{SQRK}.
$$
(ii) Furthermore, with probability approaching one,
\begin{align*}
\frac{\phi'_{F_{Y|X}(\cdot|0^+)}( \hat{\nu}^+_{\xi,n})-\phi'_{F_{Y|X}(\cdot|0^-)}( \hat{\nu}^-_{\xi,n})}{b^{(1)}(0^+)-b^{(1)}(0^-)}\underset{\xi}{\overset{p}{\leadsto}}\mathds{G}'_{SQRK},
\end{align*}
and thus
\begin{align*}
\frac{\widehat{\phi}'_{F_{Y|X}(\cdot|0^+)}( \hat{\nu}^+_{\xi,n})-\widehat{\phi}'_{F_{Y|X}(\cdot|0^-)}( \hat{\nu}^-_{\xi,n})}{b^{(1)}(0^+)-b^{(1)}(0^-)}\underset{\xi}{\overset{p}{\leadsto}}\mathds{G}'_{SQRK}.
\end{align*}
\end{corollary}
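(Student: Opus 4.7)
The plan is to apply Theorems \ref{theorem:weak_conv} and \ref{theorem:cond_weak_conv} to obtain the weak convergence of the intermediate estimator $\tilde{\tau}_{SQRK}$ and of the associated multiplier process, and then to show that replacing the unknown ingredients ($f_{Y|X}(\cdot|0)$ and $Q_{Y|X}(\cdot|0)$) in $\phi$ by their plug-in estimators is asymptotically negligible at the rate $\sqrt{nh_n^3}$. The structure parallels the fuzzy quantile RDD argument in Section \ref{sec:a:FQRD}, but with $v=1$ and the added simplification that the denominator is the known deterministic quantity $b^{(1)}(0^+)-b^{(1)}(0^-)$.

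First, I would verify that Assumption \ref{a:BR} and Assumption \ref{a:cond_weak_conv} are implied by Assumptions S, K, M, and SQRK when $p=2$, $v=1$. The VC-type property of $\{y \mapsto \mathds{1}\{y \le y'\}:y' \in \mathscr{Y}_1\}$ follows because indicators of half-lines have variation bounded by one; the envelope condition is trivial because these indicators are bounded by $1$. The Lipschitz properties of $\mu_1^{(j)}$ with respect to $x$ follow from SQRK(i)(a), and the smoothness of conditional covariances follows from the smoothness of $F_{Y|X}$ together with its boundedness. Bandwidth conditions in \ref{a:BR}(iii) and \ref{a:cond_weak_conv}(iii) (with $v=1$, requiring $nh_n^3 \to \infty$) follow directly from SQRK(ii). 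For \ref{a:cond_weak_conv}(i), the identity operator $\psi$ is trivially Hadamard differentiable, and $\Upsilon$ is the identity on $\theta''$. The operator $\phi$ defined in (\ref{eq:operator_phi_qrkd}) is Hadamard differentiable at $F^{(1)}_{Y|X}(\cdot|0^\pm)$ tangentially to $C(\mathscr{Y}_1)$ under SQRK(i)(b) and Lemma \ref{lemma:quantile_deriv}; the derivative is the linear map sending $g$ to $-g(Q_{Y|X}(\cdot|0))/f_{Y|X}(Q_{Y|X}(\cdot|0)|0)$. Finally, \ref{a:cond_weak_conv}(ii) reduces to SQRK(iii).

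Given these verifications, Theorem \ref{theorem:weak_conv} directly yields the weak convergence $\sqrt{nh_n^3}[\tilde{\tau}_{SQRK}-\tau_{SQRK}] \leadsto \mathds{G}'_{SQRK}$ where $\mathds{G}'_{SQRK}$ is the zero-mean Gaussian process obtained from applying $\phi'$ and the constant denominator division. To pass from $\tilde\tau_{SQRK}$ to the feasible $\hat\tau_{SQRK}$, I would write
\begin{align*}
\hat\tau_{SQRK}(\theta'') - \tilde\tau_{SQRK}(\theta'')
= \frac{[\widehat\phi-\phi](\hat F^{(1)}_{Y|X}(\cdot|0^+))(\theta'')-[\widehat\phi-\phi](\hat F^{(1)}_{Y|X}(\cdot|0^-))(\theta'')}{b^{(1)}(0^+)-b^{(1)}(0^-)}
\end{align*}
and then bound each numerator term by decomposing
\begin{align*}
[\widehat\phi-\phi](\hat F^{(1)}_{Y|X}(\cdot|0^\pm))(\theta'')
= -\frac{\hat F^{(1)}_{Y|X}(\hat Q_{Y|X}(\theta''|0)|0^\pm)}{\hat f_{Y|X}(\hat Q_{Y|X}(\theta''|0)|0)}
  +\frac{F^{(1)}_{Y|X}(Q_{Y|X}(\theta''|0)|0^\pm)}{f_{Y|X}(Q_{Y|X}(\theta''|0)|0)}.
\end{align*}
Adding and subtracting intermediate terms, SQRK(iv) gives uniform super-consistency of $\hat f_{Y|X}(\cdot|0)$ at rate $o_p((nh_n^3)^{-1/2})$, the uniform consistency of $\hat Q_{Y|X}(\cdot|0)$ at the same rate follows from the standard delta argument under SQRK(i)(b), and Corollary \ref{corollary:mu_unif_consist} together with the Lipschitz bound on $F_{Y|X}^{(1)}$ in $y$ (from SQRK(i)(a)) controls the derivative factor. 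Thus $\sqrt{nh_n^3}\sup_{\theta''}|\hat\tau_{SQRK}(\theta'')-\tilde\tau_{SQRK}(\theta'')|=o^x_p(1)$, and the first statement of (i) transfers to $\hat\tau_{SQRK}$ by Slutsky.

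For part (ii), Theorem \ref{theorem:cond_weak_conv} applied with $v=1$ and $p=2$ delivers $\hat\nu^\pm_{\xi,n}\underset{\xi}{\overset{p}{\leadsto}} \mathds{G}_{H^\pm}$ and therefore, via continuous mapping for the bootstrap and chain-rule application of the functional delta method for the bootstrap (as in the proof of Theorem \ref{theorem:weak_conv}), the process based on $\phi'$ converges conditionally to $\mathds{G}'_{SQRK}$. To pass from this to the feasible $\widehat\phi'$ version, I would invoke Lemma \ref{lemma:cond_weak_conv_and in prob}: it suffices to show that $\sup_{\theta''}|\widehat\phi'_{F_{Y|X}(\cdot|0^\pm)}(\hat\nu^\pm_{\xi,n})(\theta'') - \phi'_{F_{Y|X}(\cdot|0^\pm)}(\hat\nu^\pm_{\xi,n})(\theta'')|\underset{x\times\xi}{\overset{p}{\to}}0$. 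This again uses SQRK(iv), the uniform consistency of $\hat Q_{Y|X}(\cdot|0)$, and the asymptotic tightness of $\hat\nu^\pm_{\xi,n}$ (which makes $\hat\nu^\pm_{\xi,n}(\hat Q)-\hat\nu^\pm_{\xi,n}(Q)=o^{x\times\xi}_p(1)$ uniformly by stochastic equicontinuity). The main obstacle is precisely this last step: controlling the composition $\hat\nu^\pm_{\xi,n}\circ \hat Q_{Y|X}$ uniformly, which requires the stochastic equicontinuity of the bootstrapped process on the image of the estimated quantile map. This is handled by the manageability/VC argument used in the proof of Theorem \ref{theorem:cond_weak_conv}, and ensures that the displacement in the argument of $\hat\nu^\pm_{\xi,n}$ produces only an $o_p(1)$ change uniformly in $\theta''$.
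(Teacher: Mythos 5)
Your overall route is the same as the paper's: verify Assumptions \ref{a:BR} and \ref{a:cond_weak_conv} from S, K, M, SQRK; apply Theorem \ref{theorem:weak_conv} to $\tilde\tau_{SQRK}$; show the feasible/infeasible discrepancy is $o_p^x((nh_n^3)^{-1/2})$ uniformly; and for (ii) apply Theorem \ref{theorem:cond_weak_conv} followed by Lemma \ref{lemma:cond_weak_conv_and in prob}. Part (ii) of your argument is essentially the paper's, including the correct identification of the stochastic-equicontinuity step for $\hat\nu^\pm_{\xi,n}\circ\hat Q_{Y|X}$ as the crux (the paper formalizes the passage from $\Vert\hat Q-Q\Vert_\infty\to_p 0$ to convergence in the intrinsic semimetric via Lemma \ref{lemma:asymptotic_rho-conti_QRK}).

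There is, however, one step in your part (i) whose stated justification would fail. To control $\sqrt{nh_n^3}\bigl[\hat F^{(1)}_{Y|X}(\hat Q_{Y|X}(\theta''|0)|0^\pm)-\hat F^{(1)}_{Y|X}(Q_{Y|X}(\theta''|0)|0^\pm)\bigr]$ you invoke Corollary \ref{corollary:mu_unif_consist} (uniform consistency, with no rate) plus Lipschitz continuity of $F^{(1)}_{Y|X}$ in $y$. The Lipschitz bound only handles the deterministic piece $F^{(1)}_{Y|X}(\hat Q)-F^{(1)}_{Y|X}(Q)$; the remaining piece $[\hat F^{(1)}_{Y|X}-F^{(1)}_{Y|X}](\hat Q)-[\hat F^{(1)}_{Y|X}-F^{(1)}_{Y|X}](Q)$ consists of two terms that are each of exact order $(nh_n^3)^{-1/2}$, so a rate-free uniform consistency statement cannot show their difference is $o_p((nh_n^3)^{-1/2})$. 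The paper handles this by writing each term through the uniform Bahadur representation (Lemma \ref{lemma:BR}) as $\nu^\pm_n(\cdot)+o_p^x(1)$ and then using the asymptotic $\rho$-equicontinuity of $\nu^\pm_n$ (implied by its weak convergence in Theorem \ref{theorem:weak_conv}) together with Lemma \ref{lemma:asymptotic_rho-conti_QRK} to conclude $\nu^\pm_n(\hat Q_{Y|X}(\cdot|0))-\nu^\pm_n(Q_{Y|X}(\cdot|0))=o_p^x(1)$ uniformly. This is exactly the device you already use in part (ii); you need to deploy it in part (i) as well, replacing the appeal to Corollary \ref{corollary:mu_unif_consist}. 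With that substitution your argument matches the paper's proof.
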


A proof is provided in Section \ref{sec:corollary:SQRK}.
One of the practically most relevant applications of this corollary is the test of the null hypothesis of uniform treatment nullity:
$$
H_0: \tau_{SQRK}(\theta'') = 0 \quad\text{for all } \theta'' \in \Theta'' = [a,1-a].
$$
To test this hypothesis, we can use $\sup_{\theta'' \in [a,1-a]} \sqrt{nh_n^3} \abs{ \hat\tau_{SQRK}(\theta'') }$ as the test statistic, and use
$$
\sup_{\theta'' \in [a,1-a]} \abs{\frac{\widehat{\phi}'_{F_{Y|X}(\cdot|0^+)}( \hat{\nu}^+_{\xi,n})(\theta'')-\widehat{\phi}'_{F_{Y|X}(\cdot|0^-)}( \hat{\nu}^-_{\xi,n})(\theta'')}{b^{(1)}(0^+)-b^{(1)}(0^-)}}
$$
to simulate its asymptotic distribution.

Another one of the practically most relevant applications of the above corollary is the test of the null hypothesis of treatment homogeneity across quantiles:
$$
H_0: \tau_{SQRK}(\theta'') = \tau_{SQRK}(\theta''') \quad\text{for all } \theta'', \theta''' \in \Theta'' = [a,1-a].
$$
To test this hypothesis, we can use $\sup_{\theta'' \in [a,1-a]} \sqrt{nh_n^3} \abs{ \hat\tau_{SQRK}(\theta'') - (1-2a)^{-1} \int_{[a,1-a]} \hat\tau_{SQRK}(\theta''')d\theta''' }$ as the test statistic, and use
\begin{align*}
\sup_{\theta'' \in [a,1-a]}
&\left\vert \frac{\widehat{\phi}'_{F_{Y|X}(\cdot|0^+)}( \hat{\nu}^+_{\xi,n})(\theta'')-\widehat{\phi}'_{F_{Y|X}(\cdot|0^-)}( \hat{\nu}^-_{\xi,n})(\theta'')}{b^{(1)}(0^+)-b^{(1)}(0^-)} \right.
\\
&\left. -
\frac{1}{1-2a} \int_{[a,1-a]} \frac{\widehat{\phi}'_{F_{Y|X}(\cdot|0^+)}( \hat{\nu}^+_{\xi,n})(\theta''')-\widehat{\phi}'_{F_{Y|X}(\cdot|0^-)}( \hat{\nu}^-_{\xi,n})(\theta''')}{b^{(1)}(0^+)-b^{(1)}(0^-)} d\theta'''
\right\vert
\end{align*}
to simulate its asymptotic distribution.

\subsection{Example: Group Covariate and Test of Heterogeneous Treatment Effects}\label{sec:a:GFMRD}

Consider $\Theta_1$, $\Theta_2$, $\Theta_1'$, $\Theta_2'$, $\Theta''$, $g_1$, $g_2$, $\phi$, $\psi$, and $\Upsilon$ defined in Section \ref{sec:ex:group}.
Recall that we denote the local Wald estimand (\ref{eq:wald_est}) with $v=1$ in this setting by $\tau_{GFMRD}$.
We also denote the analog estimator (\ref{eq:wald_estimator}) with $v=1$ in this setting by $\hat\tau_{GFMRD}$.
\begin{align*}
\hat\tau_{GFMRD}(\theta'')=\frac{\hat{\mu}_{1,2}(0^+,\theta'')-\hat{\mu}_{1,2}(0^-,\theta'')}{\hat{\mu}_{2,2}(0^+,\theta'')-\hat{\mu}_{2,2}(0^-,\theta'')}
\end{align*}
For this application, we consider the following set of assumptions.

\newtheorem*{assumption_GFMRD}{Assumption GFMRD}
\begin{assumption_GFMRD}\label{a:GFMRD}
\qquad\\
%
%
(i) (a) $E[|Y|^{\ast 2+\epsilon} \cdot \mathbbm{1}\{G={\theta''}\} |X=\cdot]<\infty$ 
on $[\underline{x},\overline{x}] \backslash \{0\}$ for some $\epsilon>0$ for ${\theta''} \in \{1,...,K\}$.
(b) $\frac{\partial^j}{\partial x^j}E[Y^\ast  \cdot \mathbbm{1}\{G={\theta''}\} |X=\cdot]<C$ and $\frac{\partial^j}{\partial x^j}E[D^\ast  \cdot \mathbbm{1}\{G={\theta''}\} |X=\cdot]$ are Lipschitz on $[\underline{x},0)$ and $(0,\overline{x}]$ for $j=0,1,2,3$.
(c) $E[D^\ast  \cdot \mathbbm{1}\{G={\theta''}\} |X=0^+]\ne E[D^\ast  \cdot \mathbbm{1}\{G={\theta''}\} |X=0^-]$ for ${\theta''} \in \{1,...,K\}$. (d) $E[\mathds{1}\{G=\theta''\}|X=\cdot]>0$ for all $\theta''\in \Theta''$ \\
(ii) The baseline bandwidth $h_n$ satisfies $h_n\to 0$, $nh^2_n\to \infty$, $nh^7_n\to 0$.
There exist functions $c_1$, $c_2 :\{1,...,K\}\to[\underline{c},\overline{c}]\subset(0,\infty)$ such that $h_{1,n}({\theta''}_1)=c_1({\theta''}_1)h_n$ and $h_{2,n}({\theta''}_2)=c_2({\theta''}_2) h_n$.\\
%
%
(iii) $V(Y^\ast  \cdot \mathbbm{1}\{G={\theta''}\} |X=x)$, $V(D^\ast  \cdot \mathbbm{1}\{G={\theta''}\} |X=x)\in \mathcal{C}^{1}([\underline{x},\overline{x}]\setminus\{0\})$ and $0<V(Y^\ast  \cdot \mathbbm{1}\{G={\theta''}\} |X=0^\pm)$ with derivatives in $x$ all bounded on $[\underline{x},\overline{x}]\setminus\{0\}$. Furthermore, $0<V(D^\ast  \cdot \mathbbm{1}\{G={\theta''}\} |X=0^\pm)$ for ${\theta''}\in \{1,...,K\}$. \\
%
\end{assumption_GFMRD}

For $k \in \{1,2\}$ and ${\theta''}\in \{1,...,K\}$, define
\begin{align*}
&\widehat{\mathds{X}}'_n({\theta''},k)=\frac{1}{\sqrt{c_k}}[ \hat{\nu}^+_{\xi,n}({\theta''},k)-  \hat{\nu}^-_{\xi,n}({\theta''},k)],
\end{align*}
where the EMP is given by
\begin{align*}
&\hat{\nu}^\pm_{\xi,n}(\theta''_1,1)=\sum_{i=1}^{n}\xi_i\frac{e'_0(\Gamma^\pm_2)^{-1}[Y_i-\tilde \mu_{1,2}(X_i,\theta''_1)] r_2(\frac{X_i}{h_{1,n}(\theta''_1)}) K(\frac{X_i}{h_{1,n}(\theta''_1)})}{\sqrt{nh_{1,n}(\theta''_1)}\hat{f}_X(0)}\\
&\hat{\nu}^\pm_{\xi,n}(\theta''_2,2)=\sum_{i=1}^{n}\xi_i\frac{e'_0(\Gamma^\pm_2)^{-1}[D_i-\tilde \mu_{2,2}(X_i,\theta''_2)]  r_2(\frac{X_i}{h_{2,n}(\theta''_2)}) K(\frac{X_i}{h_{2,n}(\theta''_2)})}{\sqrt{nh_{2,n}(\theta''_2)}\hat{f}_X(0)}
\end{align*}
and $\tilde \mu_{k,2}(0^\pm,\theta''_k)$ are defined in Lemma \ref{lemma:epsilon_unif_const}.
Our general result applied to the current case yields the following corollary.

\begin{corollary}[Example: Group Covariate]\label{corollary:GFMRD}
Suppose that Assumptions S, K, M, and GFMRD hold.
\\
(i) There exists a symmetric positive definite $K$-by-$K$ matrix $\Sigma_{GFMRD}$ such that
\begin{align*}
&\sqrt{nh_n}[\hat{\tau}_{GFMRD}-\tau_{GFMRD}]\leadsto N(0,\Sigma_{GFMRD})
\end{align*}
(ii)
Furthermore, with probability approaching one,
\begin{align*}
&\frac{(\hat{\mu}_{2,2} (0^+,\cdot)-\hat{\mu}_{2,2} (0^-,\cdot))\widehat{\mathds{X}}'_n(\cdot,1)-(\hat{\mu}_{1,2}(0^+,\cdot)-\hat{\mu}_{1,2}(0^-,\cdot))\widehat{\mathds{X}}'_n(\cdot,2)}{(\hat{\mu}_{2,2} (0^+,\cdot)-\hat{\mu}_{2,2} (0^-,\cdot))^2}\underset{\xi}{\overset{p}{\leadsto}}N(0,\Sigma_{GFMRD}).
\end{align*}
\end{corollary}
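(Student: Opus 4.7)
The plan is to derive Corollary \ref{corollary:GFMRD} as a direct application of Theorems \ref{theorem:weak_conv} and \ref{theorem:cond_weak_conv}, so the bulk of the work is in verifying the abstract Assumptions \ref{a:BR}, \ref{a:cond_weak_conv}, \ref{a:multiplier}, and \ref{a:first_stage} for the specific choices of $(g_1,g_2,\phi,\psi,\Upsilon)$ from Section \ref{sec:ex:group}, with $v=0$ and $p=2$. The proof will follow the same template as the argument for Corollary \ref{corollary:FMRD}, with the only substantive difference being that the index set $\Theta'' = \{1,\dots,K\}$ is nontrivial but finite. Once the assumptions are checked, specialization to a finite index set collapses weak convergence in $\ell^\infty(\Theta'')$ to convergence in distribution in $\mathbb R^K$.

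First, I would verify Assumption \ref{a:BR}. Condition (i) is given by Assumption S. For condition (ii)(a), the function classes $\{(y^\ast,g)\mapsto y^\ast\mathbbm 1\{g=\theta_1\}:\theta_1\in\{1,\dots,K\}\}$ and $\{(d^\ast,g)\mapsto d^\ast\mathbbm 1\{g=\theta_2\}:\theta_2\in\{1,\dots,K\}\}$ are finite (of cardinality $K$), hence of VC type, and they share the integrable envelope $F_\Epsilon(y^\ast,d^\ast,x)=|y^\ast|+|d^\ast|$, whose $(2+\epsilon)$-moment is finite by Assumption GFMRD(i)(a). Conditions (ii)(b) and (ii)(c) follow immediately from Assumption GFMRD(i)(b) and (iii). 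Condition (ii)(d) is trivial on a discrete $\Theta$. Condition (iii) reduces to the requirement that $c_1,c_2:\{1,\dots,K\}\to[\underline c,\overline c]$ be bounded and Lipschitz, which is automatic on a finite set. Condition (iv) is Assumption K.

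Next, I would verify Assumption \ref{a:cond_weak_conv}. Since $\phi$ and $\psi$ are identity operators, they are trivially Hadamard differentiable with derivatives equal to themselves; the operator $\Upsilon:\ell^\infty(\Theta_1'\times\Theta_2')\to\ell^\infty(\Theta'')$, $W\mapsto W(\cdot,\cdot)$ along the diagonal, is continuous and linear, hence Hadamard differentiable with $\Upsilon_W'=\Upsilon$, verifying (i). Part (ii) follows from the discrete minimum over $\Theta''$ and the strict inequality in Assumption GFMRD(i)(c). Part (iii) is $nh_n\to\infty$, which follows from Assumption GFMRD(ii). Theorem \ref{theorem:weak_conv} now yields $\sqrt{nh_n}[\hat\tau_{GFMRD}-\tau_{GFMRD}]\leadsto \mathbb G'_{GFMRD}$, a zero-mean Gaussian process on the finite set $\Theta''$, i.e.\ a $K$-dimensional Gaussian vector with covariance $\Sigma_{GFMRD}$ computable from the formula for $H^\pm$ in Theorem \ref{theorem:weak_conv} together with the Hadamard derivative of the Wald-ratio operator in Lemma \ref{lemma:hadamard_fraction}. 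For part (ii), Assumption \ref{a:multiplier} is given by Assumption M, and Assumption \ref{a:first_stage} follows from Lemma \ref{lemma:epsilon_unif_const} (for $\tilde\mu_{k,2}$) and the consistency of a standard kernel density estimator for $\hat f_X(0)$; Theorem \ref{theorem:cond_weak_conv} then gives the stated multiplier bootstrap approximation.

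The one step that requires separate attention is the positive definiteness of $\Sigma_{GFMRD}$ in part (i), since Theorem \ref{theorem:weak_conv} only delivers a Gaussian limit with a specified covariance kernel. The key simplification is that for $\theta''\neq\theta'''$ in $\{1,\dots,K\}$ the indicators $\mathbbm 1\{G=\theta''\}\mathbbm 1\{G=\theta'''\}=0$, so the conditional covariances $\sigma_{kl}((\theta'',\theta''),(\theta''',\theta''')\mid 0^\pm)$ vanish, implying that $H^\pm$ and hence $\Sigma_{GFMRD}$ is block-diagonal across groups. On each diagonal block the variance reduces to the Fuzzy-Mean RDD variance of Corollary \ref{corollary:FMRD} restricted to the subpopulation $\{G=\theta''\}$, which is strictly positive under Assumption GFMRD(iii) and (i)(c). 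Concatenating these strictly positive diagonal blocks yields positive definiteness of $\Sigma_{GFMRD}$. The main (mild) obstacle will be tracking these block calculations carefully and showing that the multiplier bootstrap in part (ii) replicates the same block-diagonal structure; this again follows from the group-disjointness of the indicator products appearing inside $\hat\nu^\pm_{\xi,n}(\theta'',k)$, so no new technical machinery is needed beyond what is already established in Theorems \ref{theorem:weak_conv} and \ref{theorem:cond_weak_conv}.
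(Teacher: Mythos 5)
Your overall route is the same as the paper's: verify Assumptions \ref{a:BR}--\ref{a:first_stage} for the group-covariate setting (using finiteness of the index set to get the VC-type and Lipschitz conditions for free) and then invoke Theorems \ref{theorem:weak_conv} and \ref{theorem:cond_weak_conv} together with Lemmas \ref{lemma:cond_weak_conv_and in prob} and \ref{lemma:epsilon_unif_const}. That part is fine and matches the paper, which treats the verification as essentially immediate apart from noting the VC-type of the finite classes.

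However, your argument for positive definiteness of $\Sigma_{GFMRD}$ contains a genuine error. You claim that for $\theta''\neq\theta'''$ the cross-covariances $\sigma_{kl}\bigl((\theta'',\theta''),(\theta''',\theta''')\mid 0^\pm\bigr)$ vanish because $\mathbbm{1}\{G=\theta''\}\mathbbm{1}\{G=\theta'''\}=0$. But by the definition $\Epsilon_k(y,d,x,\theta)=g_k(y,\theta_k)-\mu_k(x,\theta_k)$ one has
\begin{align*}
\sigma_{11}(\theta,\vartheta\mid x)
= E\bigl[g_1(Y,\theta_1)g_1(Y,\vartheta_1)\mid X=x\bigr]-\mu_1(x,\theta_1)\,\mu_1(x,\vartheta_1)
= -\,\mu_1(x,\theta_1)\,\mu_1(x,\vartheta_1)
\end{align*}
for $\theta_1\neq\vartheta_1$: the disjointness of the groups kills only the second-moment term, not the product of the conditional means, and $\mu_1(x,\theta_1)=E[Y^\ast\mathbbm{1}\{G=\theta_1\}\mid X=x]$ is generically nonzero. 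Hence $H^\pm$, and therefore $\Sigma_{GFMRD}$, is \emph{not} block-diagonal across groups, and your reduction of positive definiteness to a concatenation of strictly positive scalar blocks does not go through. Establishing positive definiteness would instead require an argument about the nondegeneracy of the full $2K$-dimensional conditional covariance matrix of $(g_1(Y,\cdot),g_2(D,\cdot))$ at $0^\pm$ (a point the paper itself does not spell out); the mere positivity of the individual variances in Assumption GFMRD(iii) is not sufficient on its own. The same correction applies to your claim that the multiplier bootstrap replicates a block-diagonal structure.
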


A proof is provided in Section \ref{sec:corollary:GFMRD}.
One of the practically most relevant applications of this corollary is the test of the null hypothesis of joint treatment nullity:
$$
H_0: \tau_{FMRD}(\theta'') = 0 \quad\text{for all } \theta'' \in \{1,...,K\}.
$$
To test this hypothesis, we can use $\max_{\theta'' \in \{1,...,K\}} \sqrt{nh_n} \abs{ \hat\tau_{GFMRD}(\theta'') }$ as the test statistic, and use
$$
\max_{\theta'' \in \{1,...,K\}} \abs{\frac{(\hat{\mu}_{2,2} (0^+,\theta'')-\hat{\mu}_{2,2} (0^-,\theta''))\widehat{\mathds{X}}'_n(\theta'',1)-(\hat{\mu}_{1,2}(0^+,\theta'')-\hat{\mu}_{1,2}(0^-,\theta''))\widehat{\mathds{X}}'_n(\theta'',2)}{(\hat{\mu}_{2,2} (0^+,\theta'')-\hat{\mu}_{2,2} (0^-,\theta''))^2}}
$$
to simulate its asymptotic distribution.

Another one of the practically most relevant applications of the above corollary is the test of the null hypothesis of treatment homogeneity across the covariate-index groups:
$$
H_0: \tau_{FMRD}(\theta'') = \tau_{FMRD}(\theta''') \quad\text{for all } \theta'', \theta''' \in \{1,...,K\}.
$$
To test this hypothesis, we can use $\max_{\theta'' \in \{1,...,K\}} \sqrt{nh_n} \abs{ \hat\tau_{GFMRD}(\theta'') - K^{-1} \sum_{\theta'''=1}^K \hat\tau_{GFMRD}(\theta''') }$ as the test statistic, and use
\begin{align*}
\max_{\theta'' \in \{1,...,K\}}
&\left\vert \frac{(\hat{\mu}_{2,2} (0^+,\theta'')-\hat{\mu}_{2,2} (0^-,\theta''))\widehat{\mathds{X}}'_n(\theta'',1)-(\hat{\mu}_{1,2}(0^+,\theta'')-\hat{\mu}_{1,2}(0^-,\theta''))\widehat{\mathds{X}}'_n(\theta'',2)}{(\hat{\mu}_{2,2} (0^+,\theta'')-\hat{\mu}_{2,2} (0^-,\theta''))^2} \right.
\\
&\left. -
\frac{1}{K} \sum_{\theta''' = 1}^K \frac{(\hat{\mu}_{2,2} (0^+,\theta''')-\hat{\mu}_{2,2} (0^-,\theta'''))\widehat{\mathds{X}}'_n(\theta''',1)-(\hat{\mu}_{1,2}(0^+,\theta''')-\hat{\mu}_{1,2}(0^-,\theta'''))\widehat{\mathds{X}}'_n(\theta''',2)}{(\hat{\mu}_{2,2} (0^+,\theta''')-\hat{\mu}_{2,2} (0^-,\theta'''))^2}
\right\vert
\end{align*}
to simulate its asymptotic distribution.

\section{Additional Simulation Studies}\label{sec:simulation}

We conduct simulation studies to demonstrate the robustness and unified applicability of our general multiplier bootstrap method.
Each of the ten examples covered in Example \ref{ex:fuzzy_rdd} as well as Sections \ref{sec:ex:sharp_rdd}--\ref{sec:ex:group} and Sections \ref{sec:a:FMRD}--\ref{sec:a:GFMRD} are tested, except for the example of CDF discontinuity -- this example is omitted because it is a less complicated form of the sharp Quantile RDD without CDF inversions.
The concrete bootstrap procedures outlined in Section \ref{sec:application_examples} are used in the respective subsections below.
We follow the procedure outlined in Section \ref{sec:practical_guideline} for choices of bandwidths in finite sample.
The kernel function that we use is the Epanechnikov kernel.
Other details are discussed in each example subsection below.

\subsection{Example: Fuzzy Mean RDD}

Consider the case of fuzzy RDD presented in Example \ref{ex:fuzzy_rdd} and Section \ref{sec:a:FMRD}.
We generate an i.i.d. sample $\{(Y_i,D_i,X_i)\}_{i=1}^n$ through the following data generating process:
\begin{align*}
&Y_i = \alpha_0 + \alpha_1 X_i + \alpha_2 X_i^2 + \beta_1 D_i + U_i,
\\
&D_i = \mathbbm{1}\{ 2 \cdot \mathbbm{1}\{ X_i \geq 0\} - 1 \geq V_i \},
\\
&(X_i,U_i,V_i)' \sim N(0,\Sigma),
\end{align*}
where $\alpha_0=1.00$, $\alpha_1=0.10$, $\alpha_2=0.01$, $\beta_1$ is to be varied across simulation sets, $\Sigma_{11} = \sigma_X^2 = 1.0^2$, $\Sigma_{22} = \sigma_U^2 = 1.0^2$, $\Sigma_{33} = \sigma_V^2 = 0.5^2$, $\Sigma_{12} = \rho_{XU} \cdot \sigma_X \cdot \sigma_U = 0.5 \cdot 1.0^2$, $\Sigma_{13} = \rho_{XV} \cdot \sigma_X \cdot \sigma_V = 0.0 \cdot 1.0 \cdot 0.5$, and $\Sigma_{23} = \rho_{UV} \cdot \sigma_U \cdot \sigma_V = 0.5 \cdot 1.0 \cdot 0.5$.
In this setup, we have
$$
\text{Treatment Effect} = \beta_1.
$$

We simulate the 95\% test for the null hypothesis $H_0: \tau_{FMRD} = 0$ of treatment nullity using the procedure described in  Section \ref{sec:a:FMRD}.
Table \ref{tab:FMRD} shows simulated acceptance probabilities based on 2,500 multiplier bootstrap replications for 2,500 Monte Carlo replications for each of the sample sizes $n=1,000$, $2,000$, and $4,000$.
The results, exhibiting the same qualitative features as those in the previous subsections, evidence the power as well as the size correctness.

\subsection{Example: Sharp Mean RDD}

Consider the case of sharp RDD presented in Sections \ref{sec:ex:sharp_rdd} and \ref{sec:a:SMRD}.
We generate an i.i.d. sample $\{(Y_i,D_i,X_i)\}_{i=1}^n$ through the following data generating process:
\begin{align*}
&Y_i = \alpha_0 + \alpha_1 X_i + \alpha_2 X_i^2 + \beta_1 D_i + U_i,
\\
&D_i = \mathbbm{1}\{ X_i \geq 0\},
\\
&(X_i,U_i)' \sim N(0,\Sigma),
\end{align*}
where $\alpha_0=1.00$, $\alpha_1=0.10$, $\alpha_2=0.01$, $\beta_1$ is to be varied across simulation sets, $\Sigma_{11} = \sigma_X^2 = 1.0^2$, $\Sigma_{22} = \sigma_U^2 = 1.0^2$, and $\Sigma_{12} = \rho_{XU} \cdot \sigma_X \cdot \sigma_U = 0.5 \cdot 1.0^2$.
In this setup, we have
$$
\text{Treatment Effect} = \beta_1.
$$

We simulate the 95\% test for the null hypothesis $H_0: \tau_{SMRD} = 0$ of treatment nullity using the procedure described in  Section \ref{sec:a:SMRD}.
Table \ref{tab:SMRD} shows simulated acceptance probabilities based on 2,500 multiplier bootstrap replications for 2,500 Monte Carlo replications for each of the sample sizes $n=1,000$, $2,000$, and $4,000$.
The first column under $\beta_1 = 0.00$ shows that simulated acceptance probabilities are close to the designed nominal probability, 95\%.
The next four columns show that the acceptance probability decreases in $\beta_1$, and the rate of decrease is higher for the larger sample sizes.
These results evidence the power as well as the size correctness.

In addition to the above data generating process, we also consider the data generating processes employed by Calonico, Cattaneo and Titiunik (2014) for the sake of comparisons under the sharp mean RDD.
An i.i.d. sample of size $n=500$ is generated in the following manner:
\begin{align*}
&Y_i = \mu(X_i) + U_i,
\\
&D_i = \mathbbm{1}\{ X_i \geq 0\},
\\
&X_i \sim 2\mathcal{B}(2,4)-1,
\\
&U_i \sim N(0,0.1295^2),
\end{align*}
where $\mathcal{B}(2,4)$ denotes the beta distribution with parameters $(2,4)$.
Two models of $\mu$ are considered.
The first one is due to Lee (2008):
\begin{align*}
\mu(x) = \begin{cases}
0.48+1.27x+7.18x^2+20.21x^3+21.54x^4+7.33x^5 & \text{if } x < 0\\
0.52+0.84x-3.00x^2+7.99x^3-9.01x^4+3.56x^5   & \text{if } x \geq 0\\
\end{cases}
\end{align*}
The second one is due to Ludwig and Miller (2007):
\begin{align*}
\mu(x) = \begin{cases}
3.71+2.30x+3.28x^2+1.45x^3+0.23x^4+0.03x^5     & \text{if } x < 0\\
0.26+18.49x-54.81x^2+74.30x^3-45.02x^4+9.83x^5 & \text{if } x \geq 0\\
\end{cases}
\end{align*}

We simulate the coverage probability of the true treatment effect for the nominal probability of 95\%.
Table \ref{tab:SMRD_CCT} shows simulated coverage probabilities based on 2,500 multiplier bootstrap replications for 5,000 Monte Carlo replications -- we run 5,000 iterations to have our results comparable with those of Calonico, Cattaneo and Titiunik (2014) who also ran 5,000 iterations.
The first two columns indicate coverage probabilities under the conventional non-robust approach with the the fixed-neighborhood standard error estimators (FN) and the plug-in residual standard error estimators (PI), copied from Table I of Calonico, Cattaneo and Titiunik (2014).
The next two columns indicate coverage probabilities under the robust approach with the the fixed-neighborhood standard error estimators (FN) and the plug-in residual standard error estimators (PI), copied from Table I of Calonico, Cattaneo and Titiunik (2014).
Finally, the last column report the coverage probabilities under the robust multiplier bootstrap approach.
The numbers are similar to each other.
As a robust approach, the robust MB is certainly better than the conventional approaches.
MB is slightly better than PI, but is slightly worse than FN.
As such, for cases like the sharp mean RDD for which the existing FN method is available, we recommend that the users employ FN.

\subsection{Example: Fuzzy Mean RKD}

Consider the case of fuzzy RKD presented in Sections \ref{sec:ex:fuzzy_rkd} and \ref{sec:a:FMRK}.
We generate an i.i.d. sample $\{(Y_i,D_i,X_i)\}_{i=1}^n$ through the following data generating process:
\begin{align*}
&Y_i = \alpha_0 + \alpha_1 X_i + \alpha_2 X_i^2 + \beta_1 D_i + U_i,
\\
&D_i = X_i \cdot \left( 2 \cdot \mathbbm{1}\{ X_i \geq 0 \} - 1 \right) + V_i,
\\
&(X_i,U_i,V_i)' \sim N(0,\Sigma),
\end{align*}
where $\alpha_0=1.00$, $\alpha_1=0.10$, $\alpha_2=0.01$, $\beta_1$ is to be varied across simulation sets, $\Sigma_{11} = \sigma_X^2 = 1.0^2$, $\Sigma_{22} = \sigma_U^2 = 1.0^2$, $\Sigma_{33} = \sigma_V^2 = 0.1^2$, $\Sigma_{12} = \rho_{XU} \cdot \sigma_X \cdot \sigma_U = 0.5 \cdot 1.0^2$, $\Sigma_{13} = \rho_{XV} \cdot \sigma_X \cdot \sigma_V = 0.0 \cdot 1.0 \cdot 0.1$, and $\Sigma_{23} = \rho_{UV} \cdot \sigma_U \cdot \sigma_V = 0.5 \cdot 1.0 \cdot 0.1$.
In this setup, we have
$$
\text{Treatment Effect} = \beta_1.
$$

We simulate the 95\% test for the null hypothesis $H_0: \tau_{FMRK} = 0$ of treatment nullity using the procedure described in  Section \ref{sec:a:FMRK}.
Table \ref{tab:FMRK} shows simulated acceptance probabilities based on 2,500 multiplier bootstrap replications for 2,500 Monte Carlo replications for each of the sample sizes $n=1,000$, $2,000$, and $4,000$.
The results, exhibiting the same qualitative features as those in the previous subsections, evidence the power as well as the size correctness.

\subsection{Example: Sharp Mean RKD}

Consider the case of sharp RKD presented in Sections \ref{sec:ex:sharp_rkd} and \ref{sec:a:SMRK}.
We generate an i.i.d. sample $\{(Y_i,D_i,X_i)\}_{i=1}^n$ through the following data generating process:
\begin{align*}
&Y_i = \alpha_0 + \alpha_1 X_i + \alpha_2 X_i^2 + \beta_1 D_i + U_i,
\\
&D_i = X_i \cdot \left( 2 \cdot \mathbbm{1}\{ X_i \geq 0 \} - 1 \right),
\\
&(X_i,U_i)' \sim N(0,\Sigma),
\end{align*}
where $\alpha_0=1.00$, $\alpha_1=0.10$, $\alpha_2=0.01$, $\beta_1$ is to be varied across simulation sets, $\Sigma_{11} = \sigma_X^2 = 1.0^2$, $\Sigma_{22} = \sigma_U^2 = 1.0^2$, and $\Sigma_{12} = \rho_{XU} \cdot \sigma_X \cdot \sigma_U = 0.5 \cdot 1.0^2$.
In this setup, we have
$$
\text{Treatment Effect} = \beta_1.
$$

We simulate the 95\% test for the null hypothesis $H_0: \tau_{SMRK} = 0$ of treatment nullity using the procedure described in  Section \ref{sec:a:SMRK}.
Table \ref{tab:SMRK} shows simulated acceptance probabilities based on 2,500 multiplier bootstrap replications for 2,500 Monte Carlo replications for each of the sample sizes $n=1,000$, $2,000$, and $4,000$.
The results, exhibiting the same qualitative features as those in the previous subsection, evidence the power as well as the size correctness.


\subsection{Example: Sharp Quantile RDD}

Consider the case of sharp quantile RDD presented in Sections \ref{sec:ex:sharp_quantile_rdd} and \ref{sec:a:SQRD}.
We generate an i.i.d. sample $\{(Y_i,D_i,X_i)\}_{i=1}^n$ through the following data generating process:
\begin{align*}
&Y_i = \alpha_0 + \alpha_1 X_i + \alpha_2 X_i^2 + \beta_1 D_i + (\gamma_0 + \gamma_1 D_i) \cdot U_i,
\\
&D_i = \mathbbm{1}\{ X_i \geq 0\},
\\
&(X_i,U_i)' \sim N(0,\Sigma),
\end{align*}
where $\alpha_0=1.00$, $\alpha_1=0.10$, $\alpha_2=0.01$, $\beta_1$ is to be varied across simulation sets, $\gamma_0 = 1$, $\gamma_1$ is to be varied across simulation sets, $\Sigma_{11} = \sigma_X^2 = 1.0^2$, $\Sigma_{22} = \sigma_U^2 = 1.0^2$, and $\Sigma_{12} = \rho_{XU} \cdot \sigma_X \cdot \sigma_U = 0.5 \cdot 1.0^2$.
In this setup, we have
$$
\text{$\theta$-th Conditional Quantile Treatment Effect at $x=0$} = \beta_1 + \gamma_1 F_{U|X}^{-1}(\theta | 0).
$$

We set $\Theta'' = [a,1-a] = [0.20,0.80]$ as the set of quantiles on which we conduct inference.
We use a grid with the interval size of 0.02 to approximate the continuum $\Theta''$ for numerical evaluation of functions defined on $\Theta''$.
First, we simulate the 95\% test for the null hypothesis $H_0: \tau_{SQRD}(\theta'') = 0 \ \forall\theta'' \in [a,1-a]$ of uniform treatment nullity using the procedure described in Section \ref{sec:a:SQRD}.
Next, we simulate the 95\% test for the null hypothesis $H_0: \tau_{SQRD}(\theta'') = \tau_{SQRD}(\theta''') \ \forall \theta'',\theta''' \in [a,1-a]$ of treatment homogeneity using the procedure described in  Section \ref{sec:a:SQRD}.

Table \ref{tab:SQRD} show simulated acceptance probabilities based on 2,500 multiplier bootstrap replications for 2,500 Monte Carlo replications for each of the sample sizes $n=1,000$, $2,000$, and $4,000$.
Part (A) reports results for the test of uniform treatment nullity and part (B) shows results for the test of treatment homogeneity.
The top panel (I) presents results across alternative values of $\beta_1 \in \{0.00, 0.25, 0.50, 0.75, 1.00\}$ while fixing $\gamma_1 = 0$.
The bottom panel (II) presents results across alternative values of $\gamma_1 \in \{0.00, 0.25, 0.50, 0.75, 1.00\}$ while fixing $\beta_1 = 0$. The nominal acceptance probability is 95\%.

The first column of each part of the table shows that simulated acceptance probabilities are close to the designed nominal probability, 95\%.
The next four columns in parts (I) (A) and (II) (A) of the table show that the acceptance probability decreases in $\beta_1$ and $\gamma_1$, respectively, and the rate of decrease is higher for the larger sample sizes.
These results evidence the power as well as the size correctness for the test of uniform treatment nullity.
In part (I) (B), all the simulated acceptance probabilities are close to the designed nominal probability, 95\%.
This is consistent with the fact that $\beta_1$ does not contribute to treatment heterogeneity.
On the other hand, part (II) (A) of the table shows that the acceptance probability decreases in $\gamma_1$, and the rate of decrease is higher for the larger sample sizes.
These results evidence the power as well as the size correctness for the test of treatment homogeneity.

\subsection{Example: Fuzzy Quantile RKD}

Consider the case of fuzzy quantile RKD presented in Sections \ref{sec:ex:fuzzy_quantile_rkd} and \ref{sec:a:FQRK}.
We generate an i.i.d. sample $\{(Y_i,D_i,X_i)\}_{i=1}^n$ through the following data generating process:
\begin{align*}
&Y_i = \alpha_0 + \alpha_1 X_i + \alpha_2 X_i^2 + \beta_1 D_i + (\gamma_0 + \gamma_1 D_i) \cdot U_i,
\\
&D_i = X_i \cdot \left( 2 \cdot \mathbbm{1}\{ X_i \geq 0 \} - 1 \right) + V_i,
\\
&(X_i,U_i,V_i)' \sim N(0,\Sigma),
\end{align*}
where $\alpha_0=1.00$, $\alpha_1=0.10$, $\alpha_2=0.01$, $\beta_1$ is to be varied across simulation sets, $\gamma_0 = 1$, $\gamma_1$ is to be varied across simulation sets, $\Sigma_{11} = \sigma_X^2 = 1.0^2$, $\Sigma_{22} = \sigma_U^2 = 1.0^2$, $\Sigma_{33} = \sigma_V^2 = 0.1^2$, $\Sigma_{12} = \rho_{XU} \cdot \sigma_X \cdot \sigma_U = 0.5 \cdot 1.0^2$, $\Sigma_{13} = \rho_{XV} \cdot \sigma_X \cdot \sigma_V = 0.0 \cdot 1.0 \cdot 0.1$, and $\Sigma_{23} = \rho_{UV} \cdot \sigma_U \cdot \sigma_V = 0.5 \cdot 1.0 \cdot 0.1$.
$$
\text{$\theta$-th Conditional Quantile Treatment Effect at $x=0$} = \beta_1 + \gamma_1 F_{U|X}^{-1}(\theta | 0).
$$

We set $\Theta'' = [a,1-a] = [0.20,0.80]$ as the set of quantiles on which we conduct inference.
We use a grid with the interval size of 0.02 to approximate the continuum $\Theta''$ for numerical evaluation of functions defined on $\Theta''$.
First, we simulate the 95\% test for the null hypothesis $H_0: \tau_{FQRK}(\theta'') = 0 \ \forall\theta'' \in [a,1-a]$ of uniform treatment nullity using the procedure described in Section \ref{sec:a:FQRK}.
Next, we simulate the 95\% test for the null hypothesis $H_0: \tau_{FQRK}(\theta'') = \tau_{FQRK}(\theta''') \ \forall \theta'',\theta''' \in [a,1-a]$ of treatment homogeneity using the procedure described in  Section \ref{sec:a:FQRK}.

Table \ref{tab:FQRK} show simulated acceptance probabilities based on 2,500 multiplier bootstrap replications for 2,500 Monte Carlo replications for each of the sample sizes $n=1,000$, $2,000$, and $4,000$.
Part (A) reports results for the test of uniform treatment nullity and part (B) shows results for the test of treatment homogeneity.
The top panel (I) presents results across alternative values of $\beta_1 \in \{0.00, 0.25, 0.50, 0.75, 1.00\}$ while fixing $\gamma_1 = 0$.
The bottom panel (II) presents results across alternative values of $\gamma_1 \in \{0.00, 0.25, 0.50, 0.75, 1.00\}$ while fixing $\beta_1 = 0$. The nominal acceptance probability is 95\%.
The results, exhibiting the same qualitative features as those in the previous three subsections, evidence the power as well as the size correctness for both of the tests of uniform treatment nullity and treatment homogeneity.

\subsection{Example: Sharp Quantile RKD}

Consider the case of sharp quantile RKD presented in Sections \ref{sec:ex:sharp_quantile_rkd} and \ref{sec:a:SQRK}.
We generate an i.i.d. sample $\{(Y_i,D_i,X_i)\}_{i=1}^n$ through the following data generating process:
\begin{align*}
&Y_i = \alpha_0 + \alpha_1 X_i + \alpha_2 X_i^2 + \beta_1 D_i + (\gamma_0 + \gamma_1 D_i) \cdot U_i,
\\
&D_i = X_i \cdot \left( 2 \cdot \mathbbm{1}\{ X_i \geq 0 \} - 1 \right),
\\
&(X_i,U_i)' \sim N(0,\Sigma),
\end{align*}
where $\alpha_0=1.00$, $\alpha_1=0.10$, $\alpha_2=0.01$, $\beta_1$ is to be varied across simulation sets, $\gamma_0 = 1$, $\gamma_1$ is to be varied across simulation sets, $\Sigma_{11} = \sigma_X^2 = 1.0^2$, $\Sigma_{22} = \sigma_U^2 = 1.0^2$, and $\Sigma_{12} = \rho_{XU} \cdot \sigma_X \cdot \sigma_U = 0.5 \cdot 1.0^2$.
$$
\text{$\theta$-th Conditional Quantile Treatment Effect at $x=0$} = \beta_1 + \gamma_1 F_{U|X}^{-1}(\theta | 0).
$$

We set $\Theta'' = [a,1-a] = [0.20,0.80]$ as the set of quantiles on which we conduct inference.
We use a grid with the interval size of 0.02 to approximate the continuum $\Theta''$ for numerical evaluation of functions defined on $\Theta''$.
First, we simulate the 95\% test for the null hypothesis $H_0: \tau_{SQRK}(\theta'') = 0 \ \forall\theta'' \in [a,1-a]$ of uniform treatment nullity using the procedure described in Section \ref{sec:a:SQRK}.
Next, we simulate the 95\% test for the null hypothesis $H_0: \tau_{SQRK}(\theta'') = \tau_{SQRK}(\theta''') \ \forall \theta'',\theta''' \in [a,1-a]$ of treatment homogeneity using the procedure described in  Section \ref{sec:a:SQRK}.

Table \ref{tab:SQRK} show simulated acceptance probabilities based on 2,500 multiplier bootstrap replications for 2,500 Monte Carlo replications for each of the sample sizes $n=1,000$, $2,000$, and $4,000$.
Part (A) reports results for the test of uniform treatment nullity and part (B) shows results for the test of treatment homogeneity.
The top panel (I) presents results across alternative values of $\beta_1 \in \{0.00, 0.25, 0.50, 0.75, 1.00\}$ while fixing $\gamma_1 = 0$.
The bottom panel (II) presents results across alternative values of $\gamma_1 \in \{0.00, 0.25, 0.50, 0.75, 1.00\}$ while fixing $\beta_1 = 0$. The nominal acceptance probability is 95\%.
The results, exhibiting the same qualitative features as those in the previous subsection, evidence the power as well as the size correctness for both of the tests of uniform treatment nullity and treatment homogeneity.

\subsection{Example: Group Covariate and Test of Heterogeneous Treatment Effects}

Consider the case of fuzzy RDD with heterogeneous groups presented in Sections \ref{sec:ex:group} and \ref{sec:a:GFMRD}.
We generate an i.i.d. sample $\{(Y^\ast_i,D^\ast_i,G_i,X_i)\}_{i=1}^n$ through the following data generating process:
\begin{align*}
&Y^\ast_i = \alpha_0 + \alpha_1 X_i + \alpha_2 X_i^2 + \beta_1 D^\ast_i \cdot \mathbbm{1}\{G_i=1\} + \beta_2 D^\ast_i \cdot \mathbbm{1}\{G_i=2\} + U_i,
\\
&D^\ast_i = \mathbbm{1}\{ 2 \cdot \mathbbm{1}\{ X_i \geq 0\} - 1 \geq V_i \},
\\
&G_i \sim Bernoulli(\pi) + 1,
\\
&(X_i,U_i,V_i)' \sim N(0,\Sigma),
\end{align*}
where $\alpha_0=1.00$, $\alpha_1=0.10$, $\alpha_2=0.01$, $\beta_1$ or $\beta_2$ is to be varied across simulation sets, $\pi = 0.5$, $\Sigma_{11} = \sigma_X^2 = 1.0^2$, $\Sigma_{22} = \sigma_U^2 = 1.0^2$, $\Sigma_{33} = \sigma_V^2 = 0.5^2$, $\Sigma_{12} = \rho_{XU} \cdot \sigma_X \cdot \sigma_U = 0.5 \cdot 1.0^2$, $\Sigma_{13} = \rho_{XV} \cdot \sigma_X \cdot \sigma_V = 0.0 \cdot 1.0 \cdot 0.5$, and $\Sigma_{23} = \rho_{UV} \cdot \sigma_U \cdot \sigma_V = 0.5 \cdot 1.0 \cdot 0.5$.
In this setup, we have
$$
\text{Treatment Effect} =
\begin{cases}
\beta_1 & \text{if } G_i=1\\
\beta_2 & \text{if } G_i=2
\end{cases}.
$$

First, we simulate the 95\% test for the null hypothesis $H_0: \tau_{GFMRD}(1) = \tau_{GFMRD}(2) = 0$ of joint treatment nullity using the procedure described in Section \ref{sec:a:GFMRD}.
Part (A) of Table \ref{tab:GFMRD} shows simulated acceptance probabilities based on 2,500 multiplier bootstrap replications for 2,500 Monte Carlo replications for each of the sample sizes $n=1,000$, $2,000$, and $4,000$.
The first column under $\beta_1 = 0.00$ shows that simulated acceptance probabilities are close to the designed nominal probability, 95\%.
The next four columns show that the acceptance probability decreases in $\beta_1$, and the rate of decrease is higher for the larger sample sizes.
These results evidence the power as well as the size correctness for the test of joint treatment nullity.

We next simulate the 95\% test for the null hypothesis $H_0: \tau_{GFMRD}(1) = \tau_{GFMRD}(2)$ of treatment homogeneity using the procedure described in Section \ref{sec:a:GFMRD}.
Part (B) of Table \ref{tab:GFMRD} shows simulated acceptance probabilities based on 2,500 multiplier bootstrap replications for 2,500 Monte Carlo replications for each of the sample sizes $n=1,000$, $2,000$, and $4,000$.
The first column under $\beta_1 = 0.00$ shows that simulated acceptance probabilities are close to the designed nominal probability, 95\%.
The next four columns show that the acceptance probability decreases in $\beta_1$, and the rate of decrease is higher for the larger sample sizes.
These results evidence the power as well as the size correctness for the test of treatment homogeneity.

\section{Additional Mathematical Appendix}\label{sec:additional_mathematical_appendix}

\subsection{Auxiliary Lemmas for the Ten Examples}\label{sec:prel_lemmas_for_examples}

\subsubsection{Lemmas Related to Asymptotic Equicontinuity}\label{sec:lemma:asymptotic_rho-conti}
The following Lemma establish the relationship between convergence in probability in supremum norm and convergence in probability with respect to the semi-metric $\rho$ induced by the limiting Gaussian process.
We use this lemma in the corollary below to ensure that the asymptotically equicontinuous process, $\hat{\nu}^\pm_{\xi,n}$, evaluating $\hat Q_{Y | X}( \cdot | 0^\pm)$ can nicely approximate this process evaluating $Q_{Y | X}( \cdot | 0^\pm)$.

\begin{lemma}\label{lemma:asymptotic_rho-conti}
Suppose that Assumptions S, K, and SQRD hold.
Define
\begin{align*}
\rho(\hat{Q}_{Y|X}(\theta''|0^\pm),Q_{Y|X}(\theta''|0^\pm))=\lim_{n\to \infty}\big(\sum_{i=1}^{n} E|f_{ni}(\hat{Q}_{Y|X}(\theta''|0^\pm))-f_{ni}(Q_{Y|X}(\theta''|0^\pm))|^2  \big)^{1/2},
\end{align*}
where
\begin{align*}
f_{ni}(y)&=\frac{e'_0(\Gamma^\pm_2)^{-1}r_2(\frac{X_i}{h_n})K(\frac{X_i}{h_n})[\mathds{1}\{Y_i\le y\}-F_{Y|X}(y|X_i)]\delta^\pm_i}{\sqrt{nh_n}f_{X}(0)}.
\end{align*}
Then,
$\norm{\hat{Q}_{Y|X}(\cdot|0^\pm)-Q_{Y|X}(\cdot|0^\pm)}_{[a,1-a]}\overset{p}{\underset{x}{\to}}0$ implies
$\sup_{\theta'' \in [a,1-a]}\rho(\hat{Q}_{Y|X}(\theta''|0^\pm),Q_{Y|X}(\theta''|0^\pm))\overset{p}{\underset{x}{\to}}0$.
\end{lemma}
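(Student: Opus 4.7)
The plan is to show that the pseudo-metric $\rho$ is H\"older-dominated by the Euclidean metric in its arguments---specifically, to establish a constant $C<\infty$ such that $\rho(y_1,y_2)\le \sqrt{C}\,|y_1-y_2|^{1/2}$ for $y_1,y_2$ in a suitable neighborhood of the image of $Q_{Y|X}(\cdot|0^\pm)$ restricted to $[a,1-a]$---and then invoke the continuous mapping theorem applied to $t\mapsto t^{1/2}$ to push the uniform convergence of $\hat Q$ to $Q$ through $\rho$.

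First I would compute $\sum_{i=1}^{n} E|f_{ni}(y_1)-f_{ni}(y_2)|^2$ directly. By Assumption S (a) (i.i.d.), this equals $n\,E|f_{n1}(y_1)-f_{n1}(y_2)|^2$. The centered difference $\mathds{1}\{Y_1\le y_1\}-\mathds{1}\{Y_1\le y_2\}-F_{Y|X}(y_1|X_1)+F_{Y|X}(y_2|X_1)$ is conditionally mean zero given $X_1$, so its conditional second moment is $\mathrm{Var}(\mathds{1}\{Y_1\le y_1\}-\mathds{1}\{Y_1\le y_2\}\mid X_1)$. Taking WLOG $y_1\le y_2$, this equals $\mathrm{Var}(\mathds{1}\{y_1<Y_1\le y_2\}\mid X_1)\le F_{Y|X}(y_2|X_1)-F_{Y|X}(y_1|X_1)\le C'|y_1-y_2|$ by the uniform bound $f_{Y|X}\le C'$ from Assumption SQRD (i)(b).

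Next I would apply the substitution $u=X_1/h_n$ to obtain
\begin{align*}
\sum_{i=1}^{n} E|f_{ni}(y_1)-f_{ni}(y_2)|^2 &\le \frac{C'|y_1-y_2|}{h_n\,f_X(0)^2}\cdot h_n\int_{\mathds{R}_\pm}\bigl(e_0'(\Gamma_2^\pm)^{-1}r_2(u)\bigr)^2 K^2(u)\,f_X(uh_n)\,du.
\end{align*}
Letting $n\to\infty$, with $f_X(uh_n)\to f_X(0)$ uniformly on the compact support of $K$ by Assumption S (b) and Assumption K (a), one obtains $\rho^2(y_1,y_2)\le C\,|y_1-y_2|$ with
$$C=\frac{C'}{f_X(0)}\int_{\mathds{R}_\pm}\bigl(e_0'(\Gamma_2^\pm)^{-1}r_2(u)\bigr)^2 K^2(u)\,du<\infty,$$
the finiteness of the integral being guaranteed by the boundedness of $K$ (Assumption K (a)) and the positive definiteness of $\Gamma_2^\pm$ (Assumption K (c)).

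Finally, because $Q_{Y|X}(\theta''|0^\pm)$ lies strictly inside $\mathscr{Y}_1$ for every $\theta''\in[a,1-a]$ by the $\varepsilon$-buffer in the construction of $\mathscr{Y}_1$, the hypothesis $\|\hat Q_{Y|X}(\cdot|0^\pm)-Q_{Y|X}(\cdot|0^\pm)\|_{[a,1-a]}\overset{p}{\underset{x}{\to}}0$ places $\hat Q_{Y|X}(\theta''|0^\pm)$ in $\mathscr{Y}_1$ uniformly in $\theta''$ with probability approaching one, so the H\"older bound applies and yields
$$\sup_{\theta''\in[a,1-a]}\rho\bigl(\hat Q_{Y|X}(\theta''|0^\pm),Q_{Y|X}(\theta''|0^\pm)\bigr)\le \sqrt{C}\,\|\hat Q_{Y|X}(\cdot|0^\pm)-Q_{Y|X}(\cdot|0^\pm)\|_{[a,1-a]}^{1/2},$$
and the continuous mapping theorem closes the argument. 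No step presents a genuine obstacle: the argument reduces to an elementary conditional-variance computation combined with the Lipschitz control of $F_{Y|X}$ in its first argument; the only subtleties are the interpretation of $\rho$ as a deterministic semi-metric evaluated at random arguments and the verification that the constant $C$ is finite under the stated assumptions.
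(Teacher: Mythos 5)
Your proof is correct, and it rests on the same core computation as the paper's: the conditional variance of the indicator increment $\mathds{1}\{y_1<Y_i\le y_2\}$ given $X_i$ is bounded by $F_{Y|X}(y_2|X_i)-F_{Y|X}(y_1|X_i)$, which Assumption SQRD (i)(b) controls. The packaging differs in a way worth noting. The paper plugs the random arguments $\hat Q_{Y|X}(\theta''|0^\pm)$ in from the start, expands the square into three terms, and shows each conditional expectation tends to zero in probability via the uniform consistency of $\hat Q_{Y|X}(\cdot|0^\pm)$ and the continuous mapping theorem -- a purely qualitative argument that, strictly speaking, still needs a dominated-convergence step to pass from convergence of the conditional expectation to convergence of $\rho^2$ itself (the integrand is bounded by $4$, so this is immediate but left implicit). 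You instead establish the deterministic H\"older domination $\rho(y_1,y_2)\le \sqrt{C}\,|y_1-y_2|^{1/2}$ on a neighborhood of the quantile curve and only then substitute the random arguments, which cleanly separates the semimetric computation from the stochastic convergence, yields a rate rather than mere consistency, and sidesteps the interchange issue entirely; it is also the interpretation of $\rho$ most consistent with the FCLT framework (Lemma 4), where the pseudometric is defined on deterministic index points. The one thing to make explicit is the with-probability-approaching-one localization of $\hat Q_{Y|X}(\theta''|0^\pm)$ inside $\mathscr{Y}_1$, which you do address via the $\varepsilon$-buffer; with that, the argument is complete.
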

\begin{proof}
We will show the claim for the $+$ side only.
The case of the $-$ side can be similarly proved. 
Notice that by Law of Iterated Expectations and calculations under Assumption S, K, and SQRD (ii),
\begin{align*}
&\rho^2(\hat{Q}_{Y|X}(\theta''|0^+),Q_{Y|X}(\theta''|0^+))\\
=&\lim_n\sum_{i=1}^{n}E[E[\Big(\frac{e'_0(\Gamma^+_2)^{-1}r_2(\frac{X_i}{h_n})K(\frac{X_i}{h_n})\delta^+_i}{\sqrt{nh_n}f_{X}(0)}\\
& [\mathds{1}\{Y_i\le \hat{Q}_{Y|X}(\theta''|0^+)\}-F_{Y|X}(\hat{Q}_{Y|X}(\theta''|0^+)|X_i)-\mathds{1}\{Y_i\le Q_{Y|X}(\theta''|0^+)\}+F_{Y|X}(Q_{Y|X}(\theta''|0^+)|X_i)]\Big)^2|X_i]]\\
=&\lim_n E[\frac{e'_0 (\Gamma^+_2)^{-1} \Psi^+ (\Gamma^+_2)^{-1} e_0}{f_X(0)} E[(\mathds{1}\{Y_i\le \hat{Q}_{Y|X}(\theta''|0^+)\} -F_{Y|X}(\hat{Q}_{Y|X}(\theta''|0^+)|X_i)\\
&\qquad -\mathds{1}\{Y_i\le Q_{Y|X}(\theta''|0^+)\}+F_{Y|X}(Q_{Y|X}(\theta''|0^+)|X_i))^2|X_i]]
\end{align*}
It then suffices to show that
\begin{align*}
E[(\mathds{1}\{Y_i\le \hat{Q}_{Y|X}(\theta''|0^+)\}-F_{Y|X}(\hat{Q}_{Y|X}(\theta''|0^+)|X_i)-\mathds{1}\{Y_i\le Q_{Y|X}(\theta''|0^+)\}+F_{Y|X}(Q_{Y|X}(\theta''|0^+)|X_i))^2|X_i]
\\
\underset{x}{\overset{p}{\to}}0
\end{align*}
uniformly in $\theta''$.
We write
\begin{align*}
&(\mathds{1}\{Y_i\le \hat{Q}_{Y|X}(\theta''|0^+)\}
-F_{Y|X}(\hat{Q}_{Y|X}(\theta''|0^+)|X_i)
-\mathds{1}\{Y_i\le Q_{Y|X}(\theta''|0^+)\}
+F_{Y|X}(Q_{Y|X}(\theta''|0^+)|X_i))^2\\
=&([\mathds{1}\{Y_i\le \hat{Q}_{Y|X}(\theta''|0^+)\}-\mathds{1}\{Y_i\le Q_{Y|X}(\theta''|0^+)\}]
-[F_{Y|X}(\hat{Q}_{Y|X}(\theta''|0^+)|X_i)-F_{Y|X}(Q_{Y|X}(\theta''|0^+)|X_i)])^2\\
=&
[\mathds{1}\{Y_i\le \hat{Q}_{Y|X}(\theta''|0^+)\}-\mathds{1}\{Y_i\le Q_{Y|X}(\theta''|0^+)\}]^2
+[F_{Y|X}(\hat{Q}_{Y|X}(\theta''|0^+)|X_i)-F_{Y|X}(Q_{Y|X}(\theta''|0^+)|X_i)]^2\\
&-2[\mathds{1}\{Y_i\le \hat{Q}_{Y|X}(\theta''|0^+)\}-\mathds{1}\{Y_i\le Q_{Y|X}(\theta''|0^+)\}]
[F_{Y|X}(\hat{Q}_{Y|X}(\theta''|0^+)|X_i)-F_{Y|X}(Q_{Y|X}(\theta''|0^+)|X_i)]
\\
=&(1)+(2)-(3).
\end{align*}
The conditional expectation of part (1) is
\begin{align*}
E[(1)|X_i]= F_{Y|X}(\hat{Q}_{Y|X}(\theta''|0^+)|X_i)+F_{Y|X}(Q_{Y|X}(\theta''|0^+)|X_i)-2F_{Y|X}( \hat{Q}_{Y|X}(\theta''|0^+)\wedge Q_{Y|X}(\theta''|0^+) |X_i)
\\
\underset{x}{\overset{p}{\to}}0
\end{align*}
uniformly by the uniform consistency $\norm{\hat{Q}_{Y|X}(\cdot|0^\pm)-Q_{Y|X}(\cdot|0^\pm)}_{[a,1-a]} \underset{x}{\overset{p}{\to}}0$ and the continuous mapping theorem under Assumption SQRD (i).
The uniform convergence in probability of other parts, (2) and (3), can be concluded similarly.
\end{proof}
Similar results hold in the cases of Sharp FQRK, Fuzzy FQRK and Fuzzy FQRD as the following
\begin{lemma}\label{lemma:asymptotic_rho-conti_QRK}
Suppose that Assumption SQRK or FQRK holds, in addition to Assumptions S and K.
Define
\begin{align*}
\rho(\hat{Q}_{Y|X}(\theta''|0),Q_{Y|X}(\theta''|0))=\lim_{n\to \infty}\big(\sum_{i=1}^{n} E|f_{ni}(\hat{Q}_{Y|X}(\theta''|0))-f_{ni}(Q_{Y|X}(\theta''|0))|^2  \big)^{1/2},
\end{align*}
where
\begin{align*}
f_{ni}(y)&=\frac{e'_1(\Gamma^\pm_2)^{-1}r_2(\frac{X_i}{h_n})K(\frac{X_i}{h_n})[\mathds{1}\{Y_i\le y\}-F_{Y|X}(y|X_i)]\delta^\pm_i}{\sqrt{nh_n}f_{X}(0)}.
\end{align*}
Then,
$\norm{\hat{Q}_{Y|X}(\cdot|0)-Q_{Y|X}(\cdot|0)}_{[a,1-a]} \overset{p}{\underset{x}{\to}}0$ implies
$\sup_{\theta'' \in [a,1-a]}\rho(\hat{Q}_{Y|X}(\theta''|0),Q_{Y|X}(\theta''|0))\overset{p}{\underset{x}{\to}}0$.
\end{lemma}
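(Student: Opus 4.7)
The plan is to mimic the proof of Lemma \ref{lemma:asymptotic_rho-conti} essentially verbatim, since the only structural differences are that $e_0'$ is replaced by $e_1'$ (reflecting the fact that we now target the first derivative rather than the level) and that the evaluation points $Q_{Y|X}(\theta''|0^\pm)$ are replaced by the single point $Q_{Y|X}(\theta''|0)$. Neither change disturbs the mechanism of the earlier proof: the vector $e_v$ only enters through a bounded quadratic form $e_v'(\Gamma^\pm_2)^{-1}\Psi^\pm(\Gamma^\pm_2)^{-1}e_v$, and evaluating the two-sided quantile $Q_{Y|X}(\theta''|0)$ inside $F_{Y|X}(\cdot|X_i)$ is harmless because $X_i$ is still integrated against the kernel.

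First I would fix the sign (say $+$; the $-$ case is identical) and substitute the definition of $f_{ni}$ into the definition of $\rho^2$. Applying the tower property conditional on $X_i$ and performing the same change of variables $u=X_i/h_n$ that was used in the proof of Theorem \ref{theorem:weak_conv}, the matrix factor decouples and yields
\begin{align*}
\rho^2\!\bigl(\hat Q_{Y|X}(\theta''|0),Q_{Y|X}(\theta''|0)\bigr)
&=\lim_n E\!\left[\frac{e_1'(\Gamma^+_2)^{-1}\Psi^+(\Gamma^+_2)^{-1}e_1}{f_X(0)}\,M_n(\theta''|X_i)\right],
\end{align*}
where $M_n(\theta''|X_i)$ is the conditional second moment of the indicator-minus-CDF difference. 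Under Assumption SQRK (or FQRK) parts (i)(a)(b) and (iv), together with Assumption K, the matrix factor is a bounded constant, so it suffices to prove $\sup_{\theta''}M_n(\theta''|X_i)\underset{x}{\overset{p}{\to}}0$ uniformly in $X_i$ in the support of the kernel.

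Next I would decompose $M_n(\theta''|X_i)$ exactly as in the earlier lemma into three parts: $(1)$ the squared difference of indicators, $(2)$ the squared difference of conditional CDFs, and $(3)$ the cross product. Part $(1)$ conditional on $X_i$ equals $F_{Y|X}(\hat Q\!\vee\! Q|X_i)-F_{Y|X}(\hat Q\!\wedge\! Q|X_i)$, which converges uniformly to zero by the uniform consistency hypothesis $\|\hat Q_{Y|X}(\cdot|0)-Q_{Y|X}(\cdot|0)\|_{[a,1-a]}\underset{x}{\overset{p}{\to}}0$ combined with the Lipschitz continuity of $F_{Y|X}$ in its first argument (implied by the bounded density $f_{Y|X}$ in Assumption SQRK/FQRK (i)(b)) and the continuous mapping theorem. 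Parts $(2)$ and $(3)$ are handled by the same Lipschitz argument, yielding uniform convergence to zero in probability.

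The only delicate point is that, unlike the SQRD case, in SQRK/FQRK the relevant uniform consistency input $\|\hat Q_{Y|X}(\cdot|0)-Q_{Y|X}(\cdot|0)\|_{[a,1-a]}\underset{x}{\overset{p}{\to}}0$ is a hypothesis that must be supplied from the preceding analysis (it follows from Corollary \ref{corollary:mu_unif_consist} applied to $\hat F_{Y|X}(\cdot|0)$ together with Assumption SQRK/FQRK (i)(b), which makes the quantile map Hadamard-differentiable and hence continuous). Once this is in hand, the rest of the proof is algebraically identical to the proof of Lemma \ref{lemma:asymptotic_rho-conti} and no new obstacle arises; the main book-keeping is simply tracking $e_1'$ instead of $e_0'$ through the quadratic form.
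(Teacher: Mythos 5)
Your proposal is correct and matches the paper's treatment: the paper explicitly omits the proof of this lemma on the grounds that it is "mostly identical" to that of Lemma \ref{lemma:asymptotic_rho-conti}, which is precisely the adaptation you carry out (replacing $e_0'$ by $e_1'$ in the bounded quadratic form and evaluating at the single point $Q_{Y|X}(\theta''|0)$). Your closing remark correctly identifies that the uniform consistency of $\hat{Q}_{Y|X}(\cdot|0)$ is a hypothesis of the lemma rather than something to be proved here, so no gap arises.
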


\begin{lemma}\label{lemma:asymptotic_rho-conti_FQRD}
Suppose that Assumptions S, K, and FQRD hold.
Define
\begin{align*}
\rho(\hat{Q}_{Y^d|C}(\theta''),Q_{Y^d|C}(\theta''))=\lim_{n\to \infty}\big(\sum_{i=1}^{n} E|f_{ni}(\hat{Q}_{Y^d|C}(\theta''),d)-f_{ni}(Q_{Y^d|C}(\theta''),d)|^2  \big)^{1/2},
\end{align*}
where
\begin{align*}
f_{ni}(y,d)&=\frac{e'_1(\Gamma^\pm_2)^{-1}r_2(\frac{X_i}{h_n})K(\frac{X_i}{h_n})[\mathds{1}\{Y^*_i\le y,D^*_i=d\}-\mu_1(X_i,y,d)]\delta^\pm_i}{\sqrt{nh_n}f_{X}(0)}.
\end{align*}
Then,
$\norm{\hat{Q}_{Y^d|C}(\theta'')-Q_{Y^d|C}(\theta'')}_{[a,1-a]\times \{0,1\}} \overset{p}{\underset{x}{\to}}0$ implies
$\sup_{(\theta'',d) \in [a,1-a]\times \{0,1\}}\rho(\hat{Q}_{Y^d|C}(\theta''),Q_{Y^d|C}(\theta''))\overset{p}{\underset{x}{\to}}0$.
\end{lemma}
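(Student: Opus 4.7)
The plan is to mirror the proof of Lemma \ref{lemma:asymptotic_rho-conti} (and its immediate extension in Lemma \ref{lemma:asymptotic_rho-conti_QRK}), adapting the calculation to the bivariate indicator $\mathds{1}\{Y^*_i\le y, D^*_i=d\}$ that appears in the fuzzy quantile RDD setting. Since the argument is to be shown coordinate by coordinate over $d\in\{0,1\}$, and for each fixed $d$ the statement is effectively a scalar claim uniform in $\theta''\in[a,1-a]$, I would treat the two values of $d$ separately and also split into the $+$ and $-$ sides; by symmetry it suffices to handle one case, say the $+$ side with $d=1$.

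First, I would write out $\rho^2(\hat{Q}_{Y^1|C}(\theta''),Q_{Y^1|C}(\theta''))$ by the law of iterated expectations, conditioning on $X_i$. Using Assumption S on $f_X$, Assumption K on the kernel (boundedness, compact support, positive-definiteness of $\Gamma_2^+$), and Assumption FQRD (ii) on the bandwidth, the kernel/quadratic-form factor reduces in the limit to $e_1'(\Gamma_2^+)^{-1}\Psi_2^+(\Gamma_2^+)^{-1}e_1/f_X(0)$ times an averaged conditional second moment, exactly as in the proof of Lemma \ref{lemma:asymptotic_rho-conti}. The task then reduces to showing
\begin{equation*}
\sup_{\theta''\in[a,1-a]} E\!\left[\big(\Delta_n(\theta'',X_i)\big)^2\,\big|\,X_i\right]\underset{x}{\overset{p}{\to}}0,
\end{equation*}
uniformly for $X_i$ near $0$, where $\Delta_n(\theta'',X_i)$ denotes $\mathds{1}\{Y^*_i\le\hat Q_{Y^1|C}(\theta''),D^*_i=1\}-\mu_1(X_i,\hat Q_{Y^1|C}(\theta''),1)-\mathds{1}\{Y^*_i\le Q_{Y^1|C}(\theta''),D^*_i=1\}+\mu_1(X_i,Q_{Y^1|C}(\theta''),1)$.

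Next I would expand the square $\Delta_n^2$ into three terms as in Lemma \ref{lemma:asymptotic_rho-conti}: the indicator-indicator square, the $\mu_1$-$\mu_1$ square, and the cross term. The indicator piece conditional on $X_i$ equals $\mu_1(X_i,\hat Q,1)+\mu_1(X_i,Q,1)-2\mu_1(X_i,\hat Q\wedge Q,1)$ because $\mathds{1}\{Y^*\le y,D^*=1\}\cdot\mathds{1}\{Y^*\le y',D^*=1\}=\mathds{1}\{Y^*\le y\wedge y',D^*=1\}$. The other two terms involve only $\mu_1$ evaluated at $\hat Q$ and $Q$. Under Assumption FQRD (i) (Lipschitz continuity of $\mu_1$ in both arguments) together with the assumed uniform consistency $\|\hat Q_{Y^d|C}-Q_{Y^d|C}\|_{[a,1-a]\times\{0,1\}}\underset{x}{\overset{p}{\to}}0$, the continuous mapping theorem yields that each of the three terms converges to $0$ in probability uniformly in $\theta''$ and in $X_i\in[\underline x,\overline x]\setminus\{0\}$.

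Finally, I would combine these uniform limits, apply the dominated convergence theorem (using the integrable envelope of the kernel-polynomial factor guaranteed by Assumption K and Assumption FQRD (ii)) to pass the limit inside the outer $E$, and conclude $\sup_{\theta''}\rho^2(\hat Q_{Y^1|C}(\theta''),Q_{Y^1|C}(\theta''))\underset{x}{\overset{p}{\to}}0$. The same argument, carried out for $d=0$ and for the $-$ side, gives the claim in the stated joint form over $[a,1-a]\times\{0,1\}$. The principal obstacle, as in Lemma \ref{lemma:asymptotic_rho-conti}, is bookkeeping for the joint indicator $\mathds{1}\{Y^*\le y,D^*=d\}$ — in particular verifying the identity used to evaluate the conditional expectation of the indicator-indicator square — but because $D^*$ takes only two values this reduces cleanly to the univariate CDF case already treated, and no new technical machinery beyond Assumption FQRD (i) and the uniform consistency hypothesis is required.
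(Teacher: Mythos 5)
Your proposal is correct and follows exactly the route the paper intends: the paper omits this proof, stating it is ``mostly identical'' to that of Lemma \ref{lemma:asymptotic_rho-conti}, and your adaptation --- conditioning on $X_i$, expanding the square into the three terms, and using $\mathds{1}\{Y^*\le y,D^*=d\}\cdot\mathds{1}\{Y^*\le y',D^*=d\}=\mathds{1}\{Y^*\le y\wedge y',D^*=d\}$ together with the assumed uniform consistency of $\hat Q_{Y^d|C}$ --- is precisely that adaptation. The only bookkeeping point worth flagging is that the convergence of the resulting conditional expectations requires uniform continuity of $y\mapsto\mu_1(x,y,d)$, which is supplied by the bounded conditional densities underlying Assumption FQRD (iv)--(v) rather than by the literal ``Lipschitz in $x$'' wording of FQRD (i), but this is the same reading the paper itself relies on.
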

Since the proofs are mostly identical to Lemma \ref{lemma:asymptotic_rho-conti}, we omit them.

\subsubsection{Lemmas for Uniform Consistency of Estimators for Conditional Densities}\label{sec:lemma:unif_cons_est}
Recall that in Section \ref{sec:a:SQRD}, the existence of an uniform consistency estimator is assumed in Assumption SQRD (iii)
An example of estimators $\hat f_{Y|X}(y,0^\pm)$ that satisfy Assumption SQRD (iii) are
\begin{align*}
\hat f_{Y|X}(y|0^\pm)=\frac{\frac{1}{na^2_n}\sum_{i=1}^{n}K(\frac{Y_i-y}{a_n})K(\frac{X_i}{a_n})\delta^\pm_i}{\frac{1}{na_n}
\sum_{i=1}^{n}K(\frac{X_i}{a_n})\delta^\pm_i}=\frac{\hat g_{YX}(y,0^\pm)}{\hat g_X(0^\pm)}
\end{align*}
The following result gives the sufficient conditions for $f_{Y|X}(y|0^\pm)$ to satisfy Assumption SQRD (iii).
\begin{lemma}\label{lemma:unif_cons_fYX}
Suppose that Assumptions S, K, and SQRD hold. In addition, let $f_{YX}$ be continuously differentiable on $\mathscr{Y}_1 \times [\underline{x},0)$, and $\mathscr{Y}_1 \times (0,\overline{x}]$ with bounded partial derivatives and assume kernel $K$ be symmetric. Let $a_n$ be such that $a_n\to 0$, $na_n \to \infty$, $\frac{na^2_n}{|\log a_n|}\to \infty$, $\frac{|\log a_n|}{\log \log a_n}\to \infty$, and $a^2_n\le c a^2_{2n}$ for some $c>0$.
then $\hat f_{Y|X} (y|0^\pm)$ such that $\sup_{y\in \mathscr{Y}_1}|\hat f_{Y|X} (y|0^\pm)-f_{Y|X} (y|0^\pm)|=o^{x}_p(1)$.
\begin{proof}
We will show only for $0^+$. The other side follows similarly.
Note the denominator $\hat g_X(0^+)=\frac{1}{2} f_X(0^+)+o^{x}_p(1)$ and $f_X(0)$ is bounded away from $0$ under Assumption \ref{a:BR}(ii).

Thus, it suffices to show $\sup_{y\in \mathscr{Y}_1}|\hat g_{YX}(y,0^+)-\frac{1}{2}f_{YX}(y,0^+)|=o^x_p(1)$.
Note $|\hat g_{YX}(y,0^+)-\frac{1}{2}f_{YX}(y,0^+)|\le |\hat g_{YX}(y,0^+)-E\hat g_{YX}(y,0^+)|+|E\hat g_{YX}(y,0^+)-\frac{1}{2}f_{YX}(y,0^+)|$.
To control the stochastic part, Theorem 2.3 of Gin\'e and Guillou (2002) suggests that
\begin{align*}
\sup_{y\in \mathscr{Y}_1}|\hat g_{YX}(y,0^+)-E\hat g_{YX}(y,0^+)|=O^x_{a.s.}(\sqrt{\frac{\log \frac{1}{a_n}}{na^2_n}}).
\end{align*}
For the deterministic part, under the smoothness assumptions made on $f_{YX}$, a mean value expansion gives
\begin{align*}
&E[\hat g_{YX}(y,0^+)]\\
=&\int_{\mathds{R}}\int_{\mathds{R}_+}K(u)K(v)(f_{YX}(y,0^+)+\frac{\partial}{\partial y}f_{YX}(y^*,x^*)ua_n + \frac{\partial}{\partial x}f_{YX}(y^*,x^*)va_n)dudv\\
=&\frac{1}{2}f_{YX}(y,0^+)+O(a_n)
\end{align*}
uniformly on $\mathscr{Y}_1$,
where $(y^*,x^*)$ is a linear combination of $(y,0)$ and $(y+uh_n,vh_n)$. This concludes the proof.
\end{proof}
\end{lemma}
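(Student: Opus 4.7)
The plan is to write the estimator as a ratio $\hat f_{Y|X}(y|0^\pm) = \hat g_{YX}(y,0^\pm)/\hat g_X(0^\pm)$ and handle numerator and denominator separately. The denominator does not depend on $y$, so a pointwise bias-variance argument using Assumptions S and K gives $\hat g_X(0^\pm) = \tfrac{1}{2} f_X(0) + o^x_p(1)$, which is bounded away from zero by Assumption S (the factor $\tfrac{1}{2}$ reflects that the kernel is integrated only over one side of the cutoff). With a uniformly bounded-below denominator in hand, the main task reduces to showing
$$\sup_{y \in \mathscr{Y}_1} \bigl|\hat g_{YX}(y,0^\pm) - \tfrac{1}{2} f_{YX}(y,0)\bigr| = o^x_p(1),$$
after which the ratio statement follows from a standard inequality of the form $|a/b - c/d| \le |a-c|/|b| + |c|\cdot|b-d|/(|b|\cdot|d|)$.

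To control this supremum I would split into bias and centered parts:
$$\bigl|\hat g_{YX}(y,0^\pm) - \tfrac{1}{2} f_{YX}(y,0)\bigr| \le \bigl|\hat g_{YX}(y,0^\pm) - E \hat g_{YX}(y,0^\pm)\bigr| + \bigl|E \hat g_{YX}(y,0^\pm) - \tfrac{1}{2} f_{YX}(y,0)\bigr|.$$
For the bias, a change of variables followed by a first-order Taylor expansion of $f_{YX}$ around $(y,0)$, using the assumed bounded partial derivatives together with symmetry of $K$ and the fact that $K$ integrates to $1$, yields $E \hat g_{YX}(y,0^\pm) = \tfrac{1}{2} f_{YX}(y,0) + O(a_n)$ uniformly in $y \in \mathscr{Y}_1$. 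This piece is routine.

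For the centered part, I would appeal to a uniform-in-$y$ maximal inequality for kernel estimators. The natural reference is Theorem 2.3 of Gin\'e and Guillou (2002), which, applied to the class of product-kernel functions $\{(y',x') \mapsto K((y'-y)/a_n) K(x'/a_n)\delta^\pm_{x'} : y \in \mathscr{Y}_1\}$, yields the almost-sure rate $O(\sqrt{|\log a_n|/(na_n^2)})$ under the stated bandwidth conditions $a_n \to 0$, $na_n^2/|\log a_n| \to \infty$, $|\log a_n|/\log\log n \to \infty$, and $a_n^2 \le c\, a_{2n}^2$. Under our bandwidth regime this rate is $o(1)$. The main obstacle is verifying the measurability and entropy hypotheses of the Gin\'e--Guillou theorem for this two-dimensional kernel class; this is where Lemmas \ref{lemma:VC type} and \ref{lemma:VC type_stability} enter, since $K$ has bounded variation and products of VC-type classes remain VC-type with envelope $\|K\|_\infty^2 \mathbbm{1}\{|x'|\le a_n\}$. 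Combining the bias, the centered stochastic term, and the denominator convergence then delivers the uniform consistency claim.
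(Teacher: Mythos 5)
Your proposal is correct and follows essentially the same route as the paper's proof: the same ratio decomposition with the one-sided denominator $\hat g_X(0^+)=\tfrac{1}{2}f_X(0)+o^x_p(1)$, the same bias/centered-part split for $\hat g_{YX}$, a first-order expansion giving the $O(a_n)$ bias, and Theorem 2.3 of Gin\'e and Guillou (2002) for the uniform stochastic term at rate $O(\sqrt{|\log a_n|/(na_n^2)})$. Your added remarks on verifying the VC-type entropy conditions for the product-kernel class are a reasonable elaboration of a step the paper leaves implicit, but they do not change the argument.
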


In Section \ref{sec:a:SQRK}, we assumed the existence of an estimator that satisfies Assumption SQRK. Define
\begin{align*}
\hat{f}_{Y|X}(y|0)=\frac{\frac{1}{na^2_n}\sum_{i=1}^{n}K(\frac{Y_i-y}{a_n})K(\frac{X_i}{a_n})}{\frac{1}{ n a_n}\sum_{i=1}^{n}K(\frac{X_i}{a_n})}=\frac{\hat f_{YX}(y,0)}{\hat f_X(0)}
\end{align*}
The following lemma provides such estimator.
\begin{lemma}\label{lemma:super_consist_SQRK}
Assume Assumption S and SQRK (i) (b) hold. In addition, if the joint density $f_{YX}(y,x)$ is twice continuously differentiable on $\mathscr{Y}_2 \times [\underline{x},\overline{x}]$ with bounded partial derivatives for some open set $\mathscr{Y}_2$ such that $\mathscr{Y}_1 \subset\mathscr{Y}_2 \subset\mathscr{Y} $.
Let $K$ be a bounded kernel function such that $\{x\mapsto K(x-x'/a):a>0,x'\in \mathds{R} \}$ forms a VC type class (e.g. Epanechnikov) and $\int_{\mathds{R}}uK(u)du=0$, $\int_{\mathds{R}}u^2K^2(u)du<0$.
 The bandwidth $a_n\to0$ satisfies $\frac{na^2_n}{|\log a_n|}\to \infty$, $\frac{|\log a_n|}{\log \log a_n}\to \infty$, and $a^2_n\le c a^2_{2n}$ for some $c>0$, $\frac{h^3_n\log\frac{1}{a_n}}{a^2_n}\to 0$ and $nh^3_n a^4_n\to 0$.
Then, $\sup_{y\in \mathscr{Y}_1}\sqrt{nh^3_n}|\hat f_{Y|X}(y|0)- f_{Y|X}(y|0)|\underset{x}{\overset{p}{\to}}0$. Thus, Assumption SQRK (iv) is satisfied.
\end{lemma}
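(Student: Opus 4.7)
The plan is to reduce the uniform super-consistency claim to separate uniform controls of the numerator $\hat f_{YX}(y,0)$ and the scalar denominator $\hat f_X(0)$, scaled by the factor $\sqrt{nh_n^3}$. Specifically, I would decompose
\[
\hat f_{Y|X}(y|0)-f_{Y|X}(y|0)=\frac{\hat f_{YX}(y,0)-f_{YX}(y,0)}{\hat f_X(0)}-\frac{f_{Y|X}(y|0)\bigl(\hat f_X(0)-f_X(0)\bigr)}{\hat f_X(0)}.
\]
Since $f_X(0)>0$ by Assumption S and $\hat f_X(0)\to f_X(0)$ in probability (a standard univariate kernel density result under the stated bandwidth conditions), the denominator is bounded away from zero with probability approaching one, and $f_{Y|X}(\cdot|0)$ is bounded on the compact $\mathscr{Y}_1$ by the assumed smoothness. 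It therefore suffices to show that both $\sqrt{nh_n^3}\sup_{y\in\mathscr{Y}_1}|\hat f_{YX}(y,0)-f_{YX}(y,0)|$ and $\sqrt{nh_n^3}|\hat f_X(0)-f_X(0)|$ are $o_p^x(1)$.

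For the joint density estimator I would further split into stochastic and deterministic parts. The class $\mathscr{F}_n=\{(x,y')\mapsto K((y'-y)/a_n)K(x/a_n):y\in\mathscr{Y}_1\}$ is VC-type with a bounded envelope proportional to $\|K\|_\infty^2$ under the stated kernel condition and Lemma \ref{lemma:VC type_stability}, so Theorem 2.3 of Gin\'e and Guillou (2002) applies under $na_n^2/|\log a_n|\to\infty$, $|\log a_n|/\log\log a_n\to\infty$, and $a_n^2\le c\,a_{2n}^2$, yielding
\[
\sup_{y\in\mathscr{Y}_1}\bigl|\hat f_{YX}(y,0)-E\hat f_{YX}(y,0)\bigr|=O_{a.s.}^x\Bigl(\sqrt{|\log a_n|/(na_n^2)}\Bigr).
\]
A simpler pointwise argument gives $|\hat f_X(0)-E\hat f_X(0)|=O_p^x(1/\sqrt{na_n})$. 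For the deterministic parts, using the change of variables $u=(y'-y)/a_n$, $v=x/a_n$ together with a second-order Taylor expansion of $f_{YX}$ around $(y,0)$, the kernel moment condition $\int uK(u)du=0$ (applied to each of the two coordinates, thus killing the first-order terms), and the boundedness of the second derivatives of $f_{YX}$ on $\mathscr{Y}_2\times[\underline x,\overline x]$, I would obtain $\sup_{y\in\mathscr{Y}_1}|E\hat f_{YX}(y,0)-f_{YX}(y,0)|=O(a_n^2)$; the same argument applied to the marginal gives $|E\hat f_X(0)-f_X(0)|=O(a_n^2)$.

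Putting the pieces together and multiplying by $\sqrt{nh_n^3}$, the stochastic contribution is of order $\sqrt{h_n^3|\log a_n|/a_n^2}$, which tends to zero under the assumed $h_n^3|\log a_n|/a_n^2\to 0$, and the bias contribution is of order $\sqrt{nh_n^3\,a_n^4}$, which tends to zero under $nh_n^3 a_n^4\to 0$. A Slutsky-type argument combining these rates with $\hat f_X(0)\to f_X(0)>0$ delivers the stated conclusion. The main obstacle is the stochastic side: obtaining the sharp $\sqrt{|\log a_n|/(na_n^2)}$ uniform rate over $\mathscr{Y}_1$ requires appealing to the Gin\'e--Guillou maximal inequality for VC-type classes of kernels rather than a direct Markov-inequality calculation as in Lemma \ref{lemma:unif_cons_fYX}, and verifying the VC-type hypothesis for the two-dimensional product kernel indexed by $y\in\mathscr{Y}_1$ (via Lemma \ref{lemma:VC type_stability}) is the step that needs the most care; once that is in hand, the bias analysis is routine given the kernel moment condition and the $C^2$-smoothness of $f_{YX}$.
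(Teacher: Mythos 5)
Your proposal is correct and follows essentially the same route as the paper's proof: reduce to the numerator $\hat f_{YX}(y,0)$ after noting the denominator is consistent for $f_X(0)>0$, split into a stochastic part controlled by Theorem 2.3 of Gin\'e and Guillou (2002) at the rate $O^x_{a.s.}(\sqrt{|\log a_n|/(na_n^2)})$ and a bias part of order $O(a_n^2)$ via a second-order expansion using $\int uK(u)\,du=0$, then combine with the conditions $h_n^3|\log a_n|/a_n^2\to 0$ and $nh_n^3a_n^4\to 0$. If anything, your explicit treatment of the $\sqrt{nh_n^3}\,|\hat f_X(0)-f_X(0)|$ term is slightly more careful than the paper's, which disposes of the denominator in one line.
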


We remark that the condition $\frac{h^3_n\log\frac{1}{a_n}}{a^2_n}\to 0$ is easily satisfied by the bandwidth selectors $h^{MSE}_{1,n}\propto n^{-1/5}$ and $h^{ROT}_{1,n}\propto n^{-1/4}$ proposed in Section \ref{sec:practical_guideline} along with a Silverman's rule of thumb $a_n\propto n^{-1/6}$.
\begin{proof}
First note that the denominator is essentially $f_X(0)+o^x_p(1)$ and $f_X(0)>C>0$. It then suffices to show that the numerator $\hat f_{YX}(y,0)$ satisfies $\sup_{y \in \mathscr{Y}_1}\sqrt{nh_n}|\hat f_{YX}(y,0)-f_{YX}(y,0)|=o^x_p(1)$. Write $|\hat f_{YX}(y,0)-f_{YX}(y,0)|\le |\hat f_{YX}(y,0)-E\hat f_{YX}(y,0)|+|E\hat f_{YX}(y,0)- f_{YX}(y,0) |$. To control the stochastic part, Theorem 2.3 of Gin\'e and Guillou (2002) suggests that
\begin{align*}
\sup_{y\in \mathscr{Y}_1}|\hat{f}_{YX}(y,0)-E\hat f_{YX}(y,0)|=O^x_{a.s.}(\sqrt{\frac{\log \frac{1}{a_n}}{na^2_n}}).
\end{align*}
For the deterministic part, a mean value expansion gives
\begin{align*}
&E[\hat f_{YX}(y,0)]\\
=&\int_{\mathds{R}}\int_{\mathds{R}}K(u)K(v)(f_{YX}(y,0)+\frac{\partial}{\partial y}f_{YX}(y,0)ua_n + \frac{\partial}{\partial x}f_{YX}(y,0)va_n\\
&+\frac{\partial}{\partial y}\frac{\partial}{\partial y}f_{YX}(y^*,x^*)uv a^2_n+\frac{1}{2}\frac{\partial^2}{\partial y^2}f_{YX}(y^*,x^*)u^2 a^2_n+\frac{1}{2}\frac{\partial^2}{\partial x^2}f_{YX}(y^*,x^*)v^2 a^2_n)dudv\\
=&f_{YX}(y,0)+O(a^2_n)
\end{align*}
uniformly on $\mathscr{Y}_1$,
where $(y^*,x^*)$ is a linear combination of $(y,0)$ and $(y+uh_n,vh_n)$.
Thus, the conclusion follows from $\frac{\log\frac{1}{a_n}h^3_n}{a^2_n}\to 0$ and $nh^3_na^4_n\to0$.
\end{proof}

In Section \ref{sec:a:FQRD}, the densities in the denominator can be estimated in the following manner.
Define
\begin{align*}
&\hat{f}_{Y^d|C}(y)=
\frac{\hat{f}_{Y|XD}(y|0^+,d)\hat{\mu}_{2,2}(0^+,d)-
\hat{f}_{Y|XD}(y|0^-,d)\hat{\mu}_{2,2}(0^-,d)}
{\hat{\mu}_{2,2}(0^+,d)-\hat{\mu}_{2,2}(0^-,d)},
\end{align*}
where
$\hat{f}_{Y|XD}(y|0^\pm,1)=\frac{\frac{1}{na^2_{n}}\sum_{i=1}^{n} K(\frac{X_i}{a_{n}})K(\frac{Y_i-y}{a_{n}}) D_i \delta^\pm_i }{\frac{1}{na_{n}}\sum_{i=1}^{n}K(\frac{X_i}{a_{n}})  D_i \delta^\pm_i}$,
$\hat{f}_{Y|XD}(y|0^\pm,0)=\frac{\frac{1}{na^2_{n}}\sum_{i=1}^{n} K(\frac{X_i}{a_{n}})K(\frac{Y_i-y}{a_{n}}) (1-D_i) \delta^\pm_i }{\frac{1}{na_{n}}\sum_{i=1}^{n}K(\frac{X_i}{a_{n}}) (1- D_i) \delta^\pm_i}$,
with bandwidths $a_n$. 
The following lemma shows uniform consistency of these estimators.

\begin{lemma}\label{lemma:FQRDD}
Suppose that Assumptions, S, K, and FQRD hold. In addition, for each $d=0$, $1$, let $f_{YX|D}(\cdot,\cdot|d)$ be continuously differentiable on $\mathscr{Y}_1 \times [\underline{x},0)$, and $\mathscr{Y}_1 \times (0,\overline{x}]$ with bounded partial derivatives and assume kernel $K$ be symmetric. Let $a_n$ be such that $a_n\to 0$, $na_n \to \infty$, $\frac{na^2_n}{|\log a_n|}\to \infty$, $\frac{|\log a_n|}{\log \log a_n}\to \infty$, and $a^2_n\le c a^2_{2n}$ for some $c>0$.
Then, $\sup_{y\in \mathcal{Y}}|\hat{f}_{Y^0|C}(y)-f_{Y^0|C}(y)|=o^x_p(1)$ and $\sup_{y\in \mathcal{Y}}|\hat{f}_{Y^1|C}(y)-f_{Y^1|C}(y)|=o^x_p(1)$.
\end{lemma}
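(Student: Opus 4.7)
The plan is to treat $\hat f_{Y^d|C}$ as a smooth (away from zero) plug-in functional of four objects: $\hat f_{Y|XD}(\cdot\mid 0^+,d)$, $\hat f_{Y|XD}(\cdot\mid 0^-,d)$, $\hat\mu_{2,2}(0^+,d)$, and $\hat\mu_{2,2}(0^-,d)$. Since the denominator $\hat\mu_{2,2}(0^+,d)-\hat\mu_{2,2}(0^-,d)$ converges in probability to $\mu_2(0^+,d)-\mu_2(0^-,d)$, which is bounded away from zero by Assumption FQRD (iii), a standard Slutsky/continuous mapping argument reduces the problem to establishing (a) consistency of the scalars $\hat\mu_{2,2}(0^\pm,d)$ and (b) uniform (in $y\in\mathcal Y_1$) consistency of $\hat f_{Y|XD}(\cdot\mid 0^\pm,d)$ for $f_{Y|XD}(\cdot\mid 0^\pm,d)$. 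Part (a) is immediate from Corollary \ref{corollary:mu_unif_consist}, which applies under Assumptions S, K, and FQRD.

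Before attacking (b), I would first verify that the population target actually admits the representation
\[
f_{Y^d|C}(y)=\frac{f_{Y|XD}(y\mid 0^+,d)\mu_2(0^+,d)-f_{Y|XD}(y\mid 0^-,d)\mu_2(0^-,d)}{\mu_2(0^+,d)-\mu_2(0^-,d)}.
\]
This follows by differentiating the Frandsen--Fr\"olich--Melly identification (\ref{eq:ffm_cdf}) with respect to $y$ and rewriting the mixed moments $E[\mathbbm 1\{Y^*\le y\}\mathbbm 1\{D^*=d\}\mid X=0^\pm]$ as $F_{Y|XD}(y\mid 0^\pm,d)\cdot\mu_2(0^\pm,d)$; the interchange of limit and derivative is justified by Assumption FQRD (i) and the assumed continuous differentiability of $f_{YX|D}(\cdot,\cdot\mid d)$ in the statement of this lemma.

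For (b), note that $\hat f_{Y|XD}(y\mid 0^+,1)$ is a ratio of a bivariate kernel average (numerator) and a univariate kernel average (denominator), where only the numerator depends on $y$. For the denominator, a standard pointwise argument yields $\frac{1}{na_n}\sum_i K(X_i/a_n)D_i\delta_i^+ \xrightarrow{p} \tfrac12 f_X(0)\,\mathbb P(D=1\mid X=0^+)$, which is strictly positive under Assumption FQRD (iii) together with Assumption S. For the numerator, I would split
\[
\Bigl|\tfrac{1}{na_n^2}\sum_i K(X_i/a_n)K((Y_i-y)/a_n)D_i\delta_i^+-\tfrac12 f_{YX\mid D}(y,0^+\mid 1)\mathbb P(D=1)\Bigr|
\]
into a stochastic part and a bias part. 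The stochastic part is handled by invoking Theorem~2.3 of Gin\'e and Guillou (2002) exactly as in the proof of Lemma~\ref{lemma:unif_cons_fYX}: the class $\{(u,v)\mapsto K(u/a)K((v-y)/a):a>0,\,y\in\mathcal Y_1\}$ is VC type under Assumption K, and the bandwidth conditions $na_n^2/|\log a_n|\to\infty$, $|\log a_n|/\log\log n\to\infty$, $a_n\le c a_{2n}$ ensure the sup-norm rate $O_{a.s.}(\sqrt{\log(1/a_n)/(na_n^2)})=o(1)$. The bias part is $O(a_n)$ uniformly in $y$ by a first-order Taylor expansion, using the uniform boundedness of the partial derivatives of $f_{YX\mid D}$ and symmetry of $K$.

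Combining (b) with (a) via the continuous mapping theorem (twice), and invoking Assumption FQRD (iii) to keep the plug-in denominator bounded away from zero with probability approaching one, delivers $\sup_{y\in\mathcal Y_1}|\hat f_{Y^d|C}(y)-f_{Y^d|C}(y)|=o_p^x(1)$ for each $d\in\{0,1\}$. The main obstacle is step (b): controlling the kernel estimator of the conditional density uniformly in $y$ using only the stated bandwidth rates, which is why the bandwidth restrictions in the lemma are crafted specifically to activate Gin\'e--Guillou; the rest of the proof is essentially bookkeeping.
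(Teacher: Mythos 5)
Your proposal is correct and follows essentially the same route as the paper's proof: differentiate the Frandsen--Fr\"olich--Melly identification to write $f_{Y^d|C}$ as a ratio with numerator $f_{Y|XD^*}(y|0^\pm,d)\mu_2(0^\pm,d)$ terms and denominator $\mu_2(0^+,d)-\mu_2(0^-,d)$, then combine uniform consistency of the conditional density estimator via Gin\'e--Guillou (exactly as in Lemma \ref{lemma:unif_cons_fYX}) with consistency of $\hat\mu_{2,2}(0^\pm,d)$ from Corollary \ref{corollary:mu_unif_consist} and the nonvanishing denominator from Assumption FQRD (iii). The only difference is that you spell out the stochastic/bias split for the bivariate kernel numerator explicitly, whereas the paper delegates that step by reference.
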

\begin{proof}
Note that
\begin{align*}
f_{Y^1|C}(y)&=\frac{\partial }{\partial y} F_{Y^1|C}(y)
=\frac{\frac{\partial }{\partial y}\mu_1(0^+,(y,1))-
\frac{\partial }{\partial y}\mu_{1}(0^-,(y,1))}
{\mu_{2}(0^+,1)-\mu_{2}(0^-,1)}
\end{align*}
The numerator can be estimated consistently with the local quadratic regression defined above.
For the denominator, we have
\begin{align*}
\frac{\partial }{\partial y}\mu_{1}(0^+,(y,1))=&\frac{\partial }{\partial y}E[\mathds{1}\{Y_i\le y\}\mathds{1}\{D_i=1\}|X_i=0^+]\\
=&\frac{\partial }{\partial y}\Big(E[\mathds{1}\{Y_i\le y\}|X_i=0^+,D_i=1]\mathds{P}^x(D_i=1|X_i=0^+)+0\Big)\\
=&\frac{\partial }{\partial y}F_{Y^1|XD}(y|0^+,1)\mu_2(0^+,1)\\
=&f_{Y^1|XD}(y|0^+,1)\mu_2(0^+,1)
\end{align*}
Uniform consistency of $\hat{f}_{Y^1|X,D}(y|0^\pm,1)$ can be shown by applying Theorem 2.3 of Gin\'e and Guillou (2002), as in Lemma \ref{lemma:unif_cons_fYX}.
Also $\hat{\mu}_{2,p}(0^\pm)$ is uniformly consistent by Corollary \ref{corollary:mu_unif_consist}.
Corresponding result for $\hat{f}_{Y^0|C}(y)$ can be shown similarly.
\end{proof}

\subsubsection{Proof of Lemma \ref{lemma:quantile_deriv}}\label{sec:lemma:quantile_deriv}
\begin{proof}
The Hadamard differentiability of the left-inverse operator and a mean value expansion give
\begin{align*}
\frac{\partial }{\partial x}Q_{Y|X}(\theta''|0^+)
&=\lim_{\delta\downarrow 0}\frac{Q_{Y|X}(\theta''|\delta)-Q_{Y|X}(\theta''|0^+)}{\delta}\\
&=\lim_{\delta\downarrow 0}\frac{\Phi\Big(F_{Y|X}(\cdot|\delta)\Big)(\theta'')-\Phi \Big(F_{Y|X}(\cdot|0^+)\Big)(\theta'')}{\delta}\\
&=\lim_{\delta\downarrow 0}\frac{\Phi\Big((F_{Y|X}(\cdot|0^+) + \delta \frac{\partial }{\partial x}F_{Y|X}(\cdot|x^*)\Big)(\theta'')-\Phi\Big(F_{Y|X}(\cdot|0^+)\Big)(\theta'')}{\delta}\\
&= \Phi'_{F_{Y|X}(\cdot|0)} \Big(\frac{\partial}{\partial x}F_{Y|X}(\cdot|0^+)\Big) (\theta'')\\
&=-\frac{ F^{(1)}_{Y|X}(Q_{Y|X}(\theta''|0)|0^+)}{f_{Y|X}(Q_{Y|X}(\theta''|0)|0)}\\
&=\phi(F^{(1)}_{Y|X}(\cdot|0^+))(\theta'')
\end{align*}
by the definition of Hadamard Derivative and then Lemma 3.9.23 (i) of van der Vaart and Wellner (1996).
\end{proof}


\subsubsection{Uniform Consistency of $\hat Q_{Y^d |C}$ under Assumption FQRD}\label{sec:lemma:uniform_consist_Q_{Y|C}}

The next lemma, which follows from Theorem \ref{theorem:weak_conv}, shows the weak convergence in probability of the estimators of the local conditional quantiles of the potential outcomes.

\begin{lemma}\label{lemma:uniform_consist_Q_{Y|C}}
Suppose that Assumptions S, K, and FQRD hold.
Then, $\sup_{\mathscr{Y}_1}|\hat{Q}_{Y^d|C}-Q_{Y^d |C}|\underset{ x \times \xi}{\overset{p}{\to}} 0$ for $d=1,0$.
\end{lemma}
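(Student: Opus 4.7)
The plan is to reduce uniform consistency of the quantile estimator $\hat Q_{Y^d|C}$ on $[a,1-a]$ to uniform consistency of the underlying CDF estimator $\hat F_{Y^d|C}$ on $\mathscr{Y}_1$, using the fact that under Assumption FQRD (iv) the population CDF $F_{Y^d|C}$ is continuously differentiable with density bounded away from zero on $\mathscr{Y}_1$. Since $\hat Q_{Y^d|C}$ is a function of the data only and does not involve the multiplier weights $\xi$, convergence $\underset{x}{\overset{p}{\to}} 0$ automatically upgrades to $\underset{x \times \xi}{\overset{p}{\to}} 0$ by Fubini and the definition of product measure, so it suffices to work with $\mathds{P}^x$.

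First I would establish that $\sup_{y \in \mathscr{Y}_1}|\hat F_{Y^d|C}(y) - F_{Y^d|C}(y)| \underset{x}{\overset{p}{\to}} 0$. By Corollary \ref{corollary:mu_unif_consist} applied with $v=0$ and $p=2$ (whose hypotheses Assumptions \ref{a:BR} and \ref{a:cond_weak_conv} are verified in the course of the proof of Corollary \ref{corollary:FQRD}), the numerator components $\hat\mu_{1,2}(0^\pm,\cdot,d)$ are uniformly consistent for $\mu_1(0^\pm,\cdot,d)$ on $\mathscr{Y}_1$, and $\hat\mu_{2,2}(0^\pm,d)$ are consistent for $\mu_2(0^\pm,d)$. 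Assumption FQRD (iii) ensures that the denominator $\mu_2(0^+,d) - \mu_2(0^-,d)$ is bounded away from zero, so the continuous mapping theorem yields uniform consistency of the ratio $\hat F_{Y^d|C}$ on $\mathscr{Y}_1$.

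Next I would convert uniform consistency of the CDF estimator into uniform consistency of its left-inverse. The standard deterministic sandwich argument for quantile functions is as follows: fix $\varepsilon>0$ small enough that $Q_{Y^d|C}(\theta) \pm \varepsilon \in \mathscr{Y}_1$ for every $\theta \in [a,1-a]$ (possible by the $\varepsilon$-enlargement in the definition of $\mathscr{Y}_1$), and let $\underline f = \inf_{y \in \mathscr{Y}_1} f_{Y^d|C}(y) > 0$ by Assumption FQRD (iv). Then
\begin{align*}
F_{Y^d|C}(Q_{Y^d|C}(\theta) + \varepsilon) \geq \theta + \underline f \, \varepsilon, \qquad F_{Y^d|C}(Q_{Y^d|C}(\theta) - \varepsilon) \leq \theta - \underline f \, \varepsilon.
\end{align*}
On the event $\{\sup_{y \in \mathscr{Y}_1}|\hat F_{Y^d|C}(y) - F_{Y^d|C}(y)| < \underline f \, \varepsilon/2\}$, which has $\mathds{P}^x$-probability tending to one by the previous step, we therefore get $\hat F_{Y^d|C}(Q_{Y^d|C}(\theta)+\varepsilon) > \theta$ and $\hat F_{Y^d|C}(Q_{Y^d|C}(\theta)-\varepsilon) < \theta$ simultaneously for every $\theta \in [a,1-a]$, which by definition of the infimum forces $|\hat Q_{Y^d|C}(\theta) - Q_{Y^d|C}(\theta)| \leq \varepsilon$ uniformly in $\theta$. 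Since $\varepsilon$ was arbitrary, this gives $\sup_{\theta \in [a,1-a]}|\hat Q_{Y^d|C}(\theta) - Q_{Y^d|C}(\theta)| \underset{x}{\overset{p}{\to}} 0$, and the lemma follows.

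The main obstacle I anticipate is the potential non-monotonicity of $\hat F_{Y^d|C}$ in finite samples, since it is a ratio of local polynomial regressions and need not be increasing. The sandwich argument above circumvents this cleanly: it only uses the value of $\hat F_{Y^d|C}$ at the two points $Q_{Y^d|C}(\theta) \pm \varepsilon$ and relies on the definition $\hat Q_{Y^d|C}(\theta) = \inf\{y : \hat F_{Y^d|C}(y) \geq \theta\}$, so monotonicity of the estimator is never required. An alternative route would be to invoke Hadamard differentiability of the left-inverse operator (van der Vaart and Wellner, 1996; Lemma 3.9.23) together with Corollary \ref{corollary:operator_unif_consist}, but the elementary sandwich argument above is self-contained and avoids technicalities about the tangent space.
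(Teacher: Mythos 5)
Your proof is correct, but it takes a genuinely different route from the paper's. The paper's proof is a two-line application of the functional delta method: it invokes the Hadamard differentiability of the left-inverse map $\Phi$ at $F_{Y^d|C}$ (van der Vaart and Wellner, 1996, Lemma 3.9.23(i)), combines it with the weak convergence of $\nu_n^\pm$ from Theorem \ref{theorem:weak_conv} to conclude $\sqrt{nh_n}[\hat Q_{Y^d|C}-Q_{Y^d|C}]=O_p^x(1)$ uniformly, and then divides by $\sqrt{nh_n}\to\infty$ via Slutsky. You instead first establish $\sup_{\mathscr{Y}_1}|\hat F_{Y^d|C}-F_{Y^d|C}|\underset{x}{\overset{p}{\to}}0$ from Corollary \ref{corollary:mu_unif_consist} and Assumption FQRD (iii), and then run an elementary sandwich/inversion argument using the lower bound on $f_{Y^d|C}$ from Assumption FQRD (iv). Your route is the one explicitly flagged as an alternative in your last paragraph reversed: it is self-contained, avoids the tangent-space technicalities, and delivers only consistency; the paper's route is shorter given the machinery already in place and yields the $\sqrt{nh_n}$-rate, which is needed anyway for Corollary \ref{corollary:FQRD}(i). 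Your observation that $\hat Q_{Y^d|C}$ does not involve $\xi$, so convergence under $\mathds{P}^x$ upgrades to $\mathds{P}^{x\times\xi}$, is correct and matches what the paper does implicitly.

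One small imprecision worth fixing: in your closing remark you claim the argument ``only uses the value of $\hat F_{Y^d|C}$ at the two points $Q_{Y^d|C}(\theta)\pm\varepsilon$.'' That is not quite enough for the lower bound on $\hat Q_{Y^d|C}(\theta)$ when $\hat F_{Y^d|C}$ is non-monotone: knowing $\hat F_{Y^d|C}(Q_{Y^d|C}(\theta)-\varepsilon)<\theta$ does not rule out some smaller $y$ with $\hat F_{Y^d|C}(y)\ge\theta$. The repair is immediate from the event you already condition on: for every $y\in\mathscr{Y}_1$ with $y\le Q_{Y^d|C}(\theta)-\varepsilon$, monotonicity of the \emph{true} CDF gives $F_{Y^d|C}(y)\le\theta-\underline f\,\varepsilon$, hence $\hat F_{Y^d|C}(y)<\theta$ uniformly over all such $y$, which is what the infimum definition actually requires. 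With that one-line amendment the argument is complete.
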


\begin{proof}
Write $\Phi(F_{Y^d | C} ) (\theta''):= \inf\{y \in \mathscr{Y}_1 : F_{Y^d | C}(y) \geq \theta'' \}$
for $\theta''\in [a,1-a]$.
The weak convergence of $\nu^\pm_n$ from Theorem \ref{theorem:weak_conv},  Lemma 3.9.23 (i) of van der Vaart and Wellner (1996), and the functional delta method yield
\begin{align*}
\sqrt{nh_n}[\hat{Q}_{Y^d|C}(\cdot)-Q_{Y^d|C}(\cdot)]
=&\sqrt{nh_n}[\Phi(\Hat{F}_{Y^d|C})(\cdot)-\Phi(F_{Y^d|C})(\cdot)]=O^x_p(1)
\end{align*}
The result follows from Assumption FQRD and Slutsky's lemma.
\end{proof}

\subsection{Proofs of Corollaries for the Ten Examples}\label{sec:corollary_for_examples}
We use $\mathds{G}$ and $\mathds{G}_\pm$ to denote generic zero mean Gaussian processes that appear in some intermediate steps in the proofs without having to specify their covariance structure. They differ across different examples, but are fixed within each one.

\subsubsection{Proof of Corollary \ref{corollary:FQRD}}\label{sec:corollary:FQRD}
\begin{proof}
We first verify that the Assumptions required by Theorem 1 are satisfied.

To show Assumption \ref{a:BR}(ii)(a), it is true that the subgraphs of $\{(y,d)\mapsto\mathds{1}\{y\le y'\}:y'\in\mathscr{Y}_1 \}$ and $\{(y,d) \mapsto \mathds{1}\{d=d'\}:d'\in \mathscr{D}\}$ can not shatter any two-point sets (for definition, see Section 9.1.1 of Kosorok(2008)); to see this, let $(y_1,d_1,r_1)$, $(y_2,d_2,r_2)\in \mathscr{Y}_1\times\mathscr{D}\times \mathds{R}$ with $y_1\le y_2$. Notice that $\{(y_1,d_1,r_1)\}$ can never be picked out by any function in either of the classes. Therefore they are of VC-subgraph classes with VC index of 2. This implies that they are of VC type and Lemma \ref{lemma:VC type_stability} implies that the set of products $\{(t,y)\mapsto \mathds{1}\{y\le y'\}t:y'\in \mathscr{Y}_1 \}$ is of VC type with square integrable constant function $1$ as its envelope. On the other hand, $\{x\mapsto E[\mathds{1}\{Y_i\le y\}\mathds{1}\{t=d\}|X_i=x]:(y,d)\in \mathscr{Y}_1\times \mathscr{D}\}$ is of VC type with square integrable envelope $1$ because of Example 19.7 of van der Vaart (1998) under boundedness of $\mathscr{Y}_1$ and Lipschitz continuity from Assumption S (a) and FQRD (i). Assumption \ref{a:BR}(ii)(c) follows from Assumption FQRD (i) since for any $(y_1,d_1)$, $(y_2,d_2)\in \mathscr{Y}\times \mathscr{D}$, if $d_1=d_2=d$ $E[\mathds{1}\{Y\le y_1,D=d\}\mathds{1}\{Y\le y_2,D=d\}|X=x]=E[\mathds{1}\{Y\le y_1 \wedge y_2,D=d\}|X=x]$ and It equals zero if $d_1\ne d_2$. Assumption \ref{a:BR} (ii) (d) is implied by the right continuity of $(y',d')\mapsto \mathds{1}\{y\le y', d=d'\}$ on $\mathscr{Y}_1 \times \{0,1\}$ in $y$ for each $d$.
Assumption \ref{a:first_stage} is implied by Lemma \ref{lemma:epsilon_unif_const}. Assumption FQRD (iv) and Lemma 3.9.23 of van der Vaart and Wellner (1996) (applicable under Assumption FQRD (i), (iii) and (iv)) implies Hadamard differentiability of  Assumption \ref{a:cond_weak_conv}(i).

The rest are directly implied by Assumption S, K, M, and FQRD. Thus we may apply Theorem \ref{theorem:weak_conv} and acquire $\nu_n \leadsto \mathds{G}:= \mathds{G}^+ -\mathds{G}^-$ for a zero mean Gaussian process $\mathds{G}$,
where
\begin{align*}
\begin{bmatrix}
\nu_n(y,d_1,d_2,1)\\
\nu_n(y,d_1,d_2,2)
\end{bmatrix}
=
\begin{bmatrix}
\sqrt{nh_n}[(\hat{\mu}_{1,2}(0^+,y,d_1)-\hat{\mu}_{1,2} (0^-,y,d_1))-(\mu_{1} (0^+,y,d_1)- \mu_{1}(0^-,y,d_1) )]\\
\sqrt{nh_n}[(\hat{\mu}_{2,2}(0^+,d_2)-\hat{\mu}_{2,2} (0^-,d_2))-(\mu_2 (0^+,d_2)- \mu_2 (0^-,d_2) )]
\end{bmatrix}
\end{align*}
Theorem \ref{theorem:weak_conv} further implies
\begin{align*}
&\sqrt{nh}(\hat{\tau}_{FQRD}(\cdot)-\tau_{FQRD}(\cdot))\\
\leadsto&
\mathds{G}'_{FQRD}(\cdot)\\
:=&\Upsilon'_W\Big(\frac{[\mu_{2} (0^+,d)-\mu_{2} (0^-,d)]
\mathds{G}(\cdot,\cdot,\cdot,1)-[\mu_{1} (0^+,y,d)-\mu_{1} (0^-,y,d)]
\mathds{G}(\cdot,\cdot,\cdot,2)}{[\mu_{2} (0^+,d)-\mu_{2} (0^-,d)]^2}\Big)(\cdot)\\
=& -\frac{[\mu_{2} (0^+,d)-\mu_{2} (0^-,d)]
\mathds{G}(Q_{Y^1|C}(\cdot),1,1,1)-[\mu_{1} (0^+,Q_{Y^1|C}(\cdot),1)-\mu_{1} (0^-,Q_{Y^1|C}(\cdot),1)]
\mathds{G}(Q_{Y^1|C}(\cdot),1,1,2)}{f_{Y^1|C}(Q_{Y^1|C}(\cdot))[\mu_{2} (0^+,1)-\mu_{2} (0^-,1)]^2}\\
&+\frac{[\mu_{2} (0^+,d)-\mu_{2} (0^-,d)]
\mathds{G}(Q_{Y^0|C}(\cdot),0,0,1)-[\mu_{1} (0^+,Q_{Y^0|C}(\cdot),0)-\mu_{1} (0^-,Q_{Y^0|C}(\cdot),0)]
\mathds{G}(Q_{Y^1|C}(\cdot),0,0,2)}{f_{Y^0|C}(Q_{Y^0|C}(\cdot))[\mu_{2} (0^+,0)-\mu_{2} (0^-,0)]^2}
\end{align*}
For the conditional weak convergence part of the proof, define
\begin{align*}
&\Upsilon'_W(\widehat{\mathds{Y}}_n)(\theta'')=\\
-&\frac{[\mu_{2} (0^+,1)-\mu_{2} (0^-,1)]
\widehat{\mathds{X}}'_n(Q_{Y^1|C}(\theta''),1,1,1)-[\mu_{1}(0^+,Q_{Y^1|C}(\theta''),1)-\mu_{1}(0^-,Q_{Y^1|C}(\theta''),1)]
\widehat{\mathds{X}}'_n(Q_{Y^1|C}(\theta''),1,1,2)}{f_{Y^1|C}(Q_{Y^1|C}(\theta''))[\mu_{2} (0^+,1)-\mu_{2} (0^-,1)]^2}\\
+&\frac{[\mu_{2} (0^+,0)-\mu_{2} (0^-,0)]
\widehat{\mathds{X}}'_n(Q_{Y^0|C}(\theta''),0,0,1)-[\mu_{1}(0^+,Q_{Y^0|C}(\theta''),0)-\mu_{1}(0^-,Q_{Y^0|C}(\theta''),0)]
\widehat{\mathds{X}}'_n(Q_{Y^0|C}(\theta''),0,0,2)}{f_{Y^0|C}(Q_{Y^0|C}(\theta''))[\mu_{2} (0^+,0)-\mu_{2} (0^-,0)]^2}
\end{align*}
Theorem \ref{theorem:cond_weak_conv} then suggests that $\Upsilon'_W(\widehat{\mathds{Y}}_n)\underset{\xi}{\overset{p}{\leadsto}}\mathds{G}'_{FQRD}$. Thus it suffices to show
\begin{align*}
\sup_{\theta''\in [a,1-a]}|\widehat{\Upsilon}'_W(\widehat{\mathds{Y}}_n)(\theta'')-\Upsilon'_W(\widehat{\mathds{Y}}_n)(\theta'')|\underset{ x \times \xi}{\overset{p}{\to}} 0
\end{align*}
which is true by the asymptotic $\rho-$equicontinuity of $\mathds{X}'_n$ (which is inherited from the conditional weak convergence of $\hat{\nu}_{\xi,n}$), the uniform consistency of $\hat{Q}_{Y^d|C}$ and $\hat{f}_{Y^d|C}$ for $d=1,0$ by Lemmas \ref{lemma:FQRDD}, \ref{lemma:uniform_consist_Q_{Y|C}} and \ref{lemma:asymptotic_rho-conti_FQRD}.
\end{proof}

\subsubsection{Proof of Corollary \ref{corollary:FMRD}}\label{sec:corollary:FMRD}
\begin{proof}
To check the assumptions required by Theorems \ref{theorem:weak_conv} and \ref{theorem:cond_weak_conv}, for Assumption \ref{a:BR}(ii)(a), notice that\\
 $\{ \mu_2(x,0):[\underline x,\overline x]\mapsto \mathds{R}\}$ is a singleton, and therefore forms a VC type class with envelope $1$. Assumption \ref{a:first_stage} follows from Lemma \ref{lemma:epsilon_unif_const}. Other conditions can be checked as before.
Applying Theorem \ref{theorem:weak_conv} gives
\begin{align*}
&\frac{(\mu_{2}(0^+,0)-\mu_{2}(0^-,0))\widehat{\mathds{X}}'_n(0,1)-(\mu_{1}(0^+,0)-\mu_{1}(0^-,0))\widehat{\mathds{X}}'_n(0,2)}{(\mu_{2}(0^+,0)-\mu_{2}(0^-,0))^2}\leadsto N(0,\sigma^2_{FMRD})
\end{align*}
with probability approaching one.
According to Lemma \ref{lemma:cond_weak_conv_and in prob}, it remains to show
\begin{align*}
&\frac{(\mu_{2}(0^+,0)-\mu_{2}(0^-,0))\widehat{\mathds{X}}'_n(0,1)-(\mu_{1}(0^+,0)-\mu_{1}(0^-,0))\widehat{\mathds{X}}'_n(0,2)}{(\mu_{2}(0^+,0)-\mu_{2}(0^-,0))^2}\\
-&\frac{(\hat{\mu}_{2,2}(0^+,0)-\hat{\mu}_{2,2}(0^-,0))\widehat{\mathds{X}}'_n(0,1)-(\hat{\mu}_{1,2}(0^+,0)-\hat{\mu}_{1}(0^-,0))\widehat{\mathds{X}}'_n(0,2)}{(\hat{\mu}_{2,2}(0^+,0)-\hat{\mu}_{2,2}(0^-,0))^2}=o^{x\times \xi}_p(1).
\end{align*}
This is the case due to the uniform consistency of $\hat{\mu}_{1,2}(0^\pm,0)$, $\hat{\mu}_{2,2}(0^\pm,0)$ that follows from Corollary \ref{corollary:mu_unif_consist}, the independence between data and $\xi_i$ under Assumption M, and the fact that $|\mu_2(0^+,0)-\mu_2(0^-,0)|>0$ under Assumption FMRD (i) (d).
\end{proof}

\subsubsection{Proof of Corollary \ref{corollary:SMRD}}\label{sec:corollary:SMRD}
\begin{proof}
It suffices to show that Assumptions S, K, M, and SMRD imply Assumptions \ref{a:BR}, \ref{a:cond_weak_conv}, and \ref{a:first_stage}.
Most of these implications are direct.
For Assumption \ref{a:BR} (ii)(a), note that $\{\mu_1(x,0):[\underline x,\overline x]\mapsto \mathds{R}\}$ is a singleton, and therefore forms a VC type class with its sole element serving as an envelope, which is integrable by Assumption SMRD (i)(a). Assumption \ref{a:first_stage} follows from Lemma \ref{lemma:epsilon_unif_const}.
\end{proof}

\subsubsection{Proof of Corollary \ref{corollary:SCRD}}\label{sec:corollary:SCRD}
\begin{proof}
We check that Assumption SCRD implies the assumptions required by Theorems \ref{theorem:weak_conv} and \ref{theorem:cond_weak_conv},.
Most are direct, and we only need to check the following three points.
For Assumption \ref{a:BR} (ii) (a), $\{x\mapsto F_{Y|X}(\theta''|x):\theta''\in \mathscr{Y}_1\}$ and $\{y' \mapsto\mathds{1}\{y'\le \theta''\}:\theta''\in \mathscr{Y}_1 \}$ are increasing stochastic processes bounded by one, and they are of VC-subgraph classes according to Lemma 9.10 of Kosorok (2008), and thus of VC type.
For Assumption \ref{a:BR} (ii)(c), note that $E[(\mathds{1}\{Y_i\le y_1\}-F_{Y|X}(y_1|X_i))(\mathds{1}\{Y_i\le y_2\}-F_{Y|X}(y_2|X_i))|X_i]=F_{Y|X}(y_1\wedge y_2|X_i)-F_{Y|X}(y_1|X_i)F_{Y|X}(y_2|X_i)$ and thus it follows from Assumption SCRD (i). Assumption \ref{a:BR} (ii) (d) is implied by the right continuity of $y'\mapsto \mathds{1}\{y\le y'\}$.
Assumption \ref{a:first_stage} is implied by Lemma \ref{lemma:epsilon_unif_const}.
\end{proof}

\subsubsection{Proof of Corollary \ref{corollary:SQRD}}\label{sec:corollary:SQRD}
\begin{proof}
We need to check that Assumptions S, K, M, and SQRD imply all the assumptions required by Theorems \ref{theorem:weak_conv} and \ref{theorem:cond_weak_conv},.
The only non-trivial ones are Assumptions \ref{a:BR} (ii) (a), (d) \ref{a:cond_weak_conv} (i), and \ref{a:first_stage}.
For Assumptions \ref{a:BR} (ii) (a) and \ref{a:first_stage} (i), note that $\{y' \mapsto \mathds{1}\{y'\le y\}: y \in \mathscr{Y}_1\}$ and $\{x\mapsto F_{Y|X}(y|x):y\in \mathscr{Y}_1\}$ are collections of increasing stochastic processes, and Lemma 9.10 of Kosorok (2008) suggests that they are of VC-class and thus VC type with envelope one. Assumption \ref{a:BR} (ii) (d) is implied by the right continuity of $y'\mapsto \mathds{1}\{y\le y'\}$.
Assumption \ref{a:cond_weak_conv} (i) follows from Lemma 3.9.23 (i) of van der Vaart and Wellner (1996) and Assumption SQRD (i).
Assumption \ref{a:first_stage}(b) is implied by Lemma \ref{lemma:epsilon_unif_const}.

By Theorem \ref{theorem:weak_conv}, we have
$
\sqrt{nh_{n}}[\hat{\tau}_{SQRD}(\cdot)-\tau_{SQRD}(\cdot)]
\leadsto \mathds{G}'_{SQRD}
$,
where
\begin{align*}
\mathds{G}'_{SQRD}(\cdot)=&\phi'_{F_{Y|X}(\cdot|0^+)}\Big(\mathds{G}_{H+}\Big)(\cdot) - \phi'_{F_{Y|X}(\cdot|0^-)}\Big(\mathds{G}_{H-}\Big)(\cdot)\\
=&-\frac{\mathds{G}_{H+}(Q_{Y|X}(\cdot|0^+))}{f_{Y|X}(Q_{Y|X}(\cdot|0^+)|0^+)}+\frac{\mathds{G}_{H-}(Q_{Y|X}(\cdot|0^-))}{f_{Y|X}(Q_{Y|X}(\cdot|0^-)|0^-)}.
\end{align*}
Also, by Theorem \ref{theorem:cond_weak_conv},
$
\phi'_{F_{Y|X}(\cdot|0^+)}( \hat{\nu}^+_{\xi,n})-\phi'_{F_{Y|X}(\cdot|0^-)}( \hat{\nu}^-_{\xi,n})\underset{\xi}{\overset{p}{\leadsto}}
\mathds{G}'_{SQRD}
$,
where
\begin{align*}
\phi'_{F_{Y|X}(\cdot|0^\pm)}(\hat{\nu}^\pm_{\xi,n})(\theta'')
&=-\frac{\hat{\nu}^\pm_{\xi,n}(Q_{Y|X}(\theta''|0^\pm))}{f_{Y|X}(Q_{Y|X}(\theta''|0^\pm)|0^\pm)}.
\end{align*}
In the EMP, we replace $f_{Y|X}(\cdot|0^\pm)$, $Q_{Y|X}(\cdot|0^\pm)$ by their uniformly consistent estimators $\hat{f}_{Y|X}(\cdot|0^\pm)$, $\hat{Q}_{Y|X}(\cdot|0^\pm)$, where the uniform consistency of the former follows from Assumption SQRD (iii) (see Lemma \ref{lemma:asymptotic_rho-conti}) and the uniform consistency of the latter follows from Corollary \ref{corollary:operator_unif_consist}.

By Lemma \ref{lemma:cond_weak_conv_and in prob}, it suffices to show that $\sup_{\theta''\in[a,1-a]}\Big|\Big(\widehat{\phi}'_{F_{Y|X}(\cdot|0^+)}( \hat{\nu}^+_{\xi,n})(\theta'')-\widehat{\phi}'_{F_{Y|X}(\cdot|0^-)}( \hat{\nu}^-_{\xi,n})
(\theta'')\Big)
-\Big(\phi'_{F_{Y|X}(\cdot|0^+)}( \hat{\nu}^+_{\xi,n})(\theta'')-\phi'_{F_{Y|X}(\cdot|0^-)}( \hat{\nu}^-_{\xi,n})(\theta'')\Big)\Big|
\underset{x \times \xi}{\overset{p}{\to}}0
$.
We first show
\begin{align*}
&\norm{\widehat{\phi}'_{F_{Y|X}(\cdot|0^+)}(  \hat{\nu}^+_{\xi,n}(\hat{Q}_{Y|X}(\cdot|0^+)) )-\phi'_{F_{Y|X}(\cdot|0^+)}( \hat{\nu}^+_{\xi,n}(\hat{Q}_{Y|X}(\cdot|0^+)))}_{[a,1-a]}\\
=& \norm{-\frac{ \hat{\nu}^+_{\xi,n}(\hat{Q}_{Y|X}(\cdot|0^+))}{\hat{f}_{Y|X}(\hat{Q}_{Y|X}(\cdot|0^+)|0^+)}+\frac{ \hat{\nu}^+_{\xi,n}(Q_{Y|X}(\cdot|0^+))}{f_{Y|X}(Q_{Y|X}(\cdot|0^+)|0^+)}}_{[a,1-a]}
\underset{x \times \xi}{\overset{p}{\to}}0.
\end{align*}
Lemma \ref{lemma:asymptotic_rho-conti} and Corollary \ref{corollary:operator_unif_consist} along with the asymptotic equicontinuity of $ \hat{\nu}^+_{\xi,n}$ implied by its weak convergence in Theorem \ref{theorem:weak_conv} suggest that $ \hat{\nu}^+_{\xi,n}(\hat{Q}_{Y|X}(\cdot|0^+))- \hat{\nu}^+_{\xi,n}(Q_{Y|X}(\cdot|0^+))\underset{x \times \xi}{\overset{p}{\to}}0$ uniformly.
Assumption SQRD (i)(b) and the uniform consistency of both $\hat{Q}_{Y|X}(\cdot|0^+)$ and $\hat{f}_{Y|X}(\cdot|0^+)$ shows that $f_{Y|X}(\cdot|0^+)$ is bounded away from 0 uniformly, and with probability approaching one
\begin{align*}
&\sup_{\theta''\in[a,1-a]}|\hat{f}_{Y|X}(\hat{Q}_{Y|X}(\theta''|0^+)|0^+)-f_{Y|X}(Q_{Y|X}(\theta''|0^+)|0^+)|\\
\le& \sup_{y\in \mathscr{Y}_1}|\hat{f}_{Y|X}(y|0^+)-f_{Y|X}(y|0^+)|+
\sup_{\theta''\in [a,1-a]}L|\hat{Q}_{Y|X}(\theta''|0^+)-Q_{Y|X}(\theta''|0^+)|=o^x_p(1)+o^x_p(1)
\end{align*}
for a Lipschitz constant $L>0$.
Thus, $\norm{-\frac{ \hat{\nu}^+_{\xi,n}(\hat{Q}_{Y|X}(\cdot|0^+))}{\hat{f}_{Y|X}(\hat{Q}_{Y|X}(\cdot|0^+)|0^+)}+\frac{ \hat{\nu}^+_{\xi,n}(Q_{Y|X}(\cdot|0^+))}{f_{Y|X}(Q_{Y|X}(\cdot|0^+)|0^+)}}_{[a,1-a]}\underset{x \times \xi}{\overset{p}{\to}}0$.

Similar lines show $\norm{\widehat{\phi}'_{F_{Y|X}(\cdot|0^-)}(  \hat{\nu}^-_{\xi,n}(\hat{Q}_{Y|X}(\cdot|0^-)) )-\phi'_{F_{Y|X}(\cdot|0^-)}( \hat{\nu}^-_{\xi,n}(\hat{Q}_{Y|X}(\cdot|0^-)))}_{[a,1-a]}\underset{x \times \xi}{\overset{p}{\to}}0$ as well.
Therefore, we have
$$
\sup_{\theta''\in[a,1-a]}\Big|\Big(\widehat{\phi}'_{F_{Y|X}(\cdot|0^+)}( \hat{\nu}^+_{\xi,n})(\theta'')-\widehat{\phi}'_{F_{Y|X}(\cdot|0^-)}( \hat{\nu}^-_{\xi,n})
(\theta'')\Big)
-
\Big(\phi'_{F_{Y|X}(\cdot|0^+)}( \hat{\nu}^+_{\xi,n})(\theta'')-\phi'_{F_{Y|X}(\cdot|0^-)}( \hat{\nu}^-_{\xi,n})(\theta'')\Big)\Big|\underset{x \times \xi}{\overset{p}{\to}}0.
$$
We then apply Lemma \ref{lemma:cond_weak_conv_and in prob} to conclude the proof.
\end{proof}

\subsubsection{Proof of Corollary \ref{corollary:SQRK}}\label{sec:corollary:SQRK}
\begin{proof}
For the unconditional weak convergence, the assumptions required by Theorems \ref{theorem:weak_conv} and \ref{theorem:cond_weak_conv}, can be checked as in Corollary \ref{corollary:SQRD} except for Assumption \ref{a:cond_weak_conv}(i) now follows from Assumption SQRK (i)(a). Applying Theorem \ref{theorem:weak_conv}, we have $\sqrt{nh^3_n}[\tilde \tau_{SQRK} - \tau_{SQRK}]\leadsto \mathds{G}_{SQRD}$. It then suffices to show
\begin{align*}
\sqrt{nh^3_n}\norm{ \tilde \tau_{SQRK} -\hat \tau_{SQRK} }_{\Theta''} = o^x_p(1)
\end{align*}
By definition, we only need to show
\begin{align}
&\sqrt{nh^3_n} \norm{ \widehat \phi(\hat F^{(1)}_{Y|X}(\cdot|0^\pm) ) - \phi(\hat F^{(1)}_{Y|X}(\cdot|0^\pm)  ) }_{\Theta''}\nonumber\\
=&-\sqrt{nh^3_n} \norm{ \frac{\hat F^{(1)}_{Y|X}(\hat Q_{Y|X}(\theta''|0^\pm)|0)}{\hat f_{Y|X} ( \hat Q_{Y|X}(\theta''|0) |0)  }  -\frac{\hat F^{(1)}_{Y|X}(Q_{Y|X}(\theta''|0^\pm)|0)}{ f_{Y|X} (Q_{Y|X}(\theta''|0) |0)  } }_{\Theta''}= o^x_p(1)\label{eq:SQRK_phi approx}.
\end{align}
We first claim that
\begin{align*}
\sqrt{nh^3_n}[\hat F^{(1)}_{Y|X}(\hat{Q}_{Y|X}(\theta''|0)|0^\pm)-\hat F^{(1)}_{Y|X}(Q_{Y|X}(\theta''|0)|0^\pm)]=o^x_p(1)
\end{align*}
respectively. Then by Assumption SQRK (i) (b) and uniform super-consistency of $\hat f_{Y|X}(\cdot|0)$ and $\hat Q_{Y|X}(\cdot|0)$, $\sup_{\theta''\in [a,1-a]}|\hat f_{Y|X}(\hat Q_{Y|X}(\theta''|0)|0)-f_{Y|X}(Q_{Y|X}(\theta''|0)|0)|=o^x_p(1)$ and $\Big|\frac{1}{f_{Y|X}(Q_{Y|X}(\theta''|0)|0)}\Big|<C<\infty$. We can thus conclude that equation (\ref{eq:SQRK_phi approx}) is true.

To prove the claim, notice that
\begin{align*}
&\sqrt{nh^3_n}[\hat F^{(1)}_{Y|X}(\hat{Q}_{Y|X}(\theta''|0)|0^\pm)-\hat F^{(1)}_{Y|X}(Q_{Y|X}(\theta''|0)|0^\pm)]\\
\le & \sqrt{nh^3_n}[\hat F^{(1)}_{Y|X}( \hat Q_{Y|X}(\theta''|0)|0^\pm ) - F^{(1)}_{Y|X}(\hat Q_{Y|X}(\theta''|0)|0^\pm )]\\
&+\sqrt{nh^3_n}[ F^{(1)}_{Y|X}(\hat Q_{Y|X}(\theta''|0)|0^\pm ) - F^{(1)}_{Y|X}( Q_{Y|X}(\theta''|0)|0^\pm )]\\
&+\sqrt{nh^3_n}[F^{(1)}_{Y|X}(  Q_{Y|X}(\theta''|0)|0^\pm ) - \hat F^{(1)}_{Y|X}( Q_{Y|X}(\theta''|0)|0^\pm )]\\
=&(1)+(2)+(3)
\end{align*}
From Theorem \ref{theorem:weak_conv} and the Hadamard differentiability of left inverse operator, we have $\sqrt{nh_n}[\hat Q_{Y|X}(\theta''|0)- Q_{Y|X}(\theta''|0)]=O^x_p(1)$ uniformly. An application of Slutsky's lemma implies $|\hat Q_{Y|X}(\theta''|0)- Q_{Y|X}(\theta''|0)|=O^x(\frac{1}{\sqrt{nh_n}})$. This and Assumption SQRK (i) (a) imply $(2)=o^x_p(1)$ uniformly in $\theta''$. From Lemma \ref{lemma:BR}, $(3)=-\nu^\pm_n (Q_{Y|X}(\theta''|0))+o^x_p(1)$ and $(1)=\nu^\pm_n(\hat Q_{Y|X}(\theta''|0) )+o^x_p(1)=\nu^\pm_n (Q_{Y|X}(\theta''|0))+o^x_p(1)$ uniformly in $\theta''$, where the last equality is due to Lemma \ref{lemma:asymptotic_rho-conti_QRK} and asymptotic $\rho-$equicontinuity of $\nu^\pm_n$ implied by its weak convergence from Theorem \ref{theorem:weak_conv}. Thus we have $\sqrt{nh^3_n}[\hat F^{(1)}_{Y|X}(\hat{Q}_{Y|X}(\theta''|0)|0^\pm)-\hat F^{(1)}_{Y|X}(Q_{Y|X}(\theta''|0)|0^\pm)]=o^x_p(1)$.

As for the conditional weak convergence part of the statement, Theorem \ref{theorem:cond_weak_conv} shows
\begin{align*}
\phi'_{F_{Y|X}(\cdot|0^\pm)}(\hat{\nu}^\pm_{\xi,n})(\cdot)=-\frac{\hat{\nu}^\pm_{\xi,n}(Q_{Y|X}(\cdot|0))}{f_{Y|X}(Q_{Y|X}(\cdot|0))}\underset{\xi}{\overset{p}{\leadsto}} -\frac{\mathds{G}_\pm(Q_{Y|X}(\cdot|0))}{f_{Y|X}(Q_{Y|X}(\cdot|0)|0)}
\end{align*}
Uniform consistency of $\hat{f}_{Y|X}(\cdot|0)$, $\hat{Q}_{Y|X}(\cdot|0)$, Assumption SQRK (i), asymptotic $\rho-$equicontinuity of $\nu^\pm_{\xi,n}$, Lemmas \ref{lemma:asymptotic_rho-conti_QRK}, \ref{lemma:cond_weak_conv_and in prob} and \ref{lemma:super_consist_SQRK} then imply
\begin{align*}
\widehat \phi'_{F_{Y|X}(\cdot|0^\pm)}(\hat{\nu}^\pm_{\xi,n})(\cdot)=-\frac{\hat{\nu}^\pm_{\xi,n}(\hat Q_{Y|X}(\cdot|0))}{\hat f_{Y|X}(\hat Q_{Y|X}(\cdot|0))}\underset{\xi}{\overset{p}{\leadsto}} -\frac{\mathds{G}_\pm(Q_{Y|X}(\cdot|0))}{f_{Y|X}(Q_{Y|X}(\cdot|0)|0)}
\end{align*}
and thus by Assumption S (a) and the continuous mapping theorem, we conclude
\begin{align*}
\frac{\widehat \phi'_{F_{Y|X}(\cdot|0^+)}(\hat{\nu}^+_{\xi,n})(\cdot)-\widehat \phi'_{F_{Y|X}(\cdot|0^-)}(\hat{\nu}^-_{\xi,n})(\cdot)}{b^{(1)}(0^+)-b^{(1)}(0^-)}
=\frac{1}{b^{(1)}(0^+)-b^{(1)}(0^-)}[-\frac{ \hat{\nu}^+_{\xi,n}(\hat{Q}_{Y|X}(\cdot|0))}{\hat{f}_{Y|X}(\hat{Q}_{Y|X}(\cdot|0)|0)}+\frac{ \hat{\nu}^-_{\xi,n}(\hat{Q}_{Y|X}(\cdot|0))}{\hat{f}_{Y|X}(\hat{Q}_{Y|X}(\cdot|0)|0)}]
\\
\underset{\xi}{\overset{p}{\leadsto}} \mathds{G}'_{SQRK}(\cdot)
\end{align*}
\end{proof}

\subsubsection{Proof of Corollary \ref{corollary:GFMRD}}\label{sec:corollary:GFMRD}
\begin{proof}
It is direct to show that Assumption GFMRD and Lemma \ref{lemma:epsilon_unif_const} together imply the assumptions required by Theorems \ref{theorem:weak_conv} and \ref{theorem:cond_weak_conv},. Note for $k=1$, $2$, $\{\mu_k(\cdot,\theta):[\underline x, \overline x]\mapsto \mathds{R}: \theta \in \{1,...,K\}\}$ has finite elements and therefore is of VC-subgraph class with envelope $\underset{\theta\in\{1,...,K\}}{\max} \mu_k(x,\theta)$ and Assumption \ref{a:BR} (ii)(a) is satisfied.
The theorem then gives
\begin{align*}
&\frac{(\mu_{2} (0^+,\cdot)-\mu_{2} (0^-,\cdot))\widehat{\mathds{X}}'_n(\cdot,1)-(\mu_{1}(0^+,\cdot)-\mu_{1}(0^-,\cdot))\widehat{\mathds{X}}'_n(\cdot,2)}{(\mu_{2} (0^+,\cdot)-\mu_{2} (0^-,\cdot))^2}\leadsto N(0,\Sigma_{GFMRD})
\end{align*}
with probability approaching one.
Lemmas \ref{lemma:cond_weak_conv_and in prob} and \ref{lemma:epsilon_unif_const} then give the desired result.
\end{proof}

\section{Bandwidth Choice in Practice}\label{sec:practical_guideline}

While our theory prescribes asymptotic rates of bandwidths, empirical practitioners need to choose bandwidth for each finite $n$.
In this section, we provide a guide for this matter.
We emphasize that our robust inference procedure allows for large bandwidths such as the ones based on the MSE optimality.
Following the bias-robust approach from Calonico, Cattaneo and Titiunik (2014), we increment the degree of local polynomial estimation by one or more to $p$ for the purpose of bias correction while using the optimal bandwidths for the correct order $s$.
For instance, if we are interested in the local linear model ($s=1$), then we run a local quadratic regression ($p=2$) while using the optimal bandwidths for the local linear model ($s=1$).
Generally, when we want to estimate the $v$-th order derivative via a local $s$-th order polynomial estimation, we fix the degree $p$ such that $0\le v \le s \le p$, and additionally $s < p$ if one wants to implement a bias correction.
We now remind the readers of the following short-hand notations: $\Psi_s=\int_{\mathds{R}}r_s(u)r'_s(u)K^2(u)du$, $\Psi^\pm_s=\int_{\mathds{R}_\pm}r_s(u)r'_s(u)K^2(u)du$, $\Gamma_s=\int_{\mathds{R}}K(u)r_s(u)r'_s(u)du$, $\Gamma^\pm_s=\int_{\mathds{R}_\pm}K(u)r_s(u)r'_s(u)du$, $\Lambda_{s,s+1}=\int_{\mathds{R}}u^{s+1}r_s(u)K(u)du$, and $\Lambda^\pm_{s,s+1}=\int_{\mathds{R}_\pm}u^{s+1}r_s(u)K(u)du$.

For the main local polynomial estimation, we first derive the oracle MSE-optimal bandwidths for the $v$-th order derivative based on a local polynomial estimation of the $s$-th degree.
For the numerator, we have
\begin{align*}
&h^{orac}_{1,n}(\theta_1|s,v)=\left(\frac{2v+1}{2s+2-2v}\frac{C'_{1,\theta_1,s,v}}{C^2_{1,\theta_1,s,v}}\right)^{1/(2s+3)}n^{-1/(2s+3)},
\end{align*}
where $C'_{1,\theta_1,s,v}$ and $C^2_{1,\theta_1,s,v}$ are given by the leading terms of bias and variance, respectively:
\begin{align*}
&C_{1,\theta_1,s,v} = \frac{Bias(\mu^{(v)}_{1} (0^+,\theta_1)-\mu^{(v)}_{1} (0^-,\theta_1))}{h^{s+1-v}_{1,n}(\theta_1)} = e'_v \left[\frac{(\Gamma^+_s)^{-1}\Lambda^{+}_{s,s+1}}{(s+1)!}\mu^{(s+1)}_{1} (0^+,\theta_1)
-
\frac{(\Gamma^-_s)^{-1}\Lambda^{-}_{s,s+1}}{(s+1)!}\mu^{(s+1)}_{1} (0^-,\theta_1)\right]
\\
&C'_{1,\theta_1,s,v} = nh^{2v+1}_{1,n}(\theta_1) Var(\mu^{(v)}_{1} (0^\pm,\theta_1)-\mu^{(v)}_{1} (0^-,\theta_1))=
\\
& \ \ \frac{ e'_v [\sigma_{11}(\theta_1,\theta_1|0^+)(\Gamma^+_s)^{-1}\Psi^{+}_s((\theta_1,1)(\theta_1,1))(\Gamma^+_s)^{-1} +\sigma_{11}(\theta_1,\theta_1|0^-) (\Gamma^-_s)^{-1}\Psi^{-}_s((\theta_1,1)(\theta_1,1))(\Gamma^-_s)^{-1}]e_v }{f_X(0)}
\end{align*}
Likewise, for the denominator, we have
\begin{align*}
&h^{orac}_{2,n}(\theta_2|s,v)=\left(\frac{2v+1}{2s+2-2v}\frac{C'_{2,\theta_2,s,v}}{C^2_{2,\theta_2,s,v}}\right)^{1/(2s+3)}n^{-1/(2s+3)}
\end{align*}
where $C'_{2,\theta_2,s,v}$ and $C^2_{2,\theta_2,s,v}$ are given by the leading terms of bias and variance, respectively:
\begin{align*}
&C_{2,\theta_2,s,v} = \frac{Bias(\mu^{(v)}_{2} (0^+,\theta_2)-\mu^{(v)}_{2} (0^-,\theta_2))}{h^{s+1-v}_{2,n}(\theta_2)} = e'_v\left[\frac{(\Gamma^+_s)^{-1}\Lambda^{+}_{s,s+1}}{(s+1)!}\mu^{(s+1)}_{2} (0^+,\theta_2)
-
\frac{(\Gamma^-_s)^{-1}\Lambda^{-}_{s,s+1}}{(s+1)!}\mu^{(s+1)}_{2} (0^-,\theta_2)\right]
\\
&C'_{2,\theta_2,s,v} = nh^{2v+1}_{2,n}(\theta_2) Var(\mu^{(v)}_{2} (0^+,\theta_2)-\mu^{(v)}_{2} (0^-,\theta_2))=
\\
& \ \
\frac{e'_v[\sigma_{22}(\theta_2,\theta_2|0^+)(\Gamma^+_s)^{-1}\Psi^{+}_s((\theta_2,2)(\theta_2,2))(\Gamma^+_s)^{-1}+\sigma_{22}(\theta_2,\theta_2|0^-) (\Gamma^-_s)^{-1}\Psi^{-}_s((\theta_2,2)(\theta_2,2))(\Gamma^-_s)^{-1}]e_v}{f_X(0)}
\end{align*}
In practice, the unknowns in the above bandwidth selectors need to be replaced by their consistent estimates.
We propose the following three-step procedure.

\textbf{Step 1}: Estimate $f_X(0)$ by the kernel density estimator
\begin{align*}
\hat f_X(0)= \frac{1}{nc_n}\sum_{i=1}^{n}K(\frac{X_i}{c_n})
\end{align*}
with the bandwidth $c_n$ determined by Silverman's rule of thumb
\begin{align*}
c_n=1.06\hat{\sigma}_X n^{-1/5},
\end{align*}
where $\hat \sigma_X$ is the standard deviation of the sample $\{X_i\}_{i=1}^n$.
We then compute the preliminary bandwidths for first-stage estimates, $\mu^{(v)}_k$, $k=1,2$, by
\begin{align*}
h^{0}_{1,n}=(\frac{2v+1}{2s+2-2v}\frac{C'_{1,0}}{C^2_{1,0}})^{1/5}n^{-1/5},\\
h^{0}_{2,n}=(\frac{2v+1}{2s+2-2v}\frac{C'_{2,0}}{C^2_{2,0}})^{1/5}n^{-1/5},
\end{align*}
where the constant terms
\begin{align*}
&C_{k,0}= e'_v[\frac{(\Gamma^+_s)^{-1}\Lambda^+_{s,s+1}}{(s+1!)}\bar \mu^{(s+1)}_{k,+} - \frac{(\Gamma^-_s)^{-1}\Lambda^-_{s,s+1}}{(s+1)!}\bar \mu^{(s+1)}_{k,-} ]\\
&C'_{k,0}= e'_v[\bar \sigma^2_{k,+} (\Gamma^+_s)^{-1} \Psi^+_s (\Gamma^+_s)^{-1} +\bar \sigma^2_{k,-} (\Gamma^-_s)^{-1} \Psi^-_s (\Gamma^-_s)^{-1}  ]e_v/\hat f_X(0)
\end{align*}
depend on the preliminary estimates $\bar \mu^{(s+1)}_{k,\pm}$ and $\bar \sigma^2_{k,\pm}$ for $\mu^{(v)}_k$ and $\sigma_{kk}$, respectively.
These preliminary estimates may be obtained by global parametric polynomial regressions of order greater or equal to $s+1$ and the sample variance of $\bar \mu^{(s+1)}_{k,\pm}$.
Through simulations to be presented below, we find that simply setting $\bar \mu^{(s+1)}_{k,\pm}$ and $\bar \sigma^2_{k,\pm}$ to one in this first step also yields fine results, whereas $\hat f_X(0)$ should not be substituted by an arbitrary constant.

\textbf{Step 2}
Using the preliminary bandwidths obtained in Step 1, we next obtain the first stage estimates $[\check \mu_k(0^\pm,\theta_k),...,\check \mu^{(s)}_k(0^\pm,\theta_k)]'=\check \alpha_{k\pm,s}'\text{diag}[1,1!/h^{0}_{k,n},..,s!/(h^0_{k,n})^s]$ as follows.
Solve
\begin{align*}
&\check \alpha_{1\pm,s}':=\argmin_{\alpha \in \mathds{R}^{s+1}}\sum_{i=1}^{n}\delta^\pm_i(g_1(Y_i|\theta_1)-r_s(X_i/h_{0,n})' \alpha)K(\frac{X_i}{h^{0}_{1,n}}),\\
&\check \alpha_{2\pm,s}':=\argmin_{\alpha \in \mathds{R}^{s+1}}\sum_{i=1}^{n}\delta^\pm_i(g_2(D_i,\theta_2)-r_s(X_i/h_{0,n})' \alpha) K(\frac{X_i}{h^{0}_{2,n}}).
\end{align*}
Using these estimates of the local polynomial coefficients, we in turn compute first stage estimates based on $s$-th order expansion
\begin{align*}
\check \mu_k(x,\theta_k)=&[\mu_k(0^+,\theta_k)+\mu^{(1)}_k(0^+,\theta_k)x+...+\mu^{(s)}_k(0^+,\theta_k)\frac{x^s}{s!}]\delta^+_x \\
+& [\mu_k(0^-,\theta_k)+\mu^{(1)}_k(0^-,\theta_k)x+...+\mu^{(s)}_k(0^-,\theta_k)\frac{x^s}{s!}]\delta^-_x,
\end{align*}
for $k=1,2$.
The covariance estimates are in turn computed by
\begin{align*}
&\hat \sigma_{11}(\theta_1,\theta_1|0^\pm)=\Big(\frac{\sum_{i=1}^{n}(g_1(Y_i,\theta_1)-\check \mu_1 (X_i,\theta_1))^2 K(\frac{X_i}{h^{0}_{1,n}})\delta^\pm_i}{\sum_{i=1}^{n}K(\frac{X_i}{h^{0}_{1,n}})\delta^\pm_i}\Big)^{1/2},\\
&\hat \sigma_{22}(\theta_2,\theta_2|0^\pm)=\Big(\frac{\sum_{i=1}^{n}(g_2(D_i,\theta_2)-\check \mu_2 (X_i,\theta_2))^2 K(\frac{X_i}{h^{0}_{2,n}})\delta^\pm_i}{\sum_{i=1}^{n}K(\frac{X_i}{h^{0}_{2,n}})\delta^\pm_i}\Big)^{1/2}.
\end{align*}
The uniform consistency of $\check \mu_k (x,\theta_k) \mathds{1}\{|x|\le h^{0}_{k,n}\}$ in $(x,\theta_k)$ is implied by Lemma \ref{lemma:epsilon_unif_const} with bandwidths $h_{1,n}(\theta_1)=h_{2,n}(\theta_2)=h_{0,n}$ selected in Step 1 under $r=s$.
This further implies the uniform consistency of $\hat \sigma_{kk}$ for $k=1,2$.

\textbf{Step 3}:
We are now ready to derive a feasible version of the main bandwidths.
Let
\begin{align*}
&h^{MSE}_{k,n}(\theta_1|s,v)=\left(\frac{2v+1}{2s+2-2v}\frac{\hat C'_{k,\theta_k,s,v}}{\hat C^2_{k,\theta_k,s,v}}\right)^{1/(2s+3)}n^{-1/(2s+3)}
\end{align*}
where $C_{1,\theta_1,s,v}$, $ C'_{1,\theta_1,s,v}$, $C_{2,\theta_2,s,v}$ and $ C'_{2,\theta_2,s,v}$ are replaced by their estimates:
\begin{align*}
&\hat C_{1,\theta_1,s,v}  = e'_v \left[\frac{(\Gamma^+_s)^{-1}\Lambda^{+}_{s,s+1}}{(s+1)!}\check \mu^{(s+1)}_{s,1} (0^+,\theta_1)
-
\frac{(\Gamma^-_s)^{-1}\Lambda^{-}_{s,s+1}}{(s+1)!}\check \mu^{(s+1)}_{s,1} (0^-,\theta_1)\right],
\\
&\hat C'_{1,\theta_1,s,v} = \frac{ e'_v [\hat \sigma_{11}(\theta_1,\theta_1|0^+)(\Gamma^+_s)^{-1}\Psi^{+}_s(\Gamma^+_s)^{-1} +\hat \sigma_{11}(\theta_1,\theta_1|0^-) (\Gamma^-_s)^{-1}\Psi^{-}_s(\Gamma^-_s)^{-1}]e_v }{\hat f_X(0)}\\
&\hat C_{2,\theta_2,s,v}  = e'_v\left[\frac{(\Gamma^+_s)^{-1}\Lambda^{+}_{s,s+1}}{(s+1)!}\check\mu^{(s+1)}_{2} (0^+,\theta_2)
-
\frac{(\Gamma^-_s)^{-1}\Lambda^{-}_{s,s+1}}{(s+1)!}\check\mu^{(s+1)}_{2} (0^-,\theta_2)\right],
\\
&\hat C'_{2,\theta_2,s,v} =
\frac{e'_v[\hat\sigma_{22}(\theta_2,\theta_2|0^+)(\Gamma^+_s)^{-1}\Psi^{+}_s(\Gamma^+_s)^{-1}+\hat\sigma_{22}(\theta_2,\theta_2|0^-) (\Gamma^-_s)^{-1}\Psi^{-}_s(\Gamma^-_s)^{-1}]e_v}{\hat f_X(0)},
\end{align*}
respectively.
To these feasible MSE-optimal choices, we further apply the rule of thumb (ROT) bandwidth algorithm for optimal coverage error following Calonico, Cattaneo, and Farrell (2016ab):
\begin{align*}
h^{ROT}_{1,n}(\theta_1|s,v)&=h^{MSE}_{1,n}(\theta_1|s,v)n^{-s/(2s+3)(s+3)},\\
h^{ROT}_{2,n}(\theta_2|s,v)&=h^{MSE}_{2,n}(\theta_2|s,v)n^{-s/(2s+3)(s+3)}.
\end{align*}


\section*{Additional Tables}
\clearpage
\begin{table}[t]
	\centering
	  \caption{Simulated acceptance probabilities for treatment nullity under the fuzzy RDD across alternative values of $\beta_1 \in \{0.00, 0.25, 0.50, 0.75, 1.00\}$. The nominal acceptance probability is 95\%.}
		\begin{tabular}{cc|c|cccc}
		\hline\hline
		  \multicolumn{2}{c}{$n$} & \multicolumn{5}{c}{$\beta_1$} \\
			\cline{3-7}
			       && 0.00  & 0.25  & 0.50  & 0.75  & 1.00 \\
			\hline
			1000   && 0.934 & 0.876 & 0.686 & 0.403 & 0.210\\
			2000   && 0.956 & 0.846 & 0.516 & 0.230 & 0.090\\
			4000   && 0.949 & 0.762 & 0.336 & 0.103 & 0.017\\
		\hline\hline
		\\
		\end{tabular}
	\label{tab:FMRD}
\end{table}
\begin{table}[t]
	\centering
	  \caption{Simulated acceptance probabilities for treatment nullity under the sharp RDD across alternative values of $\beta_1 \in \{0.00, 0.25, 0.50, 0.75, 1.00\}$. The nominal acceptance probability is 95\%.}
		\begin{tabular}{cc|c|cccc}
		\hline\hline
		  \multicolumn{2}{c}{$n$} & \multicolumn{5}{c}{$\beta_1$} \\
			\cline{3-7}
			       && 0.00  & 0.25  & 0.50  & 0.75  & 1.00 \\
			\hline
			1000   && 0.939 & 0.842 & 0.641 & 0.366 & 0.190\\
			2000   && 0.949 & 0.812 & 0.481 & 0.204 & 0.075\\
			4000   && 0.955 & 0.738 & 0.304 & 0.082 & 0.019\\
		\hline\hline
		\\
		\end{tabular}
	\label{tab:SMRD}
\end{table}
\begin{table}[t]
	\centering
	  \caption{Simulated coverage probabilities under the sharp RDD under the data generating processes used in Calonico, Cattaneo and Titiunik (2014). The nominal acceptance probability is 95\%. FN stands for the fixed-neighborhood standard error estimators, PI stands for the plug-in residual standard error estimators, and MB stands for the multiplier bootstrap. The first four columns are copied from Table I of Calonico, Cattaneo and Titiunik (2014), whereas the last column is based on our simulations.}
		\begin{tabular}{lcccccccc}
		\hline\hline
		             && \multicolumn{2}{c}{Conventional} && \multicolumn{4}{c}{Robust} \\
								    \cline{3-4}\cline{6-9}
			DGP        && FN & PI && FN & PI && MB \\
			\hline
			Lee (2008)               && 89.4 & 88.4 && 91.6 & 90.7 && 91.3\\
			Ludwig and Miller (2007) && 87.3 & 80.8 && 93.2 & 90.5 && 90.9\\
		\hline\hline
		\\
		\end{tabular}
	\label{tab:SMRD_CCT}
\end{table}
\clearpage
\begin{table}[t]
	\centering
	  \caption{Simulated acceptance probabilities for treatment nullity under the fuzzy RKD across alternative values of $\beta_1 \in \{0.00, 0.25, 0.50, 0.75, 1.00\}$. The nominal acceptance probability is 95\%.}
		\begin{tabular}{cc|c|cccc}
		\hline\hline
		  \multicolumn{2}{c}{$n$} & \multicolumn{5}{c}{$\beta_1$} \\
			\cline{3-7}
			       && 0.00  & 0.25  & 0.50  & 0.75  & 1.00 \\
			\hline
			1000   && 0.913 & 0.864 & 0.760 & 0.642 & 0.554\\
			2000   && 0.925 & 0.838 & 0.686 & 0.551 & 0.414\\
			4000   && 0.936 & 0.786 & 0.588 & 0.419 & 0.290\\
		\hline\hline
		\\
		\end{tabular}
	\label{tab:FMRK}
\end{table}
\begin{table}[t]
	\centering
	  \caption{Simulated acceptance probabilities for treatment nullity under the sharp RKD across alternative values of $\beta_1 \in \{0.00, 0.25, 0.50, 0.75, 1.00\}$. The nominal acceptance probability is 95\%.}
		\begin{tabular}{cc|c|cccc}
		\hline\hline
		  \multicolumn{2}{c}{$n$} & \multicolumn{5}{c}{$\beta_1$} \\
			\cline{3-7}
			       && 0.00  & 0.25  & 0.50  & 0.75  & 1.00 \\
			\hline
			1000   && 0.914 & 0.843 & 0.731 & 0.598 & 0.469\\
			2000   && 0.917 & 0.812 & 0.641 & 0.465 & 0.324\\
			4000   && 0.932 & 0.772 & 0.516 & 0.348 & 0.194\\
		\hline\hline
		\\
		\end{tabular}
	\label{tab:SMRK}
\end{table}

\clearpage
\begin{table}[t]
	\centering
	  \caption{Simulated acceptance probabilities for (A) uniform treatment nullity and (B) treatment homogeneity under the sharp quantile RDD. The top panel (I) presents results across alternative values of $\beta_1 \in \{0.00, 0.25, 0.50, 0.75, 1.00\}$ while fixing $\gamma_1 = 0$. The bottom panel (II) presents results across alternative values of $\gamma_1 \in \{0.00, 0.25, 0.50, 0.75, 1.00\}$ while fixing $\beta_1 = 0$. The nominal acceptance probability is 95\%.}
		\begin{tabular}{cc|c|ccccccc|c|cccc}
		\hline\hline
		\multicolumn{7}{c}{(I) (A) Joint Treatment Nullity} && \multicolumn{7}{c}{(I) (B) Treatment Homogeneity} \\
			\cline{1-7}\cline{9-15}
		  \multicolumn{2}{c}{$n$} & \multicolumn{5}{c}{$\beta_1$} && \multicolumn{2}{c}{$n$} & \multicolumn{5}{c}{$\beta_1$} \\
			\cline{3-7}\cline{11-15}
			       && 0.00  & 0.25  & 0.50  & 0.75  & 1.00  &&        && 0.00  & 0.25  & 0.50  & 0.75  & 1.00  \\
			\cline{1-7}\cline{9-15}
			1000   && 0.966 & 0.917 & 0.798 & 0.620 & 0.484 && 1000   && 0.967 & 0.967 & 0.965 & 0.954 & 0.963 \\
			2000   && 0.959 & 0.859 & 0.633 & 0.414 & 0.269 && 2000   && 0.959 & 0.959 & 0.966 & 0.955 & 0.958 \\
			4000   && 0.950 & 0.740 & 0.400 & 0.161 & 0.074 && 4000   && 0.950 & 0.946 & 0.958 & 0.948 & 0.947 \\
		\hline\hline
		\\
		\hline\hline
		\multicolumn{7}{c}{(II) (A) Joint Treatment Nullity} && \multicolumn{7}{c}{(II) (B) Treatment Homogeneity} \\
			\cline{1-7}\cline{9-15}
		  \multicolumn{2}{c}{$n$} & \multicolumn{5}{c}{$\gamma_1$} && \multicolumn{2}{c}{$n$} & \multicolumn{5}{c}{$\gamma_1$} \\
			\cline{3-7}\cline{11-15}
			       && 0.00  & 0.25  & 0.50  & 0.75  & 1.00  &&        && 0.00  & 0.25  & 0.50  & 0.75  & 1.00  \\
			\cline{1-7}\cline{9-15}
			1000   && 0.966 & 0.928 & 0.842 & 0.742 & 0.647 && 1000   && 0.967 & 0.920 & 0.808 & 0.698 & 0.574 \\
			2000   && 0.959 & 0.866 & 0.669 & 0.500 & 0.378 && 2000   && 0.959 & 0.855 & 0.625 & 0.446 & 0.327 \\
			4000   && 0.950 & 0.718 & 0.362 & 0.206 & 0.138 && 4000   && 0.950 & 0.693 & 0.327 & 0.170 & 0.116 \\
		\hline\hline
		\\
		\end{tabular}
	\label{tab:SQRD}
\end{table}
\clearpage
\begin{table}[t]
	\centering
	  \caption{Simulated acceptance probabilities for (A) uniform treatment nullity and (B) treatment homogeneity under the fuzzy quantile RKD. The top panel (I) presents results across alternative values of $\beta_1 \in \{0.00, 0.25, 0.50, 0.75, 1.00\}$ while fixing $\gamma_1 = 0$. The bottom panel (II) presents results across alternative values of $\gamma_1 \in \{0.00, 0.25, 0.50, 0.75, 1.00\}$ while fixing $\beta_1 = 0$. The nominal acceptance probability is 95\%.}
		\begin{tabular}{cc|c|ccccccc|c|cccc}
		\hline\hline
		\multicolumn{7}{c}{(I) (A) Joint Treatment Nullity} && \multicolumn{7}{c}{(I) (B) Treatment Homogeneity} \\
			\cline{1-7}\cline{9-15}
		  \multicolumn{2}{c}{$n$} & \multicolumn{5}{c}{$\beta_1$} && \multicolumn{2}{c}{$n$} & \multicolumn{5}{c}{$\beta_1$} \\
			\cline{3-7}\cline{11-15}
			       && 0.00  & 0.25  & 0.50  & 0.75  & 1.00  &&        && 0.00  & 0.25  & 0.50  & 0.75  & 1.00  \\
			\cline{1-7}\cline{9-15}
			1000   && 0.945 & 0.929 & 0.903 & 0.868 & 0.842 && 1000   && 0.938 & 0.942 & 0.945 & 0.946 & 0.936 \\
			2000   && 0.941 & 0.911 & 0.873 & 0.836 & 0.815 && 2000   && 0.939 & 0.927 & 0.936 & 0.931 & 0.930 \\
			4000   && 0.935 & 0.904 & 0.846 & 0.799 & 0.802 && 4000   && 0.929 & 0.928 & 0.935 & 0.931 & 0.929 \\
		\hline\hline
		\\
		\hline\hline
		\multicolumn{7}{c}{(II) (A) Joint Treatment Nullity} && \multicolumn{7}{c}{(II) (B) Treatment Homogeneity} \\
			\cline{1-7}\cline{9-15}
		  \multicolumn{2}{c}{$n$} & \multicolumn{5}{c}{$\gamma_1$} && \multicolumn{2}{c}{$n$} & \multicolumn{5}{c}{$\gamma_1$} \\
			\cline{3-7}\cline{11-15}
			       && 0.00  & 0.25  & 0.50  & 0.75  & 1.00  &&        && 0.00  & 0.25  & 0.50  & 0.75  & 1.00  \\
			\cline{1-7}\cline{9-15}
			1000   && 0.945 & 0.956 & 0.950 & 0.952 & 0.928 && 1000   && 0.938 & 0.948 & 0.949 & 0.930 & 0.903 \\
			2000   && 0.941 & 0.944 & 0.936 & 0.919 & 0.921 && 2000   && 0.939 & 0.938 & 0.914 & 0.890 & 0.872 \\
			4000   && 0.935 & 0.941 & 0.928 & 0.905 & 0.888 && 4000   && 0.929 & 0.938 & 0.896 & 0.848 & 0.790 \\
		\hline\hline
		\\
		\end{tabular}
	\label{tab:FQRK}
\end{table}
\clearpage
\begin{table}[t]
	\centering
	  \caption{Simulated acceptance probabilities for (A) uniform treatment nullity and (B) treatment homogeneity under the sharp quantile RKD. The top panel (I) presents results across alternative values of $\beta_1 \in \{0.00, 0.25, 0.50, 0.75, 1.00\}$ while fixing $\gamma_1 = 0$. The bottom panel (II) presents results across alternative values of $\gamma_1 \in \{0.00, 0.25, 0.50, 0.75, 1.00\}$ while fixing $\beta_1 = 0$. The nominal acceptance probability is 95\%.}
		\begin{tabular}{cc|c|ccccccc|c|cccc}
		\hline\hline
		\multicolumn{7}{c}{(I) (A) Joint Treatment Nullity} && \multicolumn{7}{c}{(I) (B) Treatment Homogeneity} \\
			\cline{1-7}\cline{9-15}
		  \multicolumn{2}{c}{$n$} & \multicolumn{5}{c}{$\beta_1$} && \multicolumn{2}{c}{$n$} & \multicolumn{5}{c}{$\beta_1$} \\
			\cline{3-7}\cline{11-15}
			       && 0.00  & 0.25  & 0.50  & 0.75  & 1.00  &&        && 0.00  & 0.25  & 0.50  & 0.75  & 1.00  \\
			\cline{1-7}\cline{9-15}
			1000   && 0.930 & 0.928 & 0.906 & 0.881 & 0.851 && 1000   && 0.927 & 0.936 & 0.945 & 0.936 & 0.940 \\
			2000   && 0.941 & 0.917 & 0.883 & 0.850 & 0.837 && 2000   && 0.932 & 0.930 & 0.929 & 0.929 & 0.928 \\
			4000   && 0.941 & 0.902 & 0.850 & 0.835 & 0.829 && 4000   && 0.929 & 0.930 & 0.918 & 0.936 & 0.924 \\
		\hline\hline
		\\
		\hline\hline
		\multicolumn{7}{c}{(II) (A) Joint Treatment Nullity} && \multicolumn{7}{c}{(II) (B) Treatment Homogeneity} \\
			\cline{1-7}\cline{9-15}
		  \multicolumn{2}{c}{$n$} & \multicolumn{5}{c}{$\gamma_1$} && \multicolumn{2}{c}{$n$} & \multicolumn{5}{c}{$\gamma_1$} \\
			\cline{3-7}\cline{11-15}
			       && 0.00  & 0.25  & 0.50  & 0.75  & 1.00  &&        && 0.00  & 0.25  & 0.50  & 0.75  & 1.00  \\
			\cline{1-7}\cline{9-15}
			1000   && 0.930 & 0.954 & 0.955 & 0.943 & 0.936 && 1000   && 0.927 & 0.938 & 0.939 & 0.921 & 0.911 \\
			2000   && 0.941 & 0.931 & 0.936 & 0.926 & 0.902 && 2000   && 0.932 & 0.932 & 0.923 & 0.901 & 0.875 \\
			4000   && 0.941 & 0.931 & 0.928 & 0.905 & 0.879 && 4000   && 0.929 & 0.920 & 0.885 & 0.861 & 0.816 \\
		\hline\hline
		\\
		\end{tabular}
	\label{tab:SQRK}
\end{table}
\clearpage
\begin{table}[t]
	\centering
	  \caption{Simulated acceptance probabilities for (A) joint treatment nullity and (B) treatment homogeneity under the fuzzy RDD with group covariate across alternative values of $\beta_1 \in \{0.00, 0.25, 0.50, 0.75, 1.00\}$ while fixing $\beta_2 = 0$. The nominal acceptance probability is 95\%.}
		\begin{tabular}{cc|c|ccccccc|c|cccc}
		\hline\hline
		\multicolumn{7}{c}{(A) Joint Treatment Nullity} && \multicolumn{7}{c}{(B) Treatment Homogeneity} \\
			\cline{1-7}\cline{9-15}
		  \multicolumn{2}{c}{$n$} & \multicolumn{5}{c}{$\beta_1$} && \multicolumn{2}{c}{$n$} & \multicolumn{5}{c}{$\beta_1$} \\
			\cline{3-7}\cline{11-15}
			       && 0.00  & 0.25  & 0.50  & 0.75  & 1.00  &&        && 0.00  & 0.25  & 0.50  & 0.75  & 1.00  \\
			\cline{1-7}\cline{9-15}
			1000   && 0.980 & 0.971 & 0.957 & 0.940 & 0.885 && 1000   && 0.972 & 0.966 & 0.964 & 0.944 & 0.919 \\
			2000   && 0.977 & 0.959 & 0.942 & 0.881 & 0.803 && 2000   && 0.971 & 0.964 & 0.951 & 0.913 & 0.867 \\
			4000   && 0.972 & 0.959 & 0.902 & 0.786 & 0.601 && 4000   && 0.971 & 0.965 & 0.925 & 0.855 & 0.762 \\
		\hline\hline
		\\
		\end{tabular}
	\label{tab:GFMRD}
\end{table}

\end{document}